\newif\ifthesis
\newif\iffull
\newif\ifea
\newif\ifspringer
\titlespacing*{\chapter}{0pt}{-28pt}{40pt}
\let\oldnl\nl
\newcommand{\nonl}{\renewcommand{\nl}{\let\nl\oldnl}}
\renewcommand{\@algocf@capt@plain}{above}
\setlist{noitemsep}
\tikzset{
  linpt/.style={fill,rectangle,yscale=0.085cm,xscale=0.012cm}
}
\newtheorem{observation}[theorem]{Observation}
\newtheorem{theorem}{Theorem}
\newtheorem{lemma}[theorem]{Lemma}
\newtheorem{observation}[theorem]{Observation}
\newtheorem{definition}[theorem]{Definition}
\begin{document}
\SetEndCharOfAlgoLine{}


\ifthesis
\title{
   Strongly Linearizable Implementations of Fundamental Primitives
   }
   
  \author{Sean Ovens}
  \thesis{Thesis}
  \dept{Graduate Program in Computer Science}
  \degree{Master of Science}
  \gradyear{2019}
  \monthname{July}

  \frontmatter           
  \makethesistitle       
\else\ifspringer
		\title{Strongly Linearizable Implementations of Fundamental Primitives\thanks{We acknowledge the support of the Natural Sciences and Engineering Research Council of Canada (NSERC). This research was undertaken, in part, thanks to funding from the Canada Research Chairs program.}
		}
		
		
		\author{Sean Ovens         \and
		        Philipp Woelfel
		}
		
		
		\institute{S. Ovens \at
					\email{sgovens@ucalgary.ca}
					\and
				   P. Woelfel \at
				   	\email{woelfel@ucalgary.ca}	\\ \\			   	
		              University of Calgary, 2500 University Drive NW, Calgary AB
		}
		
		\date{Received: date / Accepted: date}

		\maketitle
	\else
		\title{Strongly Linearizable Implementations\\ of Snapshots and Other Types}
		\date{}
		\author{Sean Ovens, sgovens@ucalgary.ca, University of Calgary
		\and Philipp Woelfel, woelfel@ucalgary.ca, University of Calgary}
	
		\maketitle
	\fi
\fi

\ifthesis
\begin{thesisabstract}
\else
\begin{abstract}
\fi
Linearizability is the gold standard of correctness conditions for shared memory algorithms, and historically has been considered the practical equivalent of atomicity. However, it has been shown~\cite{stronglin} that replacing atomic objects with linearizable implementations can affect the probability distribution of execution outcomes in randomized algorithms. Thus, linearizable objects are not always suitable replacements for atomic objects. A stricter correctness condition called strong linearizability has been developed and shown to be appropriate for randomized algorithms in a strong adaptive adversary model~\cite{stronglin}.

We devise several new lock-free strongly linearizable implementations from atomic registers. 
In particular, we give the first strongly linearizable lock-free snapshot implementation that uses bounded space.
This improves on the unbounded space solution of Denysyuk and Woelfel~\cite{WaitVsLock}. 
As a building block, our algorithm uses a lock-free strongly linearizable ABA-detecting register. 
We obtain this object by modifying the wait-free linearizable ABA-detecting register of Aghazadeh and Woelfel~\cite{ABAreg}, which, as we show, is not strongly linearizable. 

Aspnes and Herlihy~\cite{generalwaitfree} identified a wide class types that have wait-free linearizable implementations from atomic registers.
These types require that any pair of operations either commute, or one overwrites the other.
Aspnes and Herlihy gave a general wait-free linearizable implementation of such types, employing a wait-free linearizable snapshot object.
Replacing that snapshot object with our lock-free strongly linearizable one, we prove that all types in this class have a lock-free strongly linearizable implementation from atomic registers.
\ifthesis
\end{thesisabstract}
\else
\end{abstract}
\fi
\todo{double-check overfill boxes}
\todo{make sure no occurrences of words ``thesis'' or ``chapter'' or ``??''}

\ifthesis
	\input{preamble}
\fi

\ifthesis
\chapter{Introduction}\label{chapter:intro}
\else
\section{Introduction}\label{chapter:intro}
\fi



In general, correctness properties for concurrent objects are defined by the sequential behaviours they preserve.
That is, overlapping operations on concurrent objects are expected to respond as they would in some sequential execution on the object.
Linearizability, a particularly popular correctness condition, requires that concurrent executions correspond to sequential histories that preserve the real-time order of operations.
Intuitively, operations on a linearizable implementation appear to take effect (i.e. linearize) at some atomic step between their invocation and response.
Linearizable implementations adequately preserve the sequential behaviour of their atomic counterparts; that is, any execution of a linearizable implementation ``appears'' to be a sequential execution of atomic operations.

However, some subtle guarantees are lost with linearizable object implementations. For instance, it has been shown that the probability distribution over the execution outcomes of a randomized algorithm can change when atomic objects are replaced with linearizable implementations \cite{stronglin}.
More generally, linearizability does not preserve any property that cannot be expressed as a set of allowable sequences of operations \cite{mclean:tracesets}.
To address these shortcomings, Golab, Higham, and Woelfel~\cite{stronglin} defined the notion of strong linearizability, and showed that replacing atomic objects with strongly linearizable implementations does not change the probability distribution of the algorithm's outcome under a \emph{strong adversary}.
A strong adversary has the power to schedule executions with complete knowledge of the system state, including all previous random operations (coin flips).
Strong linearizability requires that, once an operation has linearized, its position in the linearization order does not change in the future.
That is, operations cannot be retroactively inserted into the linearization order.
Strongly linearizable objects can simplify randomized algorithm design, which provides motivation for developing strongly linearizable implementations or proving their non-existence.

\ifthesis
We use a well-known implementation of a \emph{counter} (provided for reference in Algorithm~\ref{alg:lincounter}) to concretely demonstrate the difference between atomicity and linearizability in randomized algorithms.
A counter is a type that stores a single integer (initially 0) and supports two operations: $Inc()$, which increases the stored value by $1$, and $Read()$, which returns the stored value.

\begin{algorithm}[h]
\caption{A linearizable counter implementation}
\label{alg:lincounter}

\SetKwProg{Fn}{Function}{:}{}
	
	\nonl\textbf{shared:}\;
		\nonl\quad register $R[1 \ldots n] = [0, \ldots, 0]$\;
	
	\nonl\;
	\Fn{Inc$_p()$}{
		$x \gets R[p].Read()$\;
		$R[p].Write(x + 1)$\;
	}
	\nonl\;
	\Fn{Read$()$}{
		$s \gets 0$\;
		\For{$i \in \{1, \ldots, n\}$}{
			$s \gets s + R[i].Read()$\;
		}
		\Return $s$
	}
\end{algorithm}

In this implementation, every process $p$ has an associated single-writer register $R[p]$.
To perform an $Inc_p()$ operation, process $p$ reads $R[p]$ to obtain the value $x$, then writes the value $x + 1$ to $R[p]$.
A $Read()$ operation iteratively reads all of the registers in the array $R$ and returns the sum of all of these values.

Consider the following programs for the processes $p$, $q$, $r$, where $O$ is a counter object:
\begin{align*}
	p &= O.Inc_p(),\; c \gets Flip() \\
	q &= O.Inc_q() \\
	r &= O.Read()
\end{align*}
Note that the $Flip()$ operation picks a value uniformly and at random from the set $\{0, 1\}$.
We proceed by describing a strong adversary that aims to make $r$'s $O.Read()$ operation return the randomly-selected value $c$.

First, suppose $O$ is an atomic counter.
Since $c$ is either $0$ or $1$, the strong adaptive adversary should never schedule $r$'s $O.Read()$ operation after both $O.Inc()$ operations (since the $O.Read()$ would return $2$ in this circumstance).
Hence, the adversary may schedule $r$'s $O.Read()$ operation either before or after a single $O.Inc()$ operation (it does not matter whether this operation is performed by $p$ or $q$).
Therefore, $r$'s $O.Read()$ operation returns $c$ with probability $\frac{1}{2}$.

Now suppose $O$ is implemented by Algorithm~\ref{alg:lincounter}.
By scheduling operations as in Figure~\ref{fig:counterexec}, the adversary can force $r$'s $O.Read()$ operation to return $c$ with probability $1$.

\begin{figure}[H]
\centering
\begin{tikzpicture}
		\node at (0,0) (p){$p:$};
		\node at (0,-2) (q){$q:$};
		\node at (0,-4) (r){$r:$};

		\draw[black,thick] (2.6,0) -- (4.6,0)
	    node[pos=0,linpt,fill=black]{}
		node[pos=1,linpt,fill=black]{}
		node[pos=0.5,label=above:\textcolor{black}{$O.Inc_p()$}]{};
		
		\node[circle,draw,fill=black,inner sep = 0pt,minimum size=5pt,label=above:\textcolor{black}{$c \gets Flip()$}] at (6,0) {};
		
		\draw[->,black,thick] (7,-1.95) -- (8,-1.25);
		\node at (6.5,-1.5) (ifc1){if $c = 1$};
		
		\draw[->,black,thick] (7,-2.05) -- (8,-2.75);
		\node at (6.5,-2.5) (ifc0){if $c = 0$};
		
		\draw[black,thick] (8.2,-1.25) -- (10.2,-1.25)
	    node[pos=0,linpt,fill=black]{}
		node[pos=1,linpt,fill=black]{}
		node[pos=0.5,label=above:\textcolor{black}{$O.Inc_q()$}]{};
		
		\draw[black,thick] (11.7,-2.75) -- (13.7,-2.75)
	    node[pos=0,linpt,fill=black]{}
		node[pos=1,linpt,fill=black]{}
		node[pos=0.5,label=above:\textcolor{black}{$O.Inc_q()$}]{};
		
		\draw[black,thick] (1,-4) -- (13,-4)
	    node[pos=0,linpt,fill=black]{}
		node[pos=1,linpt,fill=black]{}
		node[pos=0.5,label=above:\textcolor{black}{$O.Read()$}]{}
		node[circle,draw,fill=black,inner sep = 0pt,minimum size=5pt,pos=0.115,label=below:\textcolor{black}{$R[p].Read()$}]{}
		node[circle,draw,fill=black,inner sep = 0pt,minimum size=5pt,pos=0.84,label=below:\textcolor{black}{$R[q].Read()$}]{};
	\end{tikzpicture}
	\caption{After observing the result of $p$'s $O.Flip()$ operation, the strong adaptive adversary can decide whether $r$'s $O.Read()$ operation returns $0$ or $1$.}\label{fig:counterexec}
\end{figure}

In the execution in Figure~\ref{fig:counterexec}, process $r$ first reads $0$ from $R[p]$ during its $O.Read()$ operation. 
Following this, $p$ performs its entire $O.Inc_p()$ operation, writing $1$ to $R[p]$ in the process.
After this, $p$ performs its $Flip()$ operation, which stores either $0$ or $1$ in the variable $c$.
If $c = 0$, then the adversary allows $r$'s $O.Read()$ operation to terminate and return the value $0$.
Otherwise, if $c = 1$, then the adversary allows $q$ to perform its entire $O.Inc_q()$ operation, writing $1$ to $R[q]$ in the process.
Afterwards, the adversary allows $r$'s $O.Read()$ operation to run until termination; the $O.Read()$ operation reads $1$ from $R[q]$, and ultimately returns the value $1$.
Therefore, in both cases $r$'s $O.Read()$ operation returns the value $c$.
The execution in Figure~\ref{fig:counterexec} demonstrates how a strong adversary may be given additional power when atomic base objects are replaced with linearizable implementations.
\fi

\ifthesis
\section{Related Work}\label{sec:relwork}
\else
\subsection{Related Work}\label{sec:relwork}
\fi

The notion of strong linearizability was originally introduced by Golab, Higham, and Woelfel~\cite{stronglin}. Strongly linearizable implementations are not only linearizable, but they also exhibit the ``prefix-preservation'' property. 
This extra condition ensures that the linearization order of operations
does not change retroactively.
Such implementations are resilient against strong adaptive adversaries, which have the power to schedule executions with complete knowledge of the system state, including all previous random operations (coin flips).
In fact, it is known that strong linearizability is necessary to curtail the power of the strong adversary to influence the probability distribution of execution outcomes \cite{stronglin}.
Importantly, strong linearizability, like traditional linearizability, is a local property. Roughly speaking, a correctness property is local if the system satisfies the property provided that each object in the system satisfies the property \cite{MultiProcProgramming}. Hence, since strong linearizability is local, if every object in the set $\{O_1, \ldots, O_k\}$ is strongly linearizable, then any execution obtained by performing operations on any subset of $\{O_1, \ldots O_k\}$ is also strongly linearizable \cite{stronglin,AE2019a}. Strong linearizability is also composable, meaning a strongly linearizable implementation $O$ that uses an atomic base object $B$ of type $\mathscr{T}$ remains strongly linearizable when $B$ is replaced by $B'$, where $B'$ is a strongly linearizable implementation of $\mathscr{T}$ \cite{stronglin, AE2019a}. Locality and composability can simplify distributed algorithm design, as we will see in Section~\ref{sectionSLSS} and Section~\ref{sectionGEN} of this paper.

More generally, Attiya and Enea recently showed \cite{AE2019a} that strong linearizability is a specific form of strong observational refinement.
Traditional refinement \cite{Lynch:Simulations:1994} is a relationship between concrete objects and their specifications; that is, an object $O_1$ (i.e. a concrete object) refines an object $O_2$ (i.e. a specification) if the set of traces (i.e. sequences of possible actions) of $O_1$ is a subset of the set of traces of $O_2$.
An object $O_1$ is said to \emph{observationally} refine $O_2$ if every observation that can be made by a program using $O_1$ (i.e. effects of operation calls on $O_1$ that are observable by the program) could also be made by the same program using $O_2$ instead of $O_1$.
Finally, $O_1$ \emph{strongly} observationally refines $O_2$ if, for every schedule of a program that uses $O_1$, there exists a schedule for the same program that uses $O_2$ instead of $O_1$, such that the program makes precisely the same observations in both scenarios.
It has been shown \cite{Bouajjani:Observation:2015} that observational refinement and linearizability are equivalent when the specification (i.e. $O_2$) is atomic.
Similarly, strong observational refinement is equivalent to strong linearizability when the specification is atomic \cite{AE2019a}.
Strong observational refinements are designed to preserve hyperproperties, which are sets of sets of sequences of program observations.
Hyperproperties are a generalization of the notion of ``execution outcomes'', mentioned previously.
It is known that refinement (and observational refinement) preserves trace properties, which are sets of sequences of actions, but it does not preserve hyperproperties in general \cite{AE2019a}.
This provides further insight into the deficiencies of linearizability described by Golab, Higham, and Woelfel \cite{stronglin}.
Since strong observational refinements preserve the hyperproperties satisfied by their specifications, this implies that strongly linearizable implementations preserve the hyperproperties satisfied by their atomic counterparts.

Unless otherwise noted we assume the standard asynchronous shared memory system, where $n$ processes with unique IDs in $\{1,\dots,n\}$ communicate through atomic read and write operations on shared (multi-reader multi-writer) registers.
Almost all prior work on strong linearizability has focused on this model.
However, it is known that standard wait-free universal constructions (e.g. \cite{uniConstSL}) using $n$-process consensus objects are also strongly linearizable \cite{stronglin}. Therefore, there exists a strongly linearizable implementation of any type using atomic compare-and-swap objects, for example.
On the other hand, Attiya, Casta\~{n}eda, and Hendler \cite{ACH2018a} have shown that any wait-free strongly linearizable implementation of a queue or a stack for $n$ processes along with atomic registers can solve $n$-process consensus. Hence, any $n$-process wait-free strongly linearizable implementation of a queue or stack cannot be implemented from base objects with consensus number less than $n$.

Early results on strong linearizability have largely been negative (i.e. impossibility results).
Helmi, Higham, and Woelfel~\cite{PossImposs} have shown that essentially no non-trivial object has a deterministic wait-free strongly linearizable implementation from \emph{single-writer} registers.
Denysyuk and Woelfel~\cite{WaitVsLock} showed that for several fundamental types, including single-writer snapshots (defined below), counters, and unbounded max-registers, there exist no strongly linearizable wait-free implementations, even from multi-writer registers.

While many published results on strong linearizability (especially for lock/wait-free implementations) are discouraging, some fundamental types are known to have strongly linearizable implementations.
For instance, Helmi, Higham, and Woelfel~\cite{PossImposs} describe a strongly linearizable wait-free implementation of a bounded max-register from multi-reader multi-writer registers \cite{maxRegisters}. A simple modification of this algorithm, which we describe in more detail in Section~\ref{sec:unboundedslss}, results in a strongly linearizable lock-free implementation of an unbounded max-register from unbounded multi-reader multi-writer registers.
Helmi, Higham, Woelfel~\cite{PossImposs} also demonstrate that there is a strongly linearizable obstruction-free implementation of a consensus object from multi-reader single-writer registers. This implies that any type has a strongly linearizable obstruction-free implementation from multi-reader single-writer registers.
Denysyuk and Woelfel~\cite{WaitVsLock} have shown that there exists a universal lock-free strongly linearizable construction for versioned objects, which store version numbers that increase with each atomic update operation (a more detailed explanation of this construction, and of versioned objects, is provided in Section~\ref{sec:unboundedslss}).
This construction uses the unbounded modification of the max-register from \cite{PossImposs}, and therefore requires an unbounded number of registers.
The algorithm inherently requires unbounded space, since the version number of the object must increase with each update.

The snapshot type \cite{snapshot} is a fundamental primitive in distributed algorithm design \cite{generalwaitfree,Abrahamson,afek1999instancy,borowsky1993immediate,gawlick1992concurrent,herlihy1993asynchronous}.
In this paper, we consider only single-writer snapshots, which contain an $n$-component vector of values. For any $p \in \{1, \ldots, n\}$, the $p$-th component of a single-writer snapshot is writable only by process $p$. An $update_p(x)$ invocation by process $p$ changes the contents of the $p$-th component of the snapshot object to $x$.
The snapshot type also supports a $scan$ invocation, which returns the entire stored vector.
That is, the $scan$ invocation allows processes to obtain a consistent view of multiple single-writer memory cells; if a $scan$ invocation returns a vector $V$, then the snapshot object must have contained exactly $V$ at some point in its execution interval.
There are many wait-free linearizable implementations of the snapshot type from registers \cite{snapshot,aspnes2013snapshot,attiya1995snapshot,attiya1998snapshot,inoue1994snapshot}, but due to the results of Denysyuk and Woelfel~\cite{WaitVsLock}, it is known that none of these implementations are strongly linearizable.

\ifthesis
\section{Results}
\else
\subsection{Results}
\fi

An ABA-detecting register stores a single value from some domain $D$, and supports $DWrite$ and $DRead$ invocations.
A $DWrite(x)$ invocation writes value $x\in D$, and a $DRead$ invocation returns the latest written value together with a Boolean flag.
This flag indicates whether there has been a $DWrite$ operation since the previous $DRead$ by the same process.  
This type was originally defined by Aghazadeh and Woelfel \cite{ABAreg}, who also gave a wait-free linearizable implementation from $O(n)$ bounded registers.
ABA-detecting registers are used to combat the ABA problem, which occurs when two $DRead$ operations with interleaving $DWrite$ operations return the same value; in this scenario the reading process is unable to distinguish between the actual execution and a different execution in which no $DWrite$ operations occur between the two $DRead$ operations.
In Section~\ref{slabasection} we show that Aghazadeh and Woelfel's wait-free linearizable ABA-detecting register is not strongly linearizable, and modify it to achieve strong linearizability.
Our implementation sacrifices wait-freedom for lock-freedom.
\begin{theorem}\label{thm:main-ABA}
  There is a lock-free strongly linearizable implementation of an ABA-detecting register from $O(n)$ registers of size $O(\log n+\log|D|)$.
\end{theorem}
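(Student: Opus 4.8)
The plan is to start from the wait-free linearizable ABA-detecting register of Aghazadeh and Woelfel~\cite{ABAreg} and repair the specific feature that destroys prefix-preservation. Their construction stores the current value in a shared register and attaches to it a bounded \emph{handshaking} mechanism: for each process $p$ there is a constant number of shared bits through which a writer signals to $p$ that a $DWrite$ has occurred, and $p$ acknowledges upon reading. A $DWrite(x)$ installs $x$ in the value register and, for every $p$, sets $p$'s writer-bit to disagree with $p$'s most recent acknowledgement, so that a subsequent $DRead$ by $p$ detects the disagreement and reports the flag \textbf{true}; a $DRead$ by $p$ collects the value together with its handshake bits, computes the flag by comparing writer-bit and acknowledgement, and then re-equalises them. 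I would keep exactly this bit layout, which already gives the desired $O(n)$ registers of size $O(\log n + \log|D|)$, but change how a $DRead$ reads.

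First I would pinpoint where strong linearizability fails. In the original algorithm a $DRead$ reads the value and the handshake bits in separate atomic steps, and its linearization point --- in particular, whether a concurrent $DWrite$ is counted as ``before'' or ``after'' it --- is fixed only after inspecting the handshake bits, i.e.\ it depends on which internal step of a concurrent $DWrite$ has already occurred. I would exhibit a short schedule of one $DWrite$ overlapping two $DRead$ operations of $p$ in which the only legal placement of the first read is forced to differ depending on events that happen strictly later, which is exactly the retroactive reordering that~\cite{stronglin} forbids. The fix is to make a $DRead$ by $p$ perform a \emph{double-collect}: it repeatedly reads the value register and $p$'s handshake bits until two consecutive collects agree, and only then commits its return value and flag. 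Since a failed collect witnesses an intervening $DWrite$, this converts each $DRead$ from wait-free to lock-free, but it pins the read's effect to a single atomic step.

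The central step, and the one I expect to be the main obstacle, is to exhibit an explicit \emph{prefix-preserving} linearization function in the sense of~\cite{stronglin}. I would assign each operation a concrete atomic event as its linearization point, determined by the execution prefix and never moved: every $DWrite$ linearizes at its write to the value register, and every $DRead$ by $p$ linearizes at the value-register read of its final, successful collect. Because these points lie inside the operation intervals and do not look into the future, it suffices to prove that after every prefix the induced sequential history is legal; prefix-preservation is then automatic, since extending the execution only appends operations without relocating earlier points. Legality of the returned values is immediate from reading the value register atomically at the linearization step. The delicate part is the flag: I must show it is \textbf{true} precisely when at least one $DWrite$ is linearized in the open interval between $p$'s previous $DRead$ and this one. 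I would establish a handshake invariant --- that between consecutive $DRead$ operations of $p$ the writer-bit for $p$ disagrees with $p$'s acknowledgement iff some $DWrite$ has executed its signalling step for $p$ since that acknowledgement --- and then use the stability from the double-collect, together with a fixed ordering of a $DWrite$'s value-write and signalling steps, to align the signalling step with the chosen value-register linearization points. This rules out the boundary cases in which a $DWrite$'s value step and its signalling step straddle the $DRead$'s linearization point.

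Finally I would dispatch the routine parts. For progress, a $DWrite$ takes a bounded number of steps, and a $DRead$ fails to terminate only while concurrent $DWrite$ operations keep changing the observed state; since each $DWrite$ changes that state a bounded number of times, an infinite run of a $DRead$ forces infinitely many $DWrite$ operations to complete, so whenever the system runs forever some operation makes progress, giving lock-freedom. The space bound is immediate from the bit layout: the value register needs $O(\log|D|)$ bits, each handshake cell needs $O(\log n)$ bits, and there are $O(n)$ of them.
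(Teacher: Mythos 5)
Your high-level strategy matches the paper's: identify that the original $DRead$'s linearization point depends on future events, replace the one-shot read by a repeated collect that terminates only when a stable state is observed, and linearize every operation at its last shared-memory step so that prefix-preservation is automatic. However, the register layout you describe is not the one in Aghazadeh and Woelfel's algorithm, and the difference opens a genuine gap exactly at the step you flag as delicate. In their construction the writer performs a \emph{single} atomic write installing the triple $(x,p,s)$ --- value, writer id, and a fresh sequence number --- into one register $X$; the per-process array $A[q]$ is written by the \emph{reader} $q$ (announcing the pair it last saw), and the $GetSeq$ machinery (the used-queue plus the announcements) guarantees the writer never reuses a sequence number that $q$ has announced. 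Detection therefore needs no separate signalling step: a $DRead$ compares the $(p,s)$ it reads from $X$ against its own announcement, and every linearized $DWrite$ necessarily changes that comparison.

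In your scheme the writer first writes the value register (where you place its linearization point) and only afterwards, in separate steps, flips each reader's writer-bit. Consider a $DWrite(x)$ where $x$ equals the value already stored --- the ABA case the object exists to detect. After the value write the register contents are observably unchanged, and if process $p$ runs a complete, successful double collect before the writer reaches $p$'s handshake cell, $p$ sees agreeing value reads and agreeing handshake bits and returns flag \emph{false}, even though a $DWrite$ linearized strictly between $p$'s previous $DRead$ and this one. Moving the $DWrite$'s linearization point to a signalling step does not help: there are $n$ such steps, and any choice after the value write lets some $DRead$ return the new value before the $DWrite$ has linearized. Your proposed remedy --- ``align the signalling step with the chosen value-register linearization points'' --- is precisely what cannot be done when value installation and per-reader signalling are separate atomic steps; the fix requires packing the detection token into the same write as the value, which is what the sequence-number design accomplishes. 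The lock-freedom and space arguments in your last paragraph are fine, but the flag-correctness argument does not go through as written.
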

The \emph{amortized} step complexity of our implementation is $O(n)$ (an object has amortized step complexity $k$, if all processes combined execute at most $k\cdot \ell$ steps in any execution that comprises $\ell$ operation invocations).
Moreover, each $DWrite$ needs only $O(1)$ steps, and each $DRead$ has constant step complexity in the absence of contention.

As mentioned previously, Denysyuk and Woelfel \cite{PossImposs} sacrificed wait-freedom for lock-freedom to obtain a strongly linearizable snapshot implementation using an unbounded number of registers.
In Section~\ref{sectionSLSS} we give the first such implementation that needs only bounded space.
\begin{theorem}\label{thm:main-snapshot}
  There is a lock-free strongly linearizable implementation of a snapshot object from $O(n)$ registers of size $O(\log n+\log |D|)$.
\end{theorem}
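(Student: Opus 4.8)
The plan is to build the snapshot on top of the strongly linearizable lock-free ABA-detecting register of Theorem~\ref{thm:main-ABA}, adapting the lock-free strongly linearizable snapshot template of Denysyuk and Woelfel~\cite{WaitVsLock} but replacing its unbounded version-number mechanism by the bounded ABA flag. Since strong linearizability is composable~\cite{stronglin,AE2019a}, I would first design and analyze the snapshot assuming \emph{atomic} ABA-detecting registers, and only at the end substitute the implementation of Theorem~\ref{thm:main-ABA}; composability then yields a strongly linearizable lock-free implementation from registers. In this idealized setting each component is an atomic ABA-detecting register $A[1],\dots,A[n]$, an $update_p(x)$ is a single $A[p].DWrite(x)$, and a $scan$ repeatedly performs a \emph{double collect}: it $DRead$s every component to install per-reader markers, then $DRead$s every component again and inspects the Boolean flags. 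A pass in which all second-round flags are false certifies that no $DWrite$ touched any component between the two rounds, so the collected vector was simultaneously present throughout a common sub-interval of the two rounds, and the scan returns this vector.

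Lock-freedom and the update side are the easy part. Each $DWrite$ finishes in $O(1)$ steps (Theorem~\ref{thm:main-ABA} and the remark following it), so updates always complete; and every failed scan attempt witnesses a true flag, hence a $DWrite$, hence a completed update, so the system makes progress. For the linearization, updates inherit a fixed point: by composability each $A[p].DWrite$ is atomic, so $update_p(x)$ linearizes exactly at that step and its position never moves. The difficulty is entirely on the scan side.

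The hard part will be prefix-preservation of the \emph{scan} linearization points, and this is precisely where a naive double collect fails. Consistency forces a scan $S$ to be ordered \emph{before} any update $U$ whose new value $S$ does not observe; but if $U$ writes a component that $S$ has already read, its $DWrite$ can fall after $S$'s clean window yet before $S$ finishes. Then $U$ completes while $S$ is still pending, so $U$ must appear in the linearization of the corresponding history prefix while $S$ does not, and later inserting $S$ before $U$ destroys the prefix relation. The resolution, following~\cite{WaitVsLock}, is a \emph{helping} discipline: an $update$ does not return until every concurrent scan it might have ``overtaken'' can commit with a view that linearizes before the update's $DWrite$. Realizing this helping is exactly what forced Denysyuk and Woelfel into unbounded space (version-stamped views), and the key contribution here is to implement the same coordination with \emph{bounded} registers: the per-reader ABA flag supplies, in $O(\log n+\log|D|)$ bits, precisely the ``changed since you last looked'' information that a monotone version number supplied, so a scanner can be steered to a consistent committed view and an updater can detect and discharge its obligations to concurrent scanners, without ever storing a full $n$-component view. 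With this discipline in place the linearization becomes \emph{append-only}: every scan is assigned a committing step no later than the response of any update it fails to observe, so a completed update never precedes a not-yet-linearized scan in any prefix. Prefix-preservation is then immediate, and linearizability follows from the clean-window guarantee together with the helping invariant.

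It remains to substitute the real building block and count space. Replacing the atomic ABA-detecting registers by the lock-free strongly linearizable implementation of Theorem~\ref{thm:main-ABA} preserves both strong linearizability (composability) and lock-freedom. For the register count, the expensive $O(n)$ part of an ABA-detecting register is its per-reader handshake state; this state can be shared across all $n$ components --- each scanner handshaking once per pass rather than once per component --- keeping the total at $O(n)$ registers of size $O(\log n+\log|D|)$ rather than $O(n^2)$. I expect the main technical work, beyond the helping invariant above, to be verifying that the relative, per-reader semantics of the ABA flag (as opposed to a global monotone clock) still yields a single fixed committing step for each scan and a consistent assignment of linearization points to all concurrent operations.
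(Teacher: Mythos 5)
You have correctly isolated the central difficulty---a completed update must not be forced ahead of a still-pending scan in any prefix of the linearization---but the proposal does not resolve it, and the architecture you chose works against you. In your design an $update_p(x)$ is a single $A[p].DWrite(x)$ on a per-component ABA-detecting register, so there is no step at which an updater could ``detect and discharge its obligations to concurrent scanners'': the helping discipline is asserted, not constructed. The paper's resolution is concretely the opposite of your stated goal of never storing a full $n$-component view: it keeps a \emph{linearizable} (not strongly linearizable) snapshot object $S$ holding the current state, together with a single ABA-detecting register $R$ holding an entire $n$-component view. An $SLupdate_p(x)$ performs $S.update_p(x)$, then $S.scan()$, then writes the resulting full vector to $R$; an $SLscan$ loops reading $R$, $S$, $R$ (helping by writing its $S.scan()$ result back to $R$ when they disagree) until all three agree and the second $DRead$'s flag is false, and returns the vector read from $R$. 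Because the returned view sits in one object at the scan's \emph{last} shared-memory step, the scan linearizes there, and each update linearizes either at its own $DWrite$ to $R$ or is pulled forward to the first scan returning its value---all determined by past steps, which gives prefix preservation. Your double collect cannot linearize a scan at its last step: a component read early in the second round may be overwritten before the round ends, so the collected view is only guaranteed to exist strictly inside the clean window, which is exactly where prefix preservation breaks and where your argument stops.

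Two further claims are unsupported. First, sharing the per-reader handshake state of the ABA-detecting register across all $n$ components to stay within $O(n)$ registers would break the flag semantics: the announcement entry records the (writer, sequence-number) pair of one specific register, and reusing it for another component produces spurious or missed flags; no argument is offered that this can be repaired. Second, ``all second-round flags false'' certifies, for each component $p$, only the interval between that component's own two $DRead$s; the common-sub-interval argument suffices for linearizability but forces the commit point strictly before the scan's final step. I do not see how to complete your route without reintroducing a shared full-view object, at which point you have essentially reconstructed the paper's algorithm (Algorithm~\ref{slss}) and its linearization-point analysis.
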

We provide an analysis of our implementation, showing that the amortized step complexity is $O(n^3)$.
Our algorithm uses as base objects a linearizable snapshot object $S$, so the step complexity heavily depends on the implementation of $S$.
But in the absence of contention, each $update$ and $scan$ operation of the strongly linearizable snapshot needs only a constant number of operations on $S$ in addition to a constant number of register accesses.

Aspnes and Herlihy \cite{generalwaitfree} defined a large class of types with wait-free linearizable implementations.
Any two operations of this type must either commute (meaning the system configuration obtained after both operations have been executed consecutively is independent of the order of the two operations), or one operation overwrites the other (meaning that the system configuration obtained after the overwriting operation has been performed is not affected by whether or not the other operation is executed immediately before it).
In this paper we refer to such types as \emph{simple} types.
Aspnes and Herlihy \cite{generalwaitfree} describe a general wait-free construction of any simple type.
Their algorithm uses an atomic snapshot object, which may be replaced by a linearizable implementation to obtain a wait-free linearizable implementation of any simple type from registers.
In Section~\ref{sectionGEN} we prove that Aspnes and Herlihy's construction is also strongly linearizable.
Combining this with Theorem~\ref{thm:main-snapshot}, and using the composability of strong linearizability, we obtain the following:
\begin{theorem}\label{thm:main-general-construction}
  Any simple type has a lock-free strongly linearizable implementation from registers.
\end{theorem}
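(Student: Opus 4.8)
The plan is to obtain the result in two stages. First I would show that the Aspnes--Herlihy construction is strongly linearizable when it is instantiated with an \emph{atomic} snapshot object, and then I would replace that atomic snapshot with the lock-free strongly linearizable snapshot of Theorem~\ref{thm:main-snapshot}. Since strong linearizability is composable (see Section~\ref{sec:relwork}), the substitution preserves strong linearizability, and since the resulting object uses only registers through the snapshot implementation, this yields a strongly linearizable implementation of any simple type from registers. It therefore suffices to carry out the first stage and to verify the progress and base-object requirements of the second.

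For the first stage I would exhibit a prefix-preserving linearization function for the atomic-snapshot version of the construction. Aspnes and Herlihy already prove that their construction is linearizable; the additional content of strong linearizability is that the linearization order of operations, restricted to those that have already linearized, must never be revised as the execution is extended. The key observation I would use is that it suffices to assign to each operation a \emph{single} atomic step of its execution interval as its linearization point, chosen so as to depend only on the prefix of the execution up to that step and not on the future. Ordering operations by the real-time order of such fixed linearization points is automatically prefix-preserving: extending an execution only appends steps that occur after every step already present, so no new operation can be inserted before one whose linearization step has already occurred. Thus the task reduces to choosing these atomic steps so that the induced total order is a \emph{valid} sequential history.

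This is where the simple-type hypothesis does the work, and I expect it to be the main obstacle. For any two operations I would resolve their relative order using the commute-or-overwrite dichotomy: if one operation overwrites the other, the overwriting operation is linearized at the snapshot step at which it records its effect, so that its linearization point lies after that of the operation it overwrites exactly when the snapshot makes the latter visible to it; if the two operations commute, their relative order is immaterial to the responses, so either order induced by the fixed steps is valid. Verifying that the snapshot update and scan steps can be assigned consistently across all pairs---so that every overwriting relation is respected while commuting operations may float freely---is the heart of the argument. In effect, one must confirm that the linearization implicitly used by Aspnes and Herlihy is \emph{non-anticipating}, i.e. that each operation's place in the order is already determined at an atomic step inside its interval rather than being fixed only in hindsight; this is precisely the gap between linearizability and strong linearizability that the proof must close.

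Finally, for the second stage I would note that each operation of the construction performs only a constant number of snapshot operations together with a bounded amount of local computation. Hence lock-freedom of the snapshot from Theorem~\ref{thm:main-snapshot} lifts to lock-freedom of the composed object: in any infinite execution infinitely many snapshot steps are taken, so by lock-freedom of the snapshot infinitely many snapshot operations complete, and therefore infinitely many operations of the simple type complete. Combined with the strong linearizability established in the first stage and the fact that the snapshot is implemented from registers, this gives a lock-free strongly linearizable implementation of any simple type from registers.
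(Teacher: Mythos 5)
Your two-stage decomposition is exactly the paper's: prove the Aspnes--Herlihy construction strongly linearizable with an atomic snapshot (the paper's Theorem~\ref{theorem:genstronglin}), then substitute the snapshot of Theorem~\ref{thm:main-snapshot} using composability of strong linearizability; your lock-freedom argument for the composed object is also fine. The gap is in the first stage, and it is precisely the point the paper spends Section~\ref{sec:genproofsl} on. Your plan assigns the overwriting (dominating) operation its own $root.update$ step and stipulates that its linearization point lies after the overwritten operation's ``exactly when the snapshot makes the latter visible to it,'' but this leaves the hard case unresolved and, read literally, gets the direction wrong: in the linearization graph the dominance edge runs from the \emph{dominated} operation to the dominating one, so a dominated operation must be placed \emph{before} a concurrent dominating operation even when the dominating one writes its node to $root$ first and never sees the dominated one. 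If the dominated operation were linearized at its own (later) update step, prefix-preservation would fail --- the paper gives exactly this two-operation counterexample at the start of Section~\ref{sec:genproofsl}, and it is why Aspnes and Herlihy's own linearization function (topological sorts of the evolving linearization graph) cannot simply be checked for non-anticipation: it genuinely is anticipating.

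The missing idea is the ``dragging forward'' rule: a pending operation $ex$ that has already executed its $root.scan$ must be linearized at the $root.update$ step of the earliest concurrent dominating operation whose linearization point falls after $ex$'s scan, and ordered immediately before it (the paper's rules \ref{gen:ptrdom}/\ref{gen:ptrself} with tie-breaks \ref{gen:ruledom}/\ref{gen:rulecomm}). This assignment is still determined by the prefix, so prefix-preservation holds, but validity is no longer immediate: one must show that the induced order is a topological sort of $lingraph(G)$ (Lemmas~\ref{lemma:forcedom} and~\ref{lemma:nodominversion}), and one must handle the fact that the resulting linearization contains operations whose nodes are not yet present in the shared precedence graph, which the paper does via the $fill(T)$ completion construction and Observation~\ref{obs:filltranscript}. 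Your ``verify consistency across all pairs'' correctly locates the difficulty but does not yet contain the construction that resolves it.
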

Aspnes and Herlihy introduce the notion of \emph{linearization graphs}, which are directed acyclic graphs whose nodes are operations (we will define these structures more formally in
\ifthesis
Chapter~\ref{sectionGEN}).
\else
Section~\ref{sectionGEN}).
\fi
Aspnes and Herlihy define a linearization function based on topological orderings of these linearization graphs.
However, since operations may be written to the ``middle'' of a linearization graph (i.e. an operation might have outgoing edges immediately as it is written to the graph), this linearization function is not prefix-preserving.
Hence, even though we do not modify the algorithm beyond the snapshot object replacement, the proof of strong linearizability is involved.

\ifthesis
Anderson and Moir \cite{andersonMoir} define a class of \emph{readable} objects, which support a single read operation that returns the entire state of the object, along with a set of update operations that do not return anything and may modify the state of the object. Anderson and Moir claim that a ``dynamic resiliency condition'' is necessary and sufficient for the existence of a wait-free linearizable implementation of an object from registers. This condition is similar to Aspnes and Herlihy's definition of simple types, and Anderson and Moir claim that the sufficiency of the dynamic resiliency condition follows from the general construction in \cite{generalwaitfree}.
In Chapter~\ref{sec:dynamic} we show that Anderson and Moir's notion of dynamic resiliency is not equivalent to Aspnes and Herlihy's notion of simple types by defining a dynamically resilient type which is not simple.
Therefore, Anderson and Moir's claim that a readable object has a wait-free implementation from registers if the object satisfies the resiliency condition remains unproven. Note that if Anderson and Moir's claim was true, combined with Theorem~\ref{thm:main-general-construction} this would imply that a readable object has a lock-free strongly linearizable implementation from atomic multi-reader multi-writer registers if and only if it satisfies the dynamic resiliency condition.
\fi

\ifthesis
\chapter{Preliminaries}\label{defsection}
\else
\section{Preliminaries}\label{defsection}
\fi

We consider an asynchronous shared memory system with $n$ processes, each of which has a unique identifier in $\{1, \dots, n\}$.
\iffull
More precisely, processes communicate by performing operations on shared \emph{objects}, which are each an instance of a \emph{type}.
A type may be defined as a state machine. That is, if $\mathscr{T}$ is a type, then $\mathscr{T} = (\mathcal{S}, s_0, \mathcal{O}, \mathcal{R}, \delta)$, where $\mathcal{S}$ is a set of \emph{states}, $s_0 \in \mathcal{S}$ is an \emph{initial state}, $\mathcal{O}$ is a set of \emph{invocation descriptions}, $\mathcal{R}$ is a set of \emph{responses}, and $\delta : \mathcal{S} \times \mathcal{O} \rightarrow \mathcal{S} \times \mathcal{R}$ is a \emph{transition function}. An invocation description $invoke \in \mathcal{O}$ applied to an object of type $\mathscr{T}$ in state $s \in \mathcal{S}$ returns a value $resp \in \mathcal{R}$ and causes the object to enter state $s'$, where $\delta(invoke, s) = (resp, s')$. Throughout this paper, we only consider types for which $\delta(invoke, s)$ is defined for every invocation description $invoke$ and for every state $s$. An invocation consists of a name, a set of arguments, and a process identifier (indicating the process that performs the invocation).
The \emph{sequential specification} of $\mathscr{T}$ is the set of allowable sequences of invocation/response pairs. We may define the sequential specification of $\mathscr{T}$ inductively;\todo{refine this definition} that is, a sequence $(invoke_1, resp_1), \ldots, (invoke_k, resp_k)$ is in the sequential specification of $\mathscr{T}$ if either (1) it is empty, or (2) $(invoke_1, resp_1), \ldots, (invoke_{k-1}, resp_{k-1})$ is in the sequential specification of $\mathscr{T}$, and $\delta(invoke_k, s_k) = (resp_k, s)$, for some state $s$, where $s_k$ is the state reached after applying the invocations $invoke_1, \ldots, invoke_{k-1}$ in order.
We define types using a descriptive approach; that is, instead of explicitly providing an automaton, we describe the values that an object stores, the invocation descriptions it supports, and how these invocation descriptions change the stored values in sequential executions.
We present invocation descriptions using pseudocode; for instance, $op_p(x, y)$ denotes an invocation description named $op$, with $x$ and $y$ as arguments, and with $p$ as the associated process identifier.
Objects that are provided by the system are called \emph{base objects}.
\fi
\ifea
More precisely, processes communicate by performing operations on shared objects.
\fi
An operation $op$ consists of an \emph{invocation event}, denoted $inv(op)$, and possibly a \emph{response event}, denoted $rsp(op)$. Processes perform operations sequentially.
\iffull
An invocation event is a tuple $(O, M, id)$, where $O$ is an object instance, $M$ is a invocation description, and $id$ is a unique integer that identifies the invocation event. A response event is a pair $(r, id)$, where $r$ is a return value, and $id$ is an integer. An invocation event $(O, M, id_i)$ \emph{matches} a response event $(r, id_j)$ (and vice versa) if and only if $id_i = id_j$. 
\fi
A \emph{transcript} is a sequence of \emph{steps}, each of which is either an invocation event or a response event of some operation. If $T$ and $U$ are transcripts, we use $T \circ U$ to denote the concatenation of $T$ and $U$.

A \emph{projection} of a transcript $T$ onto an object $O$, denoted $T|O$, is the sequence of steps in $T$ that are performed on $O$. Similarly, a projection of a transcript $T$ onto a process $p$, denoted $T|p$, is the sequence of invocation and response events by process $p$.
\iffull
We say $e \in T$, for some invocation or response event $e$ and some transcript $T$, if $e$ is a member of the sequence defined by $T$. As a shorthand, we say $op \in T$, for an operation $op$ and a transcript $T$, if $inv(op) \in T$. 
\fi
An operation $op$ is \emph{pending} in some transcript $T$ if $T$ contains its invocation but no matching response. If an operation $op \in T$ is not pending in $T$, then it is \emph{complete}. A transcript $T$ is \emph{complete} if, for every operation $op \in T$, $op$ is complete. Otherwise, $T$ is \emph{incomplete}. An operation $op$ is \emph{atomic} in a transcript $T$ if $inv(op)$ is immediately followed by $rsp(op)$ in $T$. A transcript $T$ is \emph{well-formed} if $T$ is empty, or for every $p \in \{1, \ldots, n\}$, $T|p = i \circ T_1 \circ r \circ T_2$, where $i$ is some invocation event, $T_1$ and $T_2$ are well-formed transcripts, $T_1$ is complete, and $r$ is either a response event that matches $i$, or $r$ is empty. Throughout this paper, we assume all transcripts are well-formed.

An object $O$ is \emph{atomic} if every operation in any transcript on $O$ is atomic. 
\ifea
We say an object $O$ is \emph{implemented} if each operation invocation on $O$ is associated with a \emph{method}, which is a sequence of invocation descriptions.
A process $p$ that executes an operation on an implemented object $O$, such that $inv(op)$ is associated with a method $M$, sequentially executes each invocation step described by $M$.
\fi
\iffull
We say an object $O$ of type $\mathscr{T}$ is \emph{implemented} if each invocation description provided by $\mathscr{T}$ is associated with a \emph{method}, which is a sequence of invocations.
A process $p$ that executes an operation invocation $inv(op)$ on an implemented object $O$, such that the invocation description of $inv(op)$ is associated with a method $M$, sequentially executes each invocation step described by $M$.
\fi
During the execution of this method, other processes may also take steps, which interleave with the steps taken by $p$.

The order of operations in a transcript is a partial order, since some operations may overlap. Operation $op_1$ \emph{happens before} operation $op_2$ in transcript $T$, or $op_1 \xrightarrow{T} op_2$, if and only if $rsp(op_1)$ occurs before $inv(op_2)$ in $T$.
Operations $op_1, op_2 \in T$, for some transcript $T$, are \emph{concurrent} if $op_1$ does not happen before $op_2$, and $op_2$ does not happen before $op_1$.

A \emph{history} is a transcript such that, for every process $p$, every operation in $H|p$ is atomic. A history may be considered a sequence of ``high-level'' invocation and response events.  A \emph{completion} of a history $H$ is a complete history $H'$ that is constructed from $H$ as follows: for each pending operation $op$ in $H$, either a response for $op$ is appended to $H'$, or $op$ is removed from $H'$. A \textit{sequential history} is a history that contains no concurrent operations. Suppose $S = inv(op_1) \circ rsp(op_1) \circ \ldots \circ inv(op_k) \circ rsp(op_k)$ is a sequential history on an object $O$ of type $\mathscr{T}$, where $inv(op_i) = (O, invoke_i, id_i)$ and $rsp(op_i) = (O, resp_i, id_i)$ for every $i \in \{1, \ldots, k\}$. Then $S$ is \emph{valid} if and only if $(invoke_1, resp_1), \ldots, (invoke_k, resp_k)$ is in the sequential specification of $\mathscr{T}$. If $T$ is a transcript, the \emph{interpreted history} $\Gamma(T)$ consists of all ``high-level'' steps that exist in $T$. That is, $\Gamma(T)$ can be constructed by removing, for every process $p$, every step that appears after $inv(op)$ and not after $rsp(op)$, for any operation $op \in T$ with process identifier $p$. If $\mathcal{T}$ is a set of transcripts, then $\Gamma(\mathcal{T}) = \{\Gamma(T)\;:\;T \in \mathcal{T}\}$.

Linearizability was originally defined by Herlihy and Wing~\cite{linearizability}. The following definition is taken from the textbook by Herlihy and Shavit \cite{MultiProcProgramming}: a history $H$ is \textit{linearizable} if it has a completion $H'$ such that there is a sequential history $S$ with the following properties:

\begin{itemize}
	\item All operations in $H'$ are present in $S$, with identical invocations and responses;
	\item the sequential history $S$ is valid; and
	\item the happens-before order of operations in $S$ extends the happens-before order of operations in $H'$.
\end{itemize}

We call a sequential history $S$ that satisfies the above properties a \emph{linearization} of $H$. An implementation of an object is linearizable if every history in the set of possible histories on the object is linearizable. That is, if $O$ is some object and $\mathcal{H}$ is the set of possible histories on $O$, then $O$ is linearizable if and only if for all $H \in \mathcal{H}$, $H$ is linearizable. If $\mathcal{H}$ is a set of histories, and $f$ is a function such that, for every $H \in \mathcal{H}$, $f(H)$ is a linearization of $H$, then $f$ is called a \emph{linearization function} for the set $\mathcal{H}$.

We often refer to the \emph{time} at which particular steps are executed; this simply refers to the step's position in a transcript. That is, if $e$ is a step in a transcript $T$, then $time_T(e) = t$ if the $t$-th element of $T$ is $e$. Where $T$ is clear from context, we simply write $time(e) = t$. If $e$ is a step that is not present in a transcript $T$, then let $time_T(e) = \infty$. For the sake of brevity, if $am$ is an atomic operation in a transcript $T$, then we say $am$ \emph{happens} at $time(rsp(am))$. If $op$ is a complete operation whose implementation contains an atomic operation on line~$x$, we use $op^x$ to denote $rsp(am)$, where $am$ is the operation invoked by the final call to line~$x$ performed by $op$.

Another characterization of linearizability relies on the notion of \emph{linearization points}.
Let $O$ be a linearizable object.
Then, for any transcript $T$, a \emph{linearization point function} $pt$ for $O$ maps operations in $\Gamma(T|O)$ to points in time in $T$, such that
\begin{enumerate}[label=(\roman*)]
	\item for every operation $op \in \Gamma(T|O)$, \ifspringer \\ \else\fi $pt(op) \in \bigl[time_T(inv(op)), time_T(rsp(op))\bigr]$, and
	\item there exists a linearization $S$ of $\Gamma(T|O)$ such that for every operation $op \in \Gamma(T|O)$ such that $pt(op) \neq \infty$, $op \in S$, and for every pair of operations $op_1, op_2 \in S$, if $op_1 \xrightarrow{S} op_2$ then $pt(op_1) \leq pt(op_2)$.
\end{enumerate}
Intuitively, a linearization point is a point in time between the invocation and response of an operation $op$ at which $op$ ``appears'' to take effect. In any transcript containing operations on a linearizable object $O$, each operation on $O$ can be assigned a linearization point between its invocation and response, such that the sequential history that results from ordering each operation on $O$ by these points is valid.

The \textit{prefix closure} of a set of transcripts $\mathcal{T}$, denoted $close(\mathcal{T})$, is the set of all transcripts $S$ such that there exists a transcript $T$ such that $S \circ T \in \mathcal{T}$. A \textit{strong linearization function} $f$ for a set of transcripts $\mathcal{T}$ has the following properties~\cite{stronglin}:
\begin{itemize}
\item The function $f$ is a linearization function for the set of histories $\Gamma(close(\mathcal{T}))$.
\item For any two transcripts $S, T \in \mathcal{T}$ such that $S$ is a prefix of $T$, $f(S)$ is a prefix of $f(T)$. That is, $f$ is \textit{prefix-preserving}.
\end{itemize}
An implementation $O$ of a type $\mathscr{T}$ is called \textit{strongly linearizable} if and only if the set of all transcripts on instances of $O$ has a strong linearization function.

\ifea
An implementation of a type is \emph{wait-free} if each operation terminates within a finite number of the calling process' steps.
It is \emph{lock-free}, if some operation terminates provided that some process takes sufficiently many steps.
\else
Let $\mathcal{T}$ be the set of transcripts of an implementation $S$ of some type $\mathscr{T}$.
A \emph{continuation} of a transcript $T$ is a transcript $U$ such that $T \circ U$ is well-formed and $T \circ U \in \mathcal{T}$.
Then $S$ is \emph{wait-free} if, for every transcript $T \in \mathcal{T}$ and every pending operation $op \in \Gamma (T)$ by process $p$, every continuation $U$ of $T$ that contains an infinite number of steps by $p$ has a finite prefix $U'$ such that $rsp(op) \in U'$.
The implementation $S$ is \emph{lock-free} if, for every transcript $T \in \mathcal{T}$ such that $\Gamma (T)$ contains at least one pending operation, for every infinite continuation $U$ of $T$, there exists an $op \in \Gamma (T)$ that is pending in $T$ such that, for some finite prefix $U'$ of $U$, $rsp(op) \in U'$.
Intuitively, an implementation is wait-free if each pending operation by process $p$ responds within a finite number of steps by $p$, and an implementation is lock-free if some pending operation responds provided that some process takes sufficiently many steps.
\fi

\ifthesis
\chapter{A Strongly Linearizable ABA-Detecting Register}\label{slabasection}
\else
\section{A Strongly Linearizable ABA-Detecting Register}\label{slabasection}
\fi
    An ABA-detecting register \cite{ABAreg} is a type that stores a single value $R$ from some domain $D$, and supports the invocation descriptions $DWrite_q(x)$ for $x\in D$, and $DRead_q()$ with the following sequential specification:
    Initially, $R=\bot\in D$, and a $DWrite_q(x)$ invocation changes the value of $R$ to $x$. 
    Invocation $DRead_q()$ returns a pair $(x,a)\in D\times\{true,false\}$, where $x$ is the value of $R$, and $a$ is $true$ if and only if $q$ performed an earlier $DRead_q()$ operation, and a $DWrite_p$ was performed by some process $p$ since $q$'s last $DRead_q()$.

\ifthesis
\section{A Linearizable ABA-Detecting Register}
\else
\subsection{A Linearizable ABA-Detecting Register}
\fi
Aghazadeh and Woelfel~\cite{ABAreg} presented a wait-free linearizable ABA-detecting register, which is included here as Algorithm~\ref{linaba} for reference. For a detailed description of the algorithm and a proof of its linearizability, see \cite{zahrathesis, ABAreg}. This algorithm works by associating each write with a process identifier and a sequence number. Processes are also responsible for ``announcing'' the sequence numbers they read into a global array of single-writer registers. That is, if process $p$ reads process identifier $q$ and sequence number $s$, then it writes the pair $(q, s)$ to the $p$-th entry of the announcement array $A$; a writer does not use a sequence number if it is paired with their process identifier in this announcement array. A writer also does not use any sequence number that is present in a local queue, called $usedQ$ in Algorithm~\ref{linaba}, which is a queue of $n + 1$ values. This queue stores the previous $n + 1$ sequence numbers chosen by the writing process, and it initially contains $n + 1$ elements valued $\bot$. Finally, if a write occurs during a read operation, the reader sets a local flag. This flag is used to delegate the task of acknowledging the modification to the reading process' next read operation.

\begin{algorithm}
	\small
    \caption{A linearizable ABA-detecting register \cite{ABAreg}}
    \label{linaba}
    
    \SetKwProg{Fn}{Function}{:}{}
    
    \nonl\textbf{shared:}\;
    	\nonl\quad register $X = (\bot, \bot, \bot)$\;
    	\nonl\quad register $A[0 \ldots n-1] = ((\bot, \bot), \ldots , (\bot, \bot))$\;
    \nonl\;  
    \nonl\textbf{local} (to each process):\; 
    	\nonl\quad Boolean $b = False$\;
    	\nonl\quad Queue $usedQ[n+1] = (\bot, \ldots , \bot)$\;
    	\nonl\quad Set $na = \{\}$\;
    	\nonl\quad Integer $c = 0$\;
    
    \nonl\;
	\nonl\Fn{DWrite$_p(x)$} {
      $s \leftarrow GetSeq()$\; \label{getseq}
      $X.Write(x, p, s)$\;		\label{linwritelin}
    }
    
    \nonl\; 
    \nonl\Fn{GetSeq$_p()$} {
	  $(r, s_r) \leftarrow A[c].Read()$\;
      \If{$r = p$} {
        $na \leftarrow (na \setminus \{(c, i) \; | \; i \in \mathbb{N}\}) \cup (c, s_r)$\;
      }
      \Else {
        $na \leftarrow na \setminus \{(c, i) \; | \; i \in \mathbb{N}\}$\;
      }
      $c \leftarrow (c + 1)$ mod $n$\;
      choose arbitrary $s \in \bigl(\{0, \ldots , 2n+1\}$
      		$\setminus (\{i \; | \; (j, i) \in na\} \cup usedQ)\bigr)$\; \label{chooseseq}
      $usedQ.enq(s)$\;						\label{enqseq}
      $usedQ.deq()$\;						\label{deqseq}
      \Return $s$
    }
	
	\nonl\;
    \nonl\Fn{DRead$_q()$} {
      $(x, p, s) \leftarrow X.Read()$\;		\label{linfirstread}
      $(r, s_r) \leftarrow A[q].Read()$\;	\label{linannread}
      $A[q].Write(p, s)$\;					\label{linannounce}
      $(x', p', s') \leftarrow X.Read()$\;	\label{linsecondread}
      \If{$(p, s) = (r, s_r)$} {			\label{linretifstatement}
        $ret \leftarrow (x, b)$\;			\label{linretsetb}
      }
      \Else {
        $ret \leftarrow (x, True)$\;		\label{linretsettrue}
      }
      \If{$(x, p, s) = (x', p', s')$} {		\label{linflagifstatement}
        $b \leftarrow False$\;				\label{linfalseflag}
      }
      \Else {
        $b \leftarrow True$\;				\label{lintrueflag}
      }
      \Return $ret$
    }
\end{algorithm}

\begin{observation}\label{obslinabanotsl}
	The ABA-detecting register in Algorithm~\ref{linaba} is not strongly linearizable.
\end{observation}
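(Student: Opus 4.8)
The plan is to disprove strong linearizability directly from the definition: I will exhibit two transcripts of Algorithm~\ref{linaba} that share a common prefix but admit no common prefix-preserving linearization, so that no strong linearization function $f$ can exist. The mechanism I intend to exploit is the $b$-flag deferral in lines~\ref{linflagifstatement}--\ref{lintrueflag}. Concretely, I will build a prefix $P$ in which a $DWrite$ operation $W$ by $p$ executes entirely between the two reads of $X$ (lines~\ref{linfirstread} and~\ref{linsecondread}) of a $DRead$ operation $D$ by $q$. Then $D$ returns the pre-$W$ value together with the flag $False$ via line~\ref{linretsetb}, and merely sets $q$'s local flag $b := True$ (line~\ref{lintrueflag}) so that acknowledgement of $W$ is delegated to $q$'s next $DRead$. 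I will arrange for $W$ to complete inside $P$, while leaving the linearization order of $W$ relative to some other operation unconstrained by $P$ alone.

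I will then extend $P$ in two different ways, $T$ and $T'$, chosen so that the deferred write $W$ is forced into two incompatible positions of the linearization relative to a second operation $O$ that is also already complete in $P$. In one continuation a subsequent $DRead$ (by $q$, or by a second reader) observes the effect of $W$ and must acknowledge it, pinning $W$ on one side of $O$; in the other continuation the values read and the announcement comparison of line~\ref{linretifstatement} force $W$ onto the opposite side of $O$. Since $O$ and $W$ are both complete in $P$, any strong linearization function must already fix their relative order in $f(P)$, and by prefix-preservation this order must be a prefix of both $f(T)$ and $f(T')$. As the two continuations demand opposite orders of $O$ and $W$, no such $f(P)$ can exist, contradicting strong linearizability.

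The crux, and the main obstacle, is engineering the prefix so that the relative order of the two complete operations is genuinely undetermined by $P$ yet resolved in opposite ways by the continuations. The deferral mechanism is precisely what creates this slack: at the instant $D$ returns, the implementation has exposed the pre-$W$ value and the flag $False$ but has committed to nothing about where $W$ sits relative to concurrent operations, delegating that decision to a future $DRead$. The delicate part is keeping both continuations legal executions, which requires respecting the sequence-number bookkeeping of $GetSeq$ (the queue $usedQ$ and the set $na$, lines~\ref{chooseseq}--\ref{deqseq}) so that the process identifiers and sequence numbers read in each branch are exactly those needed to trigger the intended acknowledgements. I expect the argument to turn on isolating one short window in which this bookkeeping has not yet pinned $W$'s linearization point; this is in direct contrast to the modification of the next subsection, which closes the window by trading wait-freedom for lock-freedom.
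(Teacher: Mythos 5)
Your high-level strategy (a common prefix, two continuations, and a prefix-preservation contradiction) is the same as the paper's, but the specific construction you describe has a gap that I do not think you can close. You insist that both operations whose order is ambiguous --- the $DWrite$ $W$ and the straddled $DRead$ $D$ --- be \emph{complete} in the prefix $P$. Once $D$ has completed, it has already executed lines~\ref{linflagifstatement}--\ref{lintrueflag}: because $W$ changed the triple in $X$ between $D$'s two reads, $D$ sets $b \gets True$, and $D$ has also already announced the \emph{pre}-$W$ pair to $A[q]$ on line~\ref{linannounce}. Consequently, $q$'s next $DRead$ returns the flag $True$ no matter how you schedule the continuation (either via line~\ref{linretsetb} with $b = True$, or via line~\ref{linretsettrue} because the stale announcement mismatches the new contents of $X$). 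A flag of $True$ only ever forces some $DWrite$ \emph{into} the interval between $q$'s two reads, i.e., it can pin $W$ \emph{after} $D$; no continuation can produce the $False$ flag needed to pin $W$ \emph{before} $D$. So the linearization that places $D$ before $W$ in $f(P)$ extends consistently to every continuation, and no contradiction arises. In short: a completed straddled read has already committed its flag and its $b$ update, which is exactly the slack your argument needs to still be open.

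The paper's proof avoids this by keeping the read \emph{pending} at the divergence point: $dr_1$ is suspended after line~\ref{linannread} but before line~\ref{linannounce}, so both its own flag computation and the $b$ value inherited by $dr_2$ are still undetermined, and the case analysis is over where the strong linearization function places the pending $dr_1$ relative to the completed $dw_{i+1}$. The two branches then drive $dr_2$ to opposite flags: one branch inserts enough further $DWrite$s that some $dw_j$ \emph{reuses} the sequence number $s$ of $dw_i$ (guaranteed because sequence numbers come from the finite set $\{0,\dots,2n+1\}$ on line~\ref{chooseseq}, and possible only because $dr_1$ has not yet announced $(q,s)$), making $dr_1$'s delayed second read match its first so that $dr_2$ returns $False$ despite many intervening writes; the other branch lets $dr_1$ finish immediately so that the announcement mismatch forces $dr_2$ to return $True$ despite no intervening writes. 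Your proposal treats the sequence-number bookkeeping as a legality constraint to be ``respected,'' but the deliberate ABA reuse after an unbounded number of writes is the engine of one entire branch, not a side condition. To repair your argument you would need to move the divergence point back inside $dr_1$ and reintroduce this reuse, at which point you have reconstructed the paper's proof.
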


Algorithm~\ref{linaba} is not strongly linearizable because the point at which a $DRead$ operation takes effect depends on whether or not a $DWrite$ operation executes line~\ref{linwritelin} between lines \ref{linfirstread} and \ref{linsecondread} of the $DRead$; that is, if a process sets its $b$ flag during a $DRead$ operation, then it must linearize on line~\ref{linfirstread}, since the following read will detect any $DWrite$ operations that occur between lines \ref{linfirstread} and \ref{linsecondread}. Conversely, if a process does not set its $b$ flag during a $DRead$ operation, then this operation is responsible for detecting any $DWrite$ operations that occur between lines \ref{linfirstread} and \ref{linsecondread}, and it must therefore linearize on line~\ref{linsecondread}. This behaviour allows a scheduler to insert a $DRead$ operation in front of $DWrite$ operations that have already taken effect.
\ifea
A more formal proof of Observation~\ref{obslinabanotsl} is included in (FULL PAPER)\todo{this}.
\fi

\ifthesis
We prove Observation~\ref{obslinabanotsl} by describing an execution between two processes, $p$ and $q$. From this execution, we generate several possible transcripts. Assuming the implementation is strongly linearizable, this set of constructed transcripts must have a strong linearization function. We conclude the proof by showing that such a function cannot exist.
\fi

\iffull
\begin{proof}[Proof of Observation~\ref{obslinabanotsl}]
Consider an execution of Algorithm~\ref{linaba} where process $p$ executes two $DRead_p$ operations $dr_1$ and $dr_2$, and process $q$ executes an infinite sequence of $DWrite_q(x)$ operations $dw_1, dw_2, \ldots$ for some value $x$. Let $dw_i, dw_{i+1}$ be two consecutive $DWrite_q(x)$ operations by $q$, and let $s$ be the sequence number chosen by $dw_i$ on line~\ref{chooseseq}. Since $q$ performs a $usedQ.enq(s)$ operation on line~\ref{enqseq} of $dw_i$, and $usedQ$ contains $n+2$ elements after this operation, $usedQ$ contains $s$ after the $usedQ.deq()$ operation performed by $q$ on line~\ref{deqseq} of $dw_i$. Following the $usedQ.deq()$ operation in $dw_i$, $usedQ$ is not modified again until line~\ref{enqseq} is executed by $q$ during $dw_{i+1}$. Hence, $usedQ$ contains $s$ while $q$ selects a sequence number on line~\ref{chooseseq} of $dw_{i+1}$. Thus, the sequence number chosen by $dw_{i+1}$ is different from $s$. We have shown that
\ifspringer
\begin{equation}
	\begin{split}
	&\text{no two consecutive $DWrite$ operations by $q$} \\ &\text{choose the same sequence number.}\label{distinct_seq}
	\end{split}
\end{equation}
\else
\begin{equation}
	\text{no two consecutive $DWrite$ operations by $q$ choose the same sequence number.}\label{distinct_seq}
\end{equation}
\fi

Additionally, since sequence numbers are chosen from a finite set of integers, in the infinite sequence of $DWrite$ operations by $q$,
\ifspringer
\begin{equation}
	\begin{split}
	&\text{there are distinct operations $dw_i$ and $dw_j$ that} \\ &\text{choose the same sequence number.}\label{repeat_seq}
	\end{split}
\end{equation}
\else
\begin{equation}
	\text{there are distict operations $dw_i$ and $dw_j$ that choose the same sequence number.}\label{repeat_seq}
\end{equation}
\fi

Let $dw_i, dw_j$ be two distinct $DWrite_q(x)$ operations in the infinite sequence performed by $q$ (assume $i < j$), both of which choose the same sequence number $s$. Note that (\ref{distinct_seq}) guarantees that $dw_{i+1}$ chooses a sequence number $s' \neq s$, and hence $dw_{i+1} \neq dw_j$. The following transcripts $S$, $T_1$, and $T_2$ are possible transcripts produced by the programs described for $p$ and $q$:
\ifspringer
\begin{align*}
	&S = dw_1 \circ \ldots \circ dw_i \circ \\ &\qquad (dr_1 \text{ to the end of line \ref{linannread}}) \circ dw_{i+1} \\
	&T_1 = S \circ dw_{i+2} \circ \ldots \circ dw_j \circ \\ &\qquad (dr_1 \text{ from line \ref{linannounce} to completion}) \circ dr_2 \\
	&T_2 = S \circ (dr_1 \text{ from line \ref{linannounce} to completion}) \circ dr_2
\end{align*}
\else
\begin{align*}
	&S = dw_1 \circ \ldots \circ dw_i \circ (dr_1 \text{ to the end of line \ref{linannread}}) \circ dw_{i+1} \\
	&T_1 = S \circ dw_{i+2} \circ \ldots \circ dw_j \circ (dr_1 \text{ from line \ref{linannounce} to completion}) \circ dr_2 \\
	&T_2 = S \circ (dr_1 \text{ from line \ref{linannounce} to completion}) \circ dr_2
\end{align*}
\fi
To derive a contradiction, assume that Algorithm~\ref{linaba} is strongly linearizable. Then there must be a strong linearization function for $\{S, T_1, T_2\}$; let $f$ be such a strong linearization function. Since $dw_i$ executes an $X.Write(x, q, s)$ operation on line~\ref{linwritelin}, and no later $X.Write$ operations occur before $dr_1$ executes the $X.Read()$ operation on line~\ref{linfirstread} during $S$,
\ifspringer
\begin{equation}
	\begin{split}
	&\text{the $X.Read()$ operation performed by $dr_1$ on} \\ &\text{line~\ref{linfirstread} in $S$ returns $(x, q, s)$.}\label{dr1_firstread}
	\end{split}
\end{equation}
\else
\begin{equation}
	\text{the $X.Read()$ operation performed by $dr_1$ on line~\ref{linfirstread} in $S$ returns $(x, q, s)$.}\label{dr1_firstread}
\end{equation}
\fi

Since $dw_1, \ldots, dw_i$ are performed in sequence, and because each of these operations respond before any other operation is invoked in all of the above transcripts, each of $f(S)$, $f(T_1)$, and $f(T_2)$ must begin with $dw_1 \circ \ldots \circ dw_i$.

Suppose $dr_1$ linearizes prior to $dw_{i+1}$ in $S$. That is, suppose 
\begin{equation}\tag{A-1}
	f(S) = dw_1 \circ \ldots \circ dw_i \circ dr_1 \circ dw_{i+1}.\label{suppose_linfirst}
\end{equation}

The following table summarizes the steps that affect the return value of $dr_2$ in $T_1$:

\begin{table}[H]
\centering
\begin{tabular}{ccll}
	\textbf{Line \#} & \textbf{Operation} & \textbf{Code Statement} & \textbf{Response} \\
	\ref{linwritelin}	& $dw_j$	& $X.Write(x, q, s)$	& --- \\
	\ref{linannounce}	& $dr_1$	& $A[p].Write(q, s)$	& --- \\
	\ref{linsecondread}	& $dr_1$	& $X.Read()$			& $(x, q, s)$ \\
	\ref{linfalseflag}	& $dr_1$	& $b \gets False$		& --- \\
	\ref{linfirstread}	& $dr_2$	& $X.Read()$			& $(x, q, s)$ \\
	\ref{linannread}	& $dr_2$	& $A[p].Read()$			& $(q, s)$ \\
	\ref{linsecondread}	& $dr_2$	& $X.Read()$			& $(x, q, s)$ \\
	\ref{linretsetb}	& $dr_2$	& $ret \gets (x, False)$& ---
\end{tabular}
\end{table}

Since $dr_1$ reads $(x, q, s)$ on line~\ref{linfirstread} and line~\ref{linsecondread}, $dr_1$ sets its $b$ flag to $False$ on line~\ref{linfalseflag} in $T_1$.
Since $dr_2$ reads $(x, q, s)$ on line~\ref{linfirstread} and line~\ref{linsecondread}, by the condition on line~\ref{linretifstatement} $dr_2$ executes $ret \gets (x, b)$ on line~\ref{linretsetb} in $T_1$. Hence,
\begin{equation}\tag{A-2}
	\text{$dr_2$ returns $(x, False)$ in $T_1$.}\label{dr2_retfalse}
\end{equation}

By (\ref{suppose_linfirst}), the fact that $dw_{i+2}, \ldots, dw_j$, and $dr_2$ are performed sequentially in $T_1$, and our supposition that $f$ is prefix-preserving,
\begin{equation}\tag{A-3}
	f(T_1) = dw_1 \circ \ldots \circ dw_i \circ dr_1 \circ dw_{i+1} \circ \ldots \circ dw_j \circ dr_2.\label{t1_linearization}
\end{equation}

However, the history in (\ref{t1_linearization}) is not valid, since there is at least one $DWrite$ operation between $dr_1$ and $dr_2$, but $dr_2$ returns $(x, False)$ by (\ref{dr2_retfalse}).

Now suppose that $dr_1$ does not linearize prior to $dw_{i+1}$ in $S$. That is, suppose
\ifspringer
\begin{equation}\tag{B-1}
	\begin{split}
	&\text{either $f(S) = dw_1 \circ \ldots \circ dw_{i+1}$ or} \\ &\text{$f(S) = dw_1 \circ \ldots \circ dw_{i+1} \circ dr_1$.}\label{suppose_linsecond}
	\end{split}
\end{equation}
\else
\begin{equation}\tag{B-1}
	\text{either $f(S) = dw_1 \circ \ldots \circ dw_{i+1}$ or $f(S) = dw_1 \circ \ldots \circ dw_{i+1} \circ dr_1$.}\label{suppose_linsecond}
\end{equation}
\fi

The following table summarizes the steps that affect the return value of $dr_2$ in $T_2$:

\begin{table}[H]
\centering
\begin{tabular}{ccll}
	\textbf{Line \#} & \textbf{Operation} & \textbf{Code Statement} & \textbf{Response} \\
	\ref{linwritelin}	& $dw_{i+1}$	& $X.Write(x, q, s')$	& --- \\
	\ref{linannounce}	& $dr_1$		& $A[p].Write(q, s)$	& --- \\
	\ref{linfirstread}	& $dr_2$		& $X.Read()$			& $(x, q, s')$ \\
	\ref{linannread}	& $dr_2$		& $A[p].Read()$			& $(q, s)$ \\
	\ref{linretsettrue} & $dr_2$		& $ret \gets (x, True)$	& ---
\end{tabular}
\end{table}

Due to (\ref{dr1_firstread}), $dr_1$ executes an $A[p].Write(q, s)$ operation on line~\ref{linannounce} in $T_2$.
Hence, the $A[p].Read()$ operation performed by $dr_2$ on line~\ref{linannread} in $T_2$ must return $(q, s)$.
Since the $X.Read()$ operation performed by $dr_2$ on line~\ref{linfirstread} returns $(x, q, s')$, and $s \neq s'$, $dr_2$ executes the $ret \gets (x, True)$ statement on line~\ref{linretsettrue} by the condition on line~\ref{linretifstatement} in $T_2$.
Therefore,
\begin{equation}\tag{B-2}
	\text{$dr_2$ returns $(x, True)$ in $T_2$.}\label{dr2_rettrue}
\end{equation}

By (\ref{suppose_linsecond}), the fact that $dr_1$ and $dr_2$ are performed sequentially, and our assumption that $f$ is prefix-preserving,
\begin{equation}\tag{B-3}
	f(T_2) = dw_1 \circ \ldots \circ dw_{i+1} \circ dr_1 \circ dr_2.\label{t2_linearization}
\end{equation}

However, the history in (\ref{t2_linearization}) is not valid, since there are no $DWrite$ operations between $dr_1$ and $dr_2$, but $dr_2$ returns $(x, True)$ by (\ref{dr2_rettrue}).

Thus, no strong linearization function can be defined over the set $\{S, T_1, T_2\}$, which proves the observation.
\end{proof}
\fi

\ifthesis
\section{Making the Algorithm Strongly Linearizable}
\else
\subsection{Making the Algorithm Strongly Linearizable}
\fi
Algorithm~\ref{linaba} can be modified in order to make the implementation strongly linearizable. Our modification to the $DRead$ method of the linearizable ABA-detecting register is provided in Algorithm~\ref{slaba}.
The $GetSeq$ and $DWrite_p$ methods are the same as in \cite{zahrathesis}.
Our strategy is to ``stretch'' $DRead$ operations until a period of quiescence is observed by the reading process. 
Our new $DRead$ method performs the same sequence of reads as in Algorithm~\ref{linaba}; however, each $DRead$ is now responsible for acknowledging all concurrent $DWrite$ operations, rather than delegating this task to the next $DRead$ by the same process.
Processes no longer maintain a local $b$ flag; instead, each $DRead$ operation begins by initializing a flag called $changed$ to $False$ on line~\ref{changeinit}, before starting a repeat-until loop.
During an iteration of the repeat-until loop by a $DRead$ operation, a process $p$ that notices a difference in the values read from $X$ on line~\ref{regread1} and line~\ref{regread2}, or that $A[p]$ does not contain the same value as $X$, sets $changed$ to $True$ on line~\ref{changed} before repeating its sequence of reads. A process $p$ performing a $DRead$ operation also announces the process identifier/sequence number pair read from $X$ on line~\ref{regread1} to $A[p]$ on line~\ref{announce1}. As before, the purpose of this announcement is to prevent $DWrite$ operations from choosing sequence numbers that have been observed recently.
When $p$ sees that its sequence of reads all return the same value, it can safely return this value along with the $changed$ flag. 
In that case, $p$'s return value is consistent with the state of the ABA-detecting register at the point of $p$'s last shared memory operation, i.e., when it reads $X$ for the last time.
Hence, each $DRead$ method may now always be linearized at the time of its final read operation.
Similarly, $DWrite$ operations linearize at their final shared memory operation, which is when they write to $X$.
It is easy to see that if all operations can linearize with their final shared memory operation, the corresponding linearization function is prefix-preserving, and thus the object is strongly linearizable.

\begin{algorithm}
    \caption{$DRead$ of a strongly linearizable ABA-Detecting register}
    \label{slaba}
    
    \SetKwProg{Fn}{Function}{:}{}
    
    \nonl\Fn{DRead$_q$} {
      $changed = False$\;		\label{changeinit}
      \Repeat{$(p, s) = (r, s_r)\;\mathbf{and}\; (x, p, s) = (x', p', s')$} { \label{slaba:repeatloop}
        $(x, p, s) \gets X.Read()$\;	\label{regread1}
        $(r, s_r) \gets A[q].Read()$\;	\label{annread}
        $A[q].Write(p, s)$\;		\label{announce1}
        $(x', p', s') \gets X.Read()$\;	\label{regread2}
        \If{$(p, s) \neq (r, s_r)\;\mathbf{or}\; (x, p, s) \neq (x', p', s')$} {		\label{EQcondABA}
          $changed \gets True$\;			\label{changed}
        }
      } \label{repeatcond}
      
      $\mathbf{return}\;(x', changed)$\;
    }
  \end{algorithm}

\ifea
	We now outline an argument that Algorithm~\ref{slaba} is strongly linearizable.
	For complete proofs of correctness, see (FULL PAPER)\todo{this}.
\fi

We now provide a formal argument that Algorithm~\ref{slaba} is strongly linearizable.
For any operation $op$ in a transcript $T$ on an instance of the implementation in Algorithm~\ref{slaba}, let $pt(op)$ be defined as follows:

\begin{enumerate}[labelindent=0pt,labelwidth=\widthof{\ref{def:interp2}},itemindent=1em,leftmargin=!,label=\textbf{Q-\arabic*},ref={Q-\arabic*}]
\item If $op$ is a $DRead$ operation, then let $pt(op) = time(op^{\ref{regread2}})$.\label{slaba:ptr1}
\item If $op$ is a $DWrite$ operation, then let $pt(op) = time(op^{\ref{linwritelin}})$.\label{slaba:ptr2}
\end{enumerate}

\ifea
	Recall that if $op^{\ref{regread2}} \not\in T$, then $time(op^{\ref{regread2}}) = \infty$ (and the same holds for $op^{\ref{linwritelin}}$.
	If $pt(op) \neq \infty$ for some operation $op$, then we say $op$ \emph{linearizes} at $pt(op)$. 
	If $T$ is a transcript on some instance of the object implementation in Algorithm~\ref{slaba}, then we claim $H$ is a linearization of $T$, where $H$ is a sequential history composed of operations in $\Gamma(T)$ ordered by their corresponding $pt$ values.
	That is, if $op_i$ and $op_j$ are operations in $\Gamma(T)$ and $pt(op_i) < pt(op_j)$, then $op_i \xrightarrow{H} op_j$.
	There is no pair of operations $op_i, op_j \in \Gamma(T)$ such that $pt(op_i) = pt(op_j) \neq \infty$, because, for every operation $op \in \Gamma(T)$, if $pt(op) \neq \infty$ then the step at $pt(op)$ is performed by $op$ (see \ref{slaba:ptr1} and \ref{slaba:ptr2}).
	Let $f$ be a function that maps every transcript on Algorithm~\ref{slaba} to a sequential history as described above (the existence of such a function is trivial).
	
	The following two lemmas are stated here without proof:

	\begin{lemma}\label{lemma:dwritelinatwrite}
		If an $X.Write(x, p, s)$ operation happens at time $t$, then there exists a $DWrite_p(x)$ operation $dw$ by $p$ such that $pt(dw) = t$.
	\end{lemma}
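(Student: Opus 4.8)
The plan is to prove this by direct inspection of the pseudocode, reducing the statement to an unwinding of the definitions of $pt$ and of ``happens''. First I would observe that among all methods of the implementation---namely $DRead$ (Algorithm~\ref{slaba}) together with $DWrite$ and $GetSeq$ (both inherited from Algorithm~\ref{linaba})---the only invocation that performs an $X.Write$ is line~\ref{linwritelin} of the $DWrite_p$ method. Indeed, the $DRead$ method only \emph{reads} $X$ (on lines~\ref{regread1} and~\ref{regread2}) and writes to $A[q]$ (on line~\ref{announce1}), while $GetSeq$ accesses only $A[c]$ and local state. Consequently, any $X.Write$ step appearing in a transcript $T$ is the atomic operation invoked by line~\ref{linwritelin} of some $DWrite$ operation.

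Next I would fix the given $X.Write(x, p, s)$ operation $am$ that happens at time $t$, so that $time(rsp(am)) = t$, and let $dw$ be the (unique) $DWrite$ operation whose execution of line~\ref{linwritelin} invokes $am$. Since line~\ref{linwritelin} executes $X.Write(x, p, s)$, where $x$ is the argument of $dw$ and $p$ is its associated process identifier, $dw$ is a $DWrite_p(x)$ operation, which already gives the object whose existence the lemma asserts.

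It then remains to compute $pt(dw)$. Because the $DWrite$ method contains no loop, line~\ref{linwritelin} is executed exactly once per $DWrite$ operation, so $am$ is the \emph{final} (and only) $X.Write$ performed by $dw$, and hence $dw^{\ref{linwritelin}} = rsp(am)$ by the definition of the $op^x$ notation. Applying rule~\ref{slaba:ptr2} then yields $pt(dw) = time(dw^{\ref{linwritelin}}) = time(rsp(am)) = t$, as required.

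I expect no genuine obstacle here: the result is a routine reading of the code, and the lemma is really a bookkeeping statement that pins down the correspondence between $X.Write$ steps and the linearization points assigned to $DWrite$ operations. The only point requiring a little care is the interplay of the notational conventions---that an atomic operation ``happens'' at its response event, and that $op^{\ref{linwritelin}}$ denotes that response---so that the chain of equalities ending in $pt(dw) = t$ is literally definitional rather than combinatorial.
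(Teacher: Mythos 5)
Your proof is correct and matches the paper's treatment: the paper states this as an observation that is ``immediate from the implementation of Algorithm~\ref{slaba} and \ref{slaba:ptr2}'' and gives no further argument, and your write-up is exactly the definitional unwinding that justifies that remark (line~\ref{linwritelin} is the unique source of $X.Write$ steps, it is executed once per $DWrite$, and \ref{slaba:ptr2} places $pt(dw)$ at that step).
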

\fi

\iffull
Let $\mathcal{T}$ represent the set of all possible transcripts of Algorithm~\ref{slaba}.
For every transcript $T \in \mathcal{T}$, define a sequential history $f(T)$ that orders operations according to $pt$, and excludes all operations $op$ for which $pt(op) = \infty$.
That is, for any two operations $op_1, op_2 \in \Gamma (T)$ such that $pt(op_1) \neq \infty$ and $pt(op_2) \neq \infty$, $op_1 \xrightarrow{f(T)} op_2$ if and only if $pt(op_1) < pt(op_2)$.
Note that there is no pair of operations $op_i, op_j \in \Gamma(T)$ such that $pt(op_i) = pt(op_j) \neq \infty$, because, for every operation $op \in \Gamma(T)$, if $pt(op) \neq \infty$ then the step of $T$ at $pt(op)$ is performed by $op$ (see \ref{slaba:ptr1} and \ref{slaba:ptr2}).

For the remainder of Section~\ref{slabasection}, let $T \in \mathcal{T}$ be some finite transcript of some ABA-detecting register implemented by Algorithm~\ref{slaba}.

The following observation is immediate from the implementation of Algorithm~\ref{slaba} and \ref{slaba:ptr2}:

\begin{observation}\label{obs:dwritelinatwrite}
	If an $X.Write(x, p, s)$ operation happens at time $t$, then there exists a $DWrite_p(x)$ operation $dw$ by $p$ such that $pt(dw) = t$.
\end{observation}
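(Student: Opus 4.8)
The plan is to prove this by a direct syntactic inspection of the code, since the only way an $X.Write$ operation can arise is through the $DWrite$ method. First I would establish the key structural fact: across the entire implementation, the register $X$ is written at exactly one place, namely line~\ref{linwritelin} of $DWrite_p$. The $GetSeq$ method never accesses $X$ at all, and the $DRead_q$ method of Algorithm~\ref{slaba} touches $X$ only through the read operations on line~\ref{regread1} and line~\ref{regread2}; its sole write step, line~\ref{announce1}, targets the separate announcement array $A$ rather than $X$. Hence every $X.Write$ step appearing in any transcript is invoked on line~\ref{linwritelin} of some $DWrite$ operation.

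Next I would fix the $X.Write(x, p, s)$ operation that happens at time $t$; call the corresponding atomic step $am$, so that by the timing convention $time(rsp(am)) = t$. By the structural fact, $am$ is invoked on line~\ref{linwritelin} of some operation $dw$. Because the triple $X.Write(x, p, s)$ carries the process identifier $p$ and the written value $x$, while a $DWrite_{p'}(x')$ operation invokes precisely $X.Write(x', p', s')$ on line~\ref{linwritelin}, reading off the arguments forces $dw$ to be a $DWrite_p(x)$ operation by process $p$. This pins down the required operation together with both its value $x$ and its process $p$.

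Finally I would connect this to the linearization-point function. The $DWrite$ method is straight-line: it calls $GetSeq$ once on line~\ref{getseq} and then executes line~\ref{linwritelin} exactly once, so $am$ is in fact the final (and only) call to line~\ref{linwritelin} made by $dw$. By the definition of the notation $op^x$, this means $dw^{\ref{linwritelin}} = rsp(am)$, and therefore $time(dw^{\ref{linwritelin}}) = time(rsp(am)) = t$. Applying rule~\ref{slaba:ptr2}, which sets $pt(dw) = time(dw^{\ref{linwritelin}})$ for every $DWrite$ operation $dw$, yields $pt(dw) = t$, as required.

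Since each step is a mechanical appeal to the code and the definitions, there is no genuine obstacle; the only points demanding care are the bookkeeping ones, namely matching the triple $(x, p, s)$ written by $am$ against the arguments of the $DWrite$ method so as to recover both $p$ and $x$, and confirming that $am$ is the final call to line~\ref{linwritelin} so that the $op^x$ notation applies cleanly. The conclusion holds regardless of whether $dw$ has completed, since the mere occurrence of $rsp(am)$ already guarantees the existence of $dw$ and fixes the value of $pt(dw)$.
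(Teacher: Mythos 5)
Your argument is correct and is exactly the reasoning the paper has in mind: it declares this observation ``immediate from the implementation of Algorithm~\ref{slaba} and \ref{slaba:ptr2}'', i.e., from the fact that line~\ref{linwritelin} of $DWrite_p$ is the unique write to $X$ and that \ref{slaba:ptr2} places $pt(dw)$ at that step. Your write-up simply makes explicit the bookkeeping (reading $p$ and $x$ off the written triple, and noting the write is the sole and hence final execution of line~\ref{linwritelin}) that the paper leaves implicit.
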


The following observations are immediately obtained from Claims~5.9 and~5.10 in \cite{zahrathesis}:

\begin{observation}\label{obs:from_thesis}
  \begin{enumerate}[label=(\alph*)]
    \item 
    Consider two $GetSeq$ calls $g_1$ and $g_2$ by some process $p$, where $g_1$ is invoked before $g_2$.
    If $g_1$ and $g_2$ return the same sequence number $s$, then $p$ completes at least $n$ $GetSeq$ calls between $g_1$ and $g_2$.\label{obs:from_thesis_sameseq}
    \item
    Suppose $X=(x,p,s)\neq(\bot,\bot,\bot)$ at some point $t$, and $A[p]=s$ throughout $[t,t']$, where $t'\geq t$.
    Then process $p$ does not write $(x',p,s)$ to $X$ during $(t,t']$ for any $x'\in D$.\label{obs:from_thesis_announce}
  \end{enumerate}
\end{observation}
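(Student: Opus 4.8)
The plan is to treat the two parts separately, since each is a purely combinatorial property of the $GetSeq$ routine of Algorithm~\ref{linaba} alone --- governed by its local FIFO queue $usedQ$ and its local set $na$ --- and neither depends on the modified $DRead$ of Algorithm~\ref{slaba}. For both parts I would argue directly from the code of $GetSeq$, exploiting the invariant that every $GetSeq$ call performs exactly one $usedQ.enq$ on line~\ref{enqseq} and one $usedQ.deq$ on line~\ref{deqseq}, that the chosen value on line~\ref{chooseseq} is always disjoint from the current contents of $usedQ$, and that the counter $c$ advances by one modulo $n$ on each call so that all $n$ columns of the announcement array are read once within any $n$ consecutive calls.

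For part~(a) I would track the sequence number $s$ returned by $g_1$ through the queue $usedQ$. Since $g_1$ enqueues $s$ on line~\ref{enqseq} and then removes only the oldest element on line~\ref{deqseq}, the value $s$ is present in $usedQ$ --- as its most recently enqueued element --- when $g_1$ returns, and $usedQ$ holds $n+1$ elements. Each later $GetSeq$ call removes exactly one element from the front, so $s$ is not dequeued until $n+1$ further calls complete, and in particular $s$ is still in $usedQ$ at the choice step (line~\ref{chooseseq}) of each of those calls. As line~\ref{chooseseq} avoids everything in $usedQ$, none of these calls can return $s$; hence the first call after $g_1$ that can return $s$ is the $(n+2)$-th, so at least $n+1 \ge n$ complete $GetSeq$ calls lie strictly between $g_1$ and $g_2$. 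This part is routine once the queue invariant is stated precisely.

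For part~(b) I would combine the spacing bound of part~(a) with the behaviour of $na$. Suppose toward a contradiction that $p$ writes $(x',p,s)$ to $X$ on line~\ref{linwritelin} at some time in $(t,t']$; then the $GetSeq$ immediately preceding this write chose $s$ on line~\ref{chooseseq}, which requires $s\notin\{i:(j,i)\in na\}$ at that instant. The key invariant is that when $p$'s counter $c$ equals $p$ and $p$ reads $A[p]=(p,s)$, it inserts $(p,s)$ into $na$, and since $na$ is altered only for the column currently being read, this entry survives until $p$ next reads $A[p]$ and finds a different value there. Because $A[p]$ stays $(p,s)$ throughout $[t,t']$, every read of $A[p]$ in this interval re-establishes $(p,s)\in na$, so from the first such read onward $p$ can never choose $s$. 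To rule out a reuse of $s$ \emph{before} that first read I invoke part~(a): since $X=(x,p,s)$ at $t$ and $GetSeq$ is invoked only inside $DWrite$, the value $s$ was chosen in $p$'s most recent $GetSeq$ prior to $t$ and $p$ performs no further $GetSeq$ until after $t$, so $s$ sits in $usedQ$ and, by part~(a), remains there for the next $n$ calls, while the advance of $c$ modulo $n$ forces $p$ to read $A[p]$ within those same $n$ calls. Thus $s$ is blocked by $usedQ$ exactly until $na$ takes over, and no $GetSeq$ by $p$ in $(t,t']$ can return $s$.

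The main obstacle is the timing argument in part~(b): one must show that the $na$-entry $(p,s)$ is installed no later than the moment $usedQ$ ceases to protect $s$, while being careful that the decisive read of $A[p]$ occurs at a time in $[t,t']$ (so that the announcement already equals $(p,s)$) and that the offending $GetSeq$ cannot slip into the brief hand-off between the queue and the set. Making the interleaving of the $usedQ$-window from part~(a) and the cyclic column scan fully rigorous is the crux; everything else reduces to bookkeeping over the code of $GetSeq$.
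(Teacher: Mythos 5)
The paper does not actually prove this observation: it is imported wholesale as ``immediately obtained from Claims~5.9 and~5.10 in \cite{zahrathesis}'', so your self-contained derivation from the code of $GetSeq$ is different in kind --- you reconstruct the argument that the paper delegates to Aghazadeh's thesis. Your proof of part~\ref{obs:from_thesis_sameseq} is correct and in fact yields the slightly stronger bound of $n+1$ intervening calls: $s$ is the newest of the $n+1$ elements of $usedQ$ when $g_1$ returns, each later call evicts only one older element ahead of $s$, and line~\ref{chooseseq} excludes everything still in the queue. Your plan for part~\ref{obs:from_thesis_announce} --- a hand-off from the $usedQ$ window to the $na$ entry installed by the cyclic scan of $A$ --- is also the right argument, and you correctly single out the hand-off timing as the crux. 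What your direct proof buys is self-containedness; what the citation buys the authors is brevity, at the cost of the reader having to trust an external thesis.

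One intermediate claim in your part~(b) is not quite true as stated: it is \emph{not} the case that ``$p$ performs no further $GetSeq$ until after $t$''. The hypothesis $X=(x,p,s)$ at $t$ only says that the last write to $X$ at or before $t$ installed $(x,p,s)$; process $p$ may already have completed the \emph{next} $GetSeq$ (choosing some $s_1\neq s$) before $t$, with the corresponding write to $X$ on line~\ref{linwritelin} still pending. The correct statement is that at most \emph{one} $GetSeq$ after the one that chose $s$ can begin before $t$, since that call's $DWrite$ must write to $X$ after $t$ and the following $GetSeq$ starts only after that write. This is exactly what makes your tight counting close: the queue protects $s$ through the next $n+1$ calls, at most the first of these can perform its read of the announcement array before $t$, and the remaining $n$ calls form a full cycle of the counter $c$, so one of them reads the relevant column at a time inside $[t,t']$, sees $(p,s)$, installs it in $na$, and that entry is refreshed on every later visit to that column up to $t'$. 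Finally, note that the observation's ``$A[p]$'' is used loosely --- in the application (Lemma~\ref{lemma:nointerfereABA}) the announcement register is $A[q]$ for the \emph{reading} process $q$ --- but your argument only uses that some fixed column of $A$ holds $(p,s)$ throughout $[t,t']$, so it is unaffected by this.
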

\fi

\begin{lemma}\label{lemma:nointerfereABA}
  Let $dr$ be a complete $DRead_q()$ operation performed by process $q$, and suppose at least one $DWrite$ $dw$ linearizes after $time(dr^{\ref{regread1}})$ and before $q$ invokes any other $DRead()$ operation following $dr$.
  Let $p$ be the process executing $dw$, and $s$ the sequence number associated with $dw$.
  Then
  \begin{enumerate}[label=(\alph*)]
    \item\label{lemma:nointA} $A[q]\neq (p,s)$ at $time(dr^{\ref{annread}})$; and
    \item\label{lemma:nointB} $pt(dw)\not\in\bigl[time(dr^{\ref{regread1}}),time(dr^{\ref{regread2}})\bigr]$.
  \end{enumerate}
\end{lemma}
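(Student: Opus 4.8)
The plan is to prove part~\ref{lemma:nointA} by contradiction and then obtain part~\ref{lemma:nointB} from it by a short ``last writer wins'' argument. Throughout I abbreviate the steps of $dr$'s final loop iteration as $t_1=time(dr^{\ref{regread1}})$, $t_a=time(dr^{\ref{annread}})$, $t_w=time(dr^{\ref{announce1}})$, and $t_2=time(dr^{\ref{regread2}})$, so that $t_1<t_a<t_w<t_2$. Write $(x_1,P_1,S_1)$, $(R,S_R)$, $(x_2,P_2,S_2)$ for the values returned by the final executions of lines~\ref{regread1}, \ref{annread}, and~\ref{regread2}. Since $dr$ is complete, its repeat-loop terminated, so the exit condition on line~\ref{repeatcond} holds for these reads: $(P_1,S_1)=(R,S_R)$ and $(x_1,P_1,S_1)=(x_2,P_2,S_2)$. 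I also use that $A[q]$ is single-writer, being written only by $q$ on line~\ref{announce1}, and that by~\ref{slaba:ptr2} a $DWrite$ linearizes exactly when it performs its $X.Write$ on line~\ref{linwritelin}.

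For part~\ref{lemma:nointA}, suppose toward a contradiction that $A[q]=(p,s)$ at $t_a$, i.e.\ $(R,S_R)=(p,s)$. The exit condition then forces $(P_1,S_1)=(p,s)$, so $X=(x_1,p,s)$ at $t_1$, and since $p$ is a genuine process identifier this triple differs from $(\bot,\bot,\bot)$. I next argue that $A[q]$ carries the pair $(p,s)$ over a long interval. Because $q$ acts sequentially and the final iteration writes to $A[q]$ only at $t_w>t_a$, the most recent write to $A[q]$ before $t_a$ occurs at some time $\tau<t_1$, and (as $A[q]=(p,s)$ at $t_a$) that write stored $(p,s)$. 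The next write to $A[q]$ is $dr$'s own line~\ref{announce1} at $t_w$, which writes $(P_1,S_1)=(p,s)$ again, and the write after that is the first execution of line~\ref{announce1} by $q$'s next $DRead$ (or never, if $dr$ is $q$'s last read). Hence $A[q]=(p,s)$ throughout $[\tau,t^\star)$, where $t^\star$ is that next announcement of $q$ (and $t^\star=\infty$ otherwise). By hypothesis $pt(dw)$ precedes the invocation of $q$'s next $DRead$, so $pt(dw)<t^\star$, while $pt(dw)>t_1>\tau$. Applying Observation~\ref{obs:from_thesis}\ref{obs:from_thesis_announce} at the point $t_1$, where $X=(x_1,p,s)\neq(\bot,\bot,\bot)$, over the interval on which $q$'s announcement holds $(p,s)$, we conclude that $p$ performs no $X.Write$ of the form $(x',p,s)$ during $(t_1,t^\star)$. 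This contradicts the fact that $dw$ writes $(\cdot,p,s)$ to $X$ at $pt(dw)\in(t_1,t^\star)$, establishing part~\ref{lemma:nointA}.

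For part~\ref{lemma:nointB}, note first that $pt(dw)\neq t_1$ and $pt(dw)\neq t_2$, since those times host read steps of $dr$ whereas $pt(dw)$ is a write step; as $pt(dw)>t_1$ by hypothesis, it suffices to rule out $pt(dw)\in(t_1,t_2)$. Suppose some $DWrite$ linearizes in $(t_1,t_2)$, and let $dw^\star$, by process $p^\star$ with sequence number $s^\star$, be the one whose linearization point is latest in that interval. By maximality no $DWrite$ linearizes in $(pt(dw^\star),t_2)$, so $X$ holds $dw^\star$'s triple $(x^\star,p^\star,s^\star)$ throughout $[pt(dw^\star),t_2]$; in particular $dr$ reads $(x_2,P_2,S_2)=(x^\star,p^\star,s^\star)$ at $t_2$. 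The exit condition then gives $(P_1,S_1)=(p^\star,s^\star)$, so $A[q]=(R,S_R)=(P_1,S_1)=(p^\star,s^\star)$ at $t_a$. But $dw^\star$ satisfies the hypothesis of the lemma, since it linearizes after $t_1$ and, as $pt(dw^\star)<t_2$ precedes the completion of $dr$, before $q$'s next $DRead$. Hence part~\ref{lemma:nointA} applied to $dw^\star$ yields $A[q]\neq(p^\star,s^\star)$ at $t_a$, a contradiction. Therefore no $DWrite$ linearizes in $(t_1,t_2)$, which proves part~\ref{lemma:nointB}.

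The crux is part~\ref{lemma:nointA}, and the delicate point is showing that $q$'s announcement register $A[q]$ truly carries the pair $(p,s)$ \emph{continuously} from before $dr$'s final read of $X$ until past $pt(dw)$, so that Observation~\ref{obs:from_thesis}\ref{obs:from_thesis_announce} can be invoked. This hinges on combining the loop's exit condition, which pins the value of $A[q]$ at $t_a$ to the value $X$ held at $t_1$, with the observation that $dr$'s own final announcement on line~\ref{announce1} rewrites the very same pair, thereby bridging the interval up to $q$'s subsequent $DRead$. Once this interval is in hand, part~\ref{lemma:nointB} follows quickly.
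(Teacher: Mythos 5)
Your proposal is correct and follows essentially the same route as the paper's proof: part~\ref{lemma:nointA} by contradiction, pinning $A[q]=(p,s)$ over an interval starting before $dr^{\ref{regread1}}$ and extending past $pt(dw)$ so that Observation~\ref{obs:from_thesis}\ref{obs:from_thesis_announce} forbids $dw$'s write, and part~\ref{lemma:nointB} by taking the latest-linearizing $DWrite$ in the interval and reducing to part~\ref{lemma:nointA} via the loop exit condition. The extra bookkeeping about the endpoints $t_1,t_2$ being read steps is a harmless refinement of the same argument.
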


\iffull
\begin{proof}
  We first prove \ref{lemma:nointA}. 
  For the purpose of a contradiction, assume $A[q]=(p,s)$ at $time(dr^{\ref{annread}})$. 
  Since $q$ executes $dr^{\ref{annread}}$ in its final iteration of the repeat-until loop, it follows from the if-condition in line~\ref{repeatcond} that
  \ifspringer
  \begin{equation}\label{eq:noInterfereABA:100}
  	\begin{split}
    &\text{at $time(dr^{\ref{regread1}})$ process $q$ reads $(x,p,s)$ from $X$} \\ &\text{for some value $x\in D$}.
    \end{split}
  \end{equation}
  \else
  \begin{equation}\label{eq:noInterfereABA:100}
    \text{at $time(dr^{\ref{regread1}})$ process $q$ reads $(x,p,s)$ from $X$ for some value $x\in D$}.
  \end{equation}
  \fi
  Therefore, $q$ writes $(p,s)$ to $A[q]$ in $dr^{\ref{announce1}}$.
  Since $A[q]=(p,s)$ prior to that write, and only process $q$ can write to $A[q]$ (and only in line~\ref{announce1}), it follows that $A[q]=(p,s)$ throughout the final iteration of the repeat-until loop of $dr$.
  Moreover, the value of $A[q]$ remains unchanged until $q$ invokes another $DRead_q()$ operation.
  By the lemma assumption, $pt(dw)$ occurs before $q$'s next $DRead_q()$ invocation, so
  \begin{equation}\label{eq:noInterfereABA:200}
    \text{$A[q]=(p,s)$ throughout $\bigl[time(dr^{\ref{regread1}}),pt(dw)]$}.
  \end{equation}
  By the assumption of the lemma and the fact that $dw$ linearizes when $q$ performs line~\ref{linwritelin} by \ref{slaba:ptr2},
  \ifspringer
  \begin{equation}\label{eq:noInterfereABA:300}
  	\begin{split}
    &\text{at $pt(dw)$ process $p$ writes $(y,p,s)$ to $X$,} \\ &\text{for some $y\in D$.}
    \end{split}
  \end{equation}
  \else
  \begin{equation}\label{eq:noInterfereABA:300}
    \text{at $pt(dw)$ process $p$ writes $(y,p,s)$ to $X$, for some $y\in D$.}
  \end{equation}
  \fi
  Statements (\ref{eq:noInterfereABA:100}), (\ref{eq:noInterfereABA:200}), and (\ref{eq:noInterfereABA:300}) contradict Observation~\ref{obs:from_thesis} \ref{obs:from_thesis_announce}.
  This completes the proof of part \ref{lemma:nointA} of this lemma.
  
  \sloppypar
  We now prove part \ref{lemma:nointB}.
  Suppose the statement is not true.
  Then let $dw$ be the $DWrite$ with the latest linearization point $pt(dw)\in\bigl[time(dr^{\ref{regread1}}),time(dr^{\ref{regread2}})\bigr]$.
  Recall that process $p$ executes $dw$, and $s$ is the sequence number used.
  That is, at $pt(dw)$ process $p$ writes a triple $(x,p,s)$ to $X$, for some $x\in D$.
  
  Since each write to $X$ occurs at the linearization point of some $DWrite$ operation, and no other $DWrite$ linearizes in $\bigl(pt(dw),time(dr^{\ref{regread2}})\bigr]$, we have
  that $X=(x,p,s)$ at point $time(dr^{\ref{regread2}})$.
  Thus, process $q$ reads $(x,p,s)$ from $X$ in line~\ref{regread2} during its final iteration of the repeat-until loop of $dr$.
  Then by the loop-guard in line~\ref{repeatcond}, $q$ reads $(p,s)$ from $A[q]$ in line~\ref{annread}, i.e., when it executes $dr^{\ref{annread}}$.
  This contradicts part \ref{lemma:nointA} of this lemma.
\end{proof}
\fi

\ifea
	We now show that ordering operations in $T$ by their respective $pt$ values results in a valid sequential history. We accomplish this by first showing that, for any complete $DRead$ operation $dr$ by process $p$, the second component of the tuple returned by $dr$ is $True$ if and only if some $DWrite$ operation linearizes before $pt(dr)$, but after $dr'$, where $dr'$ is the $DRead$ operation by $p$ that is performed immediately before $dr$. To complete our argument, we also show that a $DRead$ operation returns the value written by the $DWrite$ operation that linearized most recently.
\fi

\begin{lemma}
\label{validityABA}
Let $dr_1, dr_2$ be two complete $DRead$ operations in $T$ by process $q$, with $dr_1 \xrightarrow{T} dr_2$, where $dr_1$ is the latest $DRead$ operation performed by $q$ that precedes $dr_2$. Suppose $dr_2$ returns $(val, a)$ for some value $val$ and $a \in \{True, False\}$. Then $a = True$ if and only if some $DWrite$ operation linearizes in the interval $\bigl(pt(dr_1), pt(dr_2)\bigr)$.
\end{lemma}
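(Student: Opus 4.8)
The statement is a biconditional, so I would prove the two directions separately. Both directions hinge on analyzing the final iteration of the repeat-until loop of $dr_2$, which is the iteration that causes the loop-guard in line~\ref{repeatcond} to succeed and thus determines the return value. By the definition of $pt$ in \ref{slaba:ptr1}, we have $pt(dr_2) = time(dr_2^{\ref{regread2}})$, the time of the $X.Read()$ in line~\ref{regread2} during this final iteration, and similarly $pt(dr_1) = time(dr_1^{\ref{regread2}})$. The key observation is that $changed$ (hence $a$) is set to $True$ precisely when some iteration of $dr_2$ observes a discrepancy on line~\ref{EQcondABA}, and that such a discrepancy is caused by an interfering $DWrite$.

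\textbf{The direction ($\Leftarrow$): if a $DWrite$ linearizes in $\bigl(pt(dr_1), pt(dr_2)\bigr)$ then $a = True$.}
Suppose some $DWrite$ $dw$ linearizes in this interval; by \ref{slaba:ptr2}, $dw$ writes to $X$ at $pt(dw)$. I would argue by contradiction: if $a = False$, then $changed$ was never set to $True$, so \emph{every} iteration of the repeat-until loop of $dr_2$ saw $(x,p,s) = (x',p',s')$ and $(p,s) = (r,s_r)$. In particular, the reads of $X$ on lines~\ref{regread1} and~\ref{regread2} within each iteration returned identical triples, and each iteration's announcement matched. Since $dw$'s write to $X$ falls strictly between $pt(dr_1)$ and $pt(dr_2)$, it falls within the span of $dr_2$'s execution; I would locate the iteration of $dr_2$ during which (or just before which) $dw$ writes to $X$ and show it forces a discrepancy. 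The subtlety is that $dw$'s write could coincide with the boundaries; here I would invoke Lemma~\ref{lemma:nointerfereABA}\ref{lemma:nointB}, whose contrapositive says that if no $DWrite$ caused a discrepancy inside $\bigl[time(dr^{\ref{regread1}}),time(dr^{\ref{regread2}})\bigr]$ of the final iteration, then no $DWrite$ linearizes in that window. I expect to combine this with the fact that $dr_1$'s own read of $X$ established the ``baseline'' triple that $dr_2$ compares against, so that a genuinely new $DWrite$ after $pt(dr_1)$ must change either the $X$-triple or the announcement comparison.

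\textbf{The direction ($\Rightarrow$): if $a = True$ then some $DWrite$ linearizes in $\bigl(pt(dr_1), pt(dr_2)\bigr)$.}
If $a = True$, then during some iteration of $dr_2$ the condition on line~\ref{EQcondABA} held, i.e.\ either $(x,p,s) \neq (x',p',s')$ (the two reads of $X$ differed) or $(p,s) \neq (r,s_r)$ (the announcement in $A[q]$ did not match the triple read from $X$). In the first case, a write to $X$ occurred between the two reads of $X$ in that iteration, and by Lemma~\ref{lemma:dwritelinatwrite} (equivalently Observation~\ref{obs:dwritelinatwrite}) this write is the linearization point of some $DWrite$, which lies within $dr_2$'s execution and hence after $pt(dr_1)$. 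In the second case, I would use that $A[q]$ was last written by $dr_1$ (or by an earlier iteration of $dr_2$) with the value $(p,s)$ that \emph{then} matched $X$; a mismatch at the time of line~\ref{annread} therefore witnesses that $X$ changed between $dr_1$'s announcement and $dr_2$'s read, which again by Observation~\ref{obs:dwritelinatwrite} is the linearization point of a $DWrite$ after $pt(dr_1)$. In both cases I must verify the linearizing $DWrite$ falls strictly before $pt(dr_2) = time(dr_2^{\ref{regread2}})$, which holds because the relevant write precedes $dr_2$'s final read of $X$.

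\textbf{The main obstacle.}
The hard part will be the careful bookkeeping of the $A[q]$ announcement mechanism across iterations, and in particular ruling out the ABA scenario where $X$ changes and changes back so that $dr_2$'s two reads of $X$ coincidentally agree. This is exactly what the announcement comparison $(p,s) = (r,s_r)$ defends against, and it is where Lemma~\ref{lemma:nointerfereABA} does the essential work: part~\ref{lemma:nointA} guarantees that an interfering $DWrite$ with sequence number $s$ cannot leave $A[q]$ holding $(p,s)$ at the read in line~\ref{annread}, so the announcement mismatch reliably detects it. I would need to handle the boundary case $pt(dr_1)$ precisely --- ensuring that a $DWrite$ linearizing at exactly $pt(dr_1)$ (rather than strictly after) does not spuriously trigger $a = True$ --- by noting that $dr_1$'s final read of $X$ already incorporated that write, so it does not create a discrepancy for $dr_2$.
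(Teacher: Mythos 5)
Your proposal follows essentially the same route as the paper's proof: the forward direction splits on which clause of line~\ref{EQcondABA} fired and traces each case back to a write to $X$ (hence, by Observation~\ref{obs:dwritelinatwrite}, a $DWrite$ linearization) lying strictly inside $\bigl(pt(dr_1), pt(dr_2)\bigr)$, using $dr_1$'s loop-guard to tie the $A[q]$ announcement to the value of $X$ at $pt(dr_1)$; the reverse direction assumes $a=False$, uses Lemma~\ref{lemma:nointerfereABA}~\ref{lemma:nointB} to push the interfering $DWrite$ before $dr_2$'s final double collect, and then derives a contradiction with Lemma~\ref{lemma:nointerfereABA}~\ref{lemma:nointA} via the announcement read on line~\ref{annread}. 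The only detail the paper makes sharper is that a discrepancy, if any, must already appear in the \emph{first} iteration of $dr_2$'s loop (the if-condition is the negation of the loop-guard), which is what lets it pin the content of $A[q]$ at $dr_2^{\ref{annread}}$ to exactly the pair announced by $dr_1$.
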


\begin{proof}
	First, suppose $a = True$.
	Then $q$ executes line~\ref{changed} during $dr_2$.
	Since the if condition on line~\ref{EQcondABA} is satisfied if and only if the loop-guard on line~\ref{repeatcond} is false, $q$ must execute line~\ref{changed} on the first iteration of the repeat-until loop on line~\ref{slaba:repeatloop} during $dr_2$.
	Let $(x, p, s)$ and $(x', p', s')$ be the tuples returned by the $X.Read()$ operations $xr$ and $xr'$ performed during the first iteration of the loop on line~\ref{regread1} and line~\ref{regread2}, respectively.
	Also let $(r, s_r)$ be the pair returned by the $A[q].Read()$ operation $ar$ performed during the first iteration of the loop on line~\ref{annread}.
	Since $dr_1$ and $dr_2$ are performed in sequence, $xr$, $xr'$, and $ar$ all happen after $pt(dr_1)$.
	Additionally, since $pt(dr_2) = time(dr_2^{\ref{regread2}})$ (i.e. $dr_2$ linearizes at its \emph{final} execution of line~\ref{regread2}), and $dr_2$ performs the repeat-until loop on line~\ref{slaba:repeatloop} more than once,
	\begin{equation}
		\text{$xr$, $xr'$, and $ar$ all happen in $\bigl(pt(dr_1), pt(dr_2)\bigr)$.}\label{slaba:ininterval}
	\end{equation}
	By the condition on line~\ref{EQcondABA}, there are two cases:
	\begin{enumerate}[label=(\roman*)]
		\item Let $(x, p, s) \neq (x', p', s')$.
			Since $xr'$ returns $(x', p', s')$, there is an $X.Write(x', p', s')$ operation that happens in $\bigl(time(xr), time(xr')\bigr)$.
			Hence, by (\ref{slaba:ininterval}) and Observation~\ref{obs:dwritelinatwrite} there exists a $DWrite_{p'}(x')$ operation that linearizes in $\bigl(pt(dr_1), pt(dr_2)\bigr)$.
		\item Let $(p, s) \neq (r, s_r)$.
			Assume that $dr_1$ writes $(p_1, s_1)$ to $A[q]$ in its final call to line~\ref{announce1}.
			By the loop-guard on line~\ref{repeatcond},
			\begin{equation}
				\text{$dr_1^{\ref{regread2}}$ returns $(x_1, p_1, s_1)$.}\label{slaba:lastread}
			\end{equation}
			Since $q$ performs $dr_2$ immediately after $dr_1$, and $q$ is the only process that can write to $A[q]$, $(r, s_r) = (p_1, s_1)$.
			But since $(p, s) \neq (r, s_r)$ by assumption, $(x, p, s) \neq (x_1, p_1, s_1)$.
			Due to~(\ref{slaba:lastread}), there must be an $X.Write(x, p, s)$ operation that happens in the interval $\bigl(time(dr_1^{\ref{regread2}}), time(xr)\bigr)$.
			By Observation~\ref{obs:dwritelinatwrite}, there exists a $DWrite$ operation that linearizes in $\bigl(time(dr_1^{\ref{regread2}}), time(xr)\bigr)$, and by (\ref{slaba:ininterval}), this $DWrite$ linearizes in $\bigl(pt(dr_1), pt(dr_2)\bigr)$.
	\end{enumerate}
	
	Now suppose $a = False$. Assume, for the sake of a contradiction, that some $DWrite$ linearizes in $\bigl(pt(dr_1), pt(dr_2)\bigr)$. Let $dw$ be the $DWrite$ operation that linearizes at the latest time in this interval. Suppose $dw$ is a $DWrite_{p_1}(x_1)$ by process $p_1$ with associated sequence number $s_1$. Due to Lemma~\ref{lemma:nointerfereABA}~\ref{lemma:nointB}, we know that no $DWrite$ can linearize in the interval $\bigl[time(dr_2^{\ref{regread1}}), time(dr_2^{\ref{regread2}})\bigr]$. Thus, 
\begin{equation}\label{dwisadwrite}
	\text{$dw$ linearizes in $\bigl(time(dr_1^{\ref{regread2}}), time(dr_2^{\ref{regread1}})\bigr)$.}
\end{equation}

Since $dw$ is the final $DWrite$ operation that linearizes prior to $dr_2^{\ref{regread2}}$,
\begin{equation}\label{Xstaysthesame}
	X = (x_1, p_1, s_1)\text{ throughout }\bigl[pt(dw), pt(dr_2)\bigr].
\end{equation}

If $dr_2$ performs the repeat-until loop on line~\ref{slaba:repeatloop} more than once, then since the loop-guard on line~\ref{CHcond} is true if and only if the condition line~\ref{EQcondABA} is true, $q$ sets $changed$ to $True$ during $dr_2$, which is a contradiction.
Hence, $dr_2$ must perform only one iteration of the repeat-until loop.
By (\ref{dwisadwrite}) and (\ref{Xstaysthesame}), the $X.Read()$ operation $dr_2^{\ref{regread1}}$ must return $(x_1, p_1, s_1)$.
Also, $q$ reads $(p_1, s_1)$ from $A[q]$ during $dr_2^{\ref{annread}}$, as otherwise the loop would repeat by the loop-guard on line~\ref{CHcond}.
Hence, the last write to $A[q]$ prior to $dr_2^{\ref{annread}}$ must have been an $A[q].Write(p_1, s_1)$ operation.
Since $q$ only performs one iteration of the repeat-until loop during $dr_2$, and thus does not write anything to $A[q]$ prior to $dr_2^{\ref{annread}}$ while executing $dr_2$, $q$ must have written $(p_1, s_1)$ to $A[q]$ in the final iteration of the repeat-until loop of $dr_1$ (i.e. during $dr_1^{\ref{announce1}}$).
Since the repeat-until loop of $dr_1$ also terminated after that write, $q$ must have read $(p_1, s_1)$ from $A[q]$ during $dr_1^{\ref{annread}}$.
This, along with (\ref{dwisadwrite}), contradicts Lemma~\ref{lemma:nointerfereABA}~\ref{lemma:nointA}.
\end{proof}

\ifea
	The following lemma completes our argument that the sequential history of operations ordered by $pt$ is valid. Again, it is stated here without proof.
\fi

\begin{lemma} \label{lemma:valvalidityABA}
	Let $dr$ be a complete $DRead_p()$ operation that returns $(val, a)$ for some $val \in D$, $val \neq \bot$, and some $a \in \{True, False\}$. Then 
	\begin{enumerate}[label=(\arabic*)]		
		\item there is some $DWrite$ operation that linearizes prior to $pt(dr)$, and\label{lemma:valvalidityABA_exist}
		\item if $dw$ is a $DWrite_q(x)$ operation, and no $DWrite$ operation linearizes in $\bigl(pt(dw), pt(dr)\bigr]$, then $x = val$.\label{lemma:valvalidityABA_latest}
	\end{enumerate}
\end{lemma}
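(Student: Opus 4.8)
The plan is to reduce both parts to a single fact about the contents of the register $X$ at the linearization point of $dr$. By \ref{slaba:ptr1}, $pt(dr)=time(dr^{\ref{regread2}})$, and since $dr$ returns the first component $x'$ of the triple read from $X$ in its final execution of line~\ref{regread2}, we have $val=x'$. Hence $val$ is exactly the first component of the value held by $X$ immediately before the read at time $pt(dr)$, so the whole lemma becomes a question about which $X.Write$ is the last one preceding $pt(dr)$. Every such write is tied to a $DWrite$ linearization point through Observation~\ref{obs:dwritelinatwrite}, which is the tool I expect to do all the real work.

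For part~\ref{lemma:valvalidityABA_exist}, I would argue that since $X$ is initially $(\bot,\bot,\bot)$ and the read at $pt(dr)$ returns first component $val\neq\bot$, the value of $X$ must have changed before this read. As a read returns the value written by the last preceding $X.Write$, there is an $X.Write$ operation at some time $t<pt(dr)$ whose triple has first component $val$. By Observation~\ref{obs:dwritelinatwrite} this write is the linearization point of some $DWrite$ operation, which therefore linearizes at $t<pt(dr)$, as required.

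For part~\ref{lemma:valvalidityABA_latest}, let $dw$ be a $DWrite_q(x)$ with sequence number $s$ satisfying the hypothesis; the relevant case (guaranteed by part~\ref{lemma:valvalidityABA_exist}, when $dw$ is taken to be the last $DWrite$ preceding $dr$) is $pt(dw)<pt(dr)$. By \ref{slaba:ptr2}, at time $pt(dw)$ process $q$ writes the triple $(x,q,s)$ to $X$. By Observation~\ref{obs:dwritelinatwrite}, any $X.Write$ happening at a time $t$ would force a $DWrite$ to linearize at $t$; since no $DWrite$ linearizes in $\bigl(pt(dw),pt(dr)\bigr]$, no $X.Write$ occurs in that interval. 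Hence $X=(x,q,s)$ throughout $\bigl[pt(dw),pt(dr)\bigr]$, so the final read $dr^{\ref{regread2}}$ returns first component $x$, giving $val=x$.

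The main obstacle I anticipate is purely in the bookkeeping: making the phrase ``last write to $X$ before $pt(dr)$'' precise and justifying that $X$ is untouched across the half-open interval. Both hinge on the fact that $X$ changes \emph{only} at $DWrite$ linearization points, which is precisely the content of Observation~\ref{obs:dwritelinatwrite} together with \ref{slaba:ptr2}; I would invoke these rather than re-deriving the correspondence, and I would take care to use the direction ``each physical $X.Write$ yields a $DWrite$ linearizing at the same time'', since that is exactly what excludes intervening changes. A secondary point to treat carefully is the degenerate situation $pt(dw)\ge pt(dr)$, in which the hypothesis is vacuous; I would dispatch it by noting that in the intended application $dw$ is the $DWrite$ immediately preceding $dr$ in the $pt$-order, so $pt(dw)<pt(dr)$ always holds there.
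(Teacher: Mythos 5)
Your proof is correct and follows essentially the same route as the paper's: part (1) by observing that $val\neq\bot$ forces a prior $X.Write$ whose time is a $DWrite$ linearization point via Observation~\ref{obs:dwritelinatwrite}, and part (2) by using the same observation to conclude $X$ is unchanged on $\bigl(pt(dw),pt(dr)\bigr]$ so the final read at $pt(dr)$ returns $x$. Your explicit handling of the vacuous case $pt(dw)\geq pt(dr)$ is a detail the paper leaves implicit, but it does not change the argument.
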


\iffull
\begin{proof}
We first prove \ref{lemma:valvalidityABA_exist}.
Since $dr$ returns $(val, a)$, the first component of $X$ contains $val$ at $time(dr^{\ref{regread2}}) = pt(dr)$.
Hence, some $X.Write(val, p, s)$ operation occurs before $pt(dr)$, for some process identifier $p$ and some sequence number $s$.
Then by Observation~\ref{obs:dwritelinatwrite}, there is some $DWrite_p(val)$ operation that linearizes prior to $pt(dr)$.

We now prove \ref{lemma:valvalidityABA_latest}.
Let $dw$ be defined as in the statement of \ref{lemma:valvalidityABA_latest}.
Since $dw$ linearizes when it writes $x$ to $X$, and no $DWrite$ linearizes in the interval $\bigl(pt(dw), pt(dr)\bigr]$, the first component of $X$ contains $x$ throughout the interval $\bigl(pt(dw), pt(dr)\bigr]$.
\end{proof}

\begin{theorem}
\label{linABAthm}
	The sequential history $f(T)$ is a linearization of the interpreted history $\Gamma (T)$. 
\end{theorem}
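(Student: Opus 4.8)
The plan is to verify the two requirements of the linearization-point characterization for the map $pt$ from \ref{slaba:ptr1}--\ref{slaba:ptr2}, using $f(T)$ itself as the witnessing linearization $S$. Requirement (i), that $pt(op)$ lies in $\bigl[time(inv(op)),time(rsp(op))\bigr]$, is immediate: for a $DWrite$, $op^{\ref{linwritelin}}$ is its single write to $X$, and for a $DRead$, $op^{\ref{regread2}}$ is a read inside the loop body, so each point sits between the invocation and the response (or before $\infty$ for a pending operation). Since, as noted after \ref{slaba:ptr2}, no two operations share a finite $pt$-value, $f(T)$ is a well-defined sequential history. For the order part of requirement (ii), if $op_1\xrightarrow{\Gamma(T)}op_2$ then $rsp(op_1)$ precedes $inv(op_2)$, so $pt(op_1)\le time(rsp(op_1))<time(inv(op_2))\le pt(op_2)$ and hence $op_1\xrightarrow{f(T)}op_2$; thus $f(T)$ extends the happens-before order of any completion, the same inequality also covering constraints involving pending operations that receive a response.

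It remains to prove validity, which I would split into the value and the flag of each complete $DRead$ $dr$. For the value, Lemma~\ref{lemma:valvalidityABA} does most of the work: the contrapositive of part~\ref{lemma:valvalidityABA_exist} forces the return value to be $\bot$ when no $DWrite$ precedes $dr$ in $f(T)$, and part~\ref{lemma:valvalidityABA_latest} shows a non-$\bot$ return value equals the argument of the last $DWrite$ preceding $dr$. The one remaining case, a $\bot$ return value with a preceding $DWrite$, I would settle directly: by Observation~\ref{obs:dwritelinatwrite} the writes to $X$ occur exactly at the linearization points of $DWrite$ operations, so the first component of $X$ at $pt(dr)=time(dr^{\ref{regread2}})$ is the argument of the last $DWrite$ preceding $dr$ in $f(T)$ (and $\bot$ if there is none). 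Either way the returned value equals the value of $R$ prescribed by the sequential specification.

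For the flag I would invoke Lemma~\ref{validityABA} after translating its statement into the language of $f(T)$. Because $q$ is sequential and $pt$ preserves the real-time order among $q$'s operations, the last $DRead$ by $q$ preceding $dr_2$ in $f(T)$ is exactly the real-time predecessor $dr_1$; and a $DWrite$ lies strictly between $dr_1$ and $dr_2$ in $f(T)$ precisely when it linearizes in $\bigl(pt(dr_1),pt(dr_2)\bigr)$. Lemma~\ref{validityABA} therefore says exactly that the flag of $dr_2$ is $True$ iff some $DWrite$ occurs between $q$'s two consecutive $DRead$s in $f(T)$, which is the specification. The situation not covered by Lemma~\ref{validityABA}, a process's first $DRead$, I would treat as a base case: an analogous argument using the initial contents of $X$ and $A[q]$ shows its flag agrees with the specification, the quiescent final loop iteration (line~\ref{repeatcond}) guaranteeing that $changed$ reflects exactly whether $X$ has departed from the triple first observed.

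Finally, for a pending operation with finite $pt$ I would, in the completion, assign the response demanded by the sequential specification at its linearization point, namely the first component of $X$ there together with the spec-mandated flag, which keeps $f(T)$ valid; pending operations with $pt=\infty$ are simply dropped. The main obstacle, I expect, is the flag argument: faithfully converting Lemma~\ref{validityABA}'s real-time and linearization-point hypotheses into the ``$q$'s last $DRead$'' form of the sequential specification, and supplying the first-$DRead$ base case that Lemma~\ref{validityABA} leaves out.
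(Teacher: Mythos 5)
Your proposal takes essentially the same route as the paper: the paper's proof likewise observes that $pt(op)$ lies within each operation's execution interval (so $f(T)$ preserves the happens-before order of $\Gamma(T)$) and then cites Lemma~\ref{validityABA} and Lemma~\ref{lemma:valvalidityABA} to conclude validity. The additional care you take with the first-$DRead$ base case, the $\bot$-return case, and the completion of pending operations fills in edge cases that the paper's two-sentence proof silently elides, but the underlying argument is identical.
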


\begin{proof}
First note that for any operation $op \in \Gamma(T)$, $pt(op) \in \bigl[inv(op), rsp(op)\bigr]$, since $pt(op)$ is assigned directly to a line of code that is executed by $op$ for both $DWrite$ and $DRead$ operations. Thus, $f(T)$ preserves the happens-before order of the interpreted history $\Gamma (T)$.

Lemma~\ref{validityABA} and Lemma~\ref{lemma:valvalidityABA} ensure that the history $f(T)$ is valid with respect to the sequential specification of ABA-detecting registers. Thus, $f(T)$ is a linearization of $\Gamma (T)$.
\end{proof}
\fi

\ifea
Together, Lemma~\ref{validityABA} and Lemma~\ref{valvalidityABA} imply the following:

\begin{lemma}\label{slaba:linearizable}
	Algorithm~\ref{slaba} is linearizable.
\end{lemma}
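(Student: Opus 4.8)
The plan is to exhibit, for an arbitrary transcript $T$, one explicit linearization of $\Gamma(T)$: the sequential history $f(T)$ that orders the operations of $\Gamma(T)$ by their linearization points $pt$ (see \ref{slaba:ptr1} and \ref{slaba:ptr2}), discarding those with $pt=\infty$. Since $T$ is arbitrary, showing that $f(T)$ is a linearization of $\Gamma(T)$ establishes that Algorithm~\ref{slaba} is linearizable. Two properties need checking: that $f(T)$ extends the real-time order of $\Gamma(T)$, and that $f(T)$ is a \emph{valid} sequential history of an ABA-detecting register. The first is structural, and the second is exactly what Lemma~\ref{validityABA} and Lemma~\ref{lemma:valvalidityABA} were built to supply.

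First I would dispatch the real-time order. For every operation $op$, the value $pt(op)$ is the time of a specific line that $op$ itself executes --- line~\ref{regread2} for a $DRead$ and line~\ref{linwritelin} for a $DWrite$ --- so $pt(op)\in\bigl[time_T(inv(op)),time_T(rsp(op))\bigr]$. Hence $op_1\xrightarrow{T}op_2$ forces $pt(op_1)<pt(op_2)$, so $op_1$ precedes $op_2$ in $f(T)$; that is, $f(T)$ extends $\xrightarrow{T}$. Moreover, two operations with finite $pt$ never collide, since each such $pt$-value is a distinct step performed by the operation it is assigned to, so $f(T)$ is a genuine total order and a well-formed sequential history.

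Next I would argue validity, which reduces to checking that each $DRead$'s response is the one the sequential specification prescribes for its position in $f(T)$. The value component is handed to me by Lemma~\ref{lemma:valvalidityABA}: part~\ref{lemma:valvalidityABA_exist} supplies a $DWrite$ before $pt(dr)$ whenever the returned value differs from $\bot$, and part~\ref{lemma:valvalidityABA_latest} pins the returned value to the $DWrite$ that linearizes last before $pt(dr)$, which is exactly the register contents at $dr$'s slot in $f(T)$ (and $\bot$ when no $DWrite$ precedes it). The Boolean component is handed to me by Lemma~\ref{validityABA}: applied with $dr_1$ the immediately preceding $DRead$ of the same process, it says the flag of $dr_2$ is $True$ iff a $DWrite$ linearizes in $\bigl(pt(dr_1),pt(dr_2)\bigr)$, i.e.\ iff a $DWrite$ falls strictly between the two consecutive $DRead$s in $f(T)$ --- precisely the specification's condition.

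The one delicate point, which I expect to be the main obstacle, is the base case of the flag: a process's \emph{first} $DRead$ has no earlier $DRead$ by that process, so Lemma~\ref{validityABA} does not apply and its flag must be reconciled with the specification directly. I would do this by inspecting the first iteration of its repeat-loop together with the initial contents of $A[q]$, in the same style as the proof of Lemma~\ref{validityABA} and using the fact that every write to $X$ occurs at some $DWrite$'s linearization point (Observation~\ref{obs:dwritelinatwrite}). Once that base case is settled, $f(T)$ is a valid sequential history that extends $\xrightarrow{T}$, hence a linearization of $\Gamma(T)$; as $T$ was arbitrary, Algorithm~\ref{slaba} is linearizable.
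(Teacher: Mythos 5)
Your proposal is essentially the paper's own proof (Theorem~\ref{linABAthm}): the same $f$ ordering operations by $pt$, the same observation that $pt(op)$ lies within $\bigl[time(inv(op)),time(rsp(op))\bigr]$ so that $f(T)$ extends the happens-before order (with no ties among finite $pt$-values), and the same appeal to Lemma~\ref{validityABA} for the Boolean flag and Lemma~\ref{lemma:valvalidityABA} for the value component. The ``delicate point'' you flag is genuine, and it is one the paper's proof silently skips: Lemma~\ref{validityABA} only constrains a $DRead$ that has a predecessor by the same process, so each process's first $DRead$ is covered by neither lemma. Be warned, however, that the inspection you propose does not close that case in the way you expect: since $A[q]$ is initialized to $(\bot,\bot)$, a first $DRead_q$ that is preceded by a completed $DWrite$ reads some $(p,s)\neq(\bot,\bot)$ from $X$ in line~\ref{regread1} but $(\bot,\bot)$ from $A[q]$ in line~\ref{annread}, so the test in line~\ref{EQcondABA} fires and the operation returns $changed=True$ --- whereas the sequential specification as literally stated forces the flag of a process's first $DRead$ to be $False$. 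Closing this base case therefore requires either reading the specification as treating initialization as an implicit prior $DRead$ (or counting writes since initialization as detectable) or a separate argument; as written, neither your sketch nor the paper supplies it.
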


We also obtain the following through a simple argument, which we omit here:
\fi

\begin{lemma}
\label{prefABA}
	The function $f$ is prefix-preserving.
\end{lemma}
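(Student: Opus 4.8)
The plan is to establish Lemma~\ref{prefABA} directly from the single structural fact underlying the definition of $pt$: every operation's linearization point coincides with its \emph{last} shared-memory step, and the identity of that step is fixed by the portion of the transcript up to and including it. First I would make this precise for both operation types. A $DWrite$ operation $op$ performs exactly one write to $X$ (line~\ref{linwritelin}), and by \ref{slaba:ptr2} this write is $pt(op)$; since $op$ returns immediately afterwards, it is the final shared-memory step of $op$. A $DRead$ operation $op$ linearizes at $op^{\ref{regread2}}$ by \ref{slaba:ptr1}, and I would observe that the read of $X$ in the \emph{terminating} iteration of the repeat-until loop (the iteration in which the loop-guard on line~\ref{repeatcond} holds) is the last shared-memory step $op$ performs, because once the guard holds the loop exits and $op$ performs no further shared-memory operation before responding.

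The crucial point I would then isolate is that whether a given execution of line~\ref{regread2} is the terminating one depends only on the values read during that single iteration, namely the triples and pair obtained on lines~\ref{regread1}, \ref{annread}, and \ref{regread2}; indeed the loop-guard on line~\ref{repeatcond} is the negation of the if-condition on line~\ref{EQcondABA}, and the process's local computation is deterministic. Consequently $pt(op)$, when finite, is determined entirely by the prefix of the transcript ending at that step. From this I would derive the key invariance: if $S$ is a prefix of $T$ and $pt_S(op) = t \neq \infty$, then the same step occurs at the same time $t$ in $T$ with the same determining values, so it is again $op$'s linearizing step in $T$, whence $pt_T(op) = pt_S(op)$. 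Uniqueness of the linearizing step (for a $DRead$, the loop cannot continue past the terminating read; for a $DWrite$, there is a single write to $X$) guarantees there is no competing later candidate in $T$.

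Given this invariance, prefix-preservation follows by a short comparison of the two orderings. Recall that $f(X)$ lists exactly the operations with finite $pt_X$, ordered by $pt_X$. Every $op \in f(S)$ satisfies $pt_T(op) = pt_S(op) \neq \infty$, so it also appears in $f(T)$, and since all these values are unchanged, the relative order among the operations of $f(S)$ is identical in $f(T)$. It remains to check that no operation of $f(T)$ not already in $f(S)$ is interleaved among them: if $pt_T(op) = t' \neq \infty$ while $pt_S(op) = \infty$, then the linearizing step of $op$ does not lie in $S$ (otherwise the invariance would force $pt_S(op) = t'$), so $t'$ exceeds the length of $S$; meanwhile every $w \in f(S)$ has $pt_T(w) = pt_S(w)$ at most the length of $S$, hence strictly less than $t'$. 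Thus all of $f(S)$ precedes $op$ in $f(T)$, and $f(S)$ is a prefix of $f(T)$.

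I expect the main obstacle to be the treatment of the $DRead$ loop rather than the final counting argument. One must argue carefully that the terminating read is genuinely determined locally, so that extending $S$ to $T$ cannot retroactively change which iteration terminates, and that an operation which has already performed its terminating read but has not yet emitted its response event still counts as linearized at that read (assigning it $pt = \infty$ instead would let a later-linearizing complete operation overtake it and break prefix-preservation). Pinning down these points is where the real content lies; the $DWrite$ case and the order comparison are then routine.
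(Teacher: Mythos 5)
Your proposal is correct and follows essentially the same route as the paper's proof: both arguments rest on the observation that an operation's linearization point is its final shared-memory step, and that whether a given execution of line~\ref{regread2} (resp.\ line~\ref{linwritelin}) is that final step is determined entirely by the steps up to and including it, since the loop-guard on line~\ref{repeatcond} compares only locally stored values. The paper states this more tersely and leaves the final prefix comparison implicit, whereas you spell it out, but the underlying argument is the same.
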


\iffull
\begin{proof}
Consider each step $t$ of $T$, and some operation $op \in \Gamma (T)$. Then $pt(op) = t$ if

\begin{enumerate}[label=(\roman*)]
	\item operation $op$ is some $DWrite$ operation, and $t$ is the step at which $op$ executes line~\ref{linwritelin} (this case follows from \ref{slaba:ptr2}), or
	\item operation $op$ is some $DRead$ operation, and $t$ is the final step at which $op$ executes line~\ref{regread2}. Note that whether $t$ is the final execution of line~\ref{regread2} is entirely determined at step $t$, since all of the values compared on line~\ref{repeatcond} are stored in local memory at $t$ (this case follows from \ref{slaba:ptr2}).
\end{enumerate}

Thus, at step $t$ it is entirely determined which operations $op$ satisfy $pt(op) = t$.
That is, whether $t$ satisfies $t = pt(op)$ depends entirely on steps that occur at or before $t$, and not on steps that occur after $t$.
Hence, if $T$ is a prefix of $T' \in \mathcal{T}$, then $f(T)$ is a prefix of $f(T')$.
\end{proof}
\fi

\ifea
	By Lemma~\ref{slaba:linearizable} and Lemma~\ref{prefABA}, we obtain the following theorem:
\fi

\begin{theorem}\label{thm:ABA-safety}
	The implementation represented by Algorithm~\ref{slaba} is strongly linearizable.
\end{theorem}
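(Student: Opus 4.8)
The plan is to exhibit the function $f$ defined above (which orders the operations of each transcript by their linearization points $pt$, discarding operations with $pt = \infty$) as a \emph{strong linearization function}, and to verify the two defining properties: that $f$ is a linearization function for $\Gamma(close(\mathcal{T}))$, and that $f$ is prefix-preserving. Since both properties have essentially been established by the preceding results, the proof is a short assembly rather than a fresh argument.

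First I would argue that $f$ is a genuine linearization function on $\Gamma(close(\mathcal{T}))$. Because the set $\mathcal{T}$ of all finite transcripts of Algorithm~\ref{slaba} is prefix-closed (every prefix of a legal asynchronous execution is itself a legal execution), we have $close(\mathcal{T}) = \mathcal{T}$, so it suffices to check that $f(T)$ is a linearization of $\Gamma(T)$ for each $T \in \mathcal{T}$, which is exactly Theorem~\ref{linABAthm}. For completeness I would recall why that holds: by \ref{slaba:ptr1} and \ref{slaba:ptr2} each $pt(op)$ is the time of a step performed by $op$ itself and therefore lies in $[inv(op), rsp(op)]$, so $f(T)$ respects the happens-before order; and since distinct operations receive distinct finite $pt$ values, ordering by $pt$ induces a well-defined total order on the linearized operations rather than merely a partial one. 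Validity of that order against the ABA-detecting register specification is precisely the content of Lemma~\ref{validityABA} (the returned Boolean is $True$ iff some $DWrite$ linearizes strictly between a process's consecutive $DRead$s) together with Lemma~\ref{lemma:valvalidityABA} (a $DRead$ returns the value written by the most recently linearized $DWrite$).

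Second I would invoke Lemma~\ref{prefABA}, which states that $f$ is prefix-preserving: whether a step $t$ equals $pt(op)$ is determined entirely by the transcript up to and including $t$ (for a $DWrite$, $t$ is its write to $X$; for a $DRead$, $t$ is its final execution of line~\ref{regread2}, and the loop-guard on line~\ref{repeatcond} depends only on values local to the process at $t$), so extending a transcript never alters the $pt$ value of an already-completed operation and hence never reorders $f$. Combining the two properties, $f$ meets the definition of a strong linearization function for the set of all transcripts on instances of Algorithm~\ref{slaba}, which is exactly what it means for the implementation to be strongly linearizable.

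I expect no genuine obstacle at the level of the theorem itself: the real work lies in the supporting lemmas, above all Lemma~\ref{validityABA}, whose correctness rests on the no-interference guarantee of Lemma~\ref{lemma:nointerfereABA} (that a $DWrite$ the final loop iteration fails to observe cannot have linearized inside that iteration, argued through the announcement array $A[q]$ and Observation~\ref{obs:from_thesis}). The only points requiring any care for the theorem are the reduction $close(\mathcal{T}) = \mathcal{T}$ and the distinctness of the finite $pt$ values, which together guarantee that $f$ yields a single well-defined sequential history.
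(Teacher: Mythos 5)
Your proposal matches the paper's proof: the paper likewise just combines Theorem~\ref{linABAthm} (that $f(T)$ linearizes $\Gamma(T)$) with Lemma~\ref{prefABA} (prefix-preservation) to conclude that $f$ is a strong linearization function. The extra points you flag — that $close(\mathcal{T}) = \mathcal{T}$ and that finite $pt$ values are distinct — are handled in the paper's surrounding text rather than in the theorem's proof, so your write-up is a slightly more self-contained version of the same argument.
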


\iffull
\begin{proof}
Theorem~\ref{linABAthm} shows that the sequential history $f(T)$ is a linearization of the interpreted history $\Gamma (T)$.
Furthermore, Theorem~\ref{prefABA} shows that $f$ is prefix-preserving.
Thus, $f$ is a strong linearization function for $\mathcal{T}$.
\end{proof}
\fi

\iffull
\ifthesis
\section{Lock-Freedom and Complexity Analysis}
\else
\subsection{Lock-Freedom and Complexity Analysis}
\fi
\fi
It is easy to see that each $DWrite_q(x)$ operation performs only two shared memory steps (a read of $A[q]$ and a write to $X$).
Hence, the implementation of the $DWrite$ method is wait-free.
However, a $DRead_q()$ operation by a process $q$ may not terminate if it is ``interrupted'' by infinitely many $DWrite$ operations.
  But note that in each iteration of the repeat-until loop process $q$ writes the same pair $(p,s)$ to $A[q]$ in line~\ref{announce1} that it previously read in line~\ref{regread1} from the second and third component of $X$.
  Hence, if $q$ executes sufficiently many steps while $X$ does not change, then $q$ will eventually read the same pair from $X$ in line~\ref{regread1}, from $A[q]$ in line~\ref{annread}, and from $X$ again in line~\ref{regread2}.
  After that, $q$'s $DRead$ terminates.
  Thus, in any transcript in which $q$ takes sufficiently many steps, either its $DRead$ terminates, or a $DWrite$ terminates.
\ifthesis
	We proceed by providing a precise analysis of the amortized complexity of Algorithm~\ref{slaba}.
\fi
\ifea
  This proves that Algorithm~\ref{slaba} is lock-free.

  A more precise analysis yields the following result.
\fi

\iffull
\begin{lemma}\label{lemma:dreadrestart}
	Let $dr$ be a $DRead_q()$ operation by process $q$. Let $xr_1, xr_2,$ and $xr_3$ be three consecutive $X.Read()$ operations on line~\ref{regread1} by $dr$. Then there exists a $DWrite$ operation that linearizes in $\bigl(time(xr_1), time(xr_3)\bigr)$.
\end{lemma}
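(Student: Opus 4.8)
The plan is to argue by contradiction, exploiting the fact that a $DWrite$ linearizes exactly at the moment it executes its $X.Write$ on line~\ref{linwritelin} (by \ref{slaba:ptr2} together with Observation~\ref{obs:dwritelinatwrite}). Thus ``some $DWrite$ linearizes in $\bigl(time(xr_1),time(xr_3)\bigr)$'' is equivalent to ``some $X.Write$ happens in that interval.'' So I would assume, for contradiction, that no $X.Write$ occurs in $\bigl(time(xr_1),time(xr_3)\bigr)$, and conclude that the value of $X$ is constant throughout $\bigl[time(xr_1),time(xr_3)\bigr)$; call the triple read by $xr_1$ this constant value $(x,p,s)$.

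Next I would use that each iteration of the repeat-until loop on line~\ref{slaba:repeatloop} performs exactly one read of $X$ on line~\ref{regread1}, so three consecutive such reads $xr_1,xr_2,xr_3$ lie in three consecutive iterations; I relabel these iterations $1,2,3$ and focus on the middle one. By the assumed constancy of $X$, the read $xr_2$ on line~\ref{regread1} and the read on line~\ref{regread2} of iteration~$2$ both return $(x,p,s)$, so the second conjunct of the loop-guard on line~\ref{repeatcond}, namely $(x,p,s)=(x',p',s')$, holds in iteration~$2$. For the first conjunct $(p,s)=(r,s_r)$, the key step is to track $A[q]$: in iteration~$1$ process $q$ writes to $A[q]$ on line~\ref{announce1} the pair consisting of the second and third components of the triple it read at $xr_1$, i.e.\ exactly $(p,s)$. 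Since $A[q]$ is written only by $q$, and $q$'s next write to $A[q]$ (line~\ref{announce1} of iteration~$2$) comes after its line~\ref{annread} read of iteration~$2$, that read returns $(p,s)$, matching the pair $(p,s)$ read at $xr_2$. Hence both conjuncts of the loop-guard hold at the end of iteration~$2$.

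This forces the loop to terminate after iteration~$2$, so no further line~\ref{regread1} read occurs; but $xr_3$ is precisely such a read in iteration~$3$, a contradiction. Therefore some $X.Write$, and hence some $DWrite$, linearizes in $\bigl(time(xr_1),time(xr_3)\bigr)$.

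I expect the main obstacle to be the bookkeeping that links the announcement $A[q]$ to the $X$-read of the \emph{previous} iteration: one must argue that the value $q$ deposits in $A[q]$ during iteration~$1$ is exactly what iteration~$2$ reads back, which relies on $A[q]$ being single-writer and on the constancy of $X$ to identify the pair $(p,s)$ read at $xr_1$ with the one read at $xr_2$. Everything else (equivalence of linearization with $X.Write$, consecutive reads lying in consecutive iterations, and reading the loop-guard on line~\ref{repeatcond} off the values) is routine once this correspondence is pinned down.
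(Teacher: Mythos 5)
Your proof is correct and uses essentially the same argument as the paper: the value announced to $A[q]$ in one iteration is read back in the next, so if $X$ never changed between $xr_1$ and $xr_3$ the loop-guard on line~\ref{repeatcond} would be satisfied at the end of the middle iteration, contradicting the existence of $xr_3$. The paper merely organizes this as a two-case analysis (on whether $xr_1$ and $xr_2$ return equal triples) instead of a single global contradiction, which additionally pins the witnessing $X.Write$ to a specific subinterval, but the underlying mechanism is identical.
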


\begin{proof}
	If the values returned by $xr_1$ and $xr_2$ are not equal, then some $X.Write$ operation occurs in the interval $\bigl(time(xr_1), time(xr_2)\bigr)$.
	Then by Observation~\ref{obs:dwritelinatwrite}, a $DWrite$ operation linearizes in $\bigl(time(xr_1), time(xr_2)\bigr)$.
	
	Now suppose $xr_1$ and $xr_2$ return the same tuple $(x, p, s)$.
	Then $dr$ performs an $A[q].Write(p, s)$ operation on line~\ref{announce1} after $xr_1$, and $A[q]$ is not modified again before line~\ref{annread} is performed by $dr$ following $xr_2$.
	Hence, the $A[q].Read()$ operation on line~\ref{annread} performed by $dr$ following $xr_2$ returns $(p, s)$.
	If the $X.Read()$ operation on line~\ref{regread2} performed by $dr$ following $xr_2$ returns $(x, p, s)$, then by the loop-guard on line~\ref{repeatcond}, $dr$ terminates after this iteration of the main loop.
	This is a contradiction, since $dr$ must restart the repeat-until loop after $xr_2$ in order to perform $xr_3$.
	Therefore, the $X.Read()$ operation on line~\ref{regread2} performed by $dr$ following $xr_2$ returns $(x', p', s') \neq (x, p, s)$.
	Then some $X.Write(x', p', s')$ must occur after $time(xr_2)$ and before the following execution of line~\ref{regread2} by $dr$, and hence in the interval $\bigl(time(xr_1), time(xr_3)\bigr)$.
	By Observation~\ref{obs:dwritelinatwrite}, there is a $DWrite_{p'}(x')$ operation that linearizes in this interval.
\end{proof}
\fi

\begin{theorem}\label{thm:ABA-complexity}
  \begin{enumerate}[label=(\alph{*})]
    \item Each $DWrite()$ performs at most two shared memory operations; and\label{thm:ABA-complexity_write}
    \item for any transcript that contains $r$ $DRead$ and $w$ $DWrite$ invocations, the total number of steps devoted to $DRead$ operations is $O(\min (r,n)\cdot w+r)$.\label{thm:ABA-complexity_read}
  \end{enumerate} 
  In particular, the implementation is lock-free and has amortized step complexity $O(n)$.
\end{theorem}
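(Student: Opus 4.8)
The plan is to verify the two step bounds separately and then combine them. Part~\ref{thm:ABA-complexity_write} is immediate from the code: a $DWrite_q(x)$ operation executes $GetSeq()$, which performs a single $A[c].Read()$ shared-memory step, followed by one $X.Write$ on line~\ref{linwritelin}; all remaining work in $GetSeq$ is local. Hence a $DWrite$ performs exactly two shared-memory operations.

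For part~\ref{thm:ABA-complexity_read}, first observe that each iteration of the repeat-until loop of a $DRead$ consists of $O(1)$ shared-memory steps (two reads of $X$ on lines~\ref{regread1} and~\ref{regread2}, one read and one write of $A[q]$ on lines~\ref{annread} and~\ref{announce1}) together with $O(1)$ local work, so it suffices to bound the total number of loop iterations. The key tool is Lemma~\ref{lemma:dreadrestart}: for any three consecutive executions of line~\ref{regread1} by a single $DRead$, some $DWrite$ linearizes strictly between the first and the third. I would turn this into a charging argument. Fix a $DRead$ operation $dr$ that performs $k$ iterations, and let $xr_1, \dots, xr_k$ be its successive executions of line~\ref{regread1}. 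Applying Lemma~\ref{lemma:dreadrestart} to the triples $(xr_1,xr_2,xr_3), (xr_3,xr_4,xr_5), \dots$ produces $\lfloor (k-1)/2\rfloor$ pairwise-disjoint open intervals (consecutive triples share only an endpoint), each of which contains the linearization point of a \emph{distinct} $DWrite$. Since all of these intervals lie within $\bigl(time(inv(dr)), time(rsp(dr))\bigr)$, the number $m_{dr}$ of $DWrite$ operations that linearize during the execution interval of $dr$ satisfies $m_{dr} \geq \lfloor (k-1)/2\rfloor$, and therefore $k \leq 2\,m_{dr} + 2 = O(m_{dr}+1)$.

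Summing over all $DRead$ operations, the total number of iterations is $O\bigl(r + \sum_{dr} m_{dr}\bigr)$. To bound $\sum_{dr} m_{dr}$, I would count by $DWrite$ operations instead: a single $DWrite$ linearizes at exactly one point in time by~\ref{slaba:ptr2}, and at any fixed instant each process is executing at most one operation, so that point lies in the execution interval of at most one operation per process, hence of at most $n$ distinct $DRead$ operations; it trivially lies in at most $r$ such intervals. Thus each $DWrite$ is charged to at most $\min(r,n)$ of the $DRead$ operations, giving $\sum_{dr} m_{dr} \leq \min(r,n)\cdot w$. Combining, the total number of steps devoted to $DRead$ operations is $O\bigl(\min(r,n)\cdot w + r\bigr)$, as claimed.

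Finally, lock-freedom follows from the discussion preceding the lemma: if a process $q$ takes sufficiently many steps inside a $DRead$ while $X$ is unchanged, the loop-guard on line~\ref{repeatcond} eventually holds and $dr$ terminates; otherwise $X$ changes, which by~\ref{slaba:ptr2} happens only when some (wait-free) $DWrite$ completes, so some operation always completes. For the amortized bound over $\ell = r + w$ invocations, I would add the $O(w)$ $DWrite$ steps from part~\ref{thm:ABA-complexity_write} to the $DRead$ bound to obtain $O\bigl(\min(r,n)\cdot w + r + w\bigr)$ total steps, and then note that $\min(r,n)\cdot w \leq n\ell$ in both cases $r\leq n$ (where $rw \leq nw \leq n\ell$) and $r>n$ (where $nw \leq n\ell$), so the total is $O(n\ell)$ and the amortized step complexity is $O(n)$. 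The main obstacle is the charging argument in part~\ref{thm:ABA-complexity_read}: extracting pairwise-disjoint intervals from Lemma~\ref{lemma:dreadrestart} so that the forced $DWrite$ operations are genuinely distinct, and then correctly bounding the multiplicity with which a single $DWrite$ linearization is charged across concurrent $DRead$ operations.
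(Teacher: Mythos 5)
Your proof is correct and rests on the same key fact as the paper's: Lemma~\ref{lemma:dreadrestart}, which forces a distinct $DWrite$ linearization into each of the pairwise-disjoint open intervals spanned by consecutive triples of line-\ref{regread1} reads, so that a $DRead$ with $k$ iterations witnesses $\Omega(k)$ $DWrite$ linearizations inside its execution interval. Where you diverge is in the aggregation. The paper splits into two cases: for $r\leq n$ it uses the crude per-operation bound of $O(w+1)$ iterations per $DRead$, giving $O(r(w+1))$; for $r>n$ it sums per process, exploiting that one process's $DRead$s are sequential so the $DWrite$s charged to them are distinct, giving $\sum_i k_{p,i}=O(w+r_p)$ and hence $O(nw+r)$. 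You instead run a single dual count over $DWrite$ linearization points: each such point lies in the execution interval of at most one pending $DRead$ per process, hence of at most $\min(r,n)$ $DRead$s, so $\sum_{dr}m_{dr}\leq\min(r,n)\cdot w$ and the stated bound falls out in one stroke. The two arguments are the same double counting viewed from opposite sides; yours avoids the case split and makes the $\min(r,n)$ factor appear directly, at the small cost of having to check that the intervals produced by endpoint-sharing triples are genuinely disjoint (they are, being open). Part~\ref{thm:ABA-complexity_write}, lock-freedom, and the $O(n)$ amortized bound are handled essentially as in the paper.
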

\iffull
\begin{proof}
  Part~\ref{thm:ABA-complexity_write} follows immediately from the pseudocode (Algorithm~\ref{slaba}).
  
  By Lemma~\ref{lemma:dreadrestart}, each process reads $X$ on line~\ref{regread1} at most $3w + 1$ times during a single $DRead$ operation. 
  This immediately shows that the total number of steps devoted to $DRead$ operations is $O(r \cdot (w + 1))$.
  This proves part~\ref{thm:ABA-complexity_read} for the case $r\leq n$.
  
  We now consider the case $r>n$.
  For any process let $r_p$ denote the number of $DRead$ invocations by process $p$.
  Further, for $i\in\{1,\dots,r_p\}$ let $k_{p,i}$ denote the total number of times process $p$ reads $X$ in line~\ref{regread1} during its $i$-th $DRead$ operation.
  From Lemma~\ref{lemma:dreadrestart} we obtain $\sum_i k_{p,i}=O(w + r_p)$ for each process $p$.
  Using $r=\sum_p r_p$ we obtain that the total number of times all processes read $X$ during all $DRead$ operations is
  \begin{displaymath}
  \sum_{p}\sum_{i=1}^{r_p}k_{p,i}
  =
  \sum_p O(w + r_p)
  =
  O(n\cdot w + r)
  \end{displaymath}
  This proves part~\ref{thm:ABA-complexity_read} for the case $r>n$.
\end{proof}
\fi

Theorems~\ref{thm:ABA-safety} and~\ref{thm:ABA-complexity} yield Theorem~\ref{thm:main-ABA}.

\ifthesis
\section{Remarks}
\else
\subsection{Remarks}
\fi

In this section, we presented the first strongly linearizable implementation of an ABA-detecting register by modifying a previous linearizable implementation by Aghazadeh and Woelfel \cite{ABAreg,zahrathesis}.
An obvious extension of this work would examine the possibility of a wait-free strongly linearizable implementation of an ABA-detecting register.
However, we strongly suspect that such an implementation is impossible.
Denysyuk and Woelfel showed that there is no wait-free strongly linearizable implementation of a counter (defined in Section~\ref{chapter:intro}) from registers, and by reduction this implies that no such implementations exist for snapshots or max-registers, either \cite{WaitVsLock}.
The authors first assume there exists a wait-free implementation of a counter, then define a history in which a $Read$ operation takes an infinite number of steps, but the return value of the $Read$ is never determined; this contradicts the wait-freedom of the implementation.
We believe a similar argument could be applied to the ABA-detecting register.
For this paper, our efforts were concentrated on the snapshot implementation; the strongly-linearizable ABA-detecting register was incidental.
The analysis we performed on our snapshot implementation is not affected by the fact that the ABA-detecting register implementation is only lock-free (as opposed to wait-free).
Hence, we did not dedicate much time or effort to designing a wait-free strongly linearizable ABA-detecting register (or proving the impossibility of such an implementation).
However, we would like to study this issue further.
In fact, it would be interesting to know if \emph{any} nontrivial type has a wait-free strongly linearizable implementation from registers.

\ifthesis
\chapter{A Strongly Linearizable Snapshot}\label{sectionSLSS}
\else
\section{A Strongly Linearizable Snapshot} \label{sectionSLSS}
\fi

\ifea
  A \emph{(single-writer) snapshot} \cite{snapshot} is a type that stores a vector $S\in D^n$, where $D$ is some domain, and supports the operations $update_q(x)$ and $scan_q()$.
  Operation $update_q(x)$ replaces the $q$-th element of the vector with value $x$ and operation $scan_q()$ returns the vector.
\else
  A \emph{(single-writer) snapshot} \cite{snapshot} is a type that provides the invocation descriptions $update_q(x)$ and $scan()$. A snapshot object has, for each process $p \in \{1, \ldots, n\}$, an \emph{entry} that stores a value from some finite domain $D$. That is, the snapshot object contains a vector $X \in D^n$, which is initially $(\bot, \ldots, \bot)$. The $update_p(x)$ invocation, for any value $x \in D \setminus \{\bot\}$, changes the $p$-th entry of $X$ to $x$, and the $scan()$ invocation returns the vector $X$. We emphasize that the value $\bot$ strictly signifies the initial state of each entry; that is, once an entry of the snapshot contains a value $x \neq \bot$, no process may change the value of this entry back to $\bot$.
\fi
  
We use brackets to denote individual entries in vectors and snapshot objects. More precisely, for any vector $X = (x_1, \ldots, x_k)$ and any $p \in \{1, \ldots, k\}$, $X[p] = x_p$. Additionally, if $O$ is a snapshot object, then $O[p]$ denotes the $p$-th entry of the object (i.e. the entry that is writeable by process $p$).

\ifthesis
\section{Unbounded Implementation}\label{sec:unboundedslss}
\else
\subsection{Unbounded Implementation}\label{sec:unboundedslss}
\fi
Denysyuk and Woelfel \cite{WaitVsLock} define a general lock-free construction for \emph{versioned} objects, each storing a version number. A versioned object has an atomic update operation, which increases the object's version number every time it is invoked. A versioned object also supports a read operation which returns the state of the object along with its version number. The simple lock-free linearizable algorithm based on clean double collects from \cite{snapshot} can be transformed into a versioned snapshot object easily, by adding a sequence number field to each component that is incremented with each update of the component. The version number of the entire object may be obtained by calculating the sum of the sequence numbers of every component.

The strongly linearizable construction of a versioned object also uses a strongly linearizable bounded max-register described by Helmi, Higham, and Woelfel \cite{PossImposs}. Denysyuk and Woelfel augment the max-register implementation such that it stores a pair $(x, y)$. The augmented max-register supports a $maxRead()$ invocation, which returns the pair $(x, y)$, and a $maxWrite(x', y')$ invocation, which replaces the stored pair $(x, y)$ with $(x', y')$ provided that $x' > x$.

Denysyuk and Woelfel's construction uses a single instance of a versioned object $S$ of type $\mathscr{T}$, along with a single instance of an augmented max-register $R$. A strongly linearizable object $S'$ of type $\mathscr{T}$ is obtained as follows: to perform an $S'.update(x)$ operation, a process executes an $S.update(x)$ operation, reads $S$ to obtain the pair $(y, v)$, and finally performs an $R.maxWrite(v, y)$ operation (note that $v$ represents the version number of the object). To execute an $S'.read()$ operation, a process performs an $R.maxRead()$ operation to obtain the pair $(v, y)$, and returns the value $y$.

The fact that this algorithm is strongly linearizable follows from a simple argument. Let $up$ be some $S'.update(x)$ operation. Suppose that $S'$ has version number $v$ immediately after $up$ performs its $S.update(x)$ operation. Then $up$ may be linearized as soon as some $R.maxWrite(v', y)$ operation, with $v' \geq v$, linearizes. If multiple $S'.update$ operations linearize at the same step, then these operations may be ordered by the times at which their atomic $S.update$ operations responded. An $S'.read()$ operation may be linearized as soon as its $R.maxRead()$ operation linearizes. Since each operation on $S'$ can be linearized at the same step as an operation on $R$, then $S'$ is strongly linearizable because $R$ is strongly linearizable.

\ifthesis
As mentioned previously, the construction uses an implementation of a \emph{bounded} max-register, which means that the version number of the strongly linearizable object is bounded. Hence, without further modification, the algorithm only supports a finite number of update operations on the constructed object. The bounded max-register implementation from \cite{PossImposs} is included here as Algorithm~\ref{maxreg} for reference.

\begin{algorithm}[h]
	\caption{A strongly linearizable bounded max-register \cite{PossImposs}}
	\label{maxreg}
	
	\SetKwProg{Fn}{Function}{:}{}
	
	\nonl\textbf{shared:}\;
		\nonl\quad register $R[0 \ldots B+1] = [0, \ldots, 0]$\;
	\nonl\;
	\nonl\textbf{local:}\;
		\nonl\quad Integer $t = 0$\;
		
	\nonl\;
	\nonl\Fn{maxWrite$(x)$} {
		\For{$i = 1 \ldots x$} {	\label{maxreg:writeloop}
			$R[i] = x$\;			\label{maxreg:writetoreg}
		}
	}
	\nonl\;
	\nonl\Fn{maxRead$()$} {
		\Repeat{$true$} {			\label{maxreg:readloop}
			$maxWrite(t)$\;			\label{maxreg:readwrite}
			$r \gets R[t+1]$\;		\label{maxreg:readnext}
			\If{$r = 0$} {
				\Return $t$			\label{maxreg:returnt}
			}
			\Else {
				$t \gets r$\;		\label{maxreg:updatet}
			}
		}
	}
\end{algorithm}

The bounded max-register can store a value up to some constant $B$.
The implementation uses a shared array $R$ of $B + 2$ registers of size $\log |D|$.
A $maxWrite(x)$ operation, for some $x \in D$, writes the value $x$ to the registers $R[1], \ldots, R[x]$ in the loop on line~\ref{maxreg:writeloop}.
A process $p$ performing a $maxRead()$ repeatedly helps pending $maxWrite$ operations by performing a $maxWrite(t)$ operation on line~\ref{maxreg:readwrite}, where $t$ is an integer that is local to each process.
Following this, $p$ checks the contents of register $R[t+1]$; if the register contains $0$, then the $maxRead()$ operation immediately terminates and returns $t$ on line~\ref{maxreg:returnt}.
Otherwise, $p$ replaces the value $t$ with the value observed in $R[t+1]$ on line~\ref{maxreg:updatet}, before restarting the $maxRead()$ operation.

Notice that for each process $p$, $t$ stores the largest value in $R$ that has been observed by $p$.
For any $i \in \{1, \ldots, B+1\}$, $R[i]$ may only contain a value $v > 0$ if some $maxWrite(v)$ operation writes to $R[i]$ on line~\ref{maxreg:writetoreg}.
Hence, when a $maxRead()$ operation helps pending calls to $maxWrite$ on line~\ref{maxreg:readwrite}, it is guaranteed that a $maxWrite(t)$ operation was invoked at some prior point in time.
A $maxWrite(x)$ operation may be linearized at the first step at which some $R[x].Write(x')$ operation occurs (for some value $x' \geq x$).
If $R[x] \geq x$ at the invocation of some $maxWrite(x)$ operation, then it may be linearized immediately as it is invoked.
A $maxRead()$ operation may be linearized as soon as it reads $0$ from $R[t+1]$ on line~\ref{maxreg:readnext}.
This selection of linearization points results in a prefix-preserving function because each of these points depends only on past events (that is, at any particular step of a transcript on Algorithm~\ref{maxreg} it is entirely determined which operations linearize at that step).

The implementation of $maxWrite$ is wait-free because each $maxWrite(x)$ operation performs $x$ iterations of the loop on line~\ref{maxreg:writeloop}, and $x \leq B$ for some constant $B$.
Similarly, the implementation of $maxRead$ is wait-free because each $maxRead()$ operation performs at most $B$ iterations of the loop on line~\ref{maxreg:readloop} (this follows from a simple argument based on the fact that $t$ increases during every execution of line~\ref{maxreg:updatet}), and every operation invoked during the repeat-until loop is wait-free.

Algorithm~\ref{maxreg} may be transformed into an unbounded implementation of a max-register with a simple modification: let $R$ be an unbounded array of (unbounded) registers, and leave the implementations of $maxWrite$ and $maxRead$ unchanged.
The $maxWrite$ implementation remains wait-free, since each $maxWrite(x)$ operation still only performs $x$ iterations of the loop on line~\ref{maxreg:writeloop}.
However, a $maxRead()$ operation might never terminate if it is interrupted by infinitely many $maxWrite$ operations that write increasing values.
Hence, the modification is only lock-free.
However, the modified max-register may still be used in the lock-free construction of Denysyuk and Woelfel.
Notice that the implementation of $S'.update$ remains wait-free, and each $S'.update$ operation invokes only a single $maxWrite$ operation.
Hence, if a $maxRead()$ performed during an $S'.read()$ operation $rd$ never terminates, the $maxRead()$ operation must continually read increasing values on line~\ref{maxreg:readnext}.
This implies that infinitely many $maxWrite$ operations, which write increasing values, must terminate during $rd$.
Since the contents of the max-register may only be increased by calls to $maxWrite$ performed during $S'.update$ operations, this implies that an infinite number of $S'.update$ operations terminate during $rd$.
The outline of this argument is similar to the formal proofs of correctness provided in Section~\ref{compslss}.
\fi

\iffull

\ifthesis
\section{Interpreted Value}\label{sec:interpvalue}
\else
\subsection{Interpreted Value}\label{sec:interpvalue}
\fi
  Suppose $O$ is an instance of a snapshot object, and let $T$ be a transcript that contains operations on $O$.
  If $O$ is atomic, then it is easy to determine the value of $O$ at any step $t$ of $T$, since $update$ operations on $O$ take effect instantaneously.
  However, if $O$ is a linearizable implementation of a snapshot object (in particular, if $update$ operations on $O$ are non-atomic), then the value of $O$ at any step $t$ of $T$ is not well-defined.
  To address this issue, we begin by introducing the notion of \emph{interpreted value}, which allows us to reason about the contents of a linearizable object $O$ at every step of $T$.
  

Let $T$ be a fixed transcript, and let $O$ be some linearizable snapshot object.
Let $pt$ be a linearization point function for $O$ (recall that for any transcript $T$, $pt$ maps operations in $\Gamma(T|O)$ to points in time in $T$).
The \emph{interpreted value} of $O[p]$ induced by $pt$ at time $t$ of $T$ is $x$ if and only if one of the following statements hold:
\begin{enumerate}[labelindent=0pt,labelwidth=\widthof{\ref{def:interp2}},itemindent=1em,leftmargin=!,label=\textbf{T-\arabic*},ref={T-\arabic*}]
	\item There is an $O.update_p(x)$ operation $up \in \Gamma(T|O)$ by $p$ such that $pt(up) < t$, and there does not exist any $O.update_p(x')$ operation $up' \in \Gamma(T|O)$ by $p$ such that $pt(up) < pt(up') \leq t$, for any $x' \neq x$.\label{def:interp1}
	\item There is no such $O.update_p(x)$ operation by $p$ in $\Gamma(T|O)$, and $x = \bot$.\label{def:interp2}
\end{enumerate}
When $pt$ and $T$ are clear from context, we say that the interpreted value of $O[p]$ at time $t$ is $x$.

Intuitively, if the interpreted value of $O[p]$ at time $t$ is $x$, then any $scan$ operation $sc \in \Gamma(T)$ such that $pt(sc) = t$ must return a vector with $x$ in its $p$-th entry (this simply follows from the sequential specification of the snapshot type, along with the assumption that $S$ is a linearization of $\Gamma(T)$).
Note that by the definition above, if $up_1$ and $up_2$ are two consecutive $O.update$ operations by process $p$ such that $up_1$ and $up_2$ write distinct values and $pt(up_2) \neq \infty$, then the interpreted value of $O[p]$ at $pt(up_2)$ is $\bot$.
However, if $up_1$ and $up_2$ both write the same value $v$, then the interpreted value of $O[p]$ at $pt(up_2)$ is $v$.

\begin{observation}\label{obs:interpvalue}
	Let $T$ be a transcript, let $O$ be a linearizable snapshot object, and let $pt$ be a linearization point function for $O$. Suppose the interpreted value of $O[p]$ induced by $pt$ at time $t \neq \infty$ of $T$ is $x \neq \bot$. Then for every $scan$ operation $sc$ such that $pt(sc) = t$, $sc$ returns a vector $(x_1, \ldots, x_n)$ such that $x_p = x$.
\end{observation}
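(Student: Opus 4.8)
The plan is to exploit the linearization $S$ of $\Gamma(T|O)$ that is guaranteed to exist by condition (ii) of the linearization point function $pt$, and then to read off the return value of $sc$ directly from the sequential specification of the snapshot type. Since $S$ is a valid sequential history, the $p$-th component of the vector returned by $sc$ is precisely the value written by the last $O.update_p$ operation preceding $sc$ in $S$ (or $\bot$ if there is none). Thus the whole statement reduces to proving that this last $O.update_p$ operation before $sc$ in $S$ writes the value $x$.

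First I would establish that the order induced by $pt$ is consistent with the order of $S$: whenever $pt(op_1) < pt(op_2)$, then $op_1 \xrightarrow{S} op_2$. This is just the contrapositive of condition (ii): in the totally ordered sequential history $S$ the only alternative is $op_2 \xrightarrow{S} op_1$, which would force $pt(op_2) \leq pt(op_1)$, a contradiction. Applying this to the $O.update_p(x)$ operation $up$ supplied by \ref{def:interp1} (for which $pt(up) < t = pt(sc)$) yields $up \xrightarrow{S} sc$, so at least one $O.update_p$ operation precedes $sc$ in $S$. Let $up''$ denote the last such operation, and suppose it writes $x''$. If $up'' = up$ then $x'' = x$ and we are finished; otherwise $up \xrightarrow{S} up''$, so condition (ii) gives $pt(up) \leq pt(up'')$, and since $up'' \xrightarrow{S} sc$ we also have $pt(up'') \leq pt(sc) = t$.

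The main obstacle is the edge case $x'' \neq x$, where I must upgrade the weak inequality $pt(up) \leq pt(up'')$ to the strict $pt(up) < pt(up'')$ required to invoke the uniqueness clause of \ref{def:interp1}. I would handle this using single-process sequentiality: because $up$ and $up''$ are both performed by process $p$, they are non-concurrent and hence ordered by happens-before; as $S$ respects the happens-before order and places $up$ before $up''$, operation $up$ happens before $up''$ in $T$, so $rsp(up)$ precedes $inv(up'')$. Consequently $pt(up) \leq time_T(rsp(up)) < time_T(inv(up'')) \leq pt(up'')$, giving a distinct-value $O.update_p(x'')$ with $pt(up) < pt(up'') \leq t$, which directly contradicts \ref{def:interp1}. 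Therefore $x'' = x$, and by the validity of $S$ the scan $sc$ returns a vector whose $p$-th component is $x$, as claimed.
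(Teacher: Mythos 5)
Your proposal is correct and follows essentially the same route as the paper's proof: both extract the update $up$ from \ref{def:interp1}, use the contrapositive of condition (ii) to place $up$ before $sc$ in the linearization $S$, and rule out an intervening $O.update_p(x')$ with $x' \neq x$ by using single-process sequentiality to upgrade $pt(up) \leq pt(up')$ to a strict inequality, contradicting the uniqueness clause of \ref{def:interp1}. The only cosmetic difference is that you phrase the argument in terms of the \emph{last} update by $p$ preceding $sc$ in $S$, whereas the paper directly assumes an intervening distinct-valued update and derives the contradiction; the substance is identical.
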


\begin{proof}
	Since the interpreted value of $O[p]$ at time $t$ is $x \neq \bot$, by \ref{def:interp1} there exists an $O.update_p(x)$ operation $up \in \Gamma(T|O)$ by $p$ such that
	\ifspringer
	\begin{gather}
		\text{$pt(up) < t$, and}\label{obs:interpless} \\
		\begin{split}
		&\text{no $O.update_p(x')$ operation $up' \in \Gamma(T|O)$ by $p$} \\ &\text{satisfies $pt(up) < pt(up') \leq t$, for any $x' \neq x$.}\label{obs:interpnoop}
		\end{split}
	\end{gather}
	\else
	\begin{gather}
		\text{$pt(up) < t$, and}\label{obs:interpless} \\
		\text{no $O.update_p(x')$ operation $up' \in \Gamma(T|O)$ by $p$ satisfies $pt(up) < pt(up') \leq t$, for any $x' \neq x$.}\label{obs:interpnoop}
	\end{gather}
	\fi
	By the definition of linearization point functions, there exists a linearization $S$ of $\Gamma(T|O)$ such that, 
	\ifspringer
	\begin{gather}
	\begin{split}
		&\text{for any $op \in \Gamma(T|O)$ such that $pt(op) \neq \infty$,} \\ &\text{$op \in S$, and}\label{obs:interpopins}
	\end{split} \\
	\begin{split}
		&\text{for any $op_1, op_2 \in S$, if $op_1 \xrightarrow{S} op_2$ then} \\ &\text{$pt(op_1) \leq pt(op_2)$.}\label{obs:interpextend}
	\end{split}
	\end{gather}
	\else
	\begin{gather}
		\text{for any $op \in \Gamma(T|O)$ such that $pt(op) \neq \infty$, $op \in S$, and}\label{obs:interpopins} \\
		\text{for any $op_1, op_2 \in S$, if $op_1 \xrightarrow{S} op_2$ then $pt(op_1) \leq pt(op_2)$.}\label{obs:interpextend}
	\end{gather}
	\fi
	From the lemma statement, $pt(sc) = t$ and $t \neq \infty$.
	By this, (\ref{obs:interpless}), and (\ref{obs:interpopins}), $up, sc \in S$.
	Since $pt(up) < pt(sc)$, $up \xrightarrow{S} sc$ by contrapositive of (\ref{obs:interpextend}).
	
	Suppose there exists an $O.update_p(x')$ operation $up' \in S$ by $p$ such that
	\begin{gather}
		\text{$x \neq x'$, and}\label{obs:interpxneqxprime} \\
		\text{$up \xrightarrow{S} up' \xrightarrow{S} sc$.}\label{obs:interpSorder}
	\end{gather}
	By (\ref{obs:interpextend}) and (\ref{obs:interpSorder}),
	\begin{gather}
		\text{$pt(up) \leq pt(up')$, and}\label{obs:interpupbeforeupprime} \\
		\text{$pt(up') \leq pt(sc)$.}\label{obs:interpupprimebeforesc}
	\end{gather}
	Note that by definition of linearization point functions,
	\begin{gather}
		\text{$pt(up) \in \bigl[time_T(inv(up)), time_T(rsp(up))\bigr]$, and}\label{obs:interpupinitsoperation} \\
		\text{$pt(up') \in \bigl[time_T(inv(up')), time_T(rsp(up'))\bigr]$.}\label{obs:interpuppriminitsoperation}
	\end{gather}
	By (\ref{obs:interpupinitsoperation}), (\ref{obs:interpuppriminitsoperation}), the fact that processes perform operations sequentially, and (\ref{obs:interpupbeforeupprime}),
	\begin{equation}
		\text{$pt(up) < pt(up')$.}\label{obs:interpupstrictbeforeupprime}
	\end{equation}
	Together, (\ref{obs:interpxneqxprime}), (\ref{obs:interpupprimebeforesc}), and (\ref{obs:interpupstrictbeforeupprime}) contradict (\ref{obs:interpnoop}).
	Hence, there is no $O.update_p(x')$ operation $up' \in S$ by $p$ such that $up \xrightarrow{S} up' \xrightarrow{S} sc$, for any $x' \neq x$.
	By this, the fact that $up \xrightarrow{S} sc$, and the fact that $S$ is a linearization of $\Gamma(T|O)$, the sequential specification of the snapshot type requires that $sc$ returns a vector $(x_1, \ldots, x_n)$ with $x_p = x$.
\end{proof}

\begin{observation}\label{obs:interpintervalafter}
	Let $T$ be a transcript, let $O$ be a linearizable snapshot object, and let $pt$ be a linearization point function for $O$. Suppose $up$ is some $O.update_p(x)$ operation by $p$ such that $pt(up) \neq \infty$. If there is no $O.update_p(x')$ operation $up'$ by $p$ with $x' \neq x$ such that $pt(up') \in \bigl(pt(up), t\bigr]$ for some time $t > pt(up)$, then the interpreted value of $O[p]$ is $x$ throughout $\bigl(pt(up), t\bigr]$.
\end{observation}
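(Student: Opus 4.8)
The plan is to prove the claim pointwise: I fix an arbitrary time $\tau \in \bigl(pt(up), t\bigr]$ and show that the interpreted value of $O[p]$ at $\tau$ equals $x$ by verifying condition~\ref{def:interp1} in the definition of interpreted value, using the very operation $up$ supplied by the hypothesis as the witness. Since the statement asserts that the interpreted value is $x$ \emph{throughout} the half-open interval, establishing this for each such $\tau$ suffices.

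To apply~\ref{def:interp1}, I must check its two requirements. First, $pt(up) < \tau$, which is immediate from $\tau \in \bigl(pt(up), t\bigr]$, that is, $\tau > pt(up)$. (Observe also that $pt(up) \neq \infty$ by assumption and $up \in \Gamma(T|O)$, since $up$ is an operation with a finite linearization point, so $up$ is a legitimate witness.) Second, I must show that there is no $O.update_p(x')$ operation $up'$ by $p$ with $x' \neq x$ such that $pt(up) < pt(up') \leq \tau$.

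For this second requirement I argue directly from the hypothesis. Suppose, toward a contradiction, that such an $up'$ exists. Then $pt(up) < pt(up') \leq \tau \leq t$, so $pt(up') \in \bigl(pt(up), t\bigr]$. This is precisely the situation the hypothesis of the observation forbids, namely that some $O.update_p(x')$ operation with $x' \neq x$ has its linearization point in $\bigl(pt(up), t\bigr]$. The contradiction establishes the second requirement, so condition~\ref{def:interp1} holds with witness $up$, and hence the interpreted value of $O[p]$ at $\tau$ is $x$.

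Since $\tau$ was an arbitrary point of $\bigl(pt(up), t\bigr]$, the interpreted value of $O[p]$ is $x$ throughout this interval, as required. There is no substantive obstacle here: the proof is essentially an unfolding of the definition. The only points worth stating carefully are that the same witness $up$ works uniformly for every $\tau$ in the interval, and that the inclusion $\bigl(pt(up), \tau\bigr] \subseteq \bigl(pt(up), t\bigr]$ (which follows from $\tau \leq t$) is what lets the ``no intervening conflicting update'' clause of the hypothesis transfer to the subinterval ending at $\tau$.
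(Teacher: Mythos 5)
Your proof is correct and takes the same route as the paper, which simply notes that the claim follows immediately from condition~\ref{def:interp1}; you have merely unfolded that one-line argument, fixing an arbitrary $\tau$ in the interval and checking the two clauses of~\ref{def:interp1} with $up$ as witness. Nothing further is needed.
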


\begin{proof}
	This follows trivially from \ref{def:interp1}. 
\end{proof}
\fi

\ifthesis
\section{Bounded Implementation}\label{sec:boundedslss}
\else
\subsection{Bounded Implementation}
\fi

Golab, Higham, and Woelfel \cite{stronglin} have shown that the wait-free snapshot implementation designed by Afek, Attiya, Dolev, Gafni, Merritt, and Shavit~\cite{snapshot} is not strongly linearizable. Previous strongly linearizable implementations of snapshot objects are either not lock-free \cite{PossImposs}, or use an unbounded number of registers \cite{WaitVsLock}.
We design a strongly linearizable implementation of a single-writer snapshot object using a linearizable instance of a single-writer snapshot object, along with an atomic ABA-detecting register.

The linearizable snapshot object used by our implementation can be any lock-free or wait-free linearizable implementation of a snapshot object.
To achieve a strongly linearizable snapshot object that uses bounded space, we must ensure that the underlying linearizable snapshot implementation also uses only bounded space.
For the sake of concreteness, we use an implementation by Attiya and Rachman~\cite{attiya1998snapshot}, which is wait-free and linearizable.
The bounded implementation presented in \cite{attiya1998snapshot} uses $O(n^3)$ registers of size $O(\log n + \log |D|)$ to represent views of the snapshot object (where $D$ is the set of values that may be stored by a component), plus $O(n^5)$ registers of size $O(\log n)$ to manage sequence numbers.
This implementation therefore has space complexity $O(n^3(\log n + \log |D|) + n^5\log n)$.
The algorithm performs $O(n\log n)$ operations on MRSW registers during any $scan$ or $update$ operation.
Let $S$ be an instance of this bounded wait-free linearizable snapshot object implementation.
Our algorithm also uses a shared atomic ABA-detecting register $R$. The snapshot object $S$ is used to hold the contents of the strongly linearizable snapshot object, while the ABA-detecting register $R$ contains a vector of size $n$ that represents the state of $S$ at some previous point in time.
Since strong linearizability is a composable property \cite{stronglin,AE2019a},
we can replace the atomic ABA-detecting register $R$ with our strongly linearizable one from Section~\ref{slabasection}.

In order to clearly distinguish between operations on the linearizable snapshot object $S$, and the implemented strongly linearizable snapshot, we call the operations on the latter one $SLupdate$ and $SLscan$.
Pseudocode for our implementation is presented in Algorithm~\ref{slss}.

\begin{algorithm}
  \caption{A strongly linearizable snapshot object}
  \label{slss}
  
  \SetKwProg{Fn}{Function}{:}{}
  
  \nonl\textbf{shared:}\;
    \nonl\quad linearizable snapshot object $S = (\bot, \ldots, \bot)$\;
    \nonl\quad atomic ABA-detecting register $R = (\bot, \ldots, \bot)$\;
  
  \nonl\;
  \nonl\Fn{SLupdate$_p(x)$} {
    $S.update_p(x)$\;				\label{Supdate}
    $s \gets S.scan()$\;				\label{SLUscan}
    $R.DWrite_p(s)$\;					\label{SLUDWrite}
  }
    
  \nonl\;  
  \nonl\Fn{SLscan$_p()$} {
    \Repeat{$(s_1 = \ell = s_2) \; \mathbf{and} \; !c_2$} { \label{slssrepeat}
      $(s_1, c_1) \gets R.DRead_p()$\;		\label{DRead1}
      $\ell \gets S.scan()$\;		\label{SLSscan}
      $(s_2, c_2) \gets R.DRead_p()$\;		\label{DRead2}
      \If{$!(s_1 = \ell = s_2)$} {	\label{EQcond}
        $R.DWrite_p(\ell)$\;	\label{SLSDWrite}
      }
    } \label{CHcond}
    
    \Return $s_2$
  }    
\end{algorithm}

We employ a similar strategy as in the unbounded implementation, but the role of the max-register is now filled by the ABA-detecting register. The $SLupdate$ operation is nearly identical to the update operation of the unbounded implementation by Denysyuk and Woelfel: to perform an $SLupdate_p(x)$ operation, for some $x \in D$, a process $p$ first performs an $S.update_p(x)$ operation on line~\ref{Supdate}, changing the contents of the $p$-th entry of $S$ to $x$. Process $p$ then performs an $S.scan()$ operation on line~\ref{SLUscan}, and finally writes the result of this call to $R$ on line~\ref{SLUDWrite}. Since components of the snapshot object are single-writer, the vector returned by this $S.scan()$ operation must contain $x$ in its $p$-th entry.

An $SLscan_p()$ operation, for some process $p$, is ``stretched'' until a period of time is observed during which the underlying objects $S$ and $R$ are not modified. This way, we force the operation to observe as many $SLupdate$ operations as possible before allowing it to respond. The method works by repeatedly performing an $R.DRead_p()$ operation on line~\ref{DRead1}, then an $S.scan()$ operation on line~\ref{SLSscan}, and finally another $R.DRead_p()$ operation on line~\ref{DRead2}. We will often refer to this sequence of operations as the \emph{main loop} of the $SLscan$ method. Process $p$ continues to perform this sequence of operations until the same vector is returned by all of these calls. Whenever $p$ observes that the contents of $R$ and $S$ are inconsistent, $p$ helps pending $SLupdate$ operations by writing the previously-taken snapshot of $S$ to $R$ on line~\ref{SLSDWrite} before resuming its main loop. When $p$ observes that the $S.scan()$ operation and the two $R.DRead_p()$ operations return the same vector, $p$ will make sure that $R$ was not changed between its most recent pair of $R.DRead_p()$ operations by checking the boolean flag returned by the second $DRead$ on line~\ref{CHcond}; if this flag is false, then $p$ can safely return the value that was returned by its final $DRead$. Thus, process $p$ continues to perform its main loop until it observes that no process interferes during its most recently executed sequence of read operations.

Both $SLscan$ and $SLupdate$ operations work to stabilize the contents of $S$ and $R$. The idea is that the underlying snapshot object $S$ always contains the most recent state of the object, and operations write the state of $S$ that they observed most recently to $R$ (on both line~\ref{SLUDWrite} for $SLupdate$ operations and line~\ref{SLSDWrite} for $SLscan$ operations). A pending $SLupdate_p(x)$ operation by process $p$ can be linearized as soon as some concurrent $SLscan$ operation returns a vector that contains $x$ in its $p$-th entry. The key is that we choose a linearization point for this $SLupdate$ operation at which the interpreted value of $S[p]$ is $x$ and the value of $R[p]$ is $x$. We choose to linearize $SLscan$ operations on their last read of shared memory. That is, an $SLscan$ operation linearizes at its final execution of the $R.DRead$ operation on line~\ref{DRead2}.

Our choice of linearization points results in a strong linearization function because $SLscan$ operations always linearize at their final shared memory step, and when an $SLscan$ operation linearizes, it is already determined which $SLupdate$ operations are caused to linearize by this $SLscan$ operation. Hence, no operations can be retroactively inserted anywhere in the established linearization order.

\ifthesis
We now give an informal explanation of why the ABA-detecting register and the conditional statement on line~\ref{CHcond} are helpful in Algorithm~\ref{slss}.
The snapshot object implementation in Algorithm~\ref{ssnoABA} is a modification of Algorithm~\ref{slss}, which replaces the ABA-detecting register with an atomic multi-reader multi-writer register.
The remaining details of the algorithm are nearly unchanged, besides method names (e.g. $DRead$ is now simply $Read$).
Notice that the Boolean flag test of $c_2$ from line~\ref{CHcond} has been removed from line~\ref{noaba:CHcond} (since the $R.Read()$ operation does not return a Boolean flag).

\begin{algorithm}
  \caption{A snapshot object implemented without ABA-detecting registers}
  \label{ssnoABA}
  
  \SetKwProg{Fn}{Function}{:}{}
  
  \nonl\textbf{shared:}\;
    \nonl\quad linearizable snapshot object $S = (\bot, \ldots, \bot)$\;
    \nonl\quad atomic multi-reader multi-writer register $R = (\bot, \ldots, \bot)$\;
  
  \nonl\;
  \nonl\Fn{SLupdate$_p(x)$} {
    $S.update_p(x)$\;				\label{noaba:Supdate}
    $s \gets S.scan()$\;				\label{noaba:SLUscan}
    $R.Write(s)$\;					\label{noaba:SLUDWrite}
  }
    
  \nonl\;  
  \nonl\Fn{SLscan$_p()$} {
    \Repeat{$s_1 = \ell = s_2$} { \label{noaba:slssrepeat}
      $s_1 \gets R.Read()$\;		\label{noaba:DRead1}
      $\ell \gets S.scan()$\;		\label{noaba:SLSscan}
      $s_2 \gets R.Read()$\;		\label{noaba:DRead2}
      \If{$!(s_1 = \ell = s_2)$} {	\label{noaba:EQcond}
        $R.Write(\ell)$\;	\label{noaba:SLSDWrite}
      }
    } \label{noaba:CHcond}
    
    \Return $s_2$
  }    
\end{algorithm}
 
Consider the following programs for the processes $q$, $p$, and $r$:
\begin{itemize}
\item Process $q$ performs an $SLupdate_q(1)$ operation $up_1$ followed by an $SLupdate_q(2)$ operation $up_2$.
\item Process $p$ performs an $SLscan_p()$ operation $sc_p$.
\item Process $r$ performs an $SLscan_r()$ operation $sc_r$.
\end{itemize}
Define the following transcripts on Algorithm~\ref{ssnoABA}:
\begin{align*}
	&S = up_1 \circ (sc_r\text{ to the end of line~\ref{noaba:SLSscan}}) \circ \\ 
	&\qquad(sc_p\text{ to the end of line~\ref{noaba:SLSscan}}) \circ up_2 \circ (sc_r\text{ to the end of line~\ref{noaba:EQcond}}) \\
	&T_1 = S \circ (sc_r\text{ to the end of line~\ref{noaba:SLSDWrite}}) \circ (sc_p\text{ to completion}) \\
	&T_2 = S \circ (sc_p\text{ to completion})
\end{align*}

The $SLupdate_q(1)$ operation $up_1$ completes before any other operation is invoked in $S$.
Therefore, when $sc_p$ and $sc_r$ run to the end of line~\ref{noaba:SLSscan} during $S$, they both see $1$ in entry $q$ of the vectors returned on lines \ref{noaba:DRead1} and \ref{noaba:SLSscan} (we disregard the entries for $p$ and $r$ in this example, since neither process performs any $SLupdate$ operations).
The second time $sc_r$ runs during $S$, to the end of line~\ref{noaba:EQcond}, it sees $2$ in entry $q$ of the vector returned on line~\ref{noaba:DRead2} (since $up_2$ completed before the $R.Read()$ operation by $sc_r$ on line~\ref{noaba:DRead2} was invoked). Then $sc_r$ must enter the conditional block on line~\ref{noaba:EQcond}. Hence, at the end of $S$, $r$ is poised perform an $R.Write(X)$ operation on line~\ref{noaba:SLSDWrite}, where $X[q] = 1$.

In transcript $T_1$, $sc_r$ is allowed to complete its $R.Write(X)$ operation. Hence, when $sc_p$ completes in transcript $T_1$, it reads a vector with $1$ in entry $q$ in line~\ref{noaba:DRead2}. Therefore, $sc_p$ does not enter the conditional block on line~\ref{noaba:EQcond}, since the vectors it viewed on lines \ref{noaba:DRead1}, \ref{noaba:SLSscan}, and \ref{noaba:DRead2} were all equal. Then $sc_p$ returns a vector that contains $1$ in entry $q$. Hence, $sc_p$ must linearize after $up_1$, but before $up_2$ in any linearization of $\Gamma(T_1)$.

In transcript $T_2$, $sc_p$ completes without $sc_r$ performing its $R.Write(X)$ operation. In this case, since $up_2$ performs an $R.Write(Y)$ operation on line~\ref{noaba:SLUDWrite}, with $Y[q] = 2$, $sc_p$ reads a vector with $2$ in entry $q$ on line~\ref{noaba:DRead2}. Hence, $sc_p$ restarts its loop, reads a vector with $2$ in entry $q$ in lines \ref{noaba:DRead1}, \ref{noaba:SLSscan}, and \ref{noaba:DRead2}, and then terminates and returns a vector with $2$ in entry $q$. Therefore, $sc_p$ must linearize after both $up_1$ and $up_2$ in any linearization of $\Gamma(T_2)$.

Consider the linearization $L_1 = up_1 \circ sc_p \circ up_2$ of $S$. Then $L_1$ is not a prefix of any linearization of $\Gamma(T_2)$, since $sc_p$ must linearize after both $up_1$ and $up_2$ in any linearization of $\Gamma(T_2)$. Now consider the linearization $L_2 = up_1 \circ up_2$ of $S$. Then $L_2$ is not a prefix of any linearization of $\Gamma(T_1)$, since $sc_p$ must linearize after $up_1$ but before $up_2$ in any linearization of $\Gamma(T_1)$.
This example provides intuition into why Algorithm~\ref{ssnoABA} is not strongly linearizable. In the remainder of this section we show that Algorithm~\ref{slss} is strongly linearizable, demonstrating that using ABA-detecting registers suffices to solve the problem we illustrated with this example.
\fi

\ifea
	We now outline an argument that Algorithm~\ref{slss} is strongly linearizable.
	For complete proofs of correctness, see (FULL PAPER)\todo{this}.
	
	Since $S$ is a linearizable implementation of a snapshot object, its value at any particular step of a transcript is not well-defined.
	Hence, we introduce the notion of an \emph{interpreted value}, which simplifies our arguments.
	Let $T$ be a transcript on an instance of Algorithm~\ref{slss}.
	Let $H$ be a linearization of $\Gamma(T|S)$, and let $pt_S$ be a function that maps operations in $\Gamma(T|S)$ to linearization points.
	More precisely, for any $op \in \Gamma(T|S)$, $pt_S(op) \in [time_T(inv(op)), time_T(rsp(op))]$, and $op_i \xrightarrow{H} op_j$ if and only if either $pt_S(op_i) < pt_S(op_j)$, or $pt_S(op_i) = pt_S(op_j)$, $op_i$ is an $update$ operation, and $op_j$ is a $scan$ operation.
	Then the interpreted value of $S[p]$ at step $t$ of $T$ is $x$ if there exists an $update_p(x)$ operation $up$ by $p$ such that $pt_S(up) \leq t$, and there is no $update_p(x')$ operation $up'$ by $p$ such that $pt_S(up) < pt_S(up') \leq t$.
	The interpreted value of $S$ at step $t$ is a vector $X \in D^n$ if, for every $p \in \{1, \ldots, n\}$, the interpreted value of $S[p]$ at step $t$ is $X[p]$.
\fi

For any operation $op$ in a transcript $T$ on some instance of the implementation provided in Algorithm~\ref{slss}, we define $pt(op)$ as follows:
\begin{enumerate}[labelindent=0pt,labelwidth=\widthof{\ref{def:interp2}},itemindent=1em,leftmargin=!,label=\textbf{R-\arabic*},ref={R-\arabic*}]
	\item\label{ptr1} Suppose $op$ is some $SLscan$ operation. Then $pt(op)$ is the time at which the final shared memory step is performed by $op$. That is, $pt(op) = time(op^{\ref{DRead2}})$.
	\item\label{ptr2} Suppose $op$ is some $SLupdate_p(x)$ operation for some $x \in D$. If $T$ contains an $SLscan$ operation $sc$ such that $pt(sc) \neq \infty$, $time(inv(op)) < pt(sc)$, and $sc$ returns a vector whose $p$-th entry contains $x$, then let $sc_0$ be such an $SLscan$ operation with minimum possible $pt$ value. Let $t = pt(sc_0)$ if $sc_0$ exists, or $t = \infty$ otherwise. Then $pt(op) = \min \bigl(t, time(op^{\ref{SLUDWrite}})\bigr)$. Recall that if $op^{\ref{SLUDWrite}} \not\in T$, then $time(op^{\ref{SLUDWrite}}) = \infty$.
\end{enumerate}

If $pt(op) \neq \infty$ for some operation $op$, then we say $op$ \emph{linearizes} at $pt(op)$.
\ifea
	If $T$ is a transcript on some instance of the object implementation in Algorithm~\ref{slss}, then we claim $H$ is a linearization of $T$, where $H$ is a sequential history composed of operations in $\Gamma(T)$ ordered by their corresponding $pt$ values.
	That is, if $op_i$ and $op_j$ are operations in $\Gamma(T)$ performed by processes $p$ and $q$ respectively, and $pt(op_i) < pt(op_j)$, then $op_i \xrightarrow{H} op_j$.
	If $pt(op_i) = pt(op_j)$, then $op_i \xrightarrow{H} op_j$ if either $op_j$ is an $SLscan$ operation and $op_i$ is an $SLupdate$ operation, or both operations are of the same type and $p < q$.
	Also, for any operation $op \in T$ such that $pt(op) = \infty$, omit $op$ from $H$.
	Let $f$ be a function that maps every transcript on Algorithm~\ref{slss} to a sequential history as described above (the existence of such a function is trivial).
\fi
\iffull
Let $\mathcal{T}$ represent the set of all possible transcripts on some instance of the object implementation from Algorithm~\ref{slss}. For every transcript $T \in \mathcal{T}$ define a sequential history $f(T)$, such that for every pair of distinct operations $op_1, op_2 \in \Gamma (T)$ by processes $p_1$ and $p_2$ respectively, with $pt(op_1) \neq \infty$ and $pt(op_2) \neq \infty$, $op_1 \xrightarrow{f(T)} op_2$ if and only if

\begin{enumerate}[labelindent=0pt,labelwidth=\widthof{\ref{def:interp2}},itemindent=1em,leftmargin=!,label=\textbf{U-\arabic*},ref={U-\arabic*}]
	\item operation $op_1$ linearizes before $op_2$ (i.e. $pt(op_1) < pt(op_2)$), or
	\item operations $op_1$ and $op_2$ have the same linearization point (i.e. $pt(op_1) = pt(op_2)$), they both have the same invocation description (i.e. they are either both $SLupdate$ operations or both $SLscan$ operations), and $p_1 < p_2$, or\label{ptsorttiebreaker}
	\item operations $op_1$ and $op_2$ have the same linearization point (i.e. $pt(op_1) = pt(op_2)$), $op_1$ is an $SLupdate$ operation, and $op_2$ is an $SLscan$ operation.
\end{enumerate}

Note that $f(T)$ does not contain any operation $op \in \Gamma (T)$ for which $pt(op) = \infty$.


For the remainder of Section~\ref{sectionSLSS}, let $T \in \mathcal{T}$ be some finite transcript on a snapshot object $O$ implemented by Algorithm~\ref{slss}. Additionally, fix a linearization point function $pt_S$ for $S$, where $S$ is the linearizable snapshot object used by Algorithm~\ref{slss}.
By definition of linearization point functions, there exists a linearization $L$ of $\Gamma(T|S)$ such that
\ifspringer
	\begin{gather}
	\begin{split}
		&\text{for every $op \in \Gamma(T|S)$ such that $pt_S(op) \neq \infty$,} \\ &\text{$op \in L$, and}\label{linptfunctions:in}
	\end{split} \\
	\begin{split}
		&\text{for every $op_1, op_2 \in L$, if $op_1 \xrightarrow{L} op_2$ then} \\ &\text{$pt_S(op_1) \leq pt_S(op_2)$.}\label{linptfunction:prec}
	\end{split}
	\end{gather}
\else
	\begin{gather}
		\text{for every $op \in \Gamma(T|S)$ such that $pt_S(op) \neq \infty$, $op \in L$, and}\label{linptfunctions:in} \\
		\text{for every $op_1, op_2 \in L$, if $op_1 \xrightarrow{L} op_2$ then $pt_S(op_1) \leq pt_S(op_2)$.}\label{linptfunction:prec}
	\end{gather}
\fi

\fi

\ifea
	The following four lemmas are stated here without proof:
\fi

\begin{lemma}
\label{slss:ptinmethod}
For any operation $op \in \Gamma(T)$, $pt(op) \in \bigl[time(inv(op)), time(rsp(op))\bigr]$.
\end{lemma}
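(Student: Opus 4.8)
The plan is to split along the two cases in the definition of $pt$, namely \ref{ptr1} for $SLscan$ operations and \ref{ptr2} for $SLupdate$ operations, and in each case argue that every time value entering the formula for $pt(op)$ lies in the closed interval $\bigl[time(inv(op)), time(rsp(op))\bigr]$. Throughout, I would lean on the convention defining $op^x$: both $op^{\ref{DRead2}}$ and $op^{\ref{SLUDWrite}}$ are response times of atomic steps invoked \emph{during} the execution of $op$, so they occur strictly after $inv(op)$ and, being shared-memory steps of $op$'s own method, no later than $rsp(op)$. I would also dispose of the $\infty$ cases uniformly: a step absent from $T$ has time $\infty$, and a pending $op$ has $time(rsp(op)) = \infty$, making the upper bound vacuous in that situation.

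First, suppose $op$ is an $SLscan$ operation, so $pt(op) = time(op^{\ref{DRead2}})$ by \ref{ptr1}. Line~\ref{DRead2} is executed only inside the repeat-loop of the method, hence after $inv(op)$; and its final execution is the last shared-memory step before the loop-guard on line~\ref{CHcond} is evaluated and the value $s_2$ is returned, so $op^{\ref{DRead2}}$ occurs no later than $rsp(op)$. This gives the claim directly. If $op$ is pending with no completed final $DRead$, then $time(op^{\ref{DRead2}}) = \infty = time(rsp(op))$, which remains consistent with the interval.

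Next, suppose $op$ is an $SLupdate_p(x)$ operation, so by \ref{ptr2} we have $pt(op) = \min\bigl(t, time(op^{\ref{SLUDWrite}})\bigr)$, where $t = pt(sc_0)$ for the selected $SLscan$ $sc_0$, or $t = \infty$ if no such $sc_0$ exists. For the upper bound I would note that line~\ref{SLUDWrite} is the last line of the $SLupdate$ method, so $time(op^{\ref{SLUDWrite}}) \leq time(rsp(op))$; since $pt(op) \leq time(op^{\ref{SLUDWrite}})$, the upper bound follows no matter which term attains the minimum. For the lower bound I must check both terms: $time(op^{\ref{SLUDWrite}}) \geq time(inv(op))$ because line~\ref{SLUDWrite} runs after the method is invoked, and $t = pt(sc_0) > time(inv(op))$ because \ref{ptr2} explicitly selects $sc_0$ subject to $time(inv(op)) < pt(sc_0)$. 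Hence both terms are at least $time(inv(op))$, and so is their minimum.

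The argument is essentially bookkeeping, and I do not expect a genuine obstacle; the only point requiring care is the lower bound in the $SLupdate$ case, where one must not overlook the term $t = pt(sc_0)$ and must invoke the inequality $time(inv(op)) < pt(sc_0)$ built into \ref{ptr2}. Were that clause omitted, an $SLupdate$ could in principle be pulled to a linearization point preceding its own invocation, so it is exactly this condition that makes the lemma hold.
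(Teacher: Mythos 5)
Your proof is correct and follows essentially the same route as the paper's: a case split on \ref{ptr1} versus \ref{ptr2}, observing that in each case $pt(op)$ is by definition a time between the operation's invocation and response. The paper's version is terser (it simply notes that \ref{ptr2} explicitly constrains $pt(op)$ to that interval), but your more explicit handling of the minimum and the $\infty$ conventions is the same argument spelled out.
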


\iffull
\begin{proof}
Let $op \in \Gamma(T)$ be some $SLscan_p()$ operation by $p$. Then $pt(op) = time(op^{\ref{DRead2}})$ by \ref{ptr1}. Hence, $pt(op) \in \bigl[time(inv(op)), time(rsp(op))\bigr]$.

Let $op \in \Gamma(T)$ be some $SLupdate_{p}(x_p)$ operation by $p$ for some value $x_p \in D$. By \ref{ptr2}, $pt(op)$ is explicitly defined as a time after $time(inv(op))$ and not after $time(rsp(op))$, which immediately implies the lemma.
\end{proof}
\fi

\begin{lemma}\label{lemma:scanforcesuptolin}
	Suppose $up \in \Gamma(T)$ is some $SLupdate_p(x)$ operation by $p$, for some value $x \in D$. If there exists an $SLscan()$ operation $sc \in \Gamma(T)$ with $time(inv(up)) < pt(sc)$ and $sc$ returns some vector with $x$ in its $p$-th entry, then $pt(up) \leq pt(sc)$.
\end{lemma}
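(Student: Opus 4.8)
The plan is to unfold the definition of $pt(up)$ given by rule~\ref{ptr2} directly; the lemma is essentially a restatement of the minimality built into that rule. First I would check that the hypotheses of the lemma exactly match the three conditions that qualify $sc$ as a candidate $SLscan$ operation in the definition of $pt$ for $up$: namely that $pt(sc)\neq\infty$, that $time(inv(up)) < pt(sc)$, and that $sc$ returns a vector with $x$ in its $p$-th entry. The latter two are stated verbatim in the hypothesis. The only condition not given explicitly is $pt(sc)\neq\infty$, which I would justify by observing that $sc$ returns a value and is therefore complete, so $sc^{\ref{DRead2}}\in T$; hence by rule~\ref{ptr1} we have $pt(sc) = time(sc^{\ref{DRead2}}) < \infty$.

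Next, since the set of candidate $SLscan$ operations in the definition of $pt(up)$ is nonempty (it contains $sc$), the operation $sc_0$ of rule~\ref{ptr2} — the candidate with minimum linearization point — exists, and so $t = pt(sc_0)$. Because $sc_0$ is chosen to minimize $pt$ over all candidates and $sc$ is itself a candidate, we obtain $t = pt(sc_0) \leq pt(sc)$. Finally, rule~\ref{ptr2} gives $pt(up) = \min\bigl(t, time(up^{\ref{SLUDWrite}})\bigr) \leq t \leq pt(sc)$, which is exactly the claimed inequality.

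I do not expect a genuine obstacle here, since the argument is a direct application of the definition of $pt$ for $SLupdate$ operations. The one point that warrants a sentence of care is confirming that $sc$ qualifies as a candidate for $sc_0$, and in particular that $pt(sc)\neq\infty$; this is immediate from $sc$ being a complete operation that returns a vector.
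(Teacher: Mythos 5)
Your proof is correct and takes essentially the same route as the paper, which simply observes that the claim is trivial when $pt(sc)=\infty$ and otherwise follows directly from \ref{ptr2}; your version just spells out the minimality argument in that rule explicitly. The extra care in verifying that $sc$ is a valid candidate (in particular $pt(sc)\neq\infty$ because $sc$ is complete) is sound and matches the paper's intent.
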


\begin{proof}
This is trivially true if $pt(sc) = \infty$. Otherwise, the lemma follows from \ref{ptr2}.
\end{proof}

\begin{lemma}\label{lemma:eanowrite}
	Let $op \in \Gamma(T)$ be some complete $SLscan_p()$ operation by some process $p$. Then no $R.DWrite$ operation happens in the interval $\bigl[time(op^{\ref{DRead1}}), time(op^{\ref{DRead2}})\bigr]$.
\end{lemma}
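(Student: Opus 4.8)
The plan is to read off the result directly from the Boolean-flag semantics of the ABA-detecting register, using the fact that a complete $SLscan_p()$ must have satisfied its loop-guard. First I would note that since $op$ is complete, the repeat-until loop in lines~\ref{slssrepeat}--\ref{CHcond} terminates, so in the \emph{final} iteration the loop-guard in line~\ref{CHcond} holds; in particular $c_2 = \mathit{false}$. By the definition of the superscript notation, $op^{\ref{DRead1}}$ and $op^{\ref{DRead2}}$ are the responses of the $R.DRead_p()$ operations executed on line~\ref{DRead1} and line~\ref{DRead2} of exactly this final iteration; call these two atomic operations $dr_1$ and $dr_2$. Thus $dr_2$ returns $(s_2, \mathit{false})$.

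Next I would argue that $dr_1$ is precisely $p$'s last $R.DRead_p()$ operation preceding $dr_2$. Inspecting the pseudocode, between line~\ref{DRead1} and line~\ref{DRead2} process $p$ performs only the $S.scan()$ on line~\ref{SLSscan}, which is not an operation on $R$, and the only $R.DWrite_p$ in the method (line~\ref{SLSDWrite}) lies \emph{after} line~\ref{DRead2}. Hence $p$ executes no operation on $R$ strictly between $dr_1$ and $dr_2$, so $dr_1$ is $p$'s last $R.DRead_p()$ before $dr_2$, and in particular $p$ did perform an earlier $R.DRead_p()$.

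The key step then invokes the sequential specification of the ABA-detecting register: the flag returned by $dr_2$ is $\mathit{true}$ if and only if $p$ performed an earlier $R.DRead_p()$ \emph{and} some $R.DWrite$ was performed since $p$'s last $R.DRead_p()$. Since the earlier read $dr_1$ exists but $c_2 = \mathit{false}$, it follows that no $R.DWrite$ was performed between $dr_1$ and $dr_2$. Because $R$ is atomic, each operation happens at its (unique) response time, so no $R.DWrite$ happens in the open interval $\bigl(time(op^{\ref{DRead1}}), time(op^{\ref{DRead2}})\bigr)$. Finally, since distinct steps occupy distinct positions in $T$, and both endpoints are occupied by the responses $op^{\ref{DRead1}}$ and $op^{\ref{DRead2}}$ of the two reads, no $R.DWrite$ can happen at either endpoint; this upgrades the conclusion to the closed interval $\bigl[time(op^{\ref{DRead1}}), time(op^{\ref{DRead2}})\bigr]$ claimed in the lemma.

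The main obstacle is making the flag argument precise, namely matching the informal phrase ``since $q$'s last $DRead_q()$'' in the specification with the linearization interval between $dr_1$ and $dr_2$, and confirming that the relevant predecessor of $dr_2$ is $dr_1$ from the \emph{same} final iteration rather than a read from an earlier iteration. Once that identification is in place, the remainder is a routine reading of the pseudocode and the atomicity of $R$.
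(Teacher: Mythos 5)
Your proposal is correct and follows exactly the route the paper takes: the paper's proof is a one-line appeal to the sequential specification of ABA-detecting registers together with the loop-guard on line~\ref{CHcond}, and your argument is simply a careful expansion of that appeal (identifying $dr_1$ as $p$'s last preceding $R.DRead_p()$, reading off $c_2 = \mathit{false}$, and handling the closed endpoints via atomicity of $R$). Nothing is missing.
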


\iffull
\begin{proof}
This follows directly from the sequential specification of ABA-detecting registers and the if-statement on line~\ref{CHcond}.
\end{proof}
\fi

\begin{lemma}\label{lemmaEQatlin}
	Let $up \in \Gamma(T)$ be some  $SLupdate_p(x)$ operation with $pt(up) \neq \infty$, for some process $p$ and some $x \in D$. Then the interpreted value of $S[p]$ induced by $pt_S$ is $x$ and $R[p] = x$ at $pt(up)$.
\end{lemma}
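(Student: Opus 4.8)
The plan is to set $u$ to be the $S.update_p(x)$ operation executed on line~\ref{Supdate} of $up$, and to split the argument according to the two possibilities in the definition~\ref{ptr2} of $pt(up)$. Since $pt(up)=\min\bigl(t,time(up^{\ref{SLUDWrite}})\bigr)\neq\infty$, either (Case A) $pt(up)=time(up^{\ref{SLUDWrite}})$, i.e.\ $up$ linearizes at its own $R.DWrite$, or (Case B) $pt(up)=t=pt(sc_0)<time(up^{\ref{SLUDWrite}})$ for the scan $sc_0$ of \ref{ptr2}, which returns a vector with $x$ in its $p$-th entry and satisfies $time(inv(up))<pt(sc_0)$. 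In both cases I must establish the two claims $R[p]=x$ and ``interpreted value of $S[p]$ is $x$'' at time $pt(up)$.

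Case A is routine. At $pt(up)=time(up^{\ref{SLUDWrite}})$ process $p$ writes the vector $s$ scanned on line~\ref{SLUscan} into $R$; since $S$ is single-writer and $p$ performs no $S.update$ between line~\ref{Supdate} and line~\ref{SLUscan}, that scan returns $x$ in entry $p$, so $R[p]=x$ right after the write. For the interpreted value, $u$ linearizes in $L$ at some $pt_S(u)<time(up^{\ref{SLUscan}})<pt(up)$, and because $p$ is sequential its next $S.update$ begins only after $up$ responds, hence after $pt(up)$; thus no $S.update_p$ with value $\neq x$ linearizes in $\bigl(pt_S(u),pt(up)\bigr]$, and \ref{def:interp1} yields interpreted value $x$.

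Case B carries the real content. The claim $R[p]=x$ is again immediate: $pt(up)=pt(sc_0)=time(sc_0^{\ref{DRead2}})$ is exactly the step at which $sc_0$ reads $s_2$ from $R$ on line~\ref{DRead2}, and $s_2[p]=x$ because $sc_0$ returns $s_2$. For the interpreted value, let $sscan$ denote the $S.scan()$ of $sc_0$'s terminating iteration (line~\ref{SLSscan}); by the loop guard on line~\ref{CHcond} it returns $\ell=s_2$, so $\ell[p]=x$, and its linearization point satisfies $t_1<pt_S(sscan)<t_2$, where $t_1=time(sc_0^{\ref{DRead1}})$ and $t_2=time(sc_0^{\ref{DRead2}})=pt(up)$. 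Applying Observation~\ref{obs:interpvalue} to the interpreted value at $pt_S(sscan)$, and using $\ell[p]=x\neq\bot$, shows that value equals $x$. It then suffices, by Observation~\ref{obs:interpintervalafter}, to prove that no $S.update_p(x')$ with $x'\neq x$ linearizes in $\bigl(pt_S(sscan),t_2\bigr]$.

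I expect this last step to be the main obstacle: ruling out a ``hidden'' $S.update_p(x')$ that has already changed $S[p]$ in the eyes of $pt_S$ but whose write-back to $R$ on line~\ref{SLUDWrite} is delayed past $t_2$, which $sc_0$ fails to detect. I plan to resolve it using the no-write window of Lemma~\ref{lemma:eanowrite} together with the sequentiality of $p$. Suppose such an update $v$, inside an operation $up'=SLupdate_p(x')$, linearizes in $\bigl(pt_S(sscan),t_2\bigr]$. If $up$ precedes $up'$ in $p$'s execution, then since $time(up^{\ref{SLUDWrite}})>t_2$ in Case B, every step of $up'$, and in particular $v$, occurs after $t_2$, forcing $pt_S(v)>t_2$, a contradiction. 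If $up'$ precedes $up$, then $up'$ completes before $up$ is invoked, hence before $time(inv(up))<t_2$; in particular $up'$ performs its $R.DWrite$ on line~\ref{SLUDWrite} before $t_2$. Lemma~\ref{lemma:eanowrite} forbids any $R.DWrite$ in $[t_1,t_2]$, so this write occurs before $t_1$, whence all of $up'$ (and thus $v$) completes before $t_1<pt_S(sscan)$, giving $pt_S(v)<pt_S(sscan)$, again a contradiction. Therefore no such $v$ exists, the interpreted value of $S[p]$ remains $x$ throughout $\bigl(pt_S(sscan),t_2\bigr]$, and in particular equals $x$ at $pt(up)$, completing Case B.
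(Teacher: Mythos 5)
Your overall decomposition is the same as the paper's: split on whether $pt(up)=time(up^{\ref{SLUDWrite}})$ or $pt(up)=pt(sc_0)$ according to \ref{ptr2}, handle the first case by the sequentiality of $p$, and in the second case combine the final $S.scan$ of $sc_0$ with the no-write window of Lemma~\ref{lemma:eanowrite} to pin down the interpreted value of $S[p]$. Case A, the $R[p]=x$ part of Case B, and your case analysis on whether $up'$ precedes or follows $up$ in $p$'s program order all match the paper's argument.

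The one step that does not go through as written is ``Applying Observation~\ref{obs:interpvalue} \ldots shows that value equals $x$.'' Observation~\ref{obs:interpvalue} states the implication in the opposite direction (interpreted value $x$ at $t$ implies every scan linearizing at $t$ returns $x$), and the converse you need is not automatic: by \ref{def:interp1}, the interpreted value of $S[p]$ at $pt_S(sscan)$ can be $\bot$ if some conflicting $S.update_p(x')$ has its linearization point exactly at $pt_S(sscan)$ while being ordered after $sscan$ in the linearization $L$ of $\Gamma(T|S)$, even though $sscan$ returns $x$; your exclusion argument only covers the open interval $\bigl(pt_S(sscan),t_2\bigr]$ and so does not rule this out. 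Moreover, Observation~\ref{obs:interpintervalafter} takes as hypothesis an actual $S.update_p(x)$ operation and an interval starting at \emph{its} linearization point, not a value of the interpreted-value function at an intermediate time. The repair is what the paper does: use the validity of $L$ to extract an $S.update_p(x)$ operation $up_{u_x}$ with $up_{u_x}\xrightarrow{L}sscan$ and no conflicting $S.update_p$ between them in $L$, and then establish the absence of conflicting updates over the whole interval $\bigl(pt_S(up_{u_x}),t_2\bigr]$ --- your program-order case analysis together with Lemma~\ref{lemma:eanowrite} already covers the portion from $pt_S(sscan)$ onward (and extends verbatim to the point $pt_S(sscan)$ itself), while the earlier portion is exactly the ``no conflicting update between $up_{u_x}$ and $sscan$ in $L$'' guarantee. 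With that substitution your proof coincides with the paper's.
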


\iffull
\begin{proof}
There are two cases:

\begin{enumerate}[labelindent=0pt,labelwidth=\widthof{\ref{def:interp2}},itemindent=1em,leftmargin=!,label=(\roman*)]
	\item Operation $up$ linearizes at $time(up^{\ref{SLUDWrite}})$ (i.e. $pt(up) = time(up^{\ref{SLUDWrite}})$).
	Let $up_u$ be the $S.update_p(x)$ operation performed by $up$ on line~\ref{Supdate}.
	Since entries of $S$ are single-writer, and the $SLupdate$ method only contains a single $S.update$ call,
	\ifspringer
	\begin{equation}
	\begin{split}
		&\text{the interpreted value of $S[p]$ is $x$ throughout} \\ &\text{the interval $\bigl(pt_S(up_u), time(rsp(up))\bigr]$.}\label{lemmaEQ:interpinterval}
	\end{split}
	\end{equation}
	\else
	\begin{equation}
		\text{the interpreted value of $S[p]$ is $x$ throughout the interval $\bigl(pt_S(up_u), time(rsp(up))\bigr]$.}\label{lemmaEQ:interpinterval}
	\end{equation}
	\fi
	By definition of linearization point functions,
	\begin{equation}
		\text{$pt_S(up_u) \in \bigl[time(inv(up_u)), time(rsp(up_u))\bigr]$.}\label{lemmaEQ:ptsupuininterval}
	\end{equation}
	By (\ref{lemmaEQ:ptsupuininterval}) and the fact that processes perform operations sequentially, $pt_S(up_u) < time(up^{\ref{SLUDWrite}})$.
	By this and (\ref{lemmaEQ:interpinterval}),
	\ifspringer
	\begin{equation}
		\begin{split}
		&\text{the interpreted value of $S[p]$ is $x$ at} \\ &\text{$time(up^{\ref{SLUDWrite}})$.}\label{lemmaEQatlin:interp}
		\end{split}
	\end{equation}
	\else
	\begin{equation}
		\text{the interpreted value of $S[p]$ is $x$ at $time(up^{\ref{SLUDWrite}})$.}\label{lemmaEQatlin:interp}
	\end{equation}
	\fi
	Let $sc_u$ be the $S.scan()$ operation performed by $up$ on line~\ref{SLUscan}.
	Again, since processes perform operations sequentially and $pt_S(up_u)$ and $pt_S(sc_u)$ both occur between the invocations and responses of $up_u$ and $sc_u$, respectively, $pt_S(up_u) < pt_S(sc_u)$.
	By this, (\ref{lemmaEQ:interpinterval}), and Observation~\ref{obs:interpvalue}, $sc_u$ returns a vector whose $p$-th entry contains $x$.
	Then the vector written to $R$ by $up$ on line~\ref{SLUDWrite} contains $x$ in its $p$-th entry.
	Hence,
	\begin{equation}
		\text{$R[p] = x$ at $time(up^{\ref{SLUDWrite}})$.}\label{lemmaEQatlin:value}
	\end{equation}
	By (\ref{lemmaEQatlin:interp}), (\ref{lemmaEQatlin:value}), and the assumption that $pt(up) = time(up^{\ref{SLUDWrite}})$, we obtain the claim in the lemma statement.
	
	\item Operation $up$ linearizes before $time(up^{\ref{SLUDWrite}})$ (that is, $pt(up) < time(up^{\ref{SLUDWrite}})$).
	Then by \ref{ptr2} there is some $SLscan$ operation $sc$ that returns a vector whose $p$-th entry contains $x$ while $up$ is pending, and $pt(sc) = pt(up)$.
	By this and the fact that $pt(sc) = time(sc^{\ref{DRead2}})$ by \ref{ptr1},
	\begin{equation}
		\text{$time(inv(up)) < time(sc^{\ref{DRead2}})$.}\label{lemmaEQ:upinvbeforeread}
	\end{equation}
	Since $sc$ linearizes at its final execution of line~\ref{DRead2}, the vectors returned by the $R.DRead()$ operations on line~\ref{DRead1} and line~\ref{DRead2}, along with the vector returned by the $S.scan()$ operation on line~\ref{SLSscan}, must satisfy the condition on line~\ref{CHcond}.
	In particular, if $sc_s$ is the final $S.scan()$ operation performed by $sc$, then
	\begin{equation}
		\text{$sc_s$ returns a vector with $x$ in its $p$-th entry.}\label{lemmaEQ:scsreturns}
	\end{equation}	
	
	Since $L$ is a linearization of $\Gamma(T|S)$, by (\ref{lemmaEQ:scsreturns}) and the sequential specification of the snapshot type, there exists an $S.update_p(x)$ operation $up_{u_x}$ by $p$ such that
	\ifspringer
	\begin{gather}
		\text{$up_{u_x} \xrightarrow{L} sc_s$, and}\label{lemmaEQ:upleqsc} \\
		\begin{split}
		&\text{there is no $S.update_p(x')$ operation $up_u'$ by $p$} \\ &\text{such that $x \neq x'$ and $up_{u_x} \xrightarrow{L} up_u' \xrightarrow{L} sc_s$.}\label{lemmaEQ:thereisnoup}
		\end{split}
	\end{gather}
	\else
	\begin{gather}
		\text{$up_{u_x} \xrightarrow{L} sc_s$, and}\label{lemmaEQ:upleqsc} \\
		\text{there is no $S.update_p(x')$ operation $up_u'$ by $p$ such that $x \neq x'$ and $up_{u_x} \xrightarrow{L} up_u' \xrightarrow{L} sc_s$.}\label{lemmaEQ:thereisnoup}
	\end{gather}
	\fi
	Applying (\ref{linptfunction:prec}) to (\ref{lemmaEQ:upleqsc}),
	\begin{equation}
		\text{$pt_S(up_{u_x}) \leq pt_S(sc_s)$.}\label{lemEQ:ptsupuxleqptsscs}
	\end{equation}
	
	Suppose that $up_{u_x}$ is performed by an $SLupdate_p(x)$ operation other than $up$.
	There is no $S.update_p(x')$ operation $up_u'$ by $p$ such that $x' \neq x$ and $pt_S(up_{u_x}) < pt_S(up_u') < pt_S(sc_s)$, since, using the contrapositive of (\ref{linptfunction:prec}), this would contradict (\ref{lemmaEQ:thereisnoup}).
	Also, there cannot exist such an $up_u'$ by $p$ such that $pt_S(up_{u_x}) = pt_S(up_u')$, since $up_{u_x}$ and $up_u'$ must both linearize between their respective invocations and responses by the definition of linearization point functions.
	Now suppose there exists an $S.update_p(x')$ operation $up_u'$ by $p$ such that $x' \neq x$ and 
	\begin{equation}
		\text{$pt_S(sc_s) \leq pt_S(up_u') < time(inv(up))$.}\label{lemEQ:ptsscsleqptsupuprime}
	\end{equation}
	Let $up'$ be the $SLupdate_p(x')$ operation that performs $up_u'$ on line~\ref{Supdate}.
	Clearly, $up' \neq up$, since $x \neq x'$.
	Since processes perform operations sequentially, $time(rsp(up')) < time(inv(up))$.
	But by this, (\ref{lemmaEQ:upinvbeforeread}), and (\ref{lemEQ:ptsscsleqptsupuprime}), $up'$ must perform its $R.DWrite$ operation from line~\ref{SLUDWrite} in the interval $\bigl[pt_S(sc_s), time(sc^{\ref{DRead2}})\bigr)$, and hence in the interval $\bigl[time(sc^{\ref{DRead1}}), time(sc^{\ref{DRead2}})\bigr]$.
	This contradicts Lemma~\ref{lemma:eanowrite}, and therefore 
	\ifspringer
	\begin{equation}
	\small
	\begin{split}
	&\text{no $S.update_p(x')$ operation $up_u'$ by $p$ with $x' \neq x$} \\ &\text{satisfies $pt_S(up_u') \in \bigl[pt_S(up_{u_x}), time(inv(up))\bigr)$.}\label{lemEQ:nosupdatexprime}
	\end{split}
	\end{equation}
	\else
	\begin{equation}
	\begin{split}
	&\text{no $S.update_p(x')$ operation $up_u'$ by $p$ with $x' \neq x$ satisfies} \\ &\text{$pt_S(up_u') \in \bigl[pt_S(up_{u_x}), time(inv(up))\bigr)$.}\label{lemEQ:nosupdatexprime}
	\end{split}
	\end{equation}
	\fi
	Since $up$ only performs a single $S.update_p(x)$ operation, and $up$ does not perform any $S.update_p(x')$ operations for any $x' \neq x$, by (\ref{lemEQ:nosupdatexprime}) and Observation~\ref{obs:interpintervalafter},
	\ifspringer
	\begin{equation}
		\begin{split}
		&\text{the interpreted value of $S[p]$ is $x$ throughout} \\ &\text{$\bigl(pt_S(up_{u_x}), time(rsp(up))\bigr]$.}\label{lemmaEQatlin:interprangeupdate}
		\end{split}
	\end{equation}
	\else
	\begin{equation}
		\text{the interpreted value of $S[p]$ is $x$ throughout $\bigl(pt_S(up_{u_x}), time(rsp(up))\bigr]$.}\label{lemmaEQatlin:interprangeupdate}
	\end{equation}
	\fi
	If $up_{u_x}$ is performed by $p$ during $up$, then (\ref{lemmaEQatlin:interprangeupdate}) is implied by the fact that $up$ only performs a single $S.update$ operation, along with Observation~\ref{obs:interpintervalafter}.
	
	Notice that, since processes perform operations sequentially, $pt_S(sc_s) < time(sc^{\ref{DRead2}})$.
	Thus, $pt_S(sc_s) < pt(up)$.
	By this, (\ref{lemEQ:ptsupuxleqptsscs}), and (\ref{lemmaEQatlin:interprangeupdate})
	\begin{equation}
		\text{the interpreted value of $S[p]$ is $x$ at $pt(up)$.}\label{lemmaEQatlin:interpvalatpt}
	\end{equation}
	Since $sc$ linearizes at the time of its final $R.DRead()$ operation on line~\ref{DRead2}, and $sc$ returns a vector with $x$ in its $p$-th entry,
	\begin{equation}
		\text{$R[p] = x$ at $pt(sc) = pt(up)$.}\label{lemmaEQatlin:valueatpt}
	\end{equation}
	By (\ref{lemmaEQatlin:interpvalatpt}) and (\ref{lemmaEQatlin:valueatpt}) we obtain the claim in the lemma statement.
\end{enumerate}
\end{proof}
\fi

\begin{lemma}\label{updateexists}
	Let $sc \in \Gamma(T)$ be some $SLscan_q()$ operation by process $q$ such that $pt(sc) \neq \infty$, which returns $(x_1, x_2, \ldots , x_n)$. Then $x_p \neq \bot$ if and only if there exists some $SLupdate_p(x_p)$ operation $up \in \Gamma(T)$, for some $x_p \in D$, such that $pt(up) \leq pt(sc)$.
\end{lemma}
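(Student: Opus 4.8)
The plan is to prove the two directions separately; the reverse implication is immediate, and the forward implication carries essentially all of the work. First note that since $pt(sc) \neq \infty$, rule \ref{ptr1} gives $pt(sc) = time(sc^{\ref{DRead2}})$ and $sc$ terminates, so the loop-guard on line~\ref{CHcond} holds in its final iteration; in particular the vector $\ell$ returned by the $S.scan()$ on line~\ref{SLSscan} in that iteration equals $s_2 = (x_1,\ldots,x_n)$. For the reverse direction, if some $SLupdate_p(x_p)$ operation $up$ satisfies $pt(up) \leq pt(sc)$, then since every $SLupdate$ writes a value in $D \setminus \{\bot\}$ by the snapshot specification, $x_p \neq \bot$ follows at once, using no facts about linearization points.

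For the forward direction, assume $x_p \neq \bot$. Let $sc_S$ be the (complete) $S.scan()$ executed on line~\ref{SLSscan} in the final iteration of $sc$; by the remark above it returns $(x_1,\ldots,x_n)$, whose $p$-th entry is $x_p \neq \bot$, and being complete it lies in the chosen linearization $L$ of $\Gamma(T|S)$. Because $L$ is valid and single-writer snapshot entries are initially $\bot$, the sequential specification forces an $S.update_p(x_p)$ operation $up_S$ with $up_S \xrightarrow{L} sc_S$ (this is the content underlying \ref{def:interp1} and Observation~\ref{obs:interpvalue}). Let $up$ be the $SLupdate_p(x_p)$ operation that executes $up_S$ on line~\ref{Supdate}; since $up_S$ is $up$'s only $S$-update and carries argument $x_p$, the operation $up$ writes $x_p$, and $up \in \Gamma(T)$ because its invocation precedes $inv(up_S)$.

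It then remains to verify $pt(up) \leq pt(sc)$, which I would obtain from Lemma~\ref{lemma:scanforcesuptolin}: since $sc$ returns a vector with $x_p$ in its $p$-th entry, the only missing hypothesis is $time(inv(up)) < pt(sc)$. I establish this by chaining $time(inv(up)) \leq time(inv(up_S)) \leq pt_S(up_S) \leq pt_S(sc_S) \leq time(rsp(sc_S)) < time(sc^{\ref{DRead2}}) = pt(sc)$, where the middle inequality uses $up_S \xrightarrow{L} sc_S$ together with (\ref{linptfunction:prec}), and the final strict inequality holds because $sc_S$ completes before $sc$ invokes the $R.DRead()$ on line~\ref{DRead2} in the final iteration. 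The main obstacle is exactly this bookkeeping in the forward direction: correctly isolating the final $S.scan()$, extracting from the snapshot sequential specification an update writing precisely $x_p$ rather than merely some non-$\bot$ value, and threading the linearization-point and real-time inequalities so as to meet the hypothesis of Lemma~\ref{lemma:scanforcesuptolin}; once $time(inv(up)) < pt(sc)$ is in hand, the conclusion is automatic.
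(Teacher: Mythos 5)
Your forward direction is correct and follows the paper's own route: isolate the final $S.scan()$ of $sc$, invoke the sequential specification of $S$ through the linearization $L$ to obtain an $S.update_p(x_p)$ operation $up_S$ with $up_S \xrightarrow{L} sc_S$, identify the enclosing $SLupdate_p(x_p)$, and feed $time(inv(up)) < pt(sc)$ into Lemma~\ref{lemma:scanforcesuptolin}. Your inequality chain is a more explicit version of what the paper states in one sentence, and every link in it is sound.

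The reverse direction, however, has a genuine gap. You have read the right-hand side as quantifying over the \emph{same} value $x_p$ that $sc$ returns, which makes that implication vacuous (no $SLupdate$ ever writes $\bot$). But the statement the paper actually proves --- and the one the overall correctness argument needs --- is the contrapositive in the strong sense: if $x_p = \bot$, then \emph{no} $SLupdate_p(x)$ operation, for \emph{any} $x \in D$, satisfies $pt(up) \leq pt(sc)$. This is substantive: an adversarial schedule could in principle let some $SLupdate_p(x)$ linearize before $pt(sc)$ while $sc$ still returns $\bot$ in entry $p$, and ruling that out is exactly what makes $f(T)$ valid for the $\bot$ entries (Lemma~\ref{validity} only treats entries with $x_p \neq \bot$, so without this half the sequential history could contain an $SLupdate_p$ followed by an $SLscan$ returning $\bot$ in component $p$, violating the snapshot specification). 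The paper's proof of this half is not short: assuming such an $up$ exists, it applies Lemma~\ref{lemmaEQatlin} to get that $R[p] = x \neq \bot$ and the interpreted value of $S[p]$ is $x$ at $pt(up)$, deduces from $R[p]=\bot$ at $pt(sc)$ that some $R.DWrite$ writing $\bot$ in entry $p$ must occur between $pt(up)$ and $pt(sc)$, pushes that write before $time(sc^{\ref{DRead1}})$ via Lemma~\ref{lemma:eanowrite}, and then derives $up_u \xrightarrow{L} sc_s$ for the underlying $S.update_p(x)$, contradicting $x_p = \bot$. Your proposal contains none of this, so the lemma as intended is not established.
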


\iffull
\begin{proof}
Let $sc_s$ be the final $S.scan()$ operation performed by $sc$. Since $sc$ returns $(x_1, \ldots, x_n)$,

\begin{gather}
	\text{$sc_s$ returns $(x_1, \ldots, x_n)$, and}\label{lem:upinscanreturns} \\
	\text{$R[p] = x_p$ at $time(sc^{\ref{DRead2}}) = pt(sc)$.}\label{lem:upindreadreturns}
\end{gather}

Additionally, since $pt(sc) \neq \infty$ by the lemma assumption, and $pt(sc) = time(sc^{\ref{DRead2}})$ by \ref{ptr1}, $sc_s$ is complete in $T$.
Using this, along with the fact that $pt_S(sc_s) \in \bigl[time(inv(sc_s)), time(rsp(sc_s))\bigr]$ by the definition of linearization point functions, we obtain

\begin{equation}
	\text{$pt_S(sc_s) \neq \infty$.}\label{lem:updateexistsptscsneqinfty}
\end{equation}

By (\ref{linptfunctions:in}) and (\ref{lem:updateexistsptscsneqinfty}),

\begin{equation}
	\text{$sc_s \in L$.}\label{lem:updatescsinl}
\end{equation}

Suppose $x_p \neq \bot$ for some process $p$.
Then by (\ref{lem:upinscanreturns}), (\ref{lem:updatescsinl}), and the sequential specification of snapshot objects (along with the fact that $L$ is a linearization of $\Gamma(T|S)$), there must exist an $S.update_p(x_p)$ operation $up_u$ by $p$ such that $up_u \xrightarrow{L} sc_s$.
By this and (\ref{linptfunction:prec}),

\begin{equation}
	\text{$pt(up_u) \leq pt(sc_s)$.}\label{lem:updateupuleqscs}
\end{equation}

Let $up_u$ be invoked by the $SLupdate_p(x_p)$ operation $up$. Since $up$ is invoked before $pt(sc)$, $pt(up) \leq pt(sc)$ by Lemma~\ref{lemma:scanforcesuptolin}.

Now suppose $x_p = \bot$. To derive a contradiction, suppose that some $SLupdate_p(x)$ operation $up$ linearizes at or before $pt(sc)$, for some $x \in D$ such that $x \neq \bot$. That is,

\begin{equation}
	\text{$pt(up) \leq pt(sc)$.}\label{lem:upineq1}
\end{equation}

By Lemma~\ref{lemmaEQatlin},

\begin{gather}
	\text{the interpreted value of $S[p]$ at $pt(up)$ is $x$, and}\label{lem:upinInterp} \\
	\text{$R[p] = x$ at $pt(up)$.}\label{lem:upinRcontainsatlin}
\end{gather}

By (\ref{lem:upindreadreturns}) and our assumption that $x_p = \bot$, $R[p] = \bot$ at $pt(sc)$.
This along with (\ref{lem:upineq1}) and (\ref{lem:upinRcontainsatlin}) implies that there must exist an $R.DWrite_q(X)$ operation $dw$ by some process $q$ with $X[p] = \bot$, such that

\begin{equation}
	\text{$pt(up) < time(dw) \leq pt(sc)$.}\label{lem:upinBetween} \\
\end{equation}

By (\ref{lem:upinBetween}) and Lemma~\ref{lemma:eanowrite},

\begin{equation}
	pt(up) < time(dw) < time(sc^{\ref{DRead1}}).\label{lem:upinBeforeRead}
\end{equation}

By (\ref{lem:upinInterp}) and the definition of interpreted value, there exists an $S.update_p(x)$ operation $up_u$ by $p$ such that

\begin{equation}
	\text{$pt_S(up_u) < pt(up)$.}\label{lem:upinptupultptup}
\end{equation}

Using (\ref{lem:upinBeforeRead}), (\ref{lem:upinptupultptup}), along with the fact that $pt_S(sc_s) \in \bigl[time(inv(sc_s)), time(rsp(sc_s))\bigr]$ by the definition of linearization point functions, we obtain

\begin{equation}
	\text{$pt_S(up_u) < pt_S(sc_s)$.}\label{lem:upuhappensbeforescs}
\end{equation}

By (\ref{lem:upuhappensbeforescs}) and the contrapositive of (\ref{linptfunction:prec}), $up_u \xrightarrow{L} sc_s$.
Since there are no $S.update_q(\bot)$ operations for any process $q$ by assumption, $up_u \xrightarrow{L} sc_s$ implies that $sc_s$ returns some vector with $x' \neq \bot$ in its $p$-th entry.
This along with our assumption that $x_p = \bot$ contradicts (\ref{lem:upinscanreturns}).

\end{proof}
\fi

\ifea
	We now show that ordering operations in $T$ by their respective $pt$ values results in a valid sequential history.
	We accomplish this by showing that, for any complete $SLscan$ operation $sc$, each component of the vector returned by $sc$ contains the value written by the $SLupdate$ operation that linearized most recently (relative to $pt(sc)$).
\fi

\begin{lemma}
\label{validity}
Let $sc$ be some complete $SLscan_q()$ operation in $T$ by process $q$ that returns $(x_1, \ldots , x_n)$, and suppose $x_p \neq \bot$ for some process $p$. Then there is some $SLupdate_p(x_p)$ operation $up_{x_p}$ such that $pt(up_{x_p}) \leq pt(sc)$, and there is no $SLupdate_p(x)$ operation $up_x$ such that $pt(up_{x_p}) < pt(up_x) \leq pt(sc)$, for any $x \in D$.
\end{lemma}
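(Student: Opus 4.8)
The plan is to prove the two assertions separately: the existence of a ``last relevant'' update, which together with a maximality argument yields the second clause, and then that this update carries the value $x_p$.

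\emph{Existence and the second clause.} Since $sc$ is complete, $sc^{\ref{DRead2}}\in T$, so $pt(sc)=time(sc^{\ref{DRead2}})\neq\infty$ by \ref{ptr1}. As $x_p\neq\bot$, Lemma~\ref{updateexists} supplies at least one $SLupdate_p$ operation that linearizes at or before $pt(sc)$. I would take $up_{x_p}$ to be the one among these with the \emph{latest} linearization point. By Lemma~\ref{slss:ptinmethod} every operation linearizes within its own execution interval, and since process $p$ runs sequentially, distinct $SLupdate_p$ operations have disjoint intervals and hence distinct linearization points; thus $up_{x_p}$ is well defined and no $SLupdate_p$ operation linearizes in $\bigl(pt(up_{x_p}),pt(sc)\bigr]$. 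This is exactly the second clause, so it remains only to show that $up_{x_p}$ writes $x_p$.

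\emph{Value correctness.} Let $v$ be the value written by $up_{x_p}$ and assume for contradiction that $v\neq x_p$. There are two anchors. First, by Lemma~\ref{lemmaEQatlin}, the interpreted value of $S[p]$ and the value $R[p]$ are both $v$ at $pt(up_{x_p})$; letting $up_v$ be the $S.update_p(v)$ operation inside $up_{x_p}$, the definition \ref{def:interp1} applied at $pt(up_{x_p})$ gives $pt_S(up_v)<pt(up_{x_p})$ with no $S.update_p$ of a value $\neq v$ linearizing in $S$ in $\bigl(pt_S(up_v),pt(up_{x_p})\bigr]$. Second, since the until-condition on line~\ref{CHcond} holds in the final iteration of $sc$, the final $S.scan()$ operation $sc_s$ (line~\ref{SLSscan}) returns $x_p$ in entry $p$ and linearizes in $S$ at some $pt_S(sc_s)\in\bigl(time(sc^{\ref{DRead1}}),time(sc^{\ref{DRead2}})\bigr)$. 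Because $L$ linearizes $\Gamma(T|S)$, this return forces an $S.update_p(x_p)$ operation $\hat{up}$ that is the latest $S.update_p$ preceding $sc_s$ in $L$; let $\widehat{UP}$ be the $SLupdate_p(x_p)$ operation containing $\hat{up}$. Since $pt_S(\hat{up})\leq pt_S(sc_s)<time(sc^{\ref{DRead2}})=pt(sc)$ and $time(inv(\widehat{UP}))\leq pt_S(\hat{up})$, Lemma~\ref{lemma:scanforcesuptolin} gives $pt(\widehat{UP})\leq pt(sc)$; and $\widehat{UP}\neq up_{x_p}$ as they write different values.

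\emph{The case split.} I would now compare $pt_S(\hat{up})$ and $pt_S(up_v)$, which are unequal as they belong to distinct $S.update_p$ operations of the sequential process $p$. If $pt_S(\hat{up})>pt_S(up_v)$, then $\hat{up}$ runs entirely after $up_v$, hence $\widehat{UP}$ runs entirely after $up_{x_p}$, so $pt(up_{x_p})<pt(\widehat{UP})\leq pt(sc)$ by Lemma~\ref{slss:ptinmethod}, contradicting the maximality of $up_{x_p}$. If instead $pt_S(\hat{up})<pt_S(up_v)$, then since $\hat{up}$ is the latest $S.update_p$ before $sc_s$ and $up_v$ writes $v\neq x_p$, we must have $pt_S(up_v)>pt_S(sc_s)$, so $pt_S(up_v)\in\bigl(time(sc^{\ref{DRead1}}),time(sc^{\ref{DRead2}})\bigr)$. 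I then split on \ref{ptr2}: either $pt(up_{x_p})=time(up_{x_p}^{\ref{SLUDWrite}})$, in which case $up_{x_p}$ writes $R$ on line~\ref{SLUDWrite} at a time in $\bigl[time(sc^{\ref{DRead1}}),time(sc^{\ref{DRead2}})\bigr]$, contradicting Lemma~\ref{lemma:eanowrite}; or $pt(up_{x_p})=pt(sc_0)$ for some $SLscan$ $sc_0$ returning $v$ in entry $p$, whose final $R.DRead$ on line~\ref{DRead2} then lands in $\bigl[time(sc^{\ref{DRead1}}),time(sc^{\ref{DRead2}})\bigr]$, where Lemma~\ref{lemma:eanowrite} forces $R[p]=x_p\neq v$, again a contradiction. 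Hence $v=x_p$.

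\emph{Main obstacle.} The existence and maximality step is routine; the real work is the value-correctness step, and within it the second case $pt_S(\hat{up})<pt_S(up_v)$. The difficulty is entirely one of carefully ordering the $S$-linearization points $pt_S(\hat{up})$, $pt_S(up_v)$, $pt_S(sc_s)$ relative to the high-level points $pt(up_{x_p})$ and $pt(sc)$, and then squeezing a write to, or read of, $R$ into the quiescent window $\bigl[time(sc^{\ref{DRead1}}),time(sc^{\ref{DRead2}})\bigr]$ that Lemma~\ref{lemma:eanowrite} guarantees is free of $R.DWrite$ operations. Keeping straight exactly which event is forced into that window is the crux of the proof.
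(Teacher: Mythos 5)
Your decomposition is genuinely different from the paper's: the paper takes $up_{x_p}$ to be the last $SLupdate_p(x_p)$ operation (i.e., value-specific) that linearizes at or before $pt(sc)$, and then does all the work in showing that no $SLupdate_p$ of a \emph{different} value linearizes in the gap; you instead take the last $SLupdate_p$ of \emph{any} value, get the second clause for free from maximality, and shift the entire burden to showing that this operation writes $x_p$. Both routes end up invoking the same machinery (Lemma~\ref{lemmaEQatlin}, the linearization $L$ of $\Gamma(T|S)$, and squeezing a forbidden access to $R$ into the window protected by Lemma~\ref{lemma:eanowrite}), and your version arguably makes the case analysis slightly cleaner. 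However, there is one genuine gap.

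You assert that ``letting $up_v$ be the $S.update_p(v)$ operation \emph{inside} $up_{x_p}$, the definition \ref{def:interp1} applied at $pt(up_{x_p})$ gives $pt_S(up_v)<pt(up_{x_p})$.'' Definition \ref{def:interp1} only guarantees the existence of \emph{some} witnessing $S.update_p(v)$ operation with these properties; it does not say the witness is $up_{x_p}$'s own $S.update$. Indeed the paper's proof of Lemma~\ref{lemmaEQatlin} explicitly handles the case where the witness $up_{u_x}$ belongs to a different $SLupdate_p(x)$ operation. Concretely, if an earlier $SLupdate_p(v)$ already placed $v$ in $S[p]$ and $R[p]$, then by \ref{ptr2} $pt(up_{x_p})$ can be the linearization point of an $SLscan$ returning $v$ that falls \emph{before} $up_{x_p}$'s own $S.update_p(v)$ linearizes in $S$; in that case $pt_S(up_v)<pt(up_{x_p})$ is false for your $up_v$, and the witness is the earlier process-$p$ update. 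This matters in both branches of your case split: in Case B you use $pt_S(up_v)<pt(up_{x_p})$ to place $pt_S(up_v)$ (and hence $pt(up_{x_p})$) inside the quiescent window, and in Case A the inference ``$\hat{up}$ runs entirely after $up_v$, hence $\widehat{UP}$ runs entirely after $up_{x_p}$'' collapses if $up_v$ lives in an earlier $SLupdate_p(v)$. The fix is local: take $up_v$ to be the actual witness supplied by \ref{def:interp1}. Case B then goes through verbatim, and Case A can be repaired using the second half of the witness property — since $\hat{up}$ writes $x_p\neq v$, $pt_S(\hat{up})\notin\bigl(pt_S(up_v),pt(up_{x_p})\bigr]$, so $pt_S(\hat{up})>pt_S(up_v)$ forces $pt_S(\hat{up})>pt(up_{x_p})$; sequentiality of $p$ then forces $\widehat{UP}$ to follow $up_{x_p}$ entirely, giving $pt(up_{x_p})<pt(\widehat{UP})\leq pt(sc)$ and the desired contradiction with maximality. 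With that correction your argument is sound.
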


\begin{proof}
	Let $up_{x_p}$ be the last $SLupdate_p(x_p)$ operation by $p$ that linearizes at or before $pt(sc)$ (Lemma~\ref{updateexists} guarantees that such an operation exists).
	To derive a contradiction, suppose that some $SLupdate_p(x)$ operation by $p$ with $x \neq x_p$ exists, which linearizes in the interval $\bigl(pt(up_{x_p}), pt(sc)\bigr]$.
	Let $up_x$ be the latest such operation.
	So $pt(up_{x_p}) < pt(up_x) \leq pt(sc)$.
	Additionally, since $sc$ returns a vector with $x_p$ in its $p$-th entry, and $up_x$ writes $x \neq x_p$, $up_x$ does not linearize at $pt(sc)$ by \ref{ptr1} and \ref{ptr2} (note that at most a single $SLscan$ operation may linearize at any step, since each $SLscan$ linearizes at one of its own steps by \ref{ptr1}).
	Hence,
	\begin{equation}
		pt(up_{x_p}) < pt(up_x) < pt(sc).\label{lemval:ineq}
	\end{equation}
	By Lemma~\ref{lemmaEQatlin},
	\begin{gather}
		\text{the interpreted value of $S[p]$ is $x$ at $pt(up_x)$, and}\label{lemval:interp} \\
		\text{$R[p] = x$ at $pt(up_x)$.}\label{lemval:register}
	\end{gather}
	Let $sc_\ell$ be the final $S.scan()$ operation performed by $sc$.
	Since processes perform operations sequentially, and $pt_S(sc_\ell) \in \bigl[time(inv(sc_\ell)), time(rsp(sc_\ell))\bigr]$ by definition of linearization point functions,
	\begin{equation}
		time(sc^{\ref{DRead1}}) < pt_S(sc_\ell) < time(sc^{\ref{DRead2}}).\label{lemval:timeseq}
	\end{equation}
	There are two cases:
	\begin{enumerate}[labelindent=0pt,labelwidth=\widthof{\ref{def:interp2}},itemindent=1em,leftmargin=!,label=(\roman*)]
		\item Operation $up_x$ linearizes after $sc_\ell$.
			That is,
			\begin{equation}
				pt(up_x) > pt_S(sc_\ell).\label{lemval:afterscan}
			\end{equation}
			By \ref{ptr1},
			\begin{equation}
				pt(sc) = time(sc^{\ref{DRead2}}).\label{lemval:linatread}
			\end{equation}
			By (\ref{lemval:timeseq}) and (\ref{lemval:afterscan}), $time(sc^{\ref{DRead1}}) < pt(up_x)$.
			By (\ref{lemval:ineq}) and (\ref{lemval:linatread}), $pt(up_x) < time(sc^{\ref{DRead2}})$.
			Hence,
			\begin{equation}
				time(sc^{\ref{DRead1}}) < pt(up_x) < time(sc^{\ref{DRead2}}).\label{lemval:upindanger}
			\end{equation}
			By (\ref{lemval:register}) and the fact that $sc^{\ref{DRead2}}$ returns a vector with $x_p$ in its $p$-th entry, there is an $R.DWrite(X)$ operation $dw$, with $X[p] = x_p$, such that
			\begin{equation}
				pt(up_x) < time(dw) < time(sc^{\ref{DRead2}}).\label{lemval:dwrite}
			\end{equation}
			But by (\ref{lemval:upindanger}) and (\ref{lemval:dwrite}),
			\ifspringer
			\begin{equation*}
				\text{$time(dw) \in \bigl(time(sc^{\ref{DRead1}}), time(sc^{\ref{DRead2}})\bigr)$,}
			\end{equation*}
			\else
			$time(dw) \in \bigl(time(sc^{\ref{DRead1}}), time(sc^{\ref{DRead2}})\bigr)$,
			\fi
			which contradicts Lemma~\ref{lemma:eanowrite}.
		\item Operation $up_x$ linearizes not after $sc_\ell$.
			That is,
			\begin{equation}
				pt(up_x) \leq pt_S(sc_\ell).\label{lemval:beforescan}
			\end{equation}
			By (\ref{lemval:interp}) and the definition of interpreted values, there must exist an $S.update_p(x)$ operation $up_{u_x}$ by $p$ such that
			\ifspringer
			\begin{gather}
				\text{$pt_S(up_{u_x}) < pt(up_x)$, and}\label{lemval:ptupuxltptupx} \\
				\begin{split}
				&\text{there is no $S.update_p(x')$ operation $up_{u_x}'$ by $p$} \\ &\text{with $x' \neq x$ and} \\ &\text{$pt_S(up_{u_x}) < pt_S(up_{u_x}') \leq pt(up_x)$.}\label{lemval:theredoesnotexistan}
				\end{split}
			\end{gather}
			\else
			\begin{gather}
				\text{$pt_S(up_{u_x}) < pt(up_x)$, and}\label{lemval:ptupuxltptupx} \\
				\begin{split}
				&\text{there is no $S.update_p(x')$ operation $up_{u_x}'$ by $p$ with $x' \neq x$ and} \\ &\text{$pt_S(up_{u_x}) < pt_S(up_{u_x}') \leq pt(up_x)$.}\label{lemval:theredoesnotexistan}
				\end{split}
			\end{gather}
			\fi
			
			By (\ref{lemval:beforescan}) and (\ref{lemval:ptupuxltptupx}), $pt_S(up_{u_x}) < pt_S(sc_\ell)$.
			Then by contrapositive of (\ref{linptfunction:prec}), 
			\begin{equation}
				\text{$up_{u_x} \xrightarrow{L} sc_\ell$.}\label{lemval:upuxbeforescell}
			\end{equation}
			Since $sc$ returns a vector with $x_p \neq x$ in its $p$-th entry, $sc_\ell$ must also return a vector with $x_p \neq x$ in its $p$-th entry.
			By this, (\ref{lemval:upuxbeforescell}), the fact that $L$ is a linearization of $\Gamma(T|S)$, and the sequential specification of the snapshot type, there must exist an $S.update_p(x_p)$ operation $up_{u_p}$ by $p$ such that 
			\begin{equation}
				\text{$up_{u_x} \xrightarrow{L} up_{u_p} \xrightarrow{L} sc_\ell$.}\label{lemval:upuxbeforeupupbeforesc}
			\end{equation}
			By (\ref{linptfunction:prec}) and (\ref{lemval:upuxbeforeupupbeforesc}),
			\begin{gather}
				\text{$pt_S(up_{u_x}) \leq pt_S(up_{u_p})$, and}\label{lemval:ptsupuxleqptsupup} \\
				\text{$pt_S(up_{u_p}) \leq pt_S(sc_\ell)$.}\label{lemval:ptsupupleqptsscell}
			\end{gather}
			 Note that by definition of linearization point functions,
			\begin{gather}
			\small
			 \text{$pt_S(up_{u_x}) \in \bigl[time(inv(up_{u_x})), time(rsp(up_{u_x}))\bigr]$,}\label{lemval:ptsupuxininterval} \\
			\small
			 \text{$pt_S(up_{u_p}) \in \bigl[time(inv(up_{u_p})), time(rsp(up_{u_p}))\bigr]$.}\label{lemval:ptsupupininterval}
			 \end{gather}
			 By (\ref{lemval:ptsupuxininterval}), (\ref{lemval:ptsupupininterval}), and the fact that both $up_{u_x}$ and $up_{u_p}$ are performed by $p$, $pt_S(up_{u_x}) < pt_S(up_{u_p})$.
			 By this and (\ref{lemval:theredoesnotexistan}),
			 \begin{equation}
			 	\text{$pt(up_x) < pt_S(up_{u_p})$.}\label{ptupxltptupup}
			 \end{equation}
			 Due to (\ref{lemval:ineq}) and the fact that $up_{x_p}$ and $up_x$ are performed by $p$, the $S.update_p(x_p)$ operation by $up_{x_p}$ linearizes before the invocation of $up_x$.
			 This along with (\ref{ptupxltptupup}) implies that $up_{u_p}$ is performed during some $SLupdate_p(x_p)$ operation $up_{x_p}'$ by $p$ such that $up_{x_p}' \neq up_{x_p}$.
			 Then by Lemma~\ref{lemma:scanforcesuptolin}, 
			 \begin{equation}
				 \text{$pt(up_{x_p}') \leq pt(sc)$.}\label{ptupxpLSKDJFSDLKFJlksdjf}
			 \end{equation}
			 Additionally, since $up_{x_p}$ and $up_{x_p}'$ are both performed by $p$, they are performed in sequence.
			This along with (\ref{ptupxpLSKDJFSDLKFJlksdjf}) contradicts the fact that $up_{x_p}$ is the final $SLupdate_p(x_p)$ operation by $p$ that linearizes not after $pt(sc)$.


	\end{enumerate}
\end{proof}

\ifea
Together, Lemma~\ref{slss:ptinmethod} and Lemma~\ref{validity} imply the following:
\fi

\begin{lemma}
\label{slss:linearizable}
	Algorithm~\ref{slss} is linearizable.
\end{lemma}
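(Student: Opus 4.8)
The plan is to verify that the function $f$, which orders the operations of $\Gamma(T)$ by their $pt$-values (breaking ties by the rules \ref{ptr1}--\ref{ptr2} together with the process-ID and update-before-scan conventions), produces a valid linearization of $\Gamma(T)$. By the definition of linearizability this reduces to three checks: that $f(T)$ is a sequential history over a completion of $\Gamma(T)$ with matching invocations and responses, that $f(T)$ extends the happens-before order of $\Gamma(T)$, and that $f(T)$ is valid with respect to the sequential specification of the snapshot type. Almost all of the substantive work has already been done in the preceding lemmas, so this final step is largely an assembly.

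First I would dispose of the completion requirement. Every complete operation has a finite linearization point: a complete $SLscan$ linearizes at $time(op^{\ref{DRead2}})$ by \ref{ptr1}, and a complete $SLupdate$ linearizes at $\min\bigl(t, time(op^{\ref{SLUDWrite}})\bigr)$ by \ref{ptr2}, where $time(op^{\ref{SLUDWrite}})$ is finite. Hence every complete operation appears in $f(T)$ with its actual response. The operations omitted from $f(T)$ are precisely those with $pt(op) = \infty$; no complete operation is among them, so every omitted operation is pending and may be deleted from the completion, while each pending $SLupdate$ with a finite $pt$ (one forced to linearize by a concurrent scan before reaching its own $DWrite$) is assigned an arbitrary response. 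This yields a completion over which $f(T)$ is defined. For the happens-before order, suppose $op_1 \xrightarrow{T} op_2$. Then Lemma~\ref{slss:ptinmethod} gives $pt(op_1) \le time(rsp(op_1)) < time(inv(op_2)) \le pt(op_2)$, so $pt(op_1) < pt(op_2)$ and $op_1$ precedes $op_2$ in $f(T)$ regardless of the tie-breaking rules; thus $f(T)$ extends the real-time order.

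The main work, and the hardest part, is establishing validity, i.e. that every $SLscan$ in $f(T)$ returns the vector dictated by the preceding $SLupdate$ operations. Fix an $SLscan$ $sc \in f(T)$ returning $(x_1,\dots,x_n)$; since pending scans have $pt = \infty$, the scan $sc$ is complete, so Lemmas~\ref{updateexists} and~\ref{validity} apply. I would argue component-by-component. If $x_p = \bot$, Lemma~\ref{updateexists} shows that no $SLupdate_p$ linearizes at or before $pt(sc)$, hence none precedes $sc$ in $f(T)$, exactly as the specification demands. If $x_p \neq \bot$, Lemma~\ref{validity} produces an $SLupdate_p(x_p)$ operation $up_{x_p}$ with $pt(up_{x_p}) \le pt(sc)$ such that no $SLupdate_p$ linearizes in $\bigl(pt(up_{x_p}), pt(sc)\bigr]$. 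It remains to confirm that $up_{x_p}$ is genuinely the latest $SLupdate_p$ preceding $sc$ in $f(T)$.

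The only delicate point is the ordering at coinciding linearization points. If $pt(up_{x_p}) = pt(sc)$, the tie-breaking convention places the $SLupdate$ before the $SLscan$, so $up_{x_p}$ still precedes $sc$ in $f(T)$. Moreover, two distinct $SLupdate$ operations by the same process have $pt$-values lying in disjoint invocation--response intervals by Lemma~\ref{slss:ptinmethod} (since such operations are performed sequentially, $pt(op) \le time(rsp(op)) < time(inv(op')) \le pt(op')$), so no other $SLupdate_p$ can share $up_{x_p}$'s linearization point. Combined with the interval exclusion from Lemma~\ref{validity}, this shows $up_{x_p}$ is the most recent $SLupdate_p$ ordered before $sc$ in $f(T)$, and it writes $x_p$; hence the $p$-th component of $sc$'s result agrees with the sequential specification. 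Since $SLupdate$ operations return no value, scans are the only operations imposing validity constraints, so verifying every scan suffices and the lemma follows. I expect the coincident-linearization-point bookkeeping described above to be the main obstacle, but it is resolved cleanly by the disjoint-interval observation.
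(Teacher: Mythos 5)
Your proposal is correct and follows essentially the same route as the paper's proof, which simply invokes Lemma~\ref{slss:ptinmethod} for preservation of the happens-before order and Lemma~\ref{validity} (which itself rests on Lemma~\ref{updateexists}) for validity. The extra bookkeeping you supply on completions and on coincident linearization points is sound but is left implicit in the paper's much terser argument.
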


\iffull
\begin{proof}
Lemma~\ref{slss:ptinmethod} ensures that $f$ preserves the happens-before order of the interpreted history $\Gamma (T)$. Lemma~\ref{validity} ensures that Algorithm~\ref{slss} satisfies the sequential specification of a snapshot object, and therefore $f(T)$ is a valid sequential history. Thus, $f(T)$ is a linearization of the interpreted history $\Gamma (T)$.
\end{proof}
\fi

\ifea
	We also obtain the following through a simple argument, which we omit here:
\fi

\begin{lemma}
\label{slss:prefpres}
	The function $f$ is prefix-preserving.
\end{lemma}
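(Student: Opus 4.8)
The plan is to mirror the strategy used for the ABA-detecting register in Lemma~\ref{prefABA}, namely to establish a ``local determinability'' property of the linearization-point assignment: for every step $t$ of any transcript and every operation $op \in \Gamma(T)$, whether $pt(op) = t$ is determined entirely by the prefix of the transcript ending at step $t$, and never by later steps. Once this is shown, prefix-preservation follows almost immediately, since the order in $f(T)$ is obtained by sorting operations by $pt$ with a tie-breaking rule (U-1 through U-3) that depends only on operation types and process identifiers, which are fixed.

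First I would verify local determinability for $SLscan$ operations. By \ref{ptr1}, $pt(sc) = time(sc^{\ref{DRead2}})$, the \emph{final} execution of line~\ref{DRead2}; and whether a given execution of line~\ref{DRead2} is the final one is decided by the loop-guard on line~\ref{CHcond}, which compares only the local variables $s_1,\ell,s_2,c_2$, all of which are set by step $t$. Hence whether $pt(sc)=t$ is determined at step $t$. The hard part will be the $SLupdate$ case, where \ref{ptr2} sets $pt(up)=\min\!\bigl(t,time(up^{\ref{SLUDWrite}})\bigr)$, with $t$ the $pt$-value of the \emph{earliest} qualifying $SLscan$ $sc_0$ that is invoked after $up$ and returns $x$ in entry $p$. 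I would argue that $pt(up)=t$ for a given step $t$ precisely when one of two locally checkable situations holds: either (a) $t$ is the step of $up$'s $R.DWrite$ on line~\ref{SLUDWrite} and no qualifying $SLscan$ has linearized strictly before $t$; or (b) some qualifying $SLscan$ $sc$ linearizes at $t$ (so $t=time(sc^{\ref{DRead2}})$, at which point $sc$'s return value is fixed), no qualifying $SLscan$ linearized before $t$, and $up$'s $R.DWrite$ did not occur before $t$. Since a step is a single operation on $R$, the $DWrite$ of $up$ and the $DRead$-completion of any $sc$ are distinct steps, so no tie between cases (a) and (b) arises. Crucially, each condition—``$up$ was invoked before $t$'' (using Lemma~\ref{slss:ptinmethod}), ``no earlier qualifying scan linearized,'' ``$up$'s $DWrite$ has/has not happened,'' and ``$sc$ returns $x$ in entry $p$''—refers only to events at or before $t$.

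Finally I would assemble the prefix-preservation conclusion. Fix $T$ a prefix of $T' \in \mathcal{T}$ of length $|T|$. If $pt_T(op)=t\neq\infty$, then $t\le |T|$ by Lemma~\ref{slss:ptinmethod}, so the prefix determining $pt(op)=t$ lies inside $T$; since $T$ and $T'$ agree on this prefix, $pt_{T'}(op)=t$ as well. Conversely, if $op$ appears in $f(T')$ but not $f(T)$, I would show $pt_{T'}(op)>|T|$: were $pt_{T'}(op)=t'\le|T|$, the determining events (its own $DWrite$, or the completion of a qualifying scan) would already lie in $T$, forcing $pt_T(op)=t'\neq\infty$, a contradiction. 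Thus every operation of $f(T')$ absent from $f(T)$ linearizes strictly after all operations of $f(T)$, while the operations common to both keep identical $pt$-values and identical tie-breaking order. Therefore $f(T)$ is a prefix of $f(T')$, and $f$ is prefix-preserving. I expect the sole real difficulty to be the careful bookkeeping of case~(b) above, ensuring that ``being pulled in by the earliest qualifying scan'' is genuinely a decision made at that scan's linearization step and not retroactively revised by a later, smaller candidate—which holds precisely because $sc_0$ is chosen with \emph{minimum} $pt$, so the first qualifying scan to linearize is the one that fixes $pt(up)$.
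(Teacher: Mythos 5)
Your proposal is correct and follows essentially the same route as the paper's proof: both establish that for each step $t$ it is locally determined (from the prefix ending at $t$) which operations satisfy $pt(op)=t$, via the same three-way case split ($SLscan$ at its final $DRead$, $SLupdate$ pulled in by the earliest qualifying scan, $SLupdate$ at its own $DWrite$), and both then invoke the per-step tie-breaking order to conclude prefix-preservation. Your extra care about the minimality of $sc_0$ and the distinctness of the $DWrite$ and $DRead$ steps is implicit in the paper's argument but not a departure from it.
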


\iffull
\begin{proof}
Consider each time $t$ of $T$, and some operation $op \in \Gamma (T)$ by $p$. Then $pt(op) = t$ if
\begin{enumerate}[labelindent=0pt,labelwidth=\widthof{\ref{def:interp2}},itemindent=1em,leftmargin=!,label=(\roman*)]
	\item operation $op$ is some $SLscan_p()$ operation and $op^{\ref{DRead2}}$ happens at $t$ (this case follows from \ref{ptr1}), or
	\item operation $op$ is some $SLupdate_{p}(x_p)$ operation for some value $x_p \in D$, and $t$ is the first point in time not before $time(inv(op))$ that satisfies $t = time(sc^{\ref{DRead2}})$ for some $SLscan()$ operation $sc$, and $sc^{\ref{DRead2}}$ returns a vector whose $p$-th entry contains $x_p$ (this case follows from both \ref{ptr1} and \ref{ptr2}), or
	\item operation $op$ is some $SLupdate_p(x_p)$ operation for some value $x_p \in D$ that did not linearize at any step prior to $t$, and $op^{\ref{SLUDWrite}}$ happens at $t$ (this case follows from \ref{ptr2}).
\end{enumerate}
Thus, at step $t$ it is entirely determined which operations linearize at $t$. That is, for any operation $op \in \Gamma(T)$, whether $pt(op) = t$ or not can be decided soley based on previous steps (i.e. steps earlier than $t$) in $T$. Additionally, once it is determined that $pt(op) = t$, $pt(op)$ does not change with any future step (i.e. steps later than $t$) of $T$.
Along with the fact that processes perform operations sequentially, Observation~\ref{slss:ptinmethod} implies that for any time $t$ of $T$, the set of operations $Op$ that linearize at $t$ contains at most one operation by each process.
Therefore, \ref{ptsorttiebreaker} imposes a total order on the operations in $Op$.
Hence, if $T$ is a prefix of $T' \in \mathcal{T}$, then $f(T)$ is a prefix of $f(T')$.
\end{proof}
\fi

\ifea
	By Lemma~\ref{slss:linearizable} and Lemma~\ref{slss:prefpres}, we obtain the following theorem:
\fi

\begin{theorem}
\label{stronglin}
	The snapshot object implemented by Algorithm~\ref{slss} is strongly linearizable.
\end{theorem}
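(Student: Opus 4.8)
The plan is to exhibit a strong linearization function and verify its two defining properties separately. The natural candidate is the function $f$ already associated with the linearization-point assignment \ref{ptr1}--\ref{ptr2}: for each transcript $T$, $f(T)$ lists the operations of $\Gamma(T)$ with finite $pt$-value in increasing order of $pt$, breaking ties by the rule in \ref{ptsorttiebreaker} (updates before scans, then by process id). By the definition of strong linearizability it suffices to show (a) that $f$ is a linearization function for every transcript, and (b) that $f$ is prefix-preserving. I would treat these as two independent goals and combine them at the end.

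For part (a) I would check the two requirements of a linearization. The real-time-order requirement is immediate from Lemma~\ref{slss:ptinmethod}, which places $pt(op)$ inside $\bigl[time(inv(op)),time(rsp(op))\bigr]$, so $f(T)$ extends the happens-before order of $\Gamma(T)$. Validity against the snapshot sequential specification is the substantive step: I must show that each non-$\bot$ component $x_p$ returned by a complete $SLscan$ equals the value written by the $SLupdate_p$ that linearizes last at or before the scan. This is exactly Lemma~\ref{validity}, whose proof rests on facts I would establish first. Because the value of the linearizable base object $S$ is not well-defined at an arbitrary step, I would reason through the \emph{interpreted value} of $S[p]$ induced by a fixed linearization-point function $pt_S$ for $S$. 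The key invariant is Lemma~\ref{lemmaEQatlin}: whenever an $SLupdate_p(x)$ linearizes, the interpreted value of $S[p]$ is $x$ and $R[p]=x$ at that instant. Combined with Lemma~\ref{updateexists} (a scan sees $x_p\neq\bot$ in position $p$ iff some $SLupdate_p$ has linearized by $pt(sc)$) this pins down the returned vector. Crucially, Lemma~\ref{lemma:eanowrite} --- no $R.DWrite$ occurs between a scan's two successful $R.DRead$ operations, which is precisely what the ABA-detecting register's boolean flag checked in the loop guard on line~\ref{CHcond} certifies --- rules out the ``missed intervening update'' scenario that a register-only implementation would permit. Assembled, these give Lemma~\ref{slss:linearizable}.

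The main obstacle, and the reason the proof is delicate, is part (b). The difficulty is that an $SLupdate_p(x)$ need not linearize at its own $R.DWrite$ step: by \ref{ptr2} it may instead be pulled back to the linearization point of the earliest later $SLscan$ that observes $x$ in position $p$. A priori this makes an update's position depend on a future event (the scan), which is exactly what prefix-preservation forbids. The resolution, which is the content of Lemma~\ref{slss:prefpres}, is to argue that the predicate ``$pt(op)=t$'' is decidable from the prefix of $T$ up to and including step $t$ alone. An $SLscan$ linearizes at its own final $R.DRead$ by \ref{ptr1}, a purely local event; an $SLupdate_p(x)$ linearizes either at its own $R.DWrite$, or at the step $time(sc^{\ref{DRead2}})$ of some scan $sc$ --- and at that very step it is already determined whether $sc$'s returned vector carries $x$ in position $p$ and whether the update linearized earlier. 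Hence no operation's linearization time is ever revised by a later step, and by Lemma~\ref{slss:ptinmethod} at most one operation per process linearizes at any fixed $t$, so the tie-break \ref{ptsorttiebreaker} imposes a fixed total order on each such batch. Consequently, if $S$ is a prefix of $T$ then $f(S)$ is a prefix of $f(T)$.

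Finally, combining Lemma~\ref{slss:linearizable} and Lemma~\ref{slss:prefpres}, the function $f$ is a prefix-preserving linearization function for the set of all transcripts on Algorithm~\ref{slss}, i.e. a strong linearization function; hence the implementation is strongly linearizable.
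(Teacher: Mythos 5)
Your proposal is correct and follows essentially the same route as the paper: it exhibits the $pt$-based function $f$, establishes linearizability via Lemma~\ref{slss:linearizable} (resting on Lemmas~\ref{slss:ptinmethod}, \ref{lemmaEQatlin}, \ref{updateexists}, \ref{lemma:eanowrite}, and \ref{validity}) and prefix-preservation via Lemma~\ref{slss:prefpres}, then combines the two. Your identification of the key subtlety --- that an $SLupdate$ may be pulled back to a scan's linearization point, yet this is still determined by the prefix up to that step --- is exactly the argument the paper makes in Lemma~\ref{slss:prefpres}.
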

\iffull
\begin{proof}
Lemma~\ref{slss:linearizable} shows that the sequential history $f(T)$ is a linearization of $\Gamma (T)$. Furthermore, Lemma~\ref{slss:prefpres} proves that $f$ is prefix-preserving. Thus, $f$ is a strong linearization function for $\mathcal{T}$.
\end{proof}
\fi

\iffull
\ifthesis
\section{Lock-Freedom and Complexity Analysis}\label{compslss}
\else
\subsection{Lock-Freedom and Complexity Analysis} \label{compslss}
\fi
To simplify our analysis of the amortized complexity of our strongly linearizable snapshot object implementation, we first present a modification to Algorithm~\ref{slss}.
The pseudocode of this implementation is provided in Algorithm~\ref{slsswithseq}.
Our modification adds a sequence number field to each entry of the snapshot object; that is, each entry of the snapshot object stores a pair $(x, s)$, where $x \in D$ is a value, and $s$ is an unbounded sequence number.
Initially, the snapshot contains the vector $\bigl((\bot, 0), \ldots, (\bot, 0)\bigr)$.
Each process also stores a local sequence number (called $seq$ in Algorithm~\ref{slsswithseq}), which is incremented every time the process updates the snapshot object on line~\ref{seq:Supdate}.
Following this, process $p$ performs an $S.update_p(x, seq)$ operation on line~\ref{seq:Supdate}, which stores the pair $(x, seq)$ in the $p$-th entry of the snapshot object.
The remainder of the modified implementation is essentially identical to Algorithm~\ref{slss}.
Note that $S.scan()$ and $R.DRead()$ operations now return vectors of pairs.
That is, if $X$ is returned by some $S.scan()$ or $R.DRead()$ operation, then $X = \bigl((x_1, s_1), \ldots, (x_n, s_n)\bigr)$, such that each $x_i \in D$ is a value, and each $s_i$ is a sequence number.
We use $vals(X)$ to denote the vector of values stored by $X$ (i.e. $vals(X) = (x_1, \ldots, x_n)$).

\begin{algorithm}[h]
  \caption{Modified Strongly Linearizable Snapshot Object}
  \label{slsswithseq}

  \SetKwProg{Fn}{Function}{:}{}
  
  \nonl\textbf{shared:}\;
  	\nonl\quad atomic snapshot object $S = \bigl((\bot, 0), \ldots, (\bot, 0)\bigr)$\;
  	\nonl\quad atomic ABA-detecting register $R =  \bigl((\bot, 0), \ldots, (\bot, 0)\bigr)$\;
  \nonl\;
  \nonl\textbf{local} (to each process):\;
    \nonl\quad Integer $seq = 0$\;
    
  \nonl\;
  \nonl\Fn{SLupdate$_p(x)$} {
    $seq \gets seq + 1$\;				\label{seq:seqinc}
    $S.update_p(x, seq)$\;				\label{seq:Supdate}
    $s \gets S.scan()$\;				\label{seq:SLUscan}
    $R.DWrite_p(s)$\;					\label{seq:SLUDWrite}
  }
    
  \nonl\;
  \nonl\Fn{SLscan$_p()$} {
    \Repeat{$(vals(s_1) = vals(\ell) = vals(s_2)) \; \mathbf{and} \; !c_2$} {
      $(s_1, c_1) \gets R.DRead_p()$\;		\label{seq:DRead1}
      $\ell \gets S.scan()$\;		\label{seq:SLSscan}
      $(s_2, c_2) \gets R.DRead_p()$\;		\label{seq:DRead2}
      \If{$!(vals(s_1) = vals(\ell) = vals(s_2))$} {	\label{seq:EQcond}
        $R.DWrite_p(\ell)$\;		\label{seq:SLSDWrite}
      }
    } \label{seq:CHcond}
    
    \Return $s_2$
  }
    
\end{algorithm}

Note that Algorithm~\ref{slsswithseq} also uses an atomic snapshot object for $S$ rather than a linearizable one.
This is acceptable for our analysis because we aim to express the amortized complexity of Algorithm~\ref{slsswithseq} in terms of the number of operation invocations on $S$ and $R$, without regard for how $S$ or $R$ are implemented.
This simplification allows us to avoid using the notion of interpreted value from Section~\ref{sec:interpvalue}.
Instead, for any transcript $T \in \mathcal{T}$, we may simply reference the \emph{value} stored by $S$ at any particular time $t$ of $T$.
Since $update$ operations on $S$ are atomic, the value of $S$ at any particular step of $T$ is well-defined.

Since Algorithm~\ref{slsswithseq} uses unbounded sequence numbers, the implementation uses unbounded space.
However, since Algorithm~\ref{slsswithseq} performs exactly the same shared memory operations as in Algorithm~\ref{slss}, it is easy to see that these implementations have the same amortized complexity.
Hence, while our analysis is performed directly on Algorithm~\ref{slsswithseq}, all of the results in this section may also be applied to Algorithm~\ref{slss}.

We first introduce some notation that will simplify our argument throughout this section. Let $X$ be a vector that is returned by some $S.scan()$ or $R.DRead()$ operation. 
For a process $p$ and an entry $X[p]$ of $X$, define $seq(X[p])$ as the second entry of the pair stored in $X[p]$ (i.e. the sequence number stored by $X[p]$). Define $seq(X) = \sum_{i = 1}^{n} seq(X[i])$. We first observe that as the underlying snapshot object $S$ is modified, $seq$ values of subsequent $S.scan()$ operations never decrease. For the sake of clarity, when we say $X = Y$, for some vectors $X$ and $Y$, we mean $vals(X) = vals(Y)$ and $seq(X) = seq(Y)$. For the remainder of this section, fix a transcript $T$ on Algorithm~\ref{slsswithseq}.

\begin{observation}\label{obsnolookback}
	Suppose that the value of $S$ is $X$ at time $t$ in $T$. If the value of $S$ is $X'$ at time $u \geq t$ of $T$, then
	\begin{enumerate}[label=(\alph*)]
		\item $seq(X'[p]) \geq seq(X[p])$, for every process $p$, and\label{obsnolookback:entry}
		\item $seq(X') \geq seq(X)$.\label{obsnolookback:whole}
	\end{enumerate}
\end{observation}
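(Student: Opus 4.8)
The plan is to prove part~\ref{obsnolookback:entry} by a per-process monotonicity argument exploiting that $S$ is single-writer, and then obtain part~\ref{obsnolookback:whole} immediately by summing over all entries. The core observation is that each process writes a strictly increasing stream of sequence numbers into its own entry of $S$, so the sequence number stored in any entry can only grow as time advances.

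First I would record the per-process monotonicity of written sequence numbers. In Algorithm~\ref{slsswithseq}, every $SLupdate_p$ operation by process $p$ increments the local counter $seq$ on line~\ref{seq:seqinc} \emph{before} writing $(x,seq)$ to $S[p]$ on line~\ref{seq:Supdate}, and $seq$ is never decremented. Since process $p$ executes its operations sequentially and $seq$ starts at $0$, the sequence numbers that $p$ successively writes to $S[p]$ on line~\ref{seq:Supdate} are strictly increasing (the $k$-th $SLupdate_p$ writes sequence number $k$). Next I would invoke the fact that $S$ is a single-writer snapshot object, so $S[p]$ is modified only by process $p$ and only on line~\ref{seq:Supdate}; moreover, because operations on $S$ are atomic (as noted in the text), the value of $S[p]$ at any time $t$ is well-defined, equal to the pair written by the latest $S.update_p$ to respond before $t$ (or $(\bot,0)$ if none exists).

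I would then prove part~\ref{obsnolookback:entry} as follows. Fix a process $p$ and times $t\leq u$, and write $X[p]=(x,s)$ and $X'[p]=(x',s')$. Both $s$ and $s'$ were written by $S.update_p$ operations of $p$ (or are the initial value $0$). Let $up$ be the latest $S.update_p$ responding before $t$ and $up'$ the latest responding before $u$. Since $u\geq t$ and all writes to $S[p]$ are by the single, sequential writer $p$, operation $up'$ is not earlier than $up$ in $p$'s sequence; combined with the strict monotonicity of $p$'s written sequence numbers this gives $s'\geq s$, i.e.\ $seq(X'[p])\geq seq(X[p])$. Finally, part~\ref{obsnolookback:whole} follows directly by summing part~\ref{obsnolookback:entry} over all entries:
\begin{equation*}
  seq(X') = \sum_{i=1}^{n} seq(X'[i]) \geq \sum_{i=1}^{n} seq(X[i]) = seq(X).
\end{equation*}

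This argument is essentially routine; there is no real obstacle, only a point requiring care: the statement refers to the ``value of $S$'' at arbitrary times, which is only sensible because $S$ is taken to be an \emph{atomic} snapshot in Algorithm~\ref{slsswithseq} (unlike the linearizable $S$ of Algorithm~\ref{slss}, which would force us to use interpreted values). I would therefore make explicit that atomicity of $S$ is what makes ``the value of $S[p]$ at time $t$'' well-defined and pinned to a unique most-recent $S.update_p$, so that the monotonicity of written sequence numbers transfers cleanly to monotonicity of the stored sequence numbers.
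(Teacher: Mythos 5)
Your proposal is correct and follows essentially the same route as the paper's proof: part~(a) from the fact that each process increments its local $seq$ on line~\ref{seq:seqinc} before writing on line~\ref{seq:Supdate} (so the single-writer entry $S[p]$ only ever stores increasing sequence numbers), and part~(b) by summing over the entries. The paper states this in two sentences; your version just spells out the same argument in more detail, including the (correct) remark that atomicity of $S$ in Algorithm~\ref{slsswithseq} is what makes the ``value of $S$ at time $t$'' well-defined.
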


\begin{proof}
Part \ref{obsnolookback:entry} follows from the fact that each process that performs an $SLupdate$ operation increments its local $seq$ variable on line~\ref{seq:seqinc}, prior to invoking the $S.update$ operation on line~\ref{seq:Supdate}.
Part \ref{obsnolookback:whole} follows trivially from part \ref{obsnolookback:entry}.
\end{proof}

\begin{observation}\label{obsEQ}
	Let $X_1$ and $X_2$ be vectors returned by two $S.scan()$ operations in $T$. If $seq(X_1) = seq(X_2)$, then $X_1 = X_2$.
\end{observation}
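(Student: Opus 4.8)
The plan is to reduce the claim $X_1 = X_2$ (which, by the convention fixed above, means both $vals(X_1) = vals(X_2)$ and $seq(X_1) = seq(X_2)$) to showing only that the value-vectors agree, since $seq(X_1) = seq(X_2)$ holds by hypothesis. First I would exploit the atomicity of $S$: in Algorithm~\ref{slsswithseq} the object $S$ is atomic, so each $S.scan()$ returns exactly the value of $S$ at the single step at which it takes effect. Hence, after possibly swapping the roles of $X_1$ and $X_2$, there are times $t_1 \le t_2$ of $T$ such that $X_1$ is the value of $S$ at $t_1$ and $X_2$ is the value of $S$ at $t_2$.

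Next I would apply Observation~\ref{obsnolookback}~\ref{obsnolookback:entry} to the pair $t_1 \le t_2$, obtaining $seq(X_2[p]) \ge seq(X_1[p])$ for every process $p$. Summing over all $p$ and using the hypothesis $seq(X_1) = seq(X_2)$, the total of the non-negative differences $seq(X_2[p]) - seq(X_1[p])$ is zero, which forces every difference to vanish; thus $seq(X_1[p]) = seq(X_2[p])$ for every $p$.

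The crux is then to turn equal per-entry sequence numbers into equal per-entry values. Here I would argue from the code: process $p$ increments its local variable $seq$ on line~\ref{seq:seqinc} immediately before every $S.update_p$ call, so distinct $S.update_p$ operations write strictly increasing, hence pairwise distinct, sequence numbers into entry $p$. Consequently the sequence number stored in entry $p$ uniquely identifies the $S.update_p$ operation that last wrote it, and therefore uniquely determines the accompanying value. Thus $seq(X_1[p]) = seq(X_2[p])$ forces $vals(X_1[p]) = vals(X_2[p])$ for each $p$, so $vals(X_1) = vals(X_2)$; combined with the hypothesis $seq(X_1) = seq(X_2)$ this yields $X_1 = X_2$.

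I expect the main obstacle to be this last step, namely carefully justifying that the sequence number of an entry determines its value, which rests entirely on the strict monotonicity of each process' local sequence counter established by line~\ref{seq:seqinc}; by contrast, the summation argument in the middle step is routine once Observation~\ref{obsnolookback}~\ref{obsnolookback:entry} is invoked, and the reduction to the value-vectors is immediate from the equality convention.
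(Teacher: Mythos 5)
Your proof is correct and rests on the same two ingredients as the paper's — Observation~\ref{obsnolookback}~\ref{obsnolookback:entry} and the strict increment on line~\ref{seq:seqinc} — so it is essentially the same argument. The only difference is organizational: the paper argues by contradiction that a differing entry would force some $S.update$ to happen between the two scans and hence $seq(X_1) < seq(X_2)$, whereas you argue directly that the equal sums force equal per-entry sequence numbers and then observe that an entry's sequence number uniquely determines the value written alongside it; both routes are sound.
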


\begin{proof}
Let $op_1$ and $op_2$ be two complete $S.scan()$ operations that return $X_1$ and $X_2$ respectively, such that $seq(X_1) = seq(X_2)$. Without loss of generality, suppose $time(op_1) \leq time(op_2)$. To derive a contradiction, suppose that $X_1 \neq X_2$. Thus, some $S.update$ operation by $p$ must happen in the interval $\bigl(time(op_1), time(op_2)\bigr)$, for some process $p$. Since the sequence numbers written by subsequent $S.update$ operations by $p$ always increase (by the increment on line~\ref{seq:seqinc}),
\begin{equation}
	X_1[p] < X_2[p].\label{obsEQ:entry}
\end{equation}
By Observation~\ref{obsnolookback}~\ref{obsnolookback:entry}, $X_1[q] \leq X_2[q]$ for every process $q$.
This, combined with (\ref{obsEQ:entry}), implies that $seq(X_1) < seq(X_2)$, which is a contradiction.
\end{proof}

\begin{observation}\label{obsmorerecent}
	Suppose that at time $t$, the value of $S$ is $X$, while $R$ contains a vector $X'$. Then $seq(X) \geq seq(X')$.
\end{observation}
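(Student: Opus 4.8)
The plan is to trace the vector $X'$ currently held in $R$ back to the $S.scan()$ operation that originally produced it, and then invoke the monotonicity of sequence numbers from Observation~\ref{obsnolookback}~\ref{obsnolookback:whole}. The key observation is that $R$ only ever receives vectors that were read out of $S$ at an earlier time, and because the $seq$ of $S$ never decreases, the copy sitting in $R$ can only be ``stale'', i.e. have a smaller $seq$.

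First I would analyze how $R$ obtains its contents at time $t$. Since $R$ stores a single value, at time $t$ it either still holds its initial value $\bigl((\bot,0),\ldots,(\bot,0)\bigr)$, in which case $seq(X')=0\leq seq(X)$ holds trivially, or $X'$ was placed in $R$ by the most recent $R.DWrite$ operation $dw$ that happens at or before $t$. Inspecting Algorithm~\ref{slsswithseq}, every $R.DWrite$ is invoked either on line~\ref{seq:SLUDWrite} with the vector returned by the $S.scan()$ on line~\ref{seq:SLUscan}, or on line~\ref{seq:SLSDWrite} with the vector returned by the $S.scan()$ on line~\ref{seq:SLSscan}. In both cases the argument of $dw$ equals the vector $X'$ returned by some $S.scan()$ operation $sc$ performed by the same process immediately before $dw$; hence $sc$ completes before $dw$ happens, and $dw$ happens at or before $t$.

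Next I would use that $S$ is an atomic snapshot object in Algorithm~\ref{slsswithseq}, so the vector $X'$ returned by $sc$ is precisely the value of $S$ at the atomic step of $sc$, which occurs at some time $u<t$. Finally I would apply Observation~\ref{obsnolookback}~\ref{obsnolookback:whole} to the pair of times $u$ and $t$: since the value of $S$ is $X'$ at $u$ and is $X$ at $t\geq u$, we conclude $seq(X)\geq seq(X')$, as required.

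The argument involves no real difficulty beyond careful bookkeeping — the monotonicity lemma does all the heavy lifting. The only points needing attention are handling the initial-value case separately (so that ``$R$ contains a vector'' is well-defined even before any $DWrite$), and confirming that the argument passed to each $R.DWrite$ really is a genuine return value of an $S.scan()$ taken strictly before $t$ rather than some locally modified vector. Both are immediate from the pseudocode, so I expect the proof to be short.
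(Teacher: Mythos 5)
Your proof is correct and follows essentially the same route as the paper's: every vector written to $R$ is the return value of an earlier $S.scan()$, so Observation~\ref{obsnolookback}~\ref{obsnolookback:whole} gives the inequality (the paper's one-line proof even has the inequality direction typographically flipped in its final sentence, which your more careful write-up avoids). Your explicit handling of the initial-value case and the tracing of $X'$ back to a concrete scan are just more detailed bookkeeping of the same idea.
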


\begin{proof}
Notice that the only $R.DWrite(X')$ statements present in Algorithm~\ref{slss} acquire $X'$ from some previous $S.scan()$ operation. Hence, $seq(X') \geq seq(X)$ by Observation~\ref{obsnolookback}~\ref{obsnolookback:whole}.
\end{proof}

\begin{lemma}\label{lemmaoldstate}
	Let $sc$ be the first $S.scan()$ operation that returns $X$ (that is, no $S.scan()$ operation $sc'$ such that $time(sc') < time(sc)$ returns $X$). Suppose there are $k$ operations
  \begin{displaymath} 
  R.DWrite(X_1), R.DWrite(X_2), \ldots, R.DWrite(X_k)
  \end{displaymath}
  that happen after $time(sc)$ in $T$, such that $seq(X_i) < seq(X)$ for all $i \in \{1, \ldots, k\}$. Then $k \leq n - 1$.
\end{lemma}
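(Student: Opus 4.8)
We have an $S.scan()$ operation $sc$ that's the first to return a particular vector $X$. After $sc$, there are $k$ operations $R.DWrite(X_1),\dots,R.DWrite(X_k)$ with $seq(X_i) < seq(X)$. We want $k \le n-1$.

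**Key intuition.** Each $R.DWrite(X_i)$ writes a vector obtained from an earlier $S.scan()$. If $seq(X_i) < seq(X)$, the source scan of $X_i$ must have happened *before* $sc$ (by Observation on monotonicity). But after $sc$, $S$ has moved "forward" to at least the state captured by $X$. So writing an older, smaller-sequence vector to $R$ after $sc$ means the writing process is working with stale information.

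Let me think about who can perform these writes. The $R.DWrite$ operations come from either line SLUDWrite (in SLupdate) or line SLSDWrite (in SLscan). In both cases, the vector written is the result of an $S.scan()$ performed by that same operation. A vector $X_i$ with $seq(X_i) < seq(X)$ must come from a scan whose result has smaller total sequence number than $X$.

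**The bounding argument.** The crucial point: each process $p$ can contribute at most one such "stale" write after $time(sc)$. Here's why. Consider a process $p$ that performs $R.DWrite(X_i)$ after $sc$ with $seq(X_i) < seq(X)$. The vector $X_i$ was obtained by an $S.scan()$ by $p$. For this write to happen after $sc$ but carry an old (smaller) sequence value, $p$'s scan must predate the point where $S$ reached state $X$. Once $p$ performs any subsequent $S.scan()$ (after $S$ has advanced past $sc$), by monotonicity that later scan returns a vector $Y$ with $seq(Y) \ge seq(X)$, so any write $p$ makes from that later scan satisfies $seq \ge seq(X)$ — disqualifying it. Since each operation by $p$ performs its own fresh scan before writing, and operations by $p$ are sequential, $p$ can issue at most one write with the stale $X_i$ after $time(sc)$.

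**Plan.** First I would establish that each $X_i$ with $seq(X_i) < seq(X)$ comes from an $S.scan()$ that occurred strictly before $time(sc)$ (using Observation on monotonicity of $seq$ and the fact that $sc$ is the first scan returning $X$). Then, I would argue that a single process $p$ can perform at most one such $R.DWrite$: if $p$ performed two such writes after $sc$, the second is preceded (in $p$'s sequential execution) by a fresh $S.scan()$ that, being after $sc$, returns a vector with $seq \ge seq(X)$ — contradicting that the written vector has $seq < seq(X)$. I expect the main obstacle to be pinning down precisely that the source scan of $X_i$ predates $sc$ and that the intervening sequential scan by $p$ forces $seq \ge seq(X)$; this requires carefully invoking Observation on $seq$ monotonicity across the point $time(sc)$. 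Since at most one qualifying write comes from each of the $n$ processes, and the process $q$ whose updates drove $S$ to state $X$ cannot itself contribute (its own post-$sc$ scans already see $seq \ge seq(X)$), we get $k \le n-1$.

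The argument structure is:
\begin{enumerate}[label=(\roman*)]
\item Show each stale $X_i$ originates from a scan before $time(sc)$.
\item Show each process contributes at most one stale write after $time(sc)$.
\item Exclude one process to sharpen $n$ to $n-1$.
\end{enumerate}
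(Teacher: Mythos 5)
Your steps (i) and (ii) are exactly the paper's argument: every $R.DWrite$ is sourced from the latest preceding $S.scan()$ by the same process, any scan returning a vector with sequence sum below $seq(X)$ must be invoked before $time(sc)$ by the monotonicity observation, and sequentiality of each process then caps its contribution at one ``in-flight'' stale write that straddles $time(sc)$. Up to that point the proposal is correct and matches the paper, which gives $k \le n$.

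The gap is in step (iii), the sharpening to $n-1$. You exclude ``the process whose updates drove $S$ to state $X$'' on the grounds that ``its own post-$sc$ scans already see $seq \ge seq(X)$.'' That justification cannot be right as stated: by Observation~\ref{obsnolookback} \emph{every} process's post-$sc$ scans return vectors with $seq \ge seq(X)$, so if this reasoning excluded one process it would exclude all of them and force $k=0$. The point it misses is that a stale write after $time(sc)$ is sourced from a scan \emph{before} $time(sc)$, so a statement about post-$sc$ scans does not rule out a process having such a write in flight. The paper instead excludes the process $q$ that performs $sc$ itself: any $R.DWrite_q$ happening after $time(sc)$ has as its source the latest $S.scan()$ by $q$, which (since $q$ runs sequentially) is either $sc$ itself, yielding the vector $X$ with $seq(X) \not< seq(X)$, or a scan after $sc$, yielding $seq \ge seq(X)$; hence $q$ contributes no stale write and the other $n-1$ processes contribute at most one each. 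Your chosen process may in fact also be excludable (its source scan for any post-$time(sc)$ write follows its own update that brought $S$ to $X$), but that requires a different argument from the one you give, and the identification of that process is not even well defined when several processes' updates are reflected in $X$.
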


\begin{proof}
By Observation~\ref{obsnolookback}, any $S.scan()$ operation in $T$ that is invoked after $time(sc)$ returns a vector $Y$ with $seq(Y) \geq seq(X)$. Thus, 
\ifspringer
\begin{equation}
\begin{split}
&\text{any $S.scan()$ operation that returns a vector $X'$} \\ &\text{with $seq(X') < seq(X)$ must be invoked before} \\ &\text{$time(sc)$.}\label{lemmaoldstate:oldscan}
\end{split}
\end{equation}
\else
\begin{equation}
\begin{split}
&\text{any $S.scan()$ operation that returns a vector $X'$ with $seq(X') < seq(X)$ must be} \\ &\text{invoked before $time(sc)$.}\label{lemmaoldstate:oldscan}
\end{split}
\end{equation}
\fi

Note that each $R.DWrite$ statement in Algorithm~\ref{slsswithseq} is preceded by an $S.scan$ statement in the same method (i.e. the $R.DWrite$ on line~\ref{seq:SLUDWrite} in the $SLupdate$ method is preceded by the $S.scan$ on line~\ref{seq:SLUscan}, and the $R.DWrite$ on line~\ref{seq:SLSDWrite} in the $SLscan$ method is preceded by the $S.scan$ on line~\ref{seq:SLSscan}).
Furthermore,
\ifspringer
\begin{equation}
\small
\begin{split}
&\text{for any vector $Y$ and any $R.DWrite_p(Y)$ operation} \\ &\text{$dw$ by $p$, the latest $S.scan()$ operation by $p$ that is} \\ &\text{invoked before $dw$ returns $Y$.}\label{lemmaoldstate:scanbeforedwrite}
\end{split}
\end{equation}
\else
\begin{equation}
\begin{split}
&\text{for any vector $Y$ and any $R.DWrite_p(Y)$ operation $dw$ by $p$, the latest $S.scan()$ operation} \\ &\text{by $p$ that is invoked before $dw$ returns $Y$.}\label{lemmaoldstate:scanbeforedwrite}
\end{split}
\end{equation}
\fi
Together, (\ref{lemmaoldstate:oldscan}) and (\ref{lemmaoldstate:scanbeforedwrite}) imply the following: for any $R.DWrite(X')$ operation $dw$ by $p$ such that $seq(X') < seq(X)$, the latest $S.scan()$ operation invoked by $p$ before $dw$ must have been invoked before $time(sc)$.
This immediately implies that
\ifspringer
\begin{equation}
\small
\begin{split}
	&\text{each process performs at most one $R.DWrite(X')$} \\ &\text{operation that happens after $time(sc)$, with} \\ &\text{$seq(X') < seq(X)$}.\label{lemmaoldstate:katmostn}
\end{split}
\end{equation}
\else
\begin{equation}
\begin{split}
	&\text{each process performs at most one $R.DWrite(X')$ operation that happens after $time(sc)$,} \\ &\text{with $seq(X') < seq(X)$}.\label{lemmaoldstate:katmostn}
\end{split}
\end{equation}
\fi
Suppose $sc$ is performed by process $q$.
Since each process performs operations sequentially, (\ref{lemmaoldstate:oldscan}) and (\ref{lemmaoldstate:scanbeforedwrite}) together imply that no $R.DWrite_q(X')$ operation by $q$ happens after $time(sc)$.
This combined with (\ref{lemmaoldstate:katmostn}) implies the statement in the lemma.
\end{proof}


\begin{lemma}\label{lemmabetweenwrites}
Suppose that the value of $S$ is $X$ at some step $t$ in $T$.
Let $sc_Y \in T$ be an $S.scan()$ operation that returns $Y$ with $seq(Y) > seq(X)$, such that there does not exist an $S.scan()$ operation $sc_{Y'}$ that returns a vector $Y'$ with $seq(Y') > seq(X)$ and $time(sc_{Y'}) < time(sc_Y)$.
Then,
\begin{enumerate}[label=(\alph*)]
\item for any pair of $R.DWrite_p(X)$ operations $dw_1, dw_2$ by $p$ on line~\ref{seq:SLSDWrite} (i.e. during $SLscan$ operations) such that $time(dw_1) < time(dw_2) < time(sc_Y)$, there exists an $R.DWrite(X')$ operation that happens in $\bigl(time(dw_1), time(dw_2)\bigr)$, with $seq(X') < seq(X)$, and\label{lemmabetweenwrites:pair}
\item if $dw_1, \ldots, dw_k$ is a sequence of operations \\ $R.DWrite_p(X_1), \ldots, R.DWrite_p(X_k)$ performed by $p$ such that $seq(X_1) = \ldots = seq(X_k) = seq(X)$, then $k\leq 2n$.\label{lemmabetweenwrites:sequence}
\end{enumerate}
\end{lemma}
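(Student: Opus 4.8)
The plan is to derive both parts from the control flow of the repeat-until loop in the $SLscan$ method, combined with two auxiliary facts about the values written to $R$. Since every argument of an $R.DWrite$ is the result of an $S.scan()$ performed earlier by the same process (line~\ref{seq:SLUscan} or line~\ref{seq:SLSscan}), Observation~\ref{obsEQ} shows that any two such values with equal $seq$ are identical. And because $sc_Y$ is the earliest scan whose result has $seq > seq(X)$, Observation~\ref{obsnolookback} yields a monotonicity fact $(\star)$: every $R.DWrite(X')$ that happens before $time(sc_Y)$ satisfies $seq(X') \le seq(X)$, since its argument is a scan result obtained before $time(sc_Y)$. I also write $\rho$ for the first step at which $S$ holds a value of $seq > seq(X)$; then $\rho \le time(sc_Y)$, and by the monotonicity of the atomic object $S$ every scan returning $seq \le seq(X)$ occurs before $\rho$.

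For part~\ref{lemmabetweenwrites:pair} I would fix such a pair $dw_1, dw_2$ (both writing $X$ on line~\ref{seq:SLSDWrite}, so $X$ is itself a scan result). Immediately after $dw_1$ we have $R = X$. Because $dw_2$ writes its scanned vector $\ell = X$, the guard on line~\ref{seq:EQcond} held in $dw_2$'s iteration, so one of the two $R.DRead$ operations of that iteration returns a vector whose $vals$-component differs from $vals(X)$; as $dw_1$ and $dw_2$ lie in distinct iterations, this read occurs strictly between them. Hence the register changed after $dw_1$, and the last $R.DWrite$ preceding that read writes some $X'$ with $vals(X') \ne vals(X)$, happening in $\bigl(time(dw_1), time(dw_2)\bigr)$. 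Since $X'$ and $X$ are scan results with distinct $vals$, Observation~\ref{obsEQ} forces $seq(X') \ne seq(X)$, and $(\star)$ (applicable as this write precedes $dw_2 < time(sc_Y)$) gives $seq(X') \le seq(X)$; together, $seq(X') < seq(X)$.

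For part~\ref{lemmabetweenwrites:sequence} I would first note that at most one of $dw_1,\dots,dw_k$ lies on line~\ref{seq:SLUDWrite}: such a write reports a scan reflecting $p$'s own preceding $S.update$, so by line~\ref{seq:seqinc} and Observation~\ref{obsnolookback} the reported vectors of successive $SLupdate$ operations of $p$ have strictly increasing $seq$, and only one can equal $seq(X)$. The remaining writes lie on line~\ref{seq:SLSDWrite} and, by Observation~\ref{obsEQ}, all write one common vector $W$ with $seq(W) = seq(X)$. Listing these as $dw_{i_1},\dots,dw_{i_m}$ in time order gives $k \le m+1$, so it suffices to bound $m$. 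Between consecutive writes the argument of part~\ref{lemmabetweenwrites:pair}, now applied to $W$, produces an intervening \emph{disturbance} $R.DWrite(X')$ with $vals(X') \ne vals(W)$, hence $seq(X') \ne seq(X)$; I choose it via the \emph{first} read of $dw_{i_{j+1}}$'s iteration whenever that read already witnesses the inequality, and distinct intervals yield distinct disturbances. I then split disturbances into \emph{old} ($seq(X') < seq(X)$) and \emph{new} ($seq(X') > seq(X)$). All old disturbances occur after $dw_{i_1}$, which follows the first scan returning $W$, so Lemma~\ref{lemmaoldstate} (applied to $W$) bounds them by $n-1$. A new disturbance comes from a scan of $seq > seq(X)$, hence occurs after $\rho$; but a disturbance chosen via the first read precedes its iteration's scan, which returns $W$ and so precedes $\rho$ — thus a new disturbance is chosen via the second read and lies after $\rho$, forcing $dw_{i_{j+1}}$ after $\rho$. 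Any later $dw_{i_{j+2}}$ would then have to report $W$ from a scan occurring after $dw_{i_{j+1}} > \rho$, contradicting that a scan returning $W$ precedes $\rho$. So at most one interval (the last) carries a new disturbance, giving $m-1 \le (n-1)+1 = n$, whence $m \le n+1$ and $k \le n+2 \le 2n$ for $n \ge 2$ (the case $n=1$ being immediate), in fact a bound slightly stronger than claimed.

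The main obstacle is precisely the new-disturbance case: since $p$ may issue an $R.DWrite$ of $W$ long after scanning $W$, such a write can fall after $sc_Y$, so $(\star)$ fails and a disturbance of $seq > seq(X)$ becomes possible. The crux is the monotonicity observation that $p$ reports the stale vector $W$ only from a scan taken before $S$ ever advances past $seq(X)$, which pins every new disturbance — and the write following it — to after $\rho$ and shows a new disturbance can appear at most once. Pinning down the placement of the chosen disturbance relative to its iteration's scan, and hence relative to $\rho$, is the delicate timing step; the rest (the single line~\ref{seq:SLUDWrite} write, distinctness of disturbances, the constant in $2n$) is routine.
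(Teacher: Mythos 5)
Your proof is correct. Part~\ref{lemmabetweenwrites:pair} follows the paper's argument essentially verbatim: the guard on line~\ref{seq:EQcond} forces one of the two $R.DRead_p$ operations in $dw_2$'s iteration to return a vector whose $vals$-component differs from $vals(X)$, the last $R.DWrite$ preceding that read is the required disturbance, and Observation~\ref{obsEQ} together with the minimality of $sc_Y$ pins its sequence number strictly below $seq(X)$.

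For part~\ref{lemmabetweenwrites:sequence} you keep the paper's skeleton (at most one write from line~\ref{seq:SLUDWrite}; all remaining writes equal a common scan result $W$; one disturbance per gap between consecutive writes; Lemma~\ref{lemmaoldstate} bounds the disturbances of small sequence number by $n-1$) but you handle the writes occurring after $S$ advances past $seq(X)$ by a genuinely different route. The paper invokes Lemma~\ref{lemmaoldstate} a second time, with $Y$ in place of $X$, to bound the number of writes of $W$ occurring \emph{after} $time(sc_Y)$ directly by $n-1$, giving $1 + n + (n-1) = 2n$. You instead observe that a ``new'' disturbance (one with $seq(X') > seq(X)$) must postdate $\rho$, the first moment $S$ exceeds $seq(X)$, which forces the following write of $W$ past $\rho$ and thereby rules out any subsequent scan returning $W$; hence at most one gap carries a new disturbance, yielding the slightly sharper bound $k \le n+2$. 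Your justification that at most one member of the sequence comes from line~\ref{seq:SLUDWrite} (successive $SLupdate$ operations by $p$ report scans of strictly increasing $seq$) is also more direct than the paper's, which reasons through the state of $R$ and $S$ at the preceding write via Observation~\ref{obsmorerecent}. Both routes are sound; the paper's is shorter because it counts the late writes themselves rather than the disturbances between them, while yours concentrates the timing argument at the single instant $\rho$ and improves the constant.
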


\begin{proof}[Proof of Lemma~\ref{lemmabetweenwrites}~\ref{lemmabetweenwrites:pair}]
Since $dw_2$ writes the vector $X$, by the pseudocode in Algorithm~\ref{slsswithseq} the latest $S.scan()$ operation on line~\ref{seq:SLSscan} that was invoked by $p$ prior to $time(dw_2)$ must have returned $X$.
Let $V_1$ and $V_2$ be the vectors returned by the last executions of $R.DRead_p()$ by process $p$ prior to $time(dw_2)$ on lines \ref{seq:DRead1} and \ref{seq:DRead2}, respectively.
By the condition on line~\ref{seq:EQcond}, either $vals(V_1) \neq vals(X)$ or $vals(V_2) \neq vals(X)$.
Then either $V_1 \neq X$ or $V_2 \neq X$.
In either case, since $dw_1$ writes $X$ to $R$, there must exist an $R.DWrite(X')$ operation $dw_{X'}$ (with $X' \neq X$, and either $X' = V_1$ or $X' = V_2$) that happens after $time(dw_1)$, but before the $R.DRead_p()$ operation that returns a vector distinct from $X$.
Hence,
\begin{equation}
	time(dw_{X'}) \in \bigl(time(dw_1), time(dw_2)\bigr).\label{lemmabetweenwrites:interval}
\end{equation}
Since all $R.DWrite$ operations in Algorithm~\ref{slsswithseq} write the result of some previous $S.scan()$ operation to $R$, there must be an $S.scan()$ operation $sc_{X'}$ that returns $X'$ and happens prior to $time(dw_{X'})$.
Then by (\ref{lemmabetweenwrites:interval}), $time(sc_{X'}) < time(dw_2)$, and therefore $time(sc_{X'}) < time(sc_Y)$.
By the assumption of the lemma, $seq(X') \leq seq(X)$.
Since $X' \neq X$, $seq(X') \neq seq(X)$ by the contrapositive of Observation~\ref{obsEQ}, and therefore $seq(X') < seq(X)$.
\end{proof}

\begin{proof}[Proof of Lemma~\ref{lemmabetweenwrites}~\ref{lemmabetweenwrites:sequence}]
Suppose $dw_i$ is performed by an $SLupdate$ operation $up$ by $p$, for some $i \in \{1, \ldots, k\}$.
First assume that $i > 1$.
Then
\begin{equation}
	\text{$R$ contains $X_{i-1}$ at $time(dw_{i-1})$.}\label{lemmabetweenwrites:attimedwrite}
\end{equation}
Since $SLupdate$ operations perform at most a single $R.DWrite$ operation (line~\ref{seq:SLUDWrite}),
\begin{equation}
	time(dw_{i-1}) < time(inv(up)).\label{lemmabetweenwrites:dwritebeforeup}
\end{equation}
Suppose that the value of $S$ at $time(dw_{i-1})$ is $V_1$.
Then by (\ref{lemmabetweenwrites:attimedwrite}) and Observation~\ref{obsmorerecent}, $seq(V_1) \geq seq(X_{i-1}) = seq(X)$.
If the value of $S$ at $time(inv(up))$ is $V_2$, then by (\ref{lemmabetweenwrites:dwritebeforeup}) and Observation~\ref{obsnolookback}~\ref{obsnolookback:whole} $seq(V_2) \geq seq(V_1) \geq seq(X)$.
By the increment on line~\ref{seq:seqinc}, $p$ increases the sequence number of the vector stored by $S$ when it performs the $S.update$ operation on line~\ref{seq:Supdate} during $up$.
Then the vector $X'$ returned by the $S.scan()$ operation performed by $up$ on line~\ref{SLUscan} satisfies $seq(X') > seq(V_2) \geq seq(X)$.
By the pseudocode of Algorithm~\ref{slsswithseq}, $dw_i$ is an $R.DWrite_p(X')$ operation (i.e. $X_i = X'$), which contradicts the assumption that $seq(X_i) = seq(X)$.
Therefore, $i = 1$, and
\ifspringer
\begin{equation}
	\small
	\begin{split}
	&\text{only a single member of the sequence may belong} \\ &\text{to an $SLupdate$ operation.}\label{lemmabetweenwrites:belongtoslupdate}
	\end{split}
\end{equation}
\else
\begin{equation}
	\begin{split}
	&\text{only a single member of the sequence may belong to an $SLupdate$ operation.}\label{lemmabetweenwrites:belongtoslupdate}
	\end{split}
\end{equation}
\fi

By (\ref{lemmabetweenwrites:belongtoslupdate}), $dw_2, \ldots, dw_k$ are all performed by $SLscan$ operations (on line~\ref{seq:SLSDWrite}).
Let $sc_X$ be the earliest $S.scan()$ operation in $T$ that returns $X$ (i.e. there does not exist an $S.scan()$ operation $sc_X'$ which returns $X$ such that $time(sc_X') < time(sc_X)$).
Since $seq(X_i) = seq(X)$, Observation~\ref{obsEQ} implies that $X_i = X$, for all $i \in \{1, \ldots, k\}$.
Hence, since every $R.DWrite$ operation writes a vector returned by some earlier $S.scan()$ operation, 
\begin{equation}
	\text{$time(sc_X) < time(dw_i)$ for all $i \in \{1, \ldots, k\}$.}\label{lemmabetweenwrites:afterscan}
\end{equation}
By Lemma~\ref{lemmabetweenwrites}~\ref{lemmabetweenwrites:pair}, for every pair of consecutive operations $dw_i, dw_{i+1}$ that happen before $time(sc_Y)$, there exists an $R.DWrite(X')$ operation that happens in $\bigl(time(dw_i), time(dw_{i+1})\bigr)$ with $seq(X') < seq(X)$.
By Lemma~\ref{lemmaoldstate} there are at most $n - 1$ such $R.DWrite(X')$ operations that happen after $time(sc_X)$.
Combining this with (\ref{lemmabetweenwrites:afterscan}), we obtain the following:
\ifspringer
\begin{equation}
\small
\begin{split}
	&\text{There are at most $n$ $R.DWrite_p(X_i)$ operations} \\ &\text{performed during $SLscan$ operations that happen} \\ &\text{before $time(sc_Y)$.}\label{lemmabetweenwrites:beforescy}
\end{split}
\end{equation}
\else
\begin{equation}
\begin{split}
	&\text{There are at most $n$ $R.DWrite_p(X_i)$ operations performed during $SLscan$ operations} \\ &\text{that happen before $time(sc_Y)$.}\label{lemmabetweenwrites:beforescy}
\end{split}
\end{equation}
\fi
By Lemma~\ref{lemmaoldstate},
\ifspringer
\begin{equation}
\small
\begin{split}
	&\text{there are at most $n - 1$ $R.DWrite_p(X_i)$ operations} \\ &\text{performed during $SLscan$ operations that happen} \\ &\text{after $time(sc_Y)$.}\label{lemmabetweenwrites:afterscy}
\end{split}
\end{equation}
\else
\begin{equation}
\begin{split}
	&\text{there are at most $n - 1$ $R.DWrite_p(X_i)$ operations performed during $SLscan$ operations} \\ &\text{that happen after $time(sc_Y)$.}\label{lemmabetweenwrites:afterscy}
\end{split}
\end{equation}
\fi
Combining (\ref{lemmabetweenwrites:belongtoslupdate}), (\ref{lemmabetweenwrites:beforescy}), and (\ref{lemmabetweenwrites:afterscy}) yields $k \leq 2n$.
\end{proof}

\begin{lemma}\label{lem:sequence_numbers_per_scan}
	Let $op \in \Gamma(T)$ be an operation by process $p$, and suppose $op$ performs a sequence of $S.scan()$ operations $sc_1, sc_2, \ldots, sc_k$, such that $sc_i$ returns a vector $X_i$ for all $i \in \{1, \ldots, k\}$, and $seq(X_1) = \ldots = seq(X_k)$. Then $k\leq 2n^2 + 1$.
\end{lemma}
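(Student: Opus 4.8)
The plan is to reduce to the interesting case and then charge loop iterations to $R$-writes. First I would dispose of the case where $op$ is an $SLupdate$ operation: by the pseudocode of Algorithm~\ref{slsswithseq} such an operation performs exactly one $S.scan()$ (line~\ref{seq:SLUscan}), so $k=1\le 2n^2+1$. Hence assume $op$ is an $SLscan$ operation, so that $sc_i$ is the scan on line~\ref{seq:SLSscan} in the $i$-th iteration of the repeat-loop and $k$ is the number of iterations. Since $seq(X_1)=\dots=seq(X_k)$, Observation~\ref{obsEQ} gives a single vector $X$ with $X_1=\dots=X_k=X$. I would then show that $S$ is \emph{frozen} at $X$ throughout $\bigl[time(sc_1),time(sc_k)\bigr]$: the value of $S$ equals $X$ at both endpoints (as $S$ is atomic and returns $X$ to both scans), and by Observation~\ref{obsnolookback}\ref{obsnolookback:entry} each entry's sequence number is non-decreasing, so it is pinned to $seq(X[q])$ on the whole interval; since an entry's sequence number determines its value, $S=X$ on the entire interval.

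Next I would analyse why each iteration $i<k$ fails the loop-guard on line~\ref{seq:CHcond}. Because $\ell=X$ in every iteration, the guard fails in exactly one of two ways: (case a) $vals(s_1)\ne vals(X)$ or $vals(s_2)\ne vals(X)$, in which case $p$ executes $R.DWrite_p(\ell)=R.DWrite_p(X)$ on line~\ref{seq:SLSDWrite}; or (case b) $vals(s_1)=vals(s_2)=vals(X)$ but the flag $c_2$ is $True$, which by the specification of the ABA-detecting register means some $R.DWrite$ by another process occurred strictly between the two $R.DRead$ operations of this iteration (lines~\ref{seq:DRead1} and~\ref{seq:DRead2}). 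The idea is to charge every non-terminating iteration to an $R.DWrite$ operation lying in a time window that is disjoint across iterations: case-a iterations are charged to $p$'s own write after its scan, and case-b iterations to the last write preceding their second $DRead$. Since both charged writes set $R$ to a vector with values $vals(X)$, each charged write either equals $X$ (sequence number $seq(X)$, by Observation~\ref{obsEQ}) or is a \emph{stale} write with sequence number $<seq(X)$.

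To finish, I would bound the two kinds of charged writes separately. Every write with sequence number exactly $seq(X)$ writes $X$ (Observation~\ref{obsEQ}), and by Lemma~\ref{lemmabetweenwrites}\ref{lemmabetweenwrites:sequence} each of the $n$ processes performs at most $2n$ such writes, giving at most $2n^2$ in total; since the charged windows are pairwise disjoint, distinct iterations charge distinct writes. The stale writes (sequence number $<seq(X)$) occurring after the first scan that returns $X$ number at most $n-1$ by Lemma~\ref{lemmaoldstate}. Adding the single terminating iteration and folding in the stale-write count then yields $k\le 2n^2+1$.

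The routine parts are the reduction, freezing $S$, and the case-a bound (which is essentially Lemma~\ref{lemmabetweenwrites}\ref{lemmabetweenwrites:sequence} applied to $p$). The main obstacle is the case-b (ABA) iterations: I must argue that $c_2=True$ forces an intervening $R.DWrite$ in a window disjoint from those of the other iterations, verify that charging case-a and case-b iterations to the $X$-writes never reuses a write, and control the stale-write contribution via Lemma~\ref{lemmaoldstate} tightly enough to land on the constant $2n^2+1$. A secondary technical point is checking the hypothesis of Lemma~\ref{lemmabetweenwrites} (existence of a later scan $sc_Y$ whose returned sequence number exceeds $seq(X)$), or handling the case where no such scan exists, in which an even smaller bound should hold.
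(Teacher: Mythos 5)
Your proposal follows the paper's argument almost step for step: dispose of the $SLupdate$ case, collapse $X_1,\dots,X_k$ to a single vector $X$ via Observation~\ref{obsEQ}, observe that each of the first $k-1$ iterations fails the guard on line~\ref{seq:CHcond} either because the value vectors disagree (forcing $p$'s own $R.DWrite_p(X)$ on line~\ref{seq:SLSDWrite}) or because $c_2$ is $True$ (forcing an intervening $R.DWrite$ between the $DRead$s on lines~\ref{seq:DRead1} and~\ref{seq:DRead2}), charge each failing iteration to a write in its own disjoint window, and bound the writes with sequence number $seq(X)$ by $2n$ per process via Lemma~\ref{lemmabetweenwrites}~\ref{lemmabetweenwrites:sequence}, giving $k-1\le 2n^2$. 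The only place you diverge is the accounting in the $c_2$ case: the paper simply asserts that the intervening write there is an $R.DWrite(X)$, i.e.\ a write of exactly $X$, whereas you allow it to be a stale vector with the same values but smaller sequence number and absorb those via Lemma~\ref{lemmaoldstate}, paying an extra additive $n-1$ and landing at $k\le 2n^2+n$ rather than $2n^2+1$. This is a legitimate refinement rather than an error on your part --- the paper does not argue that the stale case cannot arise --- and the weaker constant is immaterial, since the lemma is only used to obtain the $O(n^2)$ per-operation bound feeding Theorem~\ref{thm:slss-complexity}. Two small items you should still write down: the claim that a charged write cannot have sequence number \emph{greater} than $seq(X)$ (a one-line consequence of Observations~\ref{obsmorerecent} and~\ref{obsnolookback}~\ref{obsnolookback:whole}, since $sc_k$ still returns $seq(X)$), and the hypothesis of Lemma~\ref{lemmabetweenwrites} concerning the scan $sc_Y$ --- the paper invokes the lemma without checking it either, and when no such scan exists every write in question trivially occurs ``before $sc_Y$'' and the same bound holds.
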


\begin{proof}
Clearly, every $SLupdate$ operation performs at most one $S.scan()$ operation.

Suppose $op$ is an $SLscan$ operation. 
Since $seq(X_1) = \ldots = seq(X_k)$, by Observation~\ref{obsEQ} $X_1 = \ldots = X_k$.
Let $X = X_1 = \ldots = X_k$.
Consider some $sc_i$ operation, with $i \in \{1, \ldots, k-1\}$. 
Operation $op$ must restart its main loop after $sc_i$, and so the condition on line~\ref{seq:CHcond} must hold at the end of the iteration of the main loop during which $sc_i$ is performed by $p$. There are two cases:

\begin{enumerate}[label=(\roman*)]
\item The vectors of values compared on line~\ref{seq:CHcond} are unequal. Then by the condition on line~\ref{seq:CHcond}, process $p$ performs an $R.DWrite_p(X)$ on line~\ref{SLSDWrite} following $sc_i$.
\item The Boolean flag returned by the $R.DRead_p()$ operation performed by $p$ on line~\ref{seq:DRead2} following $sc_i$ is $true$ (but the vectors of values compared on line~\ref{CHcond} are equal). Then, by the sequential specification of ABA-detecting registers, some $R.DWrite(X)$ operation occurs after the $R.DRead()$ operation performed by $p$ on line~\ref{seq:DRead1} prior to $sc_i$, and before the $R.DRead()$ operation performed by $p$ on line~\ref{seq:DRead2} following $sc_i$.
\end{enumerate}

In both cases, some $R.DWrite(X)$ operation happens during the iteration of the main loop in which $sc_i$ is performed. Lemma~\ref{lemmabetweenwrites}~\ref{lemmabetweenwrites:sequence} ensures that each process $p$ performs at most $2n$ $R.DWrite_p(X)$ operations. Since there are $n$ processes, we obtain $k - 1\leq 2n^2$, and hence $k \leq 2n^2 + 1$.
\end{proof}
\fi

\ifea
  In (FULL PAPER)\todo{this} we also provide an amortized analysis of the complexity of our implementation.
  The result is expressed in terms of the number of operations on the snapshot object $S$ and the ABA-detecting register $R$.
\fi
\begin{theorem}\label{thm:slss-complexity}
  \begin{enumerate}[label=(\alph{*})]
    \item Each $SLupdate$ performs at most one $S.update$, one $S.scan$, and one \ifspringer \else \\ \fi $R.DWrite$ operation.\label{thm:slss-complexity:A}
    \item For any transcript that contains $u$ $SLupdate$ and $s$ $SLscan$ invocations, the total number of operation invocations on $S$ and $R$ during $SLscan$ operations is $O(s+n^3u)$.\label{thm:slss-complexity:B}
  \end{enumerate}
  In particular, the implementation is lock-free provided that $S$ and $R$ are.
\end{theorem}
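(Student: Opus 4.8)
The plan is to prove (a) by inspection and (b) by reducing to a count of main-loop iterations, then bounding that count through the total number of $R.DWrite$ operations. Part (a) is immediate: the body of $SLupdate_p$ contains exactly one $S.update$ (line~\ref{Supdate}), one $S.scan$ (line~\ref{SLUscan}) and one $R.DWrite$ (line~\ref{SLUDWrite}), with no loop. For (b), I would work with the sequence-numbered variant Algorithm~\ref{slsswithseq}, which performs exactly the same shared-memory operations as Algorithm~\ref{slss}. The first observation is that each iteration of the main loop of $SLscan$ performs only $O(1)$ operations on $S$ and $R$ (two $R.DRead$s on lines~\ref{DRead1} and~\ref{DRead2}, one $S.scan$ on line~\ref{SLSscan}, and at most one $R.DWrite$ on line~\ref{SLSDWrite}). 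Hence the total number of invocations on $S$ and $R$ during $SLscan$ operations is $\Theta(I)$, where $I$ is the total number of main-loop iterations across all $SLscan$ operations, and it suffices to show $I=O(s+n^3u)$.

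Next I would classify iterations. Each $SLscan$ has exactly one terminating iteration, contributing $s$ in total. Every other (wasted) iteration fails the loop-guard on line~\ref{CHcond}, so one of two things holds: type~(A), the three vectors read on lines~\ref{DRead1},~\ref{SLSscan},~\ref{DRead2} are not all equal, in which case the iteration performs an $R.DWrite$ on line~\ref{SLSDWrite}; or type~(B), the three vectors are equal but the flag $c_2$ is $true$, in which case the iteration performs no $R.DWrite$ but, by the sequential specification of the ABA-detecting register, some $R.DWrite$ by \emph{another} process occurred between this iteration's two $R.DRead$ operations (the process itself performs no $DWrite$ between lines~\ref{DRead1} and~\ref{DRead2}).

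The heart of the count is a global bound on the number $W$ of $R.DWrite$ operations. Every written vector is the result of an earlier $S.scan$, so its $seq$ value equals $seq(S)$ at that scan; since $seq(S)$ increases by exactly one with each completed $S.update$ (line~\ref{seq:seqinc}) there are at most $u+1$ distinct $seq$ values among written vectors. By Lemma~\ref{lemmabetweenwrites}(b) each process performs at most $2n$ $DWrite$s with any fixed $seq$ value, so $W\leq 2n\cdot(u+1)\cdot n=O(n^2u)$. Type~(A) iterations each perform a distinct $R.DWrite$, so there are at most $W=O(n^2u)$ of them. For type~(B) iterations I would charge each to one $R.DWrite$ lying inside its $DRead$ window; since at any instant a process is inside at most one such window, each $R.DWrite$ is charged by at most $n-1$ type~(B) iterations, giving at most $(n-1)W=O(n^3u)$ of them. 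Summing, $I\leq s+O(n^2u)+O(n^3u)=O(s+n^3u)$, which yields (b). Finally, for lock-freedom I would argue that if $S$ and $R$ are lock-free then an $SLupdate$ completes within finitely many of its own steps by (a), and an $SLscan$ that never returns would perform infinitely many iterations and hence force infinitely many $R.DWrite$s; as $W$ is finite for finitely many updates, this forces infinitely many $S.update$s, i.e. infinitely many $SLupdate$s, to complete, so some operation always makes progress.

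The main obstacle I anticipate is precisely the type~(B) analysis: one must exploit the ABA-detecting flag to certify that a ``silent'' non-terminating iteration (equal values on all three reads) is nonetheless forced by a concurrent $DWrite$ of another process, and the multiplicity-$n$ charging argument there is exactly what converts the $O(n^2u)$ bound on writes into the $O(n^3u)$ bound on iterations. Establishing the finiteness of $W$ cleanly via Lemma~\ref{lemmabetweenwrites}(b) (together with Lemma~\ref{lemmaoldstate}) is the supporting technical point that makes the whole count go through.
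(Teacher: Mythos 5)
Your proposal is correct and follows essentially the same route as the paper's proof: part (a) by inspection, and part (b) by splitting each wasted loop iteration into the ``unequal vectors'' and ``flag set'' cases, charging each to an $R.DWrite$ occurring in its window, and bounding the writes via Lemma~\ref{lemmabetweenwrites}(b) together with the $O(u)$ bound on distinct sequence values. The only difference is bookkeeping -- the paper packages the charge as a per-operation, per-sequence-value bound (Lemma~\ref{lem:sequence_numbers_per_scan}) and sums over processes, whereas you first bound the global number of $R.DWrite$s by $O(n^2u)$ and then charge with multiplicity $n$ -- but the underlying count is identical.
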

\iffull
\begin{proof}
  Part~\ref{thm:slss-complexity:A} follows immediately from the pseudocode of Algorithm~\ref{slss}.
  
  We now prove part~\ref{thm:slss-complexity:B}.
  For any process let $s_p$ denote the number of $SLscan$ invocations by process $p$.
  Further, for $i\in\{1,\dots,s_p\}$ let $k_{p,i}$ denote the total number of times process $p$ calls $S.scan()$ in line~\ref{seq:SLSscan} during its $i$-th $SLscan()$ operation.
  From Lemma~\ref{lem:sequence_numbers_per_scan} we obtain $\sum_{i=1}^{s_p}k_{p,i}=O(un^2+s_p)$ for each process $p$.
  Using $s=\sum_p s_p$ we obtain that the total number of $S.scan()$ calls during all $SLscan()$ operations is
  \begin{displaymath}
    \sum_{p}\sum_{i=1}^{s_p}k_{p,i}
    =
    O\left(\sum_p un^2+s_p\right)
    =
    O(s+un^3).
  \end{displaymath}
\end{proof}
\fi
As mentioned previously, we can use any lock-free or wait-free linearizable snapshot implementation for $S$.
Instead of an atomic ABA-detecting register $R$, we can use the lock-free strongly linearizable one from
\ifthesis
Chapter
\else
Section
\fi
\ref{slabasection}.
Thus, Theorems~\ref{thm:main-ABA}, \ref{stronglin}, and~\ref{thm:slss-complexity} yield Theorem~\ref{thm:main-snapshot}.

\ifthesis
\section{Remarks}
\else
\subsection{Remarks}
\fi

In this section we presented the first lock-free strongly linearizable implementation of a snapshot object that requires only bounded space.
The time complexity of our implementation is unfortunate; even with a small number of processes, the use of our implementation seems impractical due to its $O(n^3)$ runtime.
However, when contention is low (i.e. there are few overlapping calls to $SLupdate$ and $SLscan$), our snapshot implementation performs reasonably well.
Notice that in a low-contention scenario, $SLscan$ operations are seldom forced to repeat their main loop, and therefore their runtime is dominated by the $S.scan$ call on line~\ref{SLSscan}.
Hence, choosing a reasonably efficient linearizable snapshot implementation for $S$ results in an efficient strongly linearizable snapshot.
However, when contention is high and the number of processes is sufficiently large (and $SLupdate$ is called sufficiently often), it becomes increasingly likely that $SLscan$ operations \emph{never} terminate, as they are interrupted infinitely often by $S.update$ or $R.DWrite$ calls performed during other concurrent $SLupdate$ or $SLscan$ operations.
A natural extension of this work would aim to develop more efficient implementations.

Previous strongly linearizable implementations of counters and max-registers (such as the lock-free modification of the max-register in \cite{PossImposs}) required an unbounded number of unbounded registers.
Our bounded snapshot implementation can be used to implement a lock-free strongly linearizable counter or max-register using only a bounded number of registers.
Note that these implementations still inherently require registers to store unbounded values, since the state space of both counters and (unbounded) max-registers is infinite.

\ifthesis
\chapter{General Construction}\label{sectionGEN}
\else
\section{General Construction}\label{sectionGEN}
\fi
Aspnes and Herlihy \cite{generalwaitfree} defined the large class of \emph{simple} types, and demonstrated that any type in this class has a wait-free linearizable implementation from atomic multi-reader multi-writer registers.
Simple types require that any pair of operations either commute, or one overwrites the other (see below for a formal definition).
Algorithm~\ref{gen} depicts Aspnes and Herlihy's general wait-free linearizable implementation of an arbitrary simple type $\mathscr{T}$.
Processes communicate only through an atomic snapshot object, $root$ (which can be replaced with a wait-free linearizable implementation from registers).
Suppose $\mathscr{T}$ supports the set of invocation descriptions $\mathcal{O}$.
Then for every $invoke \in \mathcal{O}$, $invoke$ is implemented by the $execute(invoke)$ method.
Aspnes and Herlihy proved that Algorithm~\ref{gen} is linearizable with respect to the sequential specification of the simulated type $\mathscr{T}$.
We prove that it is in fact strongly linearizable, and thus it remains strongly linearizable if $root$ is a strongly linearizable snapshot object.
Thus, using our lock-free snapshot implementation from 
\ifthesis
Chapter
\else
Section
\fi
\ref{sectionSLSS} for $root$ yields Theorem~\ref{thm:main-general-construction}.

Throughout this
\ifthesis
chapter,
\else
section,
\fi
we only consider types $\mathscr{T} = (\mathcal{S}, s_0, \mathcal{O}, \mathcal{R}, \delta)$ such that, for every $s \in \mathcal{S}$ and for every $invoke \in \mathcal{O}$, $\delta(s, invoke)$ is defined.
Two sequential histories $H$ and $H'$ are \emph{equivalent} if, for any sequential history $S$, $H \circ S$ is valid if and only if $H' \circ S$ is valid.
The invocation events $inv(op_1)$ and $inv(op_2)$ \emph{commute} if, for all sequential histories $H$ such that $H \circ op_1$ and $H \circ op_2$ are valid, $H \circ op_1 \circ op_2$ and $H \circ op_2 \circ op_1$ are valid and equivalent.
The invocation event $inv(op_2)$ \emph{overwrites} $inv(op_1)$ if, for all sequential histories $H$ such that $H \circ op_1$ and $H \circ op_2$ are valid, $H \circ op_1 \circ op_2$ is valid and equivalent to $H \circ op_2$.
As a shorthand, we say $op_1$ commutes with (resp. overwrites) an operation $op_2$ if $inv(op_1)$ commutes with (resp. overwrites) $inv(op_2)$.
An invocation description $invoke_1$ commutes with (resp. overwrites) the invocation description $invoke_2$ if, for all invocation events $inv(op_1) = (O, invoke_1, id_1)$ and $inv(op_2) = (O, invoke_2, id_2)$, $inv(op_1)$ commutes with (resp. overwrites) $inv(op_2)$.
These properties allow us to describe the class of \emph{simple} types.

\begin{definition}
Let $\mathscr{T}$ be a type that supports a set of invocation descriptions $\mathcal{O}$. Then $\mathscr{T}$ is \emph{simple} if, for every pair of invocation descriptions $invoke_1, invoke_2 \in \mathcal{O}$, either $invoke_1$ and $invoke_2$ commute, or one overwrites the other.
\end{definition}


For the rest of this section, let $\mathscr{T} = (\mathcal{S}, s_0, \mathcal{O}, \mathcal{R}, \delta)$ be some simple type, and let $O$ be an object of type $\mathscr{T}$ implemented by Algorithm~\ref{gen}; that is, each invocation description $invoke \in \mathcal{O}$ is implemented by the $execute(invoke)$ method.
Let $\mathcal{T}$ be the set of all transcripts on $O$.
For ease of notation, for every operation $ex$ on $O$ let $invoc(ex)$ denote the invocation description of $ex$; that is, if $inv(ex) = (O, invoke, id)$, then $invoc(ex) = invoke$.

Algorithm~\ref{gen} maintains a representation of a graph in a shared snapshot object called $root$.
Each entry of the $root$ variable contains a reference to an instance of type $node$, which has three fields: $invocation$, $response$, and $preceding$.
The $invocation$ and $response$ fields contain an invocation description and response, respectively.
The $preceding$ field is an array containing $n$ references to nodes.
For a node $x$, $x.preceding[i]$ contains either $\bot$ or a pointer to a node $y$.

A \emph{precedence graph} $G = (V, E)$ of a history $H$ is a directed graph whose vertices are operations, such that for any $op_1, op_2 \in V$ there is a directed path of length at least 1 from $op_1$ to $op_2$ in $G$ if and only if $op_1 \xrightarrow{H} op_2$.
Notice that the happens-before relation is the transitive closure of $G$.

The following notion of \emph{dominance} is used to break ties between mutually overwriting operations.
\begin{definition}\label{defdominate}
An invocation event $inv(op_2)$ of process $p$ \emph{dominates} $inv(op_1)$ of process $q$ if either
\begin{enumerate}[label=(\arabic*)]
	\item $inv(op_2)$ overwrites $inv(op_1)$ but not vice-versa, or
	\item $inv(op_1)$ and $inv(op_2)$ overwrite each other and $p > q$.
\end{enumerate}
\end{definition}
An invocation description $invoke_2$ of process $p$ dominates $invoke_1$ of process $q$ if, for all pairs of invocation events $inv(op_1) = (O, invoke_1, id_1)$ and $inv(op_2) = (O, invoke_2, id_2)$, $inv(op_2)$ dominates $inv(op_1)$.
A \emph{linearization graph} $lingraph(G)$ is constructed by adding directed edges to the precedence graph $G$ as follows:
First, an arbitrary topological order $op_1,\dots,op_k$ of $G$ is fixed.
Then all pairs $(i,j)$, $1\leq i<j\leq k$, are considered in lexicographical order, and if one of the two invocation descriptions in $\{op_i,op_j\}$ dominates the other, an edge is added from the dominated operation to the dominating one, provided that edge does not close a cycle.
A precise description of the construction of a linearization graph is provided in the $lingraph$ method of Algorithm~\ref{gen}.

Suppose $T \in \mathcal{T}$ is a transcript on $O$.
Let $ex \in \Gamma(T)$ be an operation on $O$ by process $p$.
Process $p$ begins $ex$ by performing a $root.scan()$ operation on line~\ref{scan}.
This $root.scan()$ operation returns a vector $view$ of references to nodes.
In line~\ref{gen:constprec} process $p$ computes a precedence graph $G$ of operations on $O$ using a straightforward graph search, starting with the nodes stored in $view$ (we will explain the $precgraph$ method in more detail later).
It then calculates a sequential history $H$ on line~\ref{lin} by topologically sorting a linearization graph $lingraph(G)$.
Now $p$ constructs a new node $x$ that stores the invocation description $invoc(ex)$, and creates the invocation event $inv(op) = (O, invoc(ex), id)$, for some integer $id$.
Process $p$ then constructs a response event $rsp(op) = (resp, id)$ such that $H \circ inv(op) \circ rsp(op)$ is valid with respect to the sequential specification of $\mathscr{T}$ (lines~\ref{createentry}-\ref{getresp}).
The existence of the response $resp$ is guaranteed by our assumption that for every $s \in \mathcal{S}$ and every $invoke \in \mathcal{O}$, $\delta(s, invoke)$ is defined.
Note that the node referenced by $view[q]$ is the most recent node written to $root$ by process $q$ prior to the $root.scan()$ operation performed by $ex$.
For every process $q \in \{1, \ldots, n\}$, $p$ stores $view[q]$ in $x.preceding[q]$ on line~\ref{setprec}.
Finally, $p$ writes the address of the constructed node to the snapshot object on line~\ref{slupdate}.
For any operation $ex \in \Gamma(T)$ such that $ex^{\ref{slupdate}} \in T$, let $node(ex)$ be the node constructed by $ex$ (i.e. $node(ex)$ is the node whose address is written to $root$ during $ex^{\ref{slupdate}}$).
Since each operation instantiates a new node instance, and each node reference remains in the shared precedence graph representation forever, the algorithm uses unbounded space.

    \begin{algorithm}       
      \caption{\small Implementation of a simple type~\cite{generalwaitfree}}
      \label{gen}
        
		\SetKwProg{Fn}{Function}{:}{}
		\SetKwFor{Struct}{struct}{:}{}
		
		\nonl\Struct{node} {
			\nonl invocation description, $invocation \in \mathcal{O}$\;
			\nonl response, $response \in \mathcal{R}$\;
			\nonl pointers to nodes, $preceding[1 \ldots n]$\;
		}
        
        \nonl\textbf{shared}\;
        \nonl\quad atomic snapshot object $root = (null, \ldots, null)$\;        
        
        \nonl\;
        \nonl\Fn{lingraph$(G)$} {
        	let $op_1, \ldots, op_k$ be a topological sort of $G$\;
        	$L \gets G$\;
        	\For{$i \in \{1, \ldots, k-1\}$} {
        		\For{$j \in \{i+1, \ldots, k\}$} {
        			\If{$op_i$ dominates $op_j$ and adding $(op_j, op_i)$ to $L$ does not complete a cycle} {
        				add $(op_j, op_i)$ to $L$\;
        			}
        			\If{$op_j$ dominates $op_i$ and adding $(op_i, op_j)$ to $L$ does not complete a cycle} {
        				add $(op_i, op_j)$ to $L$\;
        			}
        		}
        	}
        	\Return $L$
        }

        \nonl\;
        \nonl\Fn{execute$_p(invoke)$}{
        	$view \gets root.scan()$\;	\label{scan}
        	$G \gets precgraph(view)$\;	\label{gen:constprec}
        	$H \gets$ topological sort of $lingraph(G)$\;	\label{lin}
        	initialize a new $node$ $e = \{\bot, \bot, \bot\}$\;	\label{createentry}
        	$e.invocation \gets invoke$\;		\label{invoc}
        	$inv(op) \gets (O, invoke, id)$\;	\label{invevent}
        	$rsp(op) \gets (resp, id)$ such that $H \circ inv(op) \circ rsp(op)$ is valid;	\label{getresp}
        	$e.response \gets resp$\;
        	\For{$i \in \{1, \ldots, n\}$}{					\label{precloop}
        		$e.preceding[i] \gets view[i]$		\label{setprec}
        	}
        	$root.update_p($address of $e)$\;		\label{slupdate}
        	\Return $e.response$
		}        
    \end{algorithm}

We proceed by first outlining an argument for the linearizability of Algorithm~\ref{gen}.
On line~\ref{getresp}, an operation ensures that it calculates a valid response with respect to the sequential history $H$ constructed on line~\ref{lin}.
Then assuming that $H$ is valid, the algorithm is correct as long as writing out a node constructed by a pending operation does not invalidate the response of some concurrent operation.
Suppose $ex_1$ and $ex_2$ are concurrent operations.
For this example, assume no other operations are concurrent with either $ex_1$ or $ex_2$.
Hence, both $ex_1$ and $ex_2$ construct the same precedence graph (call it $G$) on line~\ref{gen:constprec}.
For simplicity, assume that $ex_1$ and $ex_2$ both compute the same topological ordering (call it $H$) of $lingraph(G)$ on line~\ref{lin} (one of the key results of Aspnes and Herlihy \cite{generalwaitfree} is that every pair of topological orderings of any linearization graph are equivalent --- we state this result more formally in Section~\ref{sec:genproofsl}).
Hence, on lines \ref{invevent} and \ref{getresp} $ex_1$ and $ex_2$ construct operations $op_1$ and $op_2$, respectively, such that $H \circ op_1$ and $H \circ op_2$ are valid.
Since the simulated type is simple, either $invoc(ex_1)$ and $invoc(ex_2)$ commute, or one overwrites the other.
If $invoc(ex_1)$ and $invoc(ex_2)$ commute, then it does not matter which node is written to $root$ first, since $H \circ op_1 \circ op_2$ and $H \circ op_2 \circ op_1$ are both valid and equivalent, by definition of commutativity.
If $invoc(ex_2)$ overwrites $invoc(ex_1)$, then again it does not matter which operation writes its node to $root$ first, since any operation that views the precedence graph after both $ex_1$ and $ex_2$ have written their nodes to $root$ adds a dominance edge from $op_1$ to $op_2$ in the linearization graph.
Hence, in any topological ordering of this linearization graph $op_1$ occurs immediately before $op_2$ (recall our assumption that no other operations are concurrent with either $ex_1$ or $ex_2$).
Since $H \circ op_1 \circ op_2$ is valid by the definition of overwriting, the responses of both $op_1$ and $op_2$ are valid.
A symmetric argument applies if $invoc(ex_1)$ overwrites $invoc(ex_2)$.

We now outline the intuition behind our strong linearization function for Algorithm~\ref{gen}; a full proof of the strong linearizability of Algorithm~\ref{gen} is provided in Section~\ref{sec:genproofsl}.
Suppose $T \in \mathcal{T}$ is a transcript, and let $ex \in \Gamma(T)$ be an operation by process $p$.
After $p$ performs the $root.scan()$ operation on line~\ref{scan} during $ex$, the response of $ex$ is entirely determined, since it is chosen based on the contents of the precedence graph constructed from $view$ on line~\ref{gen:constprec}.
If during $ex$, an operation $ex'$ writes its constructed node to $root$, and $invoc(ex')$ dominates $invoc(ex)$, then $ex$ may be linearized immediately before $ex'$, since $ex$ no longer has any effect on the responses calculated by subsequent operations.
Hence, operations may only linearize when some node is written to $root$ on line~\ref{slupdate}, as it is entirely determined which operations linearize when a particular node is written to the graph (i.e. all concurrent operations that are dominated by the writing operation).

\iffull

\ifthesis
\section{Storing a Precedence Graph}\label{sec:storeprecgraph}
\else
\subsection{Storing a Precedence Graph}\label{sec:storeprecgraph}
\fi
In this section we prove that a precedence graph may be extracted from $root$. More specifically, we show that for any vector $view$ returned by a $root.scan()$ operation, $precgraph(view)$ returns a precedence graph of some history (the implementation of $precgraph$ is provided in Algorithm~\ref{alg:precgraph}).

The $precgraph(view)$ method begins by performing a $nodegraph(view)$ operation on line~\ref{precgraph:getnodegraph}, which returns a graph whose vertices are nodes.
A $nodegraph(view)$ operation begins with a straightforward graph search starting from the nodes present in $view$.
First, a process $p$ performing a $nodegraph(view)$ operation initializes an empty graph $G = (V, E)$ and an empty queue $queue$ (lines \ref{precgraph:initgraph} and \ref{precgraph:initqueue}).
Next, $p$ adds all nodes referenced in $view$ to both $queue$ and the vertex set $V$ during the loop on line~\ref{precgraph:initloop}.
The main loop of the $nodegraph$ method begins on line~\ref{precgraph:mainloop}, and continues until all nodes have been removed from $queue$.
During the main loop, $p$ first removes a node $node(ex)$ from $queue$ on line~\ref{precgraph:dequeue}.
For each node $node(ex')$ referenced in $node(ex).preceding$, $p$ adds the edge $\bigl(node(ex'), node(ex)\bigr)$ to $E$ on line~\ref{precgraph:addedge};
if $node(ex')$ is not present in $V$, then $p$ adds $node(ex')$ to $queue$ and $V$ on lines \ref{precgraph:addtoqueue} and \ref{precgraph:vaddmain}.
After the main while-loop has terminated, the $nodegraph(view)$ operation returns the computed graph $G$.
In the $precgraph(view)$ method, after the $nodegraph(view)$ operation has responded, process $p$ computes a topological ordering $node(ex_1), \ldots, node(ex_k)$ of the vertices in $G$.
Next, process $p$ initializes an array $id[1\ldots n] = [1, \ldots, 1]$ on line~\ref{precgraph:initid}.
Process $p$ then begins a for-loop on line~\ref{precgraph:opforloop}; during this loop, $p$ computes an operation on $O$ for each node present in $G$.
That is, for each $node(ex_i)$ present in the topological ordering of $G$, $p$ constructs an operation $op_i$, with $inv(op_i) = (O, node(ex_i).invocation), opid)$, and $rsp(op_i) = (node(ex_i), opid)$, where $opid$ is an integer.
The operation identifier $opid$ is computed on line~\ref{precgraph:opid}, where it is assigned the value $(id[p]\cdot n) + (p-1)$, where $p$ is the process that performed $ex_i$.
The final statement of the loop increments $id[p]$ by 1.
In general, the operation constructed for the $j$-th node in the topological ordering of $G$ which was written by process $p$ is assigned the identifier $(j \cdot n) + (p-1)$.
This way, each operation constructed during the loop is assigned a unique identifier (note that if $ex_j$ is performed by process $p$, and $inv(ex_j)$ has the identifier $opid$, then $opid \equiv (p-1) \pmod{n}$).
After the for-loop, on line~\ref{precgraph:replaceverts} $p$ replaces each $node(ex_i) \in V$ with $op_i$ (each edge is replaced similarly on line~\ref{precgraph:replaceedges}).

	\begin{algorithm}\label{alg:precgraph}
		\caption{Extraction of a precedence graph from a vector of node references.}
		\SetKwProg{Fn}{Function}{:}{}
		
		\nonl\Fn{precgraph$(view)$} {
			$G = (V, E) \gets nodegraph(view)$\;	\label{precgraph:getnodegraph}
			let $node(ex_1), \ldots, node(ex_k)$ be a topological sort of $G$\;	\label{precgraph:lettopsort}
			let $id[1\ldots n] = [1, \ldots, 1]$\;	\label{precgraph:initid}
			\For{$i \in \{1, \ldots, k\}$} {	\label{precgraph:opforloop}
				suppose $ex_i$ is by process $p$\;
				let $opid = (id[p]\cdot n) + (p-1)$\;	\label{precgraph:opid}
				let $inv(op_i) = (O, node(ex_i).invocation, opid)$\;	\label{precgraph:invevent}
				let $rsp(op_i) = (node(ex_i).response, opid)$\;	\label{precgraph:rspevent}
				$id[p] \gets id[p] + 1$\; \label{precgraph:idinc}
			}
        	replace $node(ex_i) \in V$ with $op_i$, for all $i \in \{1, \ldots, k\}$\;\label{precgraph:replaceverts}
        	replace each $\bigl(node(ex_i), node(ex_j)\bigr) \in E$ with $(op_i, op_j)$\;\label{precgraph:replaceedges}
        	\Return $G$
		}
		
		\nonl\;
        \nonl\Fn{nodegraph$(view)$} {
        	let $G = (V, E)$ be an empty graph\;	\label{precgraph:initgraph}
			let $queue$ be an empty queue\;				\label{precgraph:initqueue}
			\For{$v \in view$} {		\label{precgraph:initloop}
				\If{$v \neq null$} {
					let $node(ex)$ be the node addressed by $v$\;
					enqueue $node(ex)$ to $queue$\;
					$V \gets V \cup \{node(ex)\}$\;	\label{precgraph:vaddinit}
				}
			}
			\While{$queue$ is not empty} {	\label{precgraph:mainloop}
        		dequeue $node(ex)$ from $queue$\;	\label{precgraph:dequeue}
        		\For{each $node(ex')$ referenced in $node(ex).preceding$} {	\label{precgraph:neighbours}
        			$E \gets E \cup \bigl\{(node(ex'), node(ex))\bigr\}$\; \label{precgraph:addedge}
        			\If{$node(ex') \not\in V_N$} {	\label{precgraph:checkinv}
        				enqueue $node(ex')$ to $queue$\; \label{precgraph:addtoqueue}
        				$V \gets V \cup \{node(ex')\}$\;	\label{precgraph:vaddmain}
        			}
        		}
        	}			\label{precgraph:endloop}
        	\Return $G$
       }
    \end{algorithm}

Let $T \in \mathcal{T}$ be a transcript, and let $\mathcal{G}(T) = (V_T, E_T)$  be a graph such that
\begin{align*}
\ifspringer \small \else\fi
	\text{$V_T$} &\small\text{$\;= \{node(ex)\, :\, ex^{\ref{slupdate}} \in T\}$, and} \\
	\small\text{$E_T$} &\small\text{$\,= \Bigl\{\bigl(node(ex_1), node(ex_2)\bigr)\, :\, node(ex_1), node(ex_2) \in V_T$} \\ &\small\text{$\qquad \wedge \; \exists p \in \{1, \ldots, n\} \text{ s.t. } node(ex_2).preceding[p] =$} \\ &\small\text{$\qquad$address of $node(ex_1)\Bigr\}$.}
\end{align*}
We will show that if $root$ contains the vector $view$ after all steps in $T$ are performed, then $nodegraph(view) = \mathcal{G}(T)$.

\begin{observation}\label{obs:edgeearlier}
	If $\bigl(node(ex_1), node(ex_2)\bigr) \in E_T$, \ifspringer \\ \else\fi then $time(ex_1^{\ref{slupdate}}) < time(ex_2^{\ref{scan}})$.
\end{observation}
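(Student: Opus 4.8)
The plan is to unfold the definition of the edge set $E_T$ and trace the address of $node(ex_1)$ back to the single write that produced it. By definition of $E_T$, the edge $\bigl(node(ex_1), node(ex_2)\bigr)$ exists precisely because there is some process $p$ with $node(ex_2).preceding[p]$ equal to the address of $node(ex_1)$. So the first step is to connect the $preceding$ field of $node(ex_2)$ to a concrete shared-memory read performed by $ex_2$.

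First I would observe that $node(ex_2)$ is the node $ex_2$ constructs in line~\ref{createentry}, and its $preceding$ array is filled in the loop on lines~\ref{precloop}--\ref{setprec} from the vector $view$ returned by $ex_2$'s own $root.scan()$ on line~\ref{scan}. Hence $node(ex_2).preceding[p] = view[p]$, and therefore the scan $ex_2^{\ref{scan}}$ returns a vector whose $p$-th entry is the address of $node(ex_1)$.

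Next I would argue that this address can only have entered entry $p$ of $root$ through $ex_1^{\ref{slupdate}}$. Each execution of $execute$ allocates a brand-new node on line~\ref{createentry} and writes its address to $root$ exactly once, on line~\ref{slupdate}; distinct operations thus produce distinct addresses, so the address of $node(ex_1)$ is written to $root$ only during $ex_1^{\ref{slupdate}}$. Because $root$ is a single-writer snapshot, entry $p$ is writable only by process $p$ via $root.update_p$; since the address of $node(ex_1)$ appears in entry $p$, it must be that $ex_1$ is performed by $p$ and that $ex_1^{\ref{slupdate}}$ is the unique write of this address into entry $p$ of $root$.

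Finally, using the atomicity of $root$, the value the atomic scan $ex_2^{\ref{scan}}$ reads for entry $p$ is the value written by the most recent $root.update_p$ preceding it; for that value to be the address of $node(ex_1)$, the write $ex_1^{\ref{slupdate}}$ must occur strictly before the scan $ex_2^{\ref{scan}}$ in $T$, which is exactly $time(ex_1^{\ref{slupdate}}) < time(ex_2^{\ref{scan}})$. The only real subtlety, and the step I would take the most care with, is the uniqueness argument: one must rule out the possibility that the same address is read without having been written by $ex_1$, which rests on the fact that each operation creates a fresh node and writes it exactly once, together with the single-writer semantics of $root$.
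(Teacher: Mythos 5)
Your proof is correct and follows essentially the same route as the paper's: unfold the definition of $E_T$, note that only $ex_2$ fills $node(ex_2).preceding$ from the $view$ returned by $ex_2^{\ref{scan}}$, and then invoke the sequential specification of the snapshot to place $ex_1^{\ref{slupdate}}$ before that scan. The extra care you take with address freshness and the single-writer property is a detail the paper leaves implicit, but it is the same argument.
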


\begin{proof}
	Since $\bigl(node(ex_1), node(ex_2)\bigr) \in E_T$, by the definition of $E_T$ $node(ex_2).preceding[p]$ must contain the address of $node(ex_1)$, for some process $p$.
	On line~\ref{createentry} of Algorithm~\ref{gen}, each operation initializes its own node, and only modifies the fields of that node.
	Hence, only $ex_2$ modifies the $preceding$ field of $node(ex_2)$.
	Thus, $ex_2$ must place the address of $node(ex_1)$ into $node(ex_2).preceding[p]$ when it performs line~\ref{setprec} (during the $p$-th iteration of the for-loop on line~\ref{precloop}).
	Therefore, the address of $node(ex_1)$ must be in $view[p]$, where $view$ is the vector returned by $ex_2^{\ref{scan}}$.
	Then by the sequential specification of snapshot objects, $time(ex_1^{\ref{slupdate}}) < time(ex_2^{\ref{scan}})$.
\end{proof}

\begin{observation}\label{obs:pathearlier}
	If there is a path of length at least 1 from $node(ex_1)$ to $node(ex_2)$ in $\mathcal{G}(T)$, then $time(ex_1^{\ref{slupdate}}) < time(ex_2^{\ref{scan}})$.
\end{observation}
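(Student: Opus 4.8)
The plan is to prove Observation~\ref{obs:pathearlier} by induction on the length $\ell \geq 1$ of the directed path from $node(ex_1)$ to $node(ex_2)$ in $\mathcal{G}(T)$, using the single-edge statement of Observation~\ref{obs:edgeearlier} as the engine. For the base case $\ell = 1$, the path is a single edge $\bigl(node(ex_1), node(ex_2)\bigr) \in E_T$, so the claim is exactly Observation~\ref{obs:edgeearlier} and there is nothing further to do.

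For the inductive step I would assume the statement for all paths of length $\ell - 1$ and take a path of length $\ell \geq 2$ from $node(ex_1)$ to $node(ex_2)$. I would split it into an initial edge $\bigl(node(ex_1), node(ex_3)\bigr) \in E_T$ followed by a subpath of length $\ell - 1$ from $node(ex_3)$ to $node(ex_2)$. Observation~\ref{obs:edgeearlier} applied to the first edge gives $time(ex_1^{\ref{slupdate}}) < time(ex_3^{\ref{scan}})$, and the induction hypothesis applied to the subpath gives $time(ex_3^{\ref{slupdate}}) < time(ex_2^{\ref{scan}})$.

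The only ingredient beyond Observation~\ref{obs:edgeearlier} is the bridge that lets me pass from the $scan$-endpoint produced by the edge to the $slupdate$-start point consumed by the induction hypothesis: within the single operation $ex_3$, the $root.scan()$ on line~\ref{scan} is executed strictly before the $root.update$ on line~\ref{slupdate}, since a process performs the steps of one operation sequentially in the order prescribed by the $execute$ method, so $time(ex_3^{\ref{scan}}) < time(ex_3^{\ref{slupdate}})$. Chaining the three inequalities yields
\[ time(ex_1^{\ref{slupdate}}) < time(ex_3^{\ref{scan}}) < time(ex_3^{\ref{slupdate}}) < time(ex_2^{\ref{scan}}), \]
whence $time(ex_1^{\ref{slupdate}}) < time(ex_2^{\ref{scan}})$, completing the induction.

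I do not expect a genuine obstacle here: the result is a routine transitive lifting of Observation~\ref{obs:edgeearlier} along a path. The single point deserving care is the intra-operation ordering $time(ex_3^{\ref{scan}}) < time(ex_3^{\ref{slupdate}})$, which is what makes the per-edge inequalities telescope; without it the $scan$ and $slupdate$ timestamps of the intermediate nodes would not line up. Everything else is bookkeeping over the path decomposition.
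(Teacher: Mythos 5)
Your proof is correct and follows essentially the same route as the paper: induction on path length, with Observation~\ref{obs:edgeearlier} handling each edge and the intra-operation ordering $time(ex^{\ref{scan}}) < time(ex^{\ref{slupdate}})$ supplying the chaining step. The only cosmetic difference is that you peel off the first edge of the path while the paper peels off the last one.
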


\begin{proof}
	If the length of the path from the vertex $node(ex_1)$ to the vertex $node(ex_2)$ is 1, then the observation immediately follows from Observation~\ref{obs:edgeearlier}.
	Now
	\ifspringer
	\begin{equation}\label{obs:pathearlier:indhyp}
	\begin{split}
	&\text{assume that the observation holds for any path of} \\ &\text{length $k \geq 1$.}
	\end{split}
	\end{equation}
	\else
	\begin{equation}\label{obs:pathearlier:indhyp}
	\text{assume that the observation holds for any path of length $k \geq 1$.}
	\end{equation}
	\fi
	Suppose that there is a path of length $k+1$ from $node(ex_1)$ to $node(ex_2)$.
	Let $node(ex_\ell)$ be the second-last node on this path.
	So
	\ifspringer
	\begin{gather}
	\begin{split}
	&\text{there is a path from $node(ex_1)$ to $node(ex_\ell)$ of} \\ &\text{length $k$, and} \label{obs:pathearlier:kpath} \end{split}\\ 
	\text{$\bigl(node(ex_\ell), node(ex_2)\bigr) \in E_T$.}\label{obs:pathearlier:edge}
	\end{gather}
	\else
	\begin{gather}
	\text{there is a path from $node(ex_1)$ to $node(ex_\ell)$ of length $k$, and} \label{obs:pathearlier:kpath} \\ 
	\text{$\bigl(node(ex_\ell), node(ex_2)\bigr) \in E_T$.}\label{obs:pathearlier:edge}
	\end{gather}
	\fi
	By (\ref{obs:pathearlier:indhyp}) and (\ref{obs:pathearlier:kpath}),
	\begin{equation}
		time(ex_1^{\ref{slupdate}}) < time(ex_\ell^{\ref{scan}}).\label{obs:pathearlier:1beforeell}
	\end{equation}
	By (\ref{obs:pathearlier:edge}) and Observation~\ref{obs:edgeearlier},
	\begin{equation}
		time(ex_\ell^{\ref{slupdate}}) < time(ex_2^{\ref{scan}}).\label{obs:pathearlier:ellbefore2}
	\end{equation}
	By the pseudocode of the $execute$ method in Algorithm~\ref{gen}, $time(ex_\ell^{\ref{scan}}) < time(ex_\ell^{\ref{slupdate}})$.
	This, along with (\ref{obs:pathearlier:1beforeell}) and (\ref{obs:pathearlier:ellbefore2}), imply that $time(ex_1^{\ref{slupdate}}) < time(ex_2^{\ref{scan}})$.
	Hence, the observation follows by induction.
\end{proof}

\begin{observation}\label{obs:procseq}
	For some process $p$, let $ex_1 \circ \ldots \circ ex_k$ be the longest prefix of $\Gamma(T)|p$ with $ex_k^{\ref{slupdate}} \in T$.
	Then for any $i, j \in \{1, \ldots, k\}$ such that $i < j$, there is a path of length $j - i$ from $node(ex_i)$ to $node(ex_j)$.
\end{observation}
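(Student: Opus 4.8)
The plan is to reduce the whole statement to a single claim about \emph{consecutive} operations by $p$: that $\bigl(node(ex_i), node(ex_{i+1})\bigr) \in E_T$ for every $i \in \{1, \ldots, k-1\}$. Once this is in hand, a directed path of length exactly $j-i$ from $node(ex_i)$ to $node(ex_j)$ is obtained simply by concatenating the $j-i$ edges $\bigl(node(ex_i), node(ex_{i+1})\bigr), \ldots, \bigl(node(ex_{j-1}), node(ex_j)\bigr)$. Since distinct operations instantiate distinct node objects (line~\ref{createentry}), the vertices $node(ex_i), \ldots, node(ex_j)$ are pairwise distinct, so this concatenation is a genuine path whose length is precisely $j-i$, as required. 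Thus the induction is essentially trivial modulo the consecutive-edge claim, and no separate inductive bookkeeping on path length is needed.

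Before proving the claim I would record the preliminary fact that $node(ex_i) \in V_T$ for every $i \le k$. Because $p$ executes its operations sequentially, $ex_i \xrightarrow{T} ex_k$ for each $i < k$, and each operation responds before the next is invoked; since $ex_k^{\ref{slupdate}} \in T$ and line~\ref{slupdate} is the final step of the $execute$ method, it follows that $ex_i^{\ref{slupdate}} \in T$ for every $i \le k$, and hence $node(ex_i) \in V_T$ by the definition of $V_T$. This guarantees that all the nodes referred to in the claim actually exist as vertices of $\mathcal{G}(T)$.

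For the consecutive-edge claim itself, the two facts I would combine are the \emph{single-writer} semantics of the snapshot $root$ and the sequentiality of $p$. When $ex_{i+1}$ executes $root.scan()$ on line~\ref{scan}, the returned vector $view$ has $view[p]$ equal to the value of $root[p]$ at that instant. Only process $p$ ever writes $root[p]$, and it does so exactly at line~\ref{slupdate} of each of its operations; since $p$ runs sequentially, $ex_i$ completes its write $ex_i^{\ref{slupdate}}$ of the address of $node(ex_i)$ into $root[p]$ before $ex_{i+1}$ is invoked, and no operation by $p$ writes $root[p]$ between $ex_i^{\ref{slupdate}}$ and $ex_{i+1}^{\ref{scan}}$. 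Hence $view[p]$ holds the address of $node(ex_i)$. In the loop on line~\ref{precloop}, the iteration with index $p$ (line~\ref{setprec}) then sets $node(ex_{i+1}).preceding[p]$ to the address of $node(ex_i)$, which by the definition of $E_T$ yields the edge $\bigl(node(ex_i), node(ex_{i+1})\bigr) \in E_T$.

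The main obstacle is precisely this consecutive-edge claim, and within it the argument that $view[p]$ references exactly $node(ex_i)$ rather than some earlier or later node of $p$: this is the one place where single-writer semantics and sequential execution of $p$'s operations must be invoked together. Everything else — the reduction to consecutive edges, the distinctness of the nodes, and the length count — is routine, so I would spend the bulk of the writeup making the $view[p]$ identification airtight.
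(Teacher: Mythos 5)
Your proposal is correct and follows essentially the same route as the paper's proof: reduce to showing $\bigl(node(ex_i), node(ex_{i+1})\bigr) \in E_T$ for consecutive operations, then use the sequentiality of $p$ together with the single-writer nature of $root[p]$ to conclude that $view[p]$ at $ex_{i+1}^{\ref{scan}}$ references $node(ex_i)$, so that line~\ref{setprec} installs the edge. Your added remarks on $node(ex_i) \in V_T$ and the distinctness of the nodes are harmless elaborations of points the paper leaves implicit.
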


\begin{proof}
	It suffices to show that, for any $i \in \{1, \ldots, k-1\}$, $\bigl(node(ex_i), node(ex_{i+1})\bigr) \in E_T$.
	Since $ex_i$ and $ex_{i+1}$ are both performed by $p$, they are performed in sequence.
	Hence, $time(ex_i^{\ref{slupdate}}) < time(ex_{i+1}^{\ref{scan}})$, and since each operation performs only a single $root.update$ operation (line~\ref{slupdate}), there is no $root.update$ operation by $p$ that happens in $\bigl(time(ex_i^{\ref{slupdate}}), time(ex_{i+1}^{\ref{scan}})\bigr)$.
	Then if $view$ is the vector returned by $ex_{i+1}^{\ref{scan}}$, $view[p]$ contains the address of $node(ex_i)$.
	Therefore, $ex_{i+1}$ places the address of $node(ex_i)$ into $node(ex_{i+1}).preceding[p]$ on line~\ref{setprec} (during the $p$-th iteration of the for-loop on line~\ref{precloop}).
	Then by the definition of $E_T$, $\bigl(node(ex_i), node(ex_{i+1})\bigr) \in E_T$.
\end{proof}

\begin{observation}\label{obs:nodeseq}
	Let $ex_q, ex_p \in \Gamma(T)$ be operations by processes $q$ and $p$, respectively, with $ex_q^{\ref{slupdate}}, ex_p^{\ref{slupdate}} \in T$. If $time(ex_q^{\ref{slupdate}}) < time(ex_p^{\ref{scan}})$, then there is a path from $node(ex_q)$ to $node(ex_p)$ in $\mathcal{G}(T)$.
\end{observation}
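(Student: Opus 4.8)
The plan is to trace exactly which node of process $q$ gets recorded in the $preceding$ field of $node(ex_p)$, and then connect that node back to $node(ex_q)$ using the same-process path guarantee of Observation~\ref{obs:procseq}. First I would let $view$ be the vector returned by $ex_p^{\ref{scan}}$. By the pseudocode of the $execute$ method (specifically line~\ref{setprec} inside the loop on line~\ref{precloop}), $ex_p$ copies $view[q]$ into $node(ex_p).preceding[q]$. By the sequential specification of the snapshot object $root$, $view[q]$ is the address of the most recent node written to $root$ by process $q$ at a step strictly before $time(ex_p^{\ref{scan}})$ (or $null$ if $q$ has written nothing). Since the hypothesis gives $time(ex_q^{\ref{slupdate}})<time(ex_p^{\ref{scan}})$, process $q$ has written at least the node $node(ex_q)$ before the scan, so $view[q]\neq null$.

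Next I would name the recorded node. Let $node(ex_q'')$ be the node whose address is stored in $view[q]$, where $ex_q''$ is the operation by $q$ whose $root.update$ on line~\ref{slupdate} is the latest one by $q$ occurring before $time(ex_p^{\ref{scan}})$. Because line~\ref{slupdate} is the final shared-memory step of $ex_p$'s argument operation and processes act sequentially, and because $ex_q$'s update happens before the scan, $ex_q''$ is either $ex_q$ itself or an operation that $q$ performs after $ex_q$ in $\Gamma(T)|q$. Regardless, $node(ex_p).preceding[q]$ holds the address of $node(ex_q'')$, so by the definition of $E_T$ we immediately obtain the edge $\bigl(node(ex_q''),node(ex_p)\bigr)\in E_T$.

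To finish, I would split on whether $ex_q''$ equals $ex_q$. If $ex_q''=ex_q$, the edge above is already a path of length $1$ from $node(ex_q)$ to $node(ex_p)$, as required. Otherwise $ex_q$ strictly precedes $ex_q''$ in $\Gamma(T)|q$, and both operations have performed their updates (each has its node in $V_T$), so both lie in the longest prefix of $\Gamma(T)|q$ whose last operation has written its node; hence Observation~\ref{obs:procseq} supplies a path from $node(ex_q)$ to $node(ex_q'')$. Concatenating this path with the edge $\bigl(node(ex_q''),node(ex_p)\bigr)$ yields a path from $node(ex_q)$ to $node(ex_p)$ in $\mathcal{G}(T)$.

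The one point requiring care—and the only real obstacle—is the case $ex_q''\neq ex_q$, where the node recorded in $view[q]$ is a strictly later write by $q$ than $node(ex_q)$. Here I must verify that $ex_q$ and $ex_q''$ both fall within the range to which Observation~\ref{obs:procseq} applies (i.e.\ both have completed their line~\ref{slupdate} step) so that the same-process path actually exists; this follows since $ex_q''$ is defined to have written, and $ex_q$ writes even earlier by hypothesis. The rest is a routine appeal to the snapshot semantics and the definition of $E_T$.
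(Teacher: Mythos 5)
Your proof is correct and follows essentially the same route as the paper's: identify the latest node written by $q$ to $root$ before $ex_p^{\ref{scan}}$, obtain the edge from that node to $node(ex_p)$ via the $preceding$ field, and bridge back to $node(ex_q)$ with Observation~\ref{obs:procseq}. The only cosmetic difference is that the paper treats the case $q=p$ separately (deferring it entirely to Observation~\ref{obs:procseq}) whereas your argument handles both cases uniformly.
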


\begin{proof}
	If $q = p$, then the observation statement follows from Observation~\ref{obs:procseq}.
	
	Now assume $q \neq p$.
	Suppose $ex_{q,1}, \ldots, ex_{q,k}$ is the sequence of $execute$ operations by $q$ in $\Gamma(T)$.
	Let $ex_{q,i}$ be the final operation in this sequence with $time(ex_{q,i}^{\ref{slupdate}}) < time(ex_p^{\ref{scan}})$ (the existence of $ex_{q,i}$ is guaranteed by the observation assumption).
	That is,
	\ifspringer
	\begin{equation}
	\begin{split}
		&\text{there is no $j \in \{i+1, \ldots, k\}$ such that} \\ &\text{$time(ex_{q,j}^{\ref{slupdate}}) < time(ex_p^{\ref{scan}})$.}\label{obs:nodeseq:lateproc}
	\end{split}
	\end{equation}
	\else
	\begin{equation}
		\text{there is no $j \in \{i+1, \ldots, k\}$ such that $time(ex_{q,j}^{\ref{slupdate}}) < time(ex_p^{\ref{scan}})$.}\label{obs:nodeseq:lateproc}
	\end{equation}
	\fi
	Let $view$ be the vector returned by $ex_p^{\ref{scan}}$.
	By (\ref{obs:nodeseq:lateproc}) and the fact that $time(ex_{q,i}^{\ref{slupdate}}) < time(ex_p^{\ref{scan}})$,
	\begin{equation}
		\text{$view[q]$ contains the address of $node(ex_{q,i})$.}\label{obs:nodeseq:address}
	\end{equation}
	By (\ref{obs:nodeseq:address}), $ex_p$ places the address of $node(ex_{q,i})$ into $node(ex_p).preceding[q]$ on line~\ref{setprec}.
	Then by definition of $E_T$,
	\begin{equation}
		\bigl(node(ex_{q,i}), node(ex_p)\bigr) \in E_T.\label{obs:nodeseq:edge}
	\end{equation}
	If $ex_q = ex_{q,i}$, then the observation statement is implied by (\ref{obs:nodeseq:edge}).
	Otherwise, if $ex_q = ex_{q,j} \neq ex_{q,i}$, then $j < i$ by (\ref{obs:nodeseq:lateproc}) along with the observation assumption that $time(ex_q^{\ref{slupdate}}) < time(ex_p^{\ref{scan}})$.
	Hence, by Observation~\ref{obs:procseq}, there is a path from $node(ex_q)$ to $node(ex_{q,i})$ in $\mathcal{G}(T)$.
	By this and (\ref{obs:nodeseq:edge}), there is a path from $node(ex_q)$ to $node(ex_p)$ in $\mathcal{G}(T)$.
\end{proof}

\begin{observation}\label{obs:rootprecgraph}
	Suppose $T \in \mathcal{T}$ is a transcript. Suppose that, after all steps in $T$ are performed in order, $root$ contains the vector $view$. Let $G = (V, E) = nodegraph(view)$. Then for every operation $ex \in \Gamma(T)$ such that $ex^{\ref{slupdate}} \in T$, 
	\begin{enumerate}[label=(\alph*)]	
		\item $node(ex) \in V$, and\label{obs:rootprecgraph:A}
		\item for every $node(ex')$ referenced in $node(ex).preceding$, $\bigl(node(ex'), node(ex)\bigr) \in E$.\label{obs:rootprecgraph:B}
	\end{enumerate}
	That is, $nodegraph(view) = \mathcal{G}(T)$.
\end{observation}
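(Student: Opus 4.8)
The plan is to establish the two inclusions $\mathcal{G}(T)\subseteq nodegraph(view)$ (which is exactly what parts \ref{obs:rootprecgraph:A} and \ref{obs:rootprecgraph:B} assert) and $nodegraph(view)\subseteq \mathcal{G}(T)$, and then conclude equality. First I would pin down what $view$ is. Since $view$ is the state of $root$ after every step of $T$ has been performed, and each $execute$ operation performs exactly one $root.update$ (line~\ref{slupdate}) while a process acts sequentially, for each process $p$ the entry $view[p]$ holds the address of $node(ex_p^{\mathrm{last}})$, where $ex_p^{\mathrm{last}}$ is the last operation by $p$ with $ex_p^{\mathrm{last},\ref{slupdate}}\in T$ (and $view[p]=null$ if no such operation exists). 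Because a process finishes an operation before invoking the next, $ex_p^{\mathrm{last}}$ coincides with the node $ex_k$ of Observation~\ref{obs:procseq}, and every $node(ex_p^{\mathrm{last}})$ is enqueued and placed in $V$ during the initialization loop (line~\ref{precgraph:initloop}).

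For part \ref{obs:rootprecgraph:A}, I would argue that the search reaches every node of $V_T$ by a reachability induction. Given $ex$ with $ex^{\ref{slupdate}}\in T$ performed by $p$, the operation $ex$ occurs at or before $ex_p^{\mathrm{last}}$ in $\Gamma(T)|p$, so Observation~\ref{obs:procseq} provides a directed path from $node(ex)$ to $node(ex_p^{\mathrm{last}})$ in $\mathcal{G}(T)$. It then suffices to prove, by induction on the path length $k$, that any node $u$ admitting a path of length $k$ to some node addressed by $view$ lies in $V$. The base case $k=0$ is the initialization loop. For the inductive step, if $u\to w$ is the first edge of such a path, then $w$ has a path of length $k-1$ to a $view$ node, so $w\in V$ by the inductive hypothesis; since an edge $(u,w)\in E_T$ means $u$ is referenced in $w.preceding$, the node $u$ is added to $V$ when $w$ is dequeued and its preceding references are scanned (lines~\ref{precgraph:dequeue}--\ref{precgraph:vaddmain}). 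Applying this with $u=node(ex)$ yields $node(ex)\in V$.

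Part \ref{obs:rootprecgraph:B} then follows quickly: once $node(ex)\in V$ it is enqueued and, because the main loop (line~\ref{precgraph:mainloop}) runs until the queue is empty, eventually dequeued, at which point line~\ref{precgraph:addedge} inserts the edge $\bigl(node(ex'),node(ex)\bigr)$ for every $node(ex')$ referenced in $node(ex).preceding$. This establishes $\mathcal{G}(T)\subseteq nodegraph(view)$. For the reverse inclusion I would observe that every vertex added to $V$ is either taken from $view$ or is referenced in some $preceding$ field; in both cases it is the address of a node whose $execute$ operation already completed its write to $root$ (a scan returns only addresses of already-written nodes, and $preceding$ fields are populated from a prior scan on line~\ref{setprec}), so $V\subseteq V_T$; and every edge added on line~\ref{precgraph:addedge} has exactly the form defining $E_T$, so $E\subseteq E_T$. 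Combining the inclusions gives $nodegraph(view)=\mathcal{G}(T)$.

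The main obstacle I anticipate is the reachability induction in part \ref{obs:rootprecgraph:A}: I must carry it out along the direction in which the search actually traverses edges (backward, from a node to the predecessors stored in its $preceding$ field) and invoke Observation~\ref{obs:procseq} to connect an arbitrary written node to the per-process ``latest'' node sitting in $view$. The remaining inclusions are essentially bookkeeping read directly off the pseudocode of $nodegraph$.
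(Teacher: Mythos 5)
Your proposal is correct and follows essentially the same route as the paper: both seed the argument with the fact that $view[p]$ references the last node written by $p$, both invoke Observation~\ref{obs:procgraph} (i.e.\ Observation~\ref{obs:procseq}) to connect an arbitrary written node to that per-process latest node, and both use the dequeue-then-scan-$preceding$ mechanism of the main loop to propagate membership in $V$ backwards (the paper phrases this as a backward induction along the chain $node(ex_k),\dots,node(ex_1)$ rather than your slightly more general induction on path length to a $view$ node, but the content is identical). The only substantive difference is that you explicitly verify the reverse inclusion $nodegraph(view)\subseteq\mathcal{G}(T)$, which the paper's proof leaves implicit even though the stated equality requires it; that is a small but genuine improvement in completeness.
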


\begin{proof}
	To prove part~\ref{obs:rootprecgraph:A}, we show that for every process $p$, each node whose address is written to $root$ during $T$ is added to $V$ in the calculation of $nodegraph(view)$.
	Let $ex_1, \ldots, ex_k \in \Gamma(T)$ be the (nonempty) sequence of operations by $p$ such that $ex_i^{\ref{slupdate}} \in T$, for every $i \in \{1, \ldots, k\}$.
	(If there are no such operations by process $p$, then $V$ trivially contains all nodes written to $root$ by $p$ in $T$.)
	Note that $view[p]$ contains a reference to $node(ex_k)$.
	Hence, in the computation of $nodegraph(view)$, $node(ex_k)$ is added to $V$ on line~\ref{precgraph:vaddinit}.
	Now
	\ifspringer
	\begin{equation}
	\begin{split}
		&\text{assume that for some $j \in \{2, \ldots, k\}$,} \\ &\text{$node(ex_j) \in V$.}\label{obs:precgraph:ih}
	\end{split}
	\end{equation}
	\else
	\begin{equation}
		\text{assume that for some $j \in \{2, \ldots, k\}$, $node(ex_j) \in V$.}\label{obs:precgraph:ih}
	\end{equation}
	\fi
	We proceed by showing that $node(ex_{j-1}) \in V$.
	By Observation~\ref{obs:procseq}, there exists a path of length 1 (i.e. an edge) from $node(ex_{j-1})$ to $node(ex_j)$ in $\mathcal{G}(T)$.
	Then
	\ifspringer
	\begin{equation}
	\begin{split}
		&\text{$node(ex_j).preceding[p]$ contains a reference to} \\ &\text{$node(ex_{j-1})$.}\label{obs:precgraph:inprec}
	\end{split}
	\end{equation}
	\else
	\begin{equation}
		\text{$node(ex_j).preceding[p]$ contains a reference to $node(ex_{j-1})$.}\label{obs:precgraph:inprec}
	\end{equation}
	\fi
	Due to (\ref{obs:precgraph:ih}), in the computation of $nodegraph(view)$, $node(ex_j)$ must have been added to $V$ on either line~\ref{precgraph:vaddinit} or line~\ref{precgraph:vaddmain}.
	In either case, $node(ex_j)$ is added to $queue$ on the previous line.
	Since the while-loop on line~\ref{precgraph:mainloop} does not terminate until $queue$ is empty, $node(ex_j)$ is eventually dequeued from $queue$ on line~\ref{precgraph:dequeue}.
	By (\ref{obs:precgraph:inprec}), the for-loop on line~\ref{precgraph:neighbours} eventually reaches $node(ex_{j-1})$.
	Hence, at the if-statement on line~\ref{precgraph:checkinv}, either $node(ex_{j-1}) \in V$ already, or $node(ex_{j-1})$ is added to $V$ on line~\ref{precgraph:vaddmain}.
	Therefore, $node(ex_{j-1}) \in V$, and by induction we conclude that $node(ex_1), \ldots, node(ex_k) \in V$.
	This completes the proof of part~\ref{obs:rootprecgraph:A}.
	
	Due to \ref{obs:rootprecgraph:A}, and the fact that each node is added to $queue$ before it is added to $V$ during the calculation of $nodegraph(view)$,
	\ifspringer
	\begin{equation}
	\small
	\begin{split}
		&\text{for every $ex \in \Gamma(T)$ such that $ex^{\ref{slupdate}} \in T$, $node(ex)$ is} \\ &\text{dequeued from $queue$ on line~\ref{precgraph:dequeue} at some point} \\ &\text{during the computation of $nodegraph(view)$.}\label{obs:precgraph:dequeue}
	\end{split}
	\end{equation}
	\else
	\begin{equation}
	\begin{split}
		&\text{for every $ex \in \Gamma(T)$ such that $ex^{\ref{slupdate}} \in T$, $node(ex)$ is dequeued from $queue$ on line~\ref{precgraph:dequeue}} \\ &\text{at some point during the computation of $nodegraph(view)$.}\label{obs:precgraph:dequeue}
	\end{split}
	\end{equation}
	\fi
	When $node(ex)$ is dequeued from $queue$ on line~\ref{precgraph:dequeue}, for every $node(ex')$ that is referenced in $node(ex).preceding$, the edge $\bigl(node(ex'), node(ex)\bigr)$ is added to $E$ (if it is not already present in $E$) on line~\ref{precgraph:addedge};
	part~\ref{obs:rootprecgraph:B} follows from this along with (\ref{obs:precgraph:dequeue}).
\end{proof}

Let $T \in \mathcal{T}$ be a transcript such that $root$ contains the vector $view$ after all steps of $T$ are performed in order.
Let $oper(ex)$ denote the operation on $O$ with $inv\bigl(oper(ex)\bigr) = \bigl(O, node(ex).invocation, (j\cdot n) + (p-1)\bigr)$ and $rsp\bigl(oper(ex)\bigr) = \bigl(node(ex).response, (j\cdot n) + (p-1)\bigr)$, where $ex$ is the $j$-th operation by $p$ in $\Gamma(T)$.

\begin{observation}\label{obs:operations}
	Consider the operation $op_i$ computed from $node(ex_i)$ on lines \ref{precgraph:invevent} and \ref{precgraph:rspevent} during the computation of $precgraph(view)$, for any $i \in \{1, \ldots, k\}$.
	Then $op_i = oper(ex_i)$.
\end{observation}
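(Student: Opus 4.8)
The plan is to show that $op_i$ (built in the loop of $precgraph$) and $oper(ex_i)$ agree on all components of an operation: object, invocation description, response, and integer identifier. The object is $O$ in both cases, and lines~\ref{precgraph:invevent} and~\ref{precgraph:rspevent} set $inv(op_i)$ and $rsp(op_i)$ using exactly $node(ex_i).invocation$ and $node(ex_i).response$, which are also the values used in the definition of $oper(ex_i)$. Hence the entire claim reduces to showing that the identifier $opid=(id[p]\cdot n)+(p-1)$ assigned on line~\ref{precgraph:opid} equals $(j\cdot n)+(p-1)$, where $p$ is the process performing $ex_i$ and $j$ is the rank of $ex_i$ among the operations of $p$ in $\Gamma(T)$. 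Since $p$ is fixed, it suffices to prove $id[p]=j$ at the moment $node(ex_i)$ is handled by the loop.

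First I would establish that the topological order $node(ex_1),\dots,node(ex_k)$ from line~\ref{precgraph:lettopsort} respects the per-process real-time order. By the hypothesis that $root$ holds $view$ after all of $T$, Observation~\ref{obs:rootprecgraph} gives $G=nodegraph(view)=\mathcal{G}(T)$. Fix $p$ and let $g_1,\dots,g_m$ be the operations of $p$ in $\Gamma(T)$, in real-time order, whose nodes lie in $V$. For $a<b$, Observation~\ref{obs:procseq} provides a directed path from $node(g_a)$ to $node(g_b)$ in $\mathcal{G}(T)$, and any topological sort must place the tail of a directed path before its head; therefore $node(g_1),\dots,node(g_m)$ occur in this relative order within $node(ex_1),\dots,node(ex_k)$. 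Next I would trace the counter: $id[p]$ is initialized to $1$ on line~\ref{precgraph:initid} and incremented by one (line~\ref{precgraph:idinc}) each time a node of $p$ is processed, and no other line touches it. Consequently, when the loop reaches the $\ell$-th node of $p$ in the topological order (namely $node(g_\ell)$), the value of $id[p]$ is exactly $\ell$, so $opid=(\ell\cdot n)+(p-1)$.

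The last step reconciles this rank $\ell$ with the rank $j$ in the definition of $oper$. Since $node(g_\ell)\in V=V_T$ we have $g_\ell^{\ref{slupdate}}\in T$, so $g_\ell$ has completed line~\ref{slupdate}. Because processes execute operations sequentially, every operation of $p$ preceding $g_\ell$ in $\Gamma(T)|p$ is complete and has therefore also executed line~\ref{slupdate}, so its node lies in $V$. Thus no operation of $p$ in $\Gamma(T)$ before $g_\ell$ is missing from $g_1,\dots,g_m$, whence $g_\ell$ is precisely the $\ell$-th operation of $p$ in $\Gamma(T)$; that is, $j=\ell$. Writing $ex_i=g_j$, we obtain $opid=(j\cdot n)+(p-1)$, matching the identifier of $oper(ex_i)$, and the observation follows.

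I expect the main obstacle to be the bookkeeping argument that $id[p]$ counts exactly the per-process rank, which rests on two independent facts: (i) the topological order cannot reorder operations of a single process (via Observation~\ref{obs:procseq}), and (ii) sequentiality forces all earlier same-process operations to appear as nodes, so that ``rank among nodes in $V$'' coincides with ``rank in $\Gamma(T)$''. Stating both carefully, while keeping the topological-sort indexing $ex_1,\dots,ex_k$ notationally distinct from the single-process listing $g_1,\dots,g_m$, is the delicate part; the arithmetic for $opid$ is otherwise immediate.
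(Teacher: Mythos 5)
Your proposal is correct and follows essentially the same route as the paper's proof: the invocation and response fields match by inspection of lines~\ref{precgraph:invevent} and~\ref{precgraph:rspevent}, and the identifier matches because Observation~\ref{obs:procseq} forces the topological sort to preserve the per-process order of nodes, so the counter $id[p]$ equals the rank $j$ when $node(ex_i)$ is processed. Your extra step explicitly reconciling the rank among nodes in $V$ with the rank in $\Gamma(T)$ (via sequentiality of each process) is a point the paper leaves implicit, but it is the same argument.
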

\begin{proof}
Suppose $ex_i$ is the $j$-th operation performed by $p$ in $\Gamma(T)$.
Let the $nodegraph(view)$ operation on line~\ref{precgraph:getnodegraph} return the graph $G$.
Consider the topological ordering $node(ex_1), \cdots, node(ex_k)$ of $G$ constructed on line~\ref{precgraph:lettopsort}; let $node(ex_{p,1}), \ldots, node(ex_{p,\ell})$ be the subsequence of $node(ex_1), \cdots, node(ex_k)$ consisting of all and only those nodes constructed by $p$.
Observation~\ref{obs:procseq} implies that $ex_i = ex_{p,j}$, as otherwise there would be a backwards edge among the sequence $node(ex_{p,1}), \ldots, node(ex_{p,\ell})$ in $G$.
Hence, $node(ex_i)$ is the $j$-th node by process $p$ that is encountered during the for-loop on line~\ref{precgraph:opforloop} (that is, each of $node(ex_{p,1}), \ldots, node(ex_{p,j-1})$ is encountered during the for-loop prior to $node(ex_{p,j}) = node(ex_i)$).
On line~\ref{precgraph:invevent}, the invocation description of $op_i$ is set to $node(ex_i).invocation$, and on line~\ref{precgraph:rspevent}, the response of $op_i$ is set to $node(ex_i).response$.
Now consider $opid$, the operation identifier calculated for $op_i$ on line~\ref{precgraph:opid}.
Since $ex_i$ is the $j$-th node by process $p$ that is encountered during the for-loop on line~\ref{precgraph:opforloop}, $id[p]$ has been incremented $j-1$ times by the time $op_i$ is constructed.
Hence, $opid = (id[p]\cdot n) + (p-1) = (j\cdot n) + (p-1)$.
Therefore, $inv(op_i) = \bigl(O, node(ex_i).invocation, (j\cdot n) + (p-1)\bigr) = inv\bigl(oper(ex_i)\bigr)$, and $rsp(op_i) = \bigl(node(ex_i).response, (j\cdot n) + (p-1)\bigr) = rsp\bigl(oper(ex_i)\bigr)$.
\end{proof}

By Observation~\ref{obs:operations} and the replacements performed on line~\ref{precgraph:replaceverts}, if $V$ is the vertex set of the graph returned by $precgraph(view)$, then $V = \{oper(ex)\; :\; node(ex)$ is in $\mathcal{G}(T)\}$.
We will use this fact without referencing Observation~\ref{obs:operations} for the remainder of the section.
Let $H$ be a history on an object $O$ of type $\mathscr{T}$ obtained from $T|root$ by doing the following for each step $t$ of $T|root$:
\begin{enumerate}[label=(\roman*)]
	\item If $t = time(ex^{\ref{scan}})$ for some operation $ex$ by process $p$, then replace this step by the invocation event $inv\bigl(oper(ex)\bigr)$ in $H$.
	\item If $t = time(ex^{\ref{slupdate}})$ for some operation $ex$ by process $p$, then replace this step by the response event $rsp\bigl(oper(ex)\bigr)$ in $H$.
	\item If $t = time\bigl(inv(op)\bigr)$, where $op$ is any $root.scan$ or $root.update$ operation, remove this step from $H$.
\end{enumerate}

\begin{lemma}
	The graph $precgraph(view)$ is a precedence graph of the history $H$.
\end{lemma}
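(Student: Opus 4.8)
The plan is to reduce the statement entirely to the node-level analysis already carried out in Observations~\ref{obs:rootprecgraph}, \ref{obs:operations}, \ref{obs:pathearlier}, and~\ref{obs:nodeseq}, leaving only bookkeeping. Recall that $precgraph(view)$ is obtained from $nodegraph(view)$ by relabelling each vertex $node(ex_i)$ with the operation $op_i$ and each edge $\bigl(node(ex_i),node(ex_j)\bigr)$ with $(op_i,op_j)$. By Observation~\ref{obs:rootprecgraph}, $nodegraph(view)=\mathcal{G}(T)$, and by Observation~\ref{obs:operations}, each $op_i$ equals $oper(ex_i)$. Hence $precgraph(view)$ is the image of $\mathcal{G}(T)$ under the bijection $node(ex)\mapsto oper(ex)$; in particular this relabelling is a graph isomorphism, so for any $ex_1,ex_2$ with $node(ex_1),node(ex_2)$ in $\mathcal{G}(T)$ there is a directed path of length at least $1$ from $oper(ex_1)$ to $oper(ex_2)$ in $precgraph(view)$ if and only if there is such a path from $node(ex_1)$ to $node(ex_2)$ in $\mathcal{G}(T)$.

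First I would pin down the vertex set. A vertex of $precgraph(view)$ has the form $oper(ex)$ with $node(ex)\in V_T$, i.e.\ with $ex^{\ref{slupdate}}\in T$. Since the $root.scan()$ of $ex$ on line~\ref{scan} precedes its $root.update$ on line~\ref{slupdate} within the $execute$ method, $ex^{\ref{slupdate}}\in T$ implies $ex^{\ref{scan}}\in T$. By the construction of $H$, the event $inv\bigl(oper(ex)\bigr)$ appears in $H$ exactly when $ex^{\ref{scan}}\in T$ and $rsp\bigl(oper(ex)\bigr)$ appears exactly when $ex^{\ref{slupdate}}\in T$; thus the vertices of $precgraph(view)$ are precisely the \emph{complete} operations of $H$, which we take as the vertex set of the precedence graph of $H$. (Operations $ex$ with $ex^{\ref{scan}}\in T$ but $ex^{\ref{slupdate}}\notin T$ are pending in $H$ and, having no response, play no role in the happens-before relation; this is the one point that needs a careful word.)

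Next I would translate the happens-before relation of $H$ into a statement about $T$. Because $inv\bigl(oper(ex)\bigr)$ is placed at $time(ex^{\ref{scan}})$ and $rsp\bigl(oper(ex)\bigr)$ at $time(ex^{\ref{slupdate}})$, and this placement preserves the relative order of the steps of $T|root$, we have for complete operations $oper(ex_1),oper(ex_2)$ that $oper(ex_1)\xrightarrow{H}oper(ex_2)$ if and only if $time(ex_1^{\ref{slupdate}})<time(ex_2^{\ref{scan}})$.

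Finally I would assemble the biconditional. Combining the isomorphism observation with Observation~\ref{obs:pathearlier} gives the forward implication: a path from $oper(ex_1)$ to $oper(ex_2)$ yields a path from $node(ex_1)$ to $node(ex_2)$ in $\mathcal{G}(T)$, hence $time(ex_1^{\ref{slupdate}})<time(ex_2^{\ref{scan}})$, and therefore $oper(ex_1)\xrightarrow{H}oper(ex_2)$. Conversely, if $oper(ex_1)\xrightarrow{H}oper(ex_2)$ then $time(ex_1^{\ref{slupdate}})<time(ex_2^{\ref{scan}})$, and since both nodes lie in $\mathcal{G}(T)$, Observation~\ref{obs:nodeseq} supplies a path from $node(ex_1)$ to $node(ex_2)$, which transports under the isomorphism to the required path in $precgraph(view)$. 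Thus $precgraph(view)$ satisfies the defining property of a precedence graph of $H$. The conceptual content is already discharged by the earlier observations, so the only genuine obstacle is the bookkeeping described above: matching the vertex sets through $oper$, correctly accounting for pending operations, and checking that the placement of invocation and response events makes $H$ well-formed and order-preserving.
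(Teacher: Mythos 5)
Your proposal is correct and follows essentially the same route as the paper's proof: both directions of the biconditional are discharged by translating $oper(ex_1)\xrightarrow{H}oper(ex_2)$ into the time ordering $time(ex_1^{\ref{slupdate}}) < time(ex_2^{\ref{scan}})$ and then invoking Observation~\ref{obs:pathearlier} (path implies happens-before) and Observation~\ref{obs:nodeseq} (happens-before implies path), with the relabelling of $\mathcal{G}(T)$ into $precgraph(view)$ handled by Observations~\ref{obs:rootprecgraph} and~\ref{obs:operations}. The additional bookkeeping you do on the vertex set and on operations that are pending in $H$ is left implicit in the paper but is consistent with its argument.
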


\begin{proof}
Suppose $oper(ex_1) \xrightarrow{H} oper(ex_2)$.
By definition of happens-before order, this implies that $rsp\bigl(oper(ex_1)\bigr)$ occurs before $inv\bigl(oper(ex_2)\bigr)$ in $H$.
Then by the construction rules for $H$, $time_{T|root}(ex_1^{\ref{slupdate}}) < time_{T|root}(ex_2^{\ref{scan}})$, and hence $time_T(ex_1^{\ref{slupdate}}) < time_T(ex_2^{\ref{scan}})$.
Therefore, there is a path from $node(ex_1)$ to $node(ex_2)$ in $\mathcal{G}(T)$ by Observation~\ref{obs:nodeseq}.
Then by the replacements performed on lines \ref{precgraph:replaceverts} and \ref{precgraph:replaceedges}, there is a path from $oper(ex_1)$ to $oper(ex_2)$ in $precgraph(view)$.

Now suppose there is a path from $oper(ex_1)$ to $oper(ex_2)$ in $precgraph(view)$.
Then there is a path from $node(ex_1)$ to $node(ex_2)$ in $\mathcal{G}(T)$.
By Observation~\ref{obs:pathearlier}, $time_T(ex_1^{\ref{slupdate}}) < time_T(ex_2^{\ref{scan}})$.
This immediately implies that
\begin{equation}
	time_{T|root}(ex_1^{\ref{slupdate}}) < time_{T|root}(ex_2^{\ref{scan}}).\label{precgraphofhistory}
\end{equation}
By the construction rules for $H$, $rsp(ex_1^{\ref{slupdate}})$ is replaced with $rsp\bigl(op_1\bigr)$ in $H$, and $rsp(ex_2^{\ref{scan}})$ is replaced with $inv\bigl(oper(ex_2)\bigr)$ in $H$.
By this and (\ref{precgraphofhistory}), $rsp\bigl(oper(ex_1)\bigr)$ occurs before $inv\bigl(oper(ex_2)\bigr)$ in $H$, so $oper(ex_1) \xrightarrow{H} oper(ex_2)$.
\end{proof}

Throughout the remainder of this section, for any transcript $T \in \mathcal{T}$ such that $root$ contains the vector $view$ after all steps in $T$ are performed in order, we refer to $precgraph(view)$ as the precedence graph \emph{induced by} $\mathcal{G}(T)$.

For operations $oper(ex_1), oper(ex_2)$ in a precedence graph $G$, we say $oper(ex_1)$ \emph{precedes} $oper(ex_2)$ in $G$ if and only if there is a path from $oper(ex_1)$ to $oper(ex_2)$ in $G$.
If there is no path between operations $oper(ex_1)$ and $oper(ex_2)$ in $G$, then we say $oper(ex_1)$ and $oper(ex_2)$ are \emph{concurrent} in $G$.
If $T \in \mathcal{T}$ is a transcript such that $G$ is the precedence graph induced by $\mathcal{G}(T)$, then following two observations are immediate from Observation~\ref{obs:pathearlier} and Observation~\ref{obs:nodeseq}, respectively:

\begin{observation}\label{obs:opprecede}
	An operation $oper(ex_1)$ precedes \ifspringer operation \else\fi $oper(ex_2)$ in $G$ if and only if $time(ex_1^{\ref{slupdate}}) < time(ex_2^{\ref{scan}})$.
\end{observation}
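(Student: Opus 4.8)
The plan is to observe that this is a direct corollary of Observations~\ref{obs:pathearlier} and~\ref{obs:nodeseq}, once one notes that the relabelling performed on lines~\ref{precgraph:replaceverts} and~\ref{precgraph:replaceedges} turns $\mathcal{G}(T)$ into $G$ via the bijection $node(ex)\mapsto oper(ex)$ on vertices together with the corresponding renaming of edges. Since this relabelling is a graph isomorphism, a directed path from $oper(ex_1)$ to $oper(ex_2)$ exists in $G$ if and only if a directed path from $node(ex_1)$ to $node(ex_2)$ exists in $\mathcal{G}(T)$. Throughout I would keep in mind that both $oper(ex_1)$ and $oper(ex_2)$ are vertices of $G$, which by the construction of $\mathcal{G}(T)$ means $ex_1^{\ref{slupdate}}\in T$ and $ex_2^{\ref{slupdate}}\in T$; this bookkeeping is exactly what is needed to invoke Observation~\ref{obs:nodeseq}.

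For the forward direction, suppose $oper(ex_1)$ precedes $oper(ex_2)$ in $G$, i.e., there is a directed path (of length at least $1$) from $oper(ex_1)$ to $oper(ex_2)$. Transporting this path across the isomorphism yields a path of length at least $1$ from $node(ex_1)$ to $node(ex_2)$ in $\mathcal{G}(T)$, and Observation~\ref{obs:pathearlier} immediately gives $time(ex_1^{\ref{slupdate}}) < time(ex_2^{\ref{scan}})$.

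For the converse, assume $time(ex_1^{\ref{slupdate}}) < time(ex_2^{\ref{scan}})$. Since $node(ex_1)$ and $node(ex_2)$ are vertices of $\mathcal{G}(T)$, we have $ex_1^{\ref{slupdate}},ex_2^{\ref{slupdate}}\in T$, so the hypotheses of Observation~\ref{obs:nodeseq} are met; it produces a path from $node(ex_1)$ to $node(ex_2)$ in $\mathcal{G}(T)$. Transporting this path back across the isomorphism gives a path from $oper(ex_1)$ to $oper(ex_2)$ in $G$, which is precisely the statement that $oper(ex_1)$ precedes $oper(ex_2)$.

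There is essentially no obstacle here; the only points requiring care are (i) confirming that the vertex/edge substitution is a genuine isomorphism, so that path existence transfers in both directions, and (ii) the length convention, ensuring that \emph{precedes} (a path of length at least $1$, consistent with the definition of a precedence graph) matches both the hypothesis of Observation~\ref{obs:pathearlier} and the conclusion of Observation~\ref{obs:nodeseq}. Both are routine given the preceding development, which is why the paper can state the result as immediate.
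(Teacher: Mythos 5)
Your proposal is correct and matches the paper exactly: the paper states that this observation is immediate from Observation~\ref{obs:pathearlier} (forward direction) and Observation~\ref{obs:nodeseq} (converse), which is precisely the two-directional argument you give, with the vertex/edge relabelling of lines~\ref{precgraph:replaceverts} and~\ref{precgraph:replaceedges} supplying the isomorphism between $\mathcal{G}(T)$ and $G$. Your added bookkeeping about path lengths and membership of $ex_1^{\ref{slupdate}}, ex_2^{\ref{slupdate}}$ in $T$ is a faithful elaboration of what the paper leaves implicit.
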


\begin{observation}\label{obs:opconcurrent}
	The operations $oper(ex_1), oper(ex_2)$ are concurrent in $G$ if and only if
	\begin{enumerate}[labelindent=0pt,labelwidth=\widthof{\ref{def:interp2}},itemindent=1em,leftmargin=!,label=(\roman*)]
		\item $time(ex_2^{\ref{scan}}) < time(ex_1^{\ref{slupdate}})$, and
		\item $time(ex_1^{\ref{scan}}) < time(ex_2^{\ref{slupdate}})$.
	\end{enumerate}
\end{observation}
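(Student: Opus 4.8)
The plan is to derive Observation~\ref{obs:opconcurrent} directly from Observation~\ref{obs:opprecede} by unwinding the definition of concurrency and negating. Recall that $oper(ex_1)$ and $oper(ex_2)$ are concurrent in $G$ precisely when there is no directed path between them in either direction, i.e., when $oper(ex_1)$ does not precede $oper(ex_2)$ and $oper(ex_2)$ does not precede $oper(ex_1)$. So the first step is to translate each of these two non-precedence conditions into a statement about step times.

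Applying Observation~\ref{obs:opprecede} gives that $oper(ex_1)$ precedes $oper(ex_2)$ if and only if $time(ex_1^{\ref{slupdate}}) < time(ex_2^{\ref{scan}})$; applying the same observation with the roles of $ex_1$ and $ex_2$ exchanged gives that $oper(ex_2)$ precedes $oper(ex_1)$ if and only if $time(ex_2^{\ref{slupdate}}) < time(ex_1^{\ref{scan}})$. Negating both biconditionals, the pair is concurrent in $G$ if and only if $time(ex_2^{\ref{scan}}) \leq time(ex_1^{\ref{slupdate}})$ and $time(ex_1^{\ref{scan}}) \leq time(ex_2^{\ref{slupdate}})$.

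It remains only to sharpen each $\leq$ to a strict $<$, which is where the one (very minor) point of care lies. The step $ex_1^{\ref{slupdate}}$ is the response event of the $root.update$ operation performed on line~\ref{slupdate} during $ex_1$, whereas $ex_2^{\ref{scan}}$ is the response event of the $root.scan$ operation performed on line~\ref{scan} during $ex_2$; these are response events of two distinct atomic operations on $root$, and hence occupy distinct positions in $T$. Thus $time(ex_2^{\ref{scan}}) \neq time(ex_1^{\ref{slupdate}})$, so the inequality $time(ex_2^{\ref{scan}}) \leq time(ex_1^{\ref{slupdate}})$ is equivalent to the strict inequality $time(ex_2^{\ref{scan}}) < time(ex_1^{\ref{slupdate}})$; the identical argument applies to the second pair of steps. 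This yields conditions (i) and (ii) exactly.

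I do not expect any genuine obstacle: the statement is a mechanical negation of Observation~\ref{obs:opprecede}, and the only thing that needs explicit verification is the distinctness-of-times remark that legitimizes passing from $\leq$ to strict $<$. If one prefers to avoid invoking Observation~\ref{obs:opprecede} and instead argue from scratch, the same conclusion follows by combining the contrapositive of Observation~\ref{obs:pathearlier} (no path forces the time ordering to fail) with Observation~\ref{obs:nodeseq} (the time ordering forces a path), applied in both directions; but routing through Observation~\ref{obs:opprecede} keeps the argument shortest, since it already packages exactly the equivalence we need to negate.
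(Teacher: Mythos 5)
Your proposal is correct and follows essentially the same route as the paper, which leaves this observation unproved as ``immediate'' from Observations~\ref{obs:pathearlier} and~\ref{obs:nodeseq} (of which Observation~\ref{obs:opprecede} is exactly the packaged biconditional you negate). Your extra remark justifying the passage from $\leq$ to strict $<$ --- that $ex_1^{\ref{slupdate}}$ and $ex_2^{\ref{scan}}$ are response events of distinct atomic operations on $root$ and hence occupy distinct positions in the transcript --- is a detail the paper leaves implicit, and is a worthwhile addition.
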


\ifthesis
\section{Proof of Strong Linearizability}\label{sec:genproofsl}
\else
\subsection{Proof of Strong Linearizability}\label{sec:genproofsl}
\fi

Notice that we cannot use Aspnes and Herlihy's linearization function (i.e. topological orderings of linearization graphs) to prove strong linearizability, since this function is not prefix preserving.
This is because operations may be written to the ``middle'' of the linearization graph; we clarify this argument with an example.
Consider a transcript $T$ containing only the operations $ex$ and $ex'$. Suppose $invoc(ex')$ dominates $invoc(ex)$.
While $ex$ is pending, suppose $ex'$ writes its constructed node to $root$.
Let $T'$ be the prefix of $T$ that ends with the response of the $root.update$ operation by $ex'$ on line~\ref{slupdate}.
Suppose that $ex$ completes its operation in $T$.
Let $G'$ be the precedence graph induced by $\mathcal{G}(T')$, and let $G$ be the precedence graph induced by $\mathcal{G}(T)$.
Then $lingraph(G')$ only contains $oper(ex')$, while $lingraph(G)$ contains both operations along with a dominance edge from $oper(ex)$ to $oper(ex')$ .
Hence, $oper(ex')$ is the only topological ordering of $lingraph(G')$, while $oper(ex) \circ oper(ex')$ is the only topological ordering of $lingraph(G)$.
This demonstrates how operations can unavoidably be written to the middle of the linearization order.
For this reason, we must define our own linearization function for Algorithm~\ref{gen}.

Let $T \in \mathcal{T}$ be some transcript.
Let $ex_1, \ldots, ex_k$ be a sequence consisting of all operations in $\Gamma(T)$, such that for any $i, j \in \{1, \ldots, k\}$ with $i < j$, $invoc(ex_j)$ does not dominate $invoc(ex_i)$ (such an ordering is guaranteed to exist, since dominance is a strict partial order).
For every $i \in \{1, \ldots, k\}$, let $pt_T(ex_i)$ be defined inductively as follows:

\begin{enumerate}[labelindent=0pt,labelwidth=\widthof{\ref{def:interp2}},itemindent=1em,leftmargin=!,label=\textbf{I-\arabic*},ref={I-\arabic*}]
	\item If $i = 1$, then define
	\begin{equation*}
		pt_T(ex_i) = time(ex_i^{\ref{slupdate}}).
	\end{equation*}\label{def:base}
	\item If $i > 1$, then let $Dom_i$ be the set of all operations $ex_h$ such that $h < i$ and $invoc(ex_h)$ dominates $invoc(ex_i)$. Define
	\ifspringer
	\begin{equation*}
	\small
	\begin{split}
		\text{$pt_T(ex_i) = \min \Bigl($} &\text{$\bigl\{pt(ex_h)\; :\; ex_h \in Dom_i \wedge time(ex_i^{\ref{scan}}) <$} \\ &\text{$pt(ex_h)\bigr\} \cup \bigl\{time(ex_i^{\ref{slupdate}})\bigr\}\Bigr)$.}
	\end{split}
	\end{equation*}\label{def:inductive}
	\else
	\begin{equation*}
		\text{$pt_T(ex_i) = \min \Bigl($} \text{$\bigl\{pt(ex_h)\; :\; ex_h \in Dom_i \wedge time(ex_i^{\ref{scan}}) <$ $pt(ex_h)\bigr\} \cup \bigl\{time(ex_i^{\ref{slupdate}})\bigr\}\Bigr)$.}
	\end{equation*}\label{def:inductive}
	\fi
\end{enumerate}

In \ref{def:inductive}, we emphasize that $Dom_i \subseteq \{ex_1, \ldots ex_{i-1}\}$; this ensures that \ref{def:base} and \ref{def:inductive} together form a proper inductive definition.
However, note that if $ex_h$ dominates $ex_i$, then it is guaranteed that $h < i$ since $ex_1, \ldots, ex_k$ is ordered by dominance.
That is, every operation whose invocation dominates $invoc(ex_i)$ must be earlier in the sequence $ex_1, \ldots, ex_k$ than $ex_i$.
Therefore, the following property is guaranteed for any $ex \in \Gamma(T)$:

\ifspringer
\begin{equation}\label{ptdefinition}
\small
\begin{split}
pt_T(ex) = \min \Bigl(&\bigl\{time(ex^{\ref{slupdate}})\bigr\} \cup \bigl\{pt(ex')\; :\; \\ &\text{$invoc(ex')$ dominates $invoc(ex)$}\;\wedge \\ &time(ex^{\ref{scan}}) < pt(ex')\bigr\}\Bigr).
\end{split}
\end{equation}
\else
\begin{equation}\label{ptdefinition}
\begin{split}
pt_T(ex) = \min \Bigl(&\bigl\{time(ex^{\ref{slupdate}})\bigr\} \cup \\ &\bigl\{pt(ex')\; :\; \text{$invoc(ex')$ dominates $invoc(ex)$}\;\wedge\; time(ex^{\ref{scan}}) < pt(ex')\bigr\}\Bigr).
\end{split}
\end{equation}
\fi
To simplify our proofs, we decompose property (\ref{ptdefinition}) into the following two rules:
\begin{enumerate}[labelindent=0pt,labelwidth=\widthof{\ref{def:interp2}},itemindent=1em,leftmargin=!,label=\textbf{J-\arabic*},ref={J-\arabic*}]
	\item If there is an operation $ex' \in \Gamma(T)$ such that $invoc(ex')$ dominates $invoc(ex)$ and $time(ex^{\ref{scan}}) < pt_T(ex') < pt_T(ex^{\ref{slupdate}})$, then let $ex_0$ be such an operation with minimal $pt_T$ value. Then $pt_T(ex) = pt_T(ex_0)$.\label{gen:ptrdom}
	\item If no such operation $ex'$ exists, then $pt_T(ex) = time(ex^{\ref{slupdate}})$. Recall that if $ex^{\ref{slupdate}} \not\in T$, then $time(ex^{\ref{slupdate}}) = \infty$.\label{gen:ptrself}
\end{enumerate}
If $pt_T(ex) \neq \infty$ for some operation $ex$, then we say $ex$ \emph{linearizes} at $pt_T(ex)$.
Also, if $T$ is clear from context, then we shorten $pt_T(ex)$ to $pt(ex)$.

Let $f(T)$ be a sequential history consisting of all and only those operations $ex_1, ex_2 \in \Gamma(T)$ by processes $q$ and $p$, respectively, with $pt(ex_1) \neq \infty$ and $pt(ex_2) \neq \infty$, such that $ex_1 \xrightarrow{f(T)} ex_2$ if and only if
\begin{enumerate}[labelindent=0pt,labelwidth=\widthof{\ref{def:interp2}},itemindent=1em,leftmargin=!,label=\textbf{K-\arabic*},ref={K-\arabic*}]
	\item operation $ex_1$ linearizes before $ex_2$ (i.e. $pt(ex_1) < pt(ex_2)$), or\label{gen:rulelin}
	\item operations $ex_1$ and $ex_2$ linearize at the same time (i.e. $pt(ex_1) = pt(ex_2)$) and $invoc(ex_2)$ dominates $invoc(ex_1)$, or\label{gen:ruledom}
	\item operations $ex_1$ and $ex_2$ linearize at the same time (i.e. $pt(ex_1) = pt(ex_2)$), $invoc(ex_2)$ does not dominate $invoc(ex_1)$, $invoc(ex_1)$ does not dominate $invoc(ex_2)$, and $q < p$.\label{gen:rulecomm}
\end{enumerate}
Since dominance is a strict partial order \cite{generalwaitfree}, \ref{gen:ruledom} and \ref{gen:rulecomm} impose a total order on the set of operations that linearize at any step of $T$.

\begin{observation}\label{obs:genptin}
	For every $ex \in \Gamma(T)$, $pt(ex)$ is in the interval $\bigl(time(ex^{\ref{scan}}), time(ex^{\ref{slupdate}})\bigr]$.
\end{observation}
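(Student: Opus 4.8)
The plan is to establish the two endpoints of the interval separately, by a direct case analysis on which of the two defining rules, \ref{gen:ptrself} or \ref{gen:ptrdom}, determines $pt(ex)$. No genuine induction is needed for this statement: although $pt$ is defined inductively, the observation follows from the \emph{local} form of the definition recorded in property (\ref{ptdefinition}), together with a single structural fact about the algorithm. That fact is that within the execution of any operation $ex$ the $root.scan()$ of line~\ref{scan} is performed strictly before the $root.update$ of line~\ref{slupdate}; since a process executes the steps of $ex$ sequentially, whenever $ex^{\ref{slupdate}}\in T$ we also have $ex^{\ref{scan}}\in T$ and $time(ex^{\ref{scan}}) < time(ex^{\ref{slupdate}})$. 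First I would state and use this fact.

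For the upper bound $pt(ex)\le time(ex^{\ref{slupdate}})$, I would observe that $time(ex^{\ref{slupdate}})$ always belongs to the set over which the minimum in (\ref{ptdefinition}) is taken. Concretely, in rule \ref{gen:ptrself} we have $pt(ex)=time(ex^{\ref{slupdate}})$ by definition, so equality holds; in rule \ref{gen:ptrdom} we have $pt(ex)=pt(ex_0)$ for an operation $ex_0$ whose selecting condition explicitly requires $pt(ex_0) < time(ex^{\ref{slupdate}})$, so the bound holds (in fact strictly). For the lower bound $time(ex^{\ref{scan}}) < pt(ex)$, the case \ref{gen:ptrself} gives $pt(ex)=time(ex^{\ref{slupdate}})$, which exceeds $time(ex^{\ref{scan}})$ by the structural fact above; and in case \ref{gen:ptrdom}, $pt(ex)=pt(ex_0)$ where $ex_0$ is chosen precisely to satisfy $time(ex^{\ref{scan}}) < pt(ex_0)$, so the inequality is immediate. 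Thus both bounds reduce to reading off the defining conditions, and the argument is essentially mechanical.

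The one point that requires care, and which I expect to be the main (minor) obstacle, is the treatment of the $\infty$-valued times in the half-open interval notation. When $ex^{\ref{slupdate}}\notin T$ we have $time(ex^{\ref{slupdate}})=\infty$, and I would note that no dominating $ex'$ can contribute to the minimum in this situation unless $time(ex^{\ref{scan}}) < pt(ex') < \infty$, so the bounds still follow from the same two cases, with $pt(ex)\in\bigl(time(ex^{\ref{scan}}),\infty\bigr]$ when $ex^{\ref{scan}}\in T$. The degenerate possibility is that $ex$ has been invoked but has not yet performed its scan, so that $time(ex^{\ref{scan}})=time(ex^{\ref{slupdate}})=\infty$ and $pt(ex)=\infty$; here the interval is interpreted, consistently with its use elsewhere in this section, as containing $\infty$, so the statement holds vacuously for such not-yet-started operations. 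I would dispatch this boundary case in a single sentence after the two main cases.
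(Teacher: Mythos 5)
Your proof is correct and follows essentially the same route as the paper's: a direct case analysis on whether $pt(ex)$ is determined by \ref{gen:ptrdom} (where the interval membership is explicit in the defining condition) or by \ref{gen:ptrself} (where $pt(ex)=time(ex^{\ref{slupdate}})$ hits the right endpoint). The paper's own proof is just these two sentences; your additional care with the structural fact that line~\ref{scan} precedes line~\ref{slupdate} and with the $\infty$-valued boundary cases is sound but not something the paper spells out.
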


\begin{proof}
If $pt(ex)$ satisfies \ref{gen:ptrdom}, then $pt(ex)$ is in the interval $\bigl(time(ex^{\ref{scan}}), time(ex^{\ref{slupdate}})\bigr)$ explicitly. Otherwise, if $pt(ex)$ satisfies \ref{gen:ptrself}, then $pt(ex) = time(ex^{\ref{slupdate}})$.
\end{proof}

\begin{observation}\label{obs:genptltdom}
	Let $ex_1 \in \Gamma(T)$ be an operation. If there exists an $ex_2 \in \Gamma(T)$ such that $time(ex_1^{\ref{scan}}) < pt(ex_2)$ and $invoc(ex_2)$ dominates $invoc(ex_1)$, then \ifspringer \\ \else\fi $pt(ex_1) \leq pt(ex_2)$.
\end{observation}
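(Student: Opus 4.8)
The plan is to read the claim off directly from the closed-form characterization (\ref{ptdefinition}) of the linearization point, rather than working through the inductive rules \ref{gen:ptrdom} and \ref{gen:ptrself} by hand. The excerpt already establishes that, for every operation $ex \in \Gamma(T)$,
\begin{equation*}
pt(ex) = \min \Bigl(\{time(ex^{\ref{slupdate}})\} \cup \bigl\{pt(ex')\ :\ invoc(ex') \text{ dominates } invoc(ex) \,\wedge\, time(ex^{\ref{scan}}) < pt(ex')\bigr\}\Bigr),
\end{equation*}
so I may apply this identity to $ex_1$ without re-deriving it.

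First I would dispose of the degenerate case $pt(ex_2) = \infty$: here $pt(ex_1) \leq pt(ex_2) = \infty$ holds vacuously, so for the remainder I assume $pt(ex_2) \neq \infty$. For the main case, the key observation is that the two hypotheses of the observation are \emph{exactly} the two membership conditions appearing in the second set inside the minimum defining $pt(ex_1)$. Concretely, the hypothesis that $invoc(ex_2)$ dominates $invoc(ex_1)$ and the hypothesis $time(ex_1^{\ref{scan}}) < pt(ex_2)$ together witness that $pt(ex_2) \in \{pt(ex')\ :\ invoc(ex') \text{ dominates } invoc(ex_1) \,\wedge\, time(ex_1^{\ref{scan}}) < pt(ex')\}$. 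Hence $pt(ex_2)$ is one of the values over which the minimum defining $pt(ex_1)$ is taken, and therefore $pt(ex_1) \leq pt(ex_2)$, as required.

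The argument is essentially a one-line unfolding of the definition, so there is no substantive obstacle; the only care needed is the bookkeeping around the sentinel value $\infty$ (handled by the case split above) and confirming that (\ref{ptdefinition}) applies uniformly to $ex_1$. Since the excerpt derives (\ref{ptdefinition}) as a consequence of the dominance-ordered inductive definition for \emph{every} $ex \in \Gamma(T)$, this applicability is already guaranteed, and no appeal to the ordering $ex_1, \ldots, ex_k$ or to the well-foundedness of dominance is needed beyond what (\ref{ptdefinition}) already encapsulates.
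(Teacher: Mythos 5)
Your proof is correct and is essentially the paper's argument in compressed form: the paper splits into the cases $time(ex_1^{\ref{slupdate}}) \leq pt(ex_2)$ (dispatched via Observation~\ref{obs:genptin}) and $pt(ex_2) < time(ex_1^{\ref{slupdate}})$ (dispatched via \ref{gen:ptrdom} by taking the dominating operation of minimal $pt$ value), which is exactly the unfolding of the minimum in (\ref{ptdefinition}) that you invoke in one step. Reading the claim directly off (\ref{ptdefinition}) is legitimate since the paper establishes that identity for every $ex \in \Gamma(T)$, and your separate treatment of the sentinel case $pt(ex_2) = \infty$ matches the paper's.
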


\begin{proof}
This is trivial if $pt(ex_2) = \infty$. If $time(ex_1^{\ref{slupdate}}) \leq pt(ex_2)$, then the observation follows from Observation~\ref{obs:genptin}.

Suppose $pt(ex_2) < time(ex_1^{\ref{slupdate}})$. Let $ex_3 \in \Gamma(T)$ be an operation that satisfies the following statements: 
\begin{enumerate}[labelindent=0pt,labelwidth=\widthof{\ref{def:interp2}},itemindent=1em,leftmargin=!,label=(\roman*)]
	\item $time(ex_1^{\ref{scan}}) < pt(ex_3) < time(ex_1^{\ref{slupdate}})$,\label{obs:lemmaroman1}
	\item $invoc(ex_3)$ dominates $invoc(ex_1)$, and\label{obs:lemmaroman2}
	\item $invoc(ex_3)$ has the lowest possible $pt$ value of any operation that satisfies \ref{obs:lemmaroman1} and \ref{obs:lemmaroman2}.
\end{enumerate}
Hence, $pt(ex_3) \leq pt(ex_2)$, and $pt(ex_1) = pt(ex_3)$ by \ref{gen:ptrdom}.
\end{proof}

\begin{observation}\label{obs:genlinimpliesupdate}
	Suppose $ex_1$ is an operation such that $pt(ex_1) < time(ex_1^{\ref{slupdate}})$. Then there exists an operation $ex_2$ such that $invoc(ex_2)$ dominates $invoc(ex_1)$ and $pt(ex_1) = pt(ex_2) = time(ex_2^{\ref{slupdate}})$.
\end{observation}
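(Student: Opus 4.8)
The plan is to prove a slightly stronger statement by induction along the dominance ordering and then specialize it. Specifically, I would first establish the following \textbf{claim}: for every operation $ex \in \Gamma(T)$ with $pt(ex) \neq \infty$ there is a \emph{witness} operation $w$ such that (a) $pt(w) = time(w^{\ref{slupdate}})$ (so $w$ linearizes at its own update step), (b) $pt(w) = pt(ex)$, and (c) either $w = ex$ or $invoc(w)$ dominates $invoc(ex)$. Informally, every finite linearization point is ``explained'' by an operation that linearizes at its own $root.update$ and that dominates (or equals) the operation in question.

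I would prove this claim by strong induction on the position of $ex$ in the dominance ordering $ex_1,\dots,ex_k$ used to define $pt$. For the first operation in the ordering there is no preceding dominator, so rule \ref{gen:ptrself} gives $pt = time(\cdot^{\ref{slupdate}})$ and $w = ex$ works. For a later operation $ex$: if rule \ref{gen:ptrself} applies then again $pt(ex) = time(ex^{\ref{slupdate}})$ and $w = ex$; otherwise rule \ref{gen:ptrdom} applies, so $pt(ex) = pt(ex')$ for some dominator $ex'$ appearing earlier in the ordering, and $pt(ex') = pt(ex) \neq \infty$. The inductive hypothesis supplies a witness $w$ for $ex'$. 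Properties (a) and (b) carry over verbatim, since $pt(ex) = pt(ex') = pt(w)$ and $w$ already linearizes at its own update step. For (c), if $w = ex'$ then $invoc(w)$ dominates $invoc(ex)$ directly (as $ex' \in Dom$); and if instead $invoc(w)$ dominates $invoc(ex')$, then because $invoc(ex')$ dominates $invoc(ex)$ and dominance is a strict partial order (hence transitive), $invoc(w)$ dominates $invoc(ex)$.

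To finish the observation, I would apply the claim to the given operation (the statement's $ex_1$) to obtain a witness $w$, and set $ex_2 = w$. By the claim, $invoc(ex_2)$ dominates $invoc(ex_1)$ or $ex_2 = ex_1$, and $pt(ex_2) = pt(ex_1) = time(ex_2^{\ref{slupdate}})$. But the hypothesis $pt(ex_1) < time(ex_1^{\ref{slupdate}})$ combined with $pt(ex_2) = pt(ex_1)$ and $pt(ex_2) = time(ex_2^{\ref{slupdate}})$ forces $time(ex_2^{\ref{slupdate}}) < time(ex_1^{\ref{slupdate}})$, so $ex_2 \neq ex_1$. Hence the equality case of (c) is ruled out, leaving exactly that $invoc(ex_2)$ dominates $invoc(ex_1)$ and $pt(ex_1) = pt(ex_2) = time(ex_2^{\ref{slupdate}})$, which is the desired conclusion.

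The main obstacle is getting the induction hypothesis right: the superficially natural statement ``$pt(ex)$ equals $time(\cdot^{\ref{slupdate}})$ for some dominator'' is too weak to propagate, because when an operation inherits its linearization point via \ref{gen:ptrdom} we must \emph{reuse} the same witness, and this only works if we already know the witness linearizes at its own update step, i.e. property (a). Carrying (a) through the induction, and using transitivity of dominance to compose witnesses across the inductive step, is precisely what makes the argument close.
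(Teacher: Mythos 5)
Your proof is correct, but it takes a genuinely different route from the paper's. The paper's argument is a two-sentence extremal one: among all operations that linearize at $pt(ex_1)$, pick one that is maximal under dominance; rule \ref{gen:ptrdom} cannot apply to it (a dominator sharing its linearization point would contradict maximality), so rule \ref{gen:ptrself} must, giving $pt(ex_2) = time(ex_2^{\ref{slupdate}})$. You instead run a strong induction along the dominance ordering used to define $pt$, carrying the invariant that every operation with a finite linearization point has a witness that linearizes at its own $root.update$ step and that equals or dominates it, and you compose witnesses across the inductive step via transitivity of dominance. Your version is longer but buys something concrete: it explicitly delivers the conclusion that $invoc(ex_2)$ dominates $invoc(ex_1)$, which the statement requires. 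The paper's maximality argument as written only yields $pt(ex_2) = pt(ex_1) = time(ex_2^{\ref{slupdate}})$ for a dominance-maximal element of the whole linearization class, and such an element need not be comparable to $ex_1$; to recover the dominance claim one must restrict the maximality argument to the operations that dominate or equal $ex_1$ and then appeal to transitivity --- exactly the bookkeeping your induction performs in the open. Both arguments ultimately rest on the same two facts (dominance is a strict partial order, and an operation whose $pt$ value is not inherited via \ref{gen:ptrdom} must satisfy \ref{gen:ptrself}), so the difference is one of packaging, with yours being the more airtight rendering.
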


\begin{proof}
Since dominance is a strict partial order, there is a maximal element in the set of operations that linearize at $pt(ex_1)$.
That is, there exists an operation $ex_2$ with $pt(ex_2) = pt(ex_1)$, such that there is no operation $ex_3$ with $pt(ex_3) = pt(ex_1)$ and $invoc(ex_3)$ dominates $invoc(ex_2)$.
Therefore, $pt(ex_2)$ does not satisfy \ref{gen:ptrdom}.
Hence, $pt(ex_2)$ must satisfy \ref{gen:ptrself}, meaning $pt(ex_2) = time(ex_2^{\ref{slupdate}})$.
\end{proof}

\begin{lemma}\label{lemma:forcedom}
	Let $T \in \mathcal{T}$ be a transcript, where $G$ is the precedence graph induced by $\mathcal{G}(T)$.
	Suppose $ex_1, ex_2 \in \Gamma(T)$ are operations such that
	\begin{gather}
		\text{$oper(ex_1)$ and $oper(ex_2)$ are concurrent in $G$,}\label{assumconcur} \\
		\text{$invoc(ex_1)$ dominates $invoc(ex_2)$, and}\label{assumdom} \\
		\text{$ex_1 \xrightarrow{f(T)} ex_2$.}\label{assumprec}
	\end{gather}
	Then there exists an operation $ex_3 \in \Gamma(T)$ such that $invoc(ex_3)$ dominates $invoc(ex_1)$ and $oper(ex_3)$ precedes $oper(ex_2)$ in $G$.
\end{lemma}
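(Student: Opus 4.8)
The plan is to argue directly from the definition of the linearization order and the three auxiliary observations about $pt$. First I would pin down the relationship between $pt(ex_1)$ and $pt(ex_2)$. Since $invoc(ex_1)$ dominates $invoc(ex_2)$ and dominance is a strict partial order, $invoc(ex_2)$ cannot dominate $invoc(ex_1)$, so rule \ref{gen:ruledom} cannot be the reason for $ex_1 \xrightarrow{f(T)} ex_2$; and since one invocation does dominate the other, rule \ref{gen:rulecomm} is also inapplicable. Hence $ex_1 \xrightarrow{f(T)} ex_2$ must hold by \ref{gen:rulelin}, giving $pt(ex_1) < pt(ex_2)$.

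Next I would show $pt(ex_1) \leq time(ex_2^{\ref{scan}})$. Suppose instead that $time(ex_2^{\ref{scan}}) < pt(ex_1)$. Applying Observation~\ref{obs:genptltdom} with $ex_2$ in the role of its ``first'' operation and $ex_1$ in the role of its ``second'' (using that $invoc(ex_1)$ dominates $invoc(ex_2)$ and $time(ex_2^{\ref{scan}}) < pt(ex_1)$) yields $pt(ex_2) \leq pt(ex_1)$, contradicting the previous paragraph. Thus $pt(ex_1) \leq time(ex_2^{\ref{scan}})$. Combining this with the concurrency of $oper(ex_1)$ and $oper(ex_2)$ --- which by Observation~\ref{obs:opconcurrent} gives $time(ex_2^{\ref{scan}}) < time(ex_1^{\ref{slupdate}})$ --- I obtain $pt(ex_1) < time(ex_1^{\ref{slupdate}})$, so $pt(ex_1)$ is determined by rule \ref{gen:ptrdom} rather than \ref{gen:ptrself}.

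This is exactly the hypothesis of Observation~\ref{obs:genlinimpliesupdate}, which then supplies an operation $ex_3$ with $invoc(ex_3)$ dominating $invoc(ex_1)$ and $pt(ex_3) = pt(ex_1) = time(ex_3^{\ref{slupdate}})$. This $ex_3$ is the desired witness: $invoc(ex_3)$ dominates $invoc(ex_1)$ by construction, and $time(ex_3^{\ref{slupdate}}) = pt(ex_1) \leq time(ex_2^{\ref{scan}})$. Since $ex_3^{\ref{slupdate}}$ is a $root.update$ step while $ex_2^{\ref{scan}}$ is a $root.scan$ step, they are distinct steps of $T$ and hence occur at distinct times, so the inequality is strict: $time(ex_3^{\ref{slupdate}}) < time(ex_2^{\ref{scan}})$. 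By Observation~\ref{obs:opprecede} this means $oper(ex_3)$ precedes $oper(ex_2)$ in $G$, completing the proof.

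I expect the main obstacle to be the second step: recognizing that Observation~\ref{obs:genptltdom} should be applied in the ``reversed'' direction (to bound $pt(ex_2)$ from above by $pt(ex_1)$, thereby forcing a contradiction and pinning $pt(ex_1)$ at or below $time(ex_2^{\ref{scan}})$), together with the small but essential point that the non-strict bound $pt(ex_1) \leq time(ex_2^{\ref{scan}})$ can be upgraded to a strict one because no two distinct steps of a transcript share a time. Everything else is bookkeeping over the case analysis of the $pt$ definition.
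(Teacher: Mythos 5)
Your proof is correct and follows essentially the same route as the paper's: derive $pt(ex_1) < pt(ex_2)$ from the ordering rules, use Observation~\ref{obs:genptltdom} in the reversed direction to force $pt(ex_1)$ below $time(ex_2^{\ref{scan}})$, invoke Observation~\ref{obs:genlinimpliesupdate} to obtain $ex_3$, and finish with Observation~\ref{obs:opprecede}. Your explicit upgrade of the non-strict bound to a strict one (via distinctness of steps) is a detail the paper glosses over, but the argument is otherwise the same.
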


\begin{proof}
	If $pt(ex_1) = pt(ex_2)$, then $ex_2 \xrightarrow{f(T)} ex_1$ by \ref{gen:ruledom}.
	This contradicts (\ref{assumprec}), so $pt(ex_1) \neq pt(ex_2)$.
	If $pt(ex_1) > pt(ex_2)$, then $ex_2 \xrightarrow{f(T)} ex_1$ by \ref{gen:rulelin}.
	Again this contradicts (\ref{assumprec}), so
	\begin{equation}
		pt(ex_1) < pt(ex_2).\label{lemma:forcedom:ptless}
	\end{equation}
	By (\ref{assumconcur}) and Observation~\ref{obs:opconcurrent}
	\begin{gather}
		time(ex_1^{\ref{scan}}) < time(ex_2^{\ref{slupdate}})\text{, and}\label{lemma:forcedom:i1less} \\
		time(ex_2^{\ref{scan}}) < time(ex_1^{\ref{slupdate}}).\label{lemma:forcedom:i2less}
	\end{gather}
	Suppose that $time(ex_2^{\ref{scan}}) < pt(ex_1)$.
	Using this and (\ref{assumdom}), $pt(ex_2) \leq pt(ex_1)$ by Observation~\ref{obs:genptltdom}.
	This contradicts (\ref{lemma:forcedom:ptless}).
	Hence,
	\begin{equation}
		pt(ex_1) < time(ex_2^{\ref{scan}}).\label{lemma:forcedom:pti1less}
	\end{equation}
	By (\ref{lemma:forcedom:i2less}), (\ref{lemma:forcedom:pti1less}), and Observation~\ref{obs:genlinimpliesupdate}, there exists an operation $ex_3$ such that
	\begin{gather}
		\text{$invoc(ex_3)$ dominates $invoc(ex_1)$, and}\label{lemma:forcedom:invoc3dom} \\
		\text{$pt(ex_3) = pt(ex_1) = time(ex_3^{\ref{slupdate}})$.}\label{lemma:forcedom:invoc3pt}
	\end{gather}
	By (\ref{lemma:forcedom:pti1less}) and (\ref{lemma:forcedom:invoc3pt}), $time(ex_3^{\ref{slupdate}}) < time(ex_2^{\ref{scan}})$, which implies that $oper(ex_3)$ precedes $oper(ex_2)$ in $G$ by Observation~\ref{obs:opprecede}.
	This, combined with (\ref{lemma:forcedom:invoc3dom}) and (\ref{lemma:forcedom:invoc3pt}), imply the statement of the lemma.
\end{proof}

The key result of Aspnes and Herlihy \cite{generalwaitfree} is stated below (note that we have combined two results from \cite{generalwaitfree} --- specifically, Lemma~11 and Theorem~17):

\begin{lemma}\label{herl:linresult}
	For any transcript $T \in \mathcal{T}$, where $G$ is the precedence graph induced by $\mathcal{G}(T)$, any topological ordering of $lingraph(G)$ is a linearization of $\Gamma(T)$.
\end{lemma}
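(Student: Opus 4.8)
The plan is to verify the three defining properties of a linearization for an arbitrary topological ordering $S$ of $lingraph(G)$: that $S$ consists of the operations of a completion of $\Gamma(T)$ with their recorded invocations and responses, that $S$ extends $\xrightarrow{H}$, and that $S$ is valid. The first property is immediate, since the vertices of $G$ are exactly the operations $ex$ with $ex^{\ref{slupdate}} \in T$, which form a completion of $\Gamma(T)$ (pending operations that have not yet written their node are simply dropped). The happens-before property is also quick: the construction of $lingraph(G)$ only \emph{adds} edges to $G$, so every directed path of $G$ survives in $lingraph(G)$, and hence any topological ordering of $lingraph(G)$ respects reachability in $G$; since the transitive closure of $G$ is exactly $\xrightarrow{H}$, the ordering $S$ extends the happens-before order. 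All the real work is in establishing validity.

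The cornerstone is the following commutativity claim: if two operations $op_i$ and $op_j$ are incomparable in $lingraph(G)$, then $invoc(op_i)$ and $invoc(op_j)$ commute. First I would argue that neither invocation dominates the other. Indeed, whenever one of the pair dominates the other, the construction attempts to insert the corresponding dominated-to-dominating edge; it succeeds unless the edge would close a cycle, in which case a directed path in the opposite direction already exists in $L$ and, since edges are never removed, persists to the final graph. In either case the two operations become comparable in $lingraph(G)$, contradicting incomparability. Finally, since a single process executes sequentially, incomparable operations (being concurrent in $G$ by Observation~\ref{obs:opconcurrent}) belong to distinct processes, so the mutual-overwrite tie-break by process identifier always yields a dominating operation; thus ``neither dominates the other'' forces, by simplicity of $\mathscr{T}$, that the two commute.

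From the commutativity claim I would derive that all topological orderings of $lingraph(G)$ are equivalent. Any two topological orderings of a finite acyclic digraph are connected by a sequence of transpositions of \emph{adjacent} incomparable elements; by the claim each such adjacent pair commutes, and swapping two adjacent commuting operations maps a valid history to an equivalent valid history and preserves the resulting state. Hence the orderings are either all valid or all invalid, and all induce the same equivalence class of state.

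It then remains to exhibit validity, which I regard as the main obstacle. I would proceed by induction along $S$, showing that each operation $ex$'s recorded response is correct given the prefix $S_{<ex}$ of operations ordered before it. The response stored in $node(ex)$ was computed on line~\ref{getresp} to be valid with respect to $H_{ex}$, the topological ordering of $lingraph(G_{ex})$ that $ex$ built on lines~\ref{gen:constprec}--\ref{lin}, where $G_{ex}$ is the precedence graph $ex$ observed. By Observation~\ref{obs:opprecede}, $G_{ex}$ is exactly the set of predecessors of $ex$ in $G$, so by the equivalence of topological orderings these predecessors contribute a state equivalent to $H_{ex}$. The delicate point is the concurrent operations that nonetheless precede $ex$ in $S$: each reaches $ex$ through at least one dominance edge (a path of precedence edges alone would make it a predecessor, contradicting concurrency), so $ex$ dominates it directly, or the interleaving is mediated by operations that $ex$ dominates; in either case $ex$ overwrites it, or mutually overwrites it with higher process identifier. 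By the overwriting property, appending $ex$ after such an operation leaves the state---and in particular $ex$'s response---unchanged, so these operations can be collapsed away, while commuting concurrent operations can be shuffled past the genuine predecessors of $ex$. The bookkeeping required to untangle arbitrary chains of dominance edges and interleavings of concurrent operations with true predecessors, while certifying that $ex$'s response matches $H_{ex}$ throughout, is where the bulk of the effort lies; this is precisely the content combined from Lemma~11 and Theorem~17 of Aspnes and Herlihy that Lemma~\ref{herl:linresult} invokes.
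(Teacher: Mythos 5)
The paper does not prove this lemma at all: it is explicitly imported from Aspnes and Herlihy, combining their Lemma~11 and Theorem~17, so there is no in-paper argument to compare yours against. Your sketch reconstructs the right skeleton of the cited argument --- vertices of $G$ form a completion, $lingraph$ only adds edges so topological sorts extend happens-before, incomparability in $lingraph(G)$ forces commutativity (via the observation that concurrent operations belong to distinct processes, so the process-id tie-break always produces a dominator when mutual overwriting holds), and hence all topological sorts are pairwise equivalent. Those pieces are argued correctly.

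The problem is that the decisive step --- certifying by induction along $S$ that each operation's \emph{recorded} response, computed on line~\ref{getresp} against the local history $H_{ex}$, remains valid at its position in an arbitrary topological sort of the full $lingraph(G)$ --- is not carried out. You describe it as ``bookkeeping'' and then attribute it back to the very Lemma~11 and Theorem~17 that the statement rests on, which makes the proposal a re-citation rather than a proof. Two concrete difficulties lurk there. First, your adjacent-transposition argument needs both $H \circ op_1$ and $H \circ op_2$ to be valid before commutativity (as defined in the paper) can be invoked, and $H \circ op_2$'s validity is not automatic when $op_2$'s response was fixed relative to a different prefix; this must be threaded through the induction. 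Second, the claim that every concurrent operation ordered before $ex$ can be ``collapsed away'' by overwriting or ``shuffled past'' by commuting requires handling chains in which dominance edges and precedence edges alternate, and that case analysis is exactly the substance of the cited results. If your goal is only to justify using the lemma as a black box, your outline suffices; if it is to prove the lemma, the core is missing.
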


We proceed to show that $f$ is a linearization function by demonstrating that if $f(T) = ex_1 \circ \ldots \circ ex_k$, then $oper(ex_1) \circ \ldots \circ oper(ex_k)$ is a topological ordering of $lingraph(G)$, where $G$ is the precedence graph induced by $\mathcal{G}(T)$, for any transcript $T \in \mathcal{T}$. There are two facts that make our task nontrivial. First, $f(T)$ may contain operations whose constructed nodes are not present in $\mathcal{G}(T)$. We claim that these operations are overwritten before they are added to the shared precedence graph, and the responses that are eventually calculated for each of these operations are valid for their position in the linearization order. Second, it is not immediately apparent that the order of $oper(ex_1) \circ \ldots \circ oper(ex_k)$ satisfies the dominance order that is present in a topological ordering of $lingraph(G)$.


The following lemma is taken directly from Aspnes and Herlihy \cite{generalwaitfree}.

\begin{lemma}\label{herl:dombuffer}
	Let $T \in \mathcal{T}$ be a transcript, where $G$ is the precedence graph induced by $\mathcal{G}(T)$, and let $L_1 \circ oper(ex_1) \circ L_2$ be a topological sort of $lingraph(G)$. If there exists $oper(ex_2) \in L_2$ that is concurrent with $oper(ex_1)$ in $G$, and $oper(ex_1)$ dominates $oper(ex_2)$, then there is $oper(ex_3) \in L_2$ such that $invoc(ex_3)$ dominates $invoc(ex_1)$ and $oper(ex_3)$ precedes $oper(ex_2)$ in $G$.
\end{lemma}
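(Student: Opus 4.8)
The plan is to work entirely within the structure of $lingraph(G)$, exploiting the two kinds of edges it contains: a \emph{precedence edge} inherited from $G$ (along which the source $G$-precedes the target, by Observation~\ref{obs:opprecede}), and a \emph{dominance edge} added by the $lingraph$ method (along which the target's invocation dominates the source's). First I would establish the existence of a forced path. Since $L_1 \circ oper(ex_1) \circ L_2$ is a topological sort and $oper(ex_2) \in L_2$, there is no path from $oper(ex_2)$ to $oper(ex_1)$ in $lingraph(G)$. Because $oper(ex_1)$ dominates $oper(ex_2)$ and the two are concurrent in $G$, the pair $\{oper(ex_1), oper(ex_2)\}$ was considered during the construction, and the candidate edge from the dominated operation $oper(ex_2)$ to the dominating operation $oper(ex_1)$ was a candidate for insertion. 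Since that edge is absent from $lingraph(G)$, it must have been rejected to avoid closing a cycle, so a path $oper(ex_1) \to \dots \to oper(ex_2)$ already existed at that moment and survives in the final graph. As $oper(ex_1)$ and $oper(ex_2)$ are concurrent in $G$, this path cannot consist of precedence edges alone, so it uses at least one dominance edge.

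Next I would prove the statement by induction on the length $m$ of a shortest such forced path $oper(ex_1) = v_0 \to v_1 \to \dots \to v_m = oper(ex_2)$, splitting on the first edge. The clean case is when $v_0 \to v_1$ is a dominance edge, so $invoc(v_1)$ dominates $invoc(ex_1)$; the edge itself places $v_1$ after $oper(ex_1)$ in the topological sort, i.e.\ $v_1 \in L_2$. Here $v_1 \neq oper(ex_2)$, since otherwise $invoc(ex_2)$ would dominate $invoc(ex_1)$ while $invoc(ex_1)$ dominates $invoc(ex_2)$, contradicting that dominance is a strict partial order. If $v_1$ and $oper(ex_2)$ are concurrent in $G$, then since $invoc(v_1)$ dominates $invoc(ex_1)$ and $invoc(ex_1)$ dominates $invoc(ex_2)$, transitivity gives that $invoc(v_1)$ dominates $invoc(ex_2)$; I then apply the induction hypothesis to the pair $(v_1, oper(ex_2))$, which is joined by the shorter sub-path $v_1 \to \dots \to v_m$, to obtain an operation $oper(ex_3)$ lying after $v_1$ (hence in $L_2$) with $invoc(ex_3)$ dominating $invoc(v_1)$ and $oper(ex_3)$ preceding $oper(ex_2)$ in $G$; transitivity of dominance then yields that $invoc(ex_3)$ dominates $invoc(ex_1)$. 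If instead $v_1$ and $oper(ex_2)$ are not concurrent in $G$, then either $v_1$ precedes $oper(ex_2)$ in $G$, in which case $oper(ex_3) = v_1$ works directly, or $oper(ex_2)$ precedes $v_1$, which contradicts the path $v_1 \to \dots \to oper(ex_2)$ together with $v_1 \in L_2$.

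The main obstacle is the remaining case, where the first edge $v_0 \to v_1$ is a precedence edge. Here $oper(ex_1)$ $G$-precedes $v_1$, so $v_1$ need not dominate $oper(ex_1)$ and the front-peeling reduction no longer preserves the dominance hypothesis. My plan is to rule this case out entirely under the lemma's hypotheses, by showing that a forced path beginning with a precedence edge is incompatible with $oper(ex_2) \in L_2$. Intuitively — and this is confirmed by tracing small configurations — if no dominance edge leaves $oper(ex_1)$ toward $oper(ex_2)$, then nothing blocks the candidate edge $oper(ex_2) \to oper(ex_1)$ from being inserted when the pair is processed, which would force $oper(ex_2)$ before $oper(ex_1)$, contradicting the hypothesis. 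Making this rigorous requires a careful analysis of the order in which the $lingraph$ method processes pairs: one examines the first dominance edge $v_i \to v_{i+1}$ of the path (with $oper(ex_1)$ $G$-preceding $v_i$) and argues, using that the blocking path must already be present when $\{oper(ex_1), oper(ex_2)\}$ is considered, that a dominance out-edge from $oper(ex_1)$ itself must lie on some forced path. I expect this construction-order bookkeeping, together with the repeated use of transitivity and antisymmetry of the dominance partial order, to be the delicate part of the argument; the rest follows the two-case induction above.
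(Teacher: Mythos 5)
The paper does not actually prove this lemma: it is imported verbatim from Aspnes and Herlihy~\cite{generalwaitfree} and used as a black box, so there is no in-paper proof to compare your attempt against. Judged on its own terms, your opening move is right --- the candidate edge from $oper(ex_2)$ to $oper(ex_1)$ must have been rejected because a path from $oper(ex_1)$ to $oper(ex_2)$ already existed when that pair was processed, and since the two operations are concurrent in $G$ that path must contain a dominance edge --- but what follows has two genuine gaps.

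First, the induction on the length of a shortest forced path is not well-founded. A forced path is defined relative to the moment the pair $\{oper(ex_1), oper(ex_2)\}$ is processed; when you recurse on the pair $(v_1, oper(ex_2))$, the relevant moment is when \emph{that} pair is processed, which (depending on the fixed topological order of $G$ and the lexicographic pair order) may be earlier, before some of the dominance edges on your sub-path $v_1 \to \dots \to v_m$ were added. So the sub-path need not be a forced path for the recursive instance, and your length measure does not obviously decrease. Second, and more seriously, the case in which the forced path begins with a precedence edge is not handled: saying that ``nothing blocks the candidate edge'' is precisely the claim to be proved, not an argument for it. This case cannot be dismissed by graph reasoning on $lingraph(G)$ alone: one can write down an abstract digraph with a blocking path $oper(ex_1) \to A \to B \to oper(ex_2)$ whose first edge is a precedence edge and in which no operation after $oper(ex_1)$ both dominates it and $G$-precedes $oper(ex_2)$, so the conclusion fails. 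What excludes such configurations is a combination of (i) the interval-order structure of precedence graphs (Observations~\ref{obs:opprecede} and~\ref{obs:opconcurrent}): chaining $time(ex_1^{\ref{slupdate}}) < time(A^{\ref{scan}})$, the overlap of $A$ and $B$, and $time(B^{\ref{slupdate}}) < time(ex_2^{\ref{scan}})$ forces $oper(ex_1)$ to $G$-precede $oper(ex_2)$, contradicting concurrency; and (ii) the lexicographic processing order over a topological sort of $G$, which forces the pair $\{oper(ex_1), oper(ex_2)\}$ to be examined before pairs involving operations that $oper(ex_1)$ strictly $G$-precedes, so the dominance edges needed for such a blocking path are not yet present. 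Neither ingredient appears in your sketch, and the ``delicate bookkeeping'' you defer is where essentially all of the proof lives.
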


\begin{lemma}\label{lemma:nodominversion}
	Let $T \in \mathcal{T}$ be a transcript, where $G$ is the precedence graph induced by $\mathcal{G}(T)$, and let $L$ be some topological ordering of $lingraph(G)$. There is no pair of operations $ex_1, ex_2 \in f(T)$ such that
	\begin{gather}
		ex_1 \xrightarrow{f(T)} ex_2,\label{linfirst} \\
		oper(ex_2) \xrightarrow{L} oper(ex_1)\text{, and}\label{firstintopsort} \\
		\text{$invoc(ex_1)$ dominates $invoc(ex_2)$.}\label{hdom}
	\end{gather}
\end{lemma}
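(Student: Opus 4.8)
The plan is to prove the lemma by contradiction, using an infinite-descent argument over a well-founded measure rather than a direct case analysis. Call a pair $(ex_1,ex_2)$ of operations in $f(T)$ a \emph{bad pair} if it satisfies (\ref{linfirst}), (\ref{firstintopsort}), and (\ref{hdom}); note that (\ref{firstintopsort}) forces both $oper(ex_1)$ and $oper(ex_2)$ to be vertices of $lingraph(G)$, hence to have written their nodes to $root$ and to linearize at finite times. Suppose, for contradiction, that a bad pair exists. Among all bad pairs I would choose one, $(ex_1,ex_2)$, that minimizes the position of $oper(ex_1)$ in $L$ (the $L$-later of the two operations); since $L$-positions range over a finite set, such a minimal pair exists. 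I then split on the $G$-relationship of $oper(ex_1)$ and $oper(ex_2)$.

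The two ``ordered'' cases are immediate. If $oper(ex_1)$ precedes $oper(ex_2)$ in $G$ (Observation~\ref{obs:opprecede}), then since $L$ topologically sorts $lingraph(G)\supseteq G$ we get $oper(ex_1)\xrightarrow{L}oper(ex_2)$, contradicting (\ref{firstintopsort}). If $oper(ex_2)$ precedes $oper(ex_1)$ in $G$, then $time(ex_2^{\ref{slupdate}})<time(ex_1^{\ref{scan}})$, so by Observation~\ref{obs:genptin} we have $pt(ex_2)\le time(ex_2^{\ref{slupdate}})<time(ex_1^{\ref{scan}})<pt(ex_1)$, whence $ex_2\xrightarrow{f(T)}ex_1$ by \ref{gen:rulelin}, contradicting (\ref{linfirst}). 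Thus $oper(ex_1)$ and $oper(ex_2)$ must be \emph{concurrent} in $G$, which is the only surviving case.

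For the concurrent case I would invoke Lemma~\ref{lemma:forcedom} with (\ref{assumconcur})--(\ref{assumprec}) supplied by (\ref{hdom}), concurrency, and (\ref{linfirst}). This yields an operation $ex_3$ with $invoc(ex_3)$ dominating $invoc(ex_1)$ and $oper(ex_3)$ preceding $oper(ex_2)$ in $G$; reusing the witness built in its proof, I also take $pt(ex_3)=pt(ex_1)=time(ex_3^{\ref{slupdate}})$. From $oper(ex_3)$ preceding $oper(ex_2)\xrightarrow{L}oper(ex_1)$ I get $oper(ex_3)\xrightarrow{L}oper(ex_1)$, and using $time(ex_3^{\ref{slupdate}})=pt(ex_1)>time(ex_1^{\ref{scan}})$ together with $time(ex_3^{\ref{scan}})<pt(ex_1)<time(ex_1^{\ref{slupdate}})$ and Observation~\ref{obs:opconcurrent}, I check that $oper(ex_3)$ is concurrent with $oper(ex_1)$ in $G$. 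Now I apply Lemma~\ref{herl:dombuffer} to the decomposition of $L$ at $oper(ex_3)$, with $oper(ex_1)$ lying in the suffix: this produces $ex_4$ with $invoc(ex_4)$ dominating $invoc(ex_3)$, with $oper(ex_4)$ preceding $oper(ex_1)$ in $G$, and with $oper(ex_3)\xrightarrow{L}oper(ex_4)$. Since $oper(ex_4)$ precedes $oper(ex_1)$ in $G$, we have $time(ex_4^{\ref{slupdate}})<time(ex_1^{\ref{scan}})<pt(ex_1)=pt(ex_3)$, so $pt(ex_4)<pt(ex_3)$ and $ex_4\xrightarrow{f(T)}ex_3$ by \ref{gen:rulelin}. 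Hence $(ex_4,ex_3)$ is again a bad pair: $ex_4\xrightarrow{f(T)}ex_3$, $oper(ex_3)\xrightarrow{L}oper(ex_4)$, and $invoc(ex_4)$ dominates $invoc(ex_3)$. But $oper(ex_4)$ precedes $oper(ex_1)$ in $G$ gives $oper(ex_4)\xrightarrow{L}oper(ex_1)$, so the $L$-later operation of this new bad pair, namely $oper(ex_4)$, sits strictly earlier in $L$ than $oper(ex_1)$, contradicting minimality. This closes the contradiction and proves the lemma.

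The hard part is precisely the concurrent case. The obstacle is that the dominance direction \emph{flips} at each step (Lemma~\ref{lemma:forcedom} returns something dominating $invoc(ex_1)$, not dominated by it), so one cannot naively iterate on ``the same'' forbidden pattern; the resolution is to combine the $f(T)$-witness from Lemma~\ref{lemma:forcedom} with the $L$-witness from Lemma~\ref{herl:dombuffer} and observe that any operation produced as a $G$-predecessor automatically linearizes strictly earlier (Observation~\ref{obs:genptin}), which both re-certifies the bad-pair conditions and strictly decreases the $L$-rank of the later operation. The remaining work is routine bookkeeping: confirming the $G$-concurrency of $oper(ex_3)$ with $oper(ex_1)$ from the equality $pt(ex_3)=pt(ex_1)=time(ex_3^{\ref{slupdate}})$, and checking that every operation produced ($ex_3$, $ex_4$) still belongs to $L$, which holds because all of them linearize at finite times and therefore have written nodes to $root$.
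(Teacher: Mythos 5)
Your proposal is correct and follows essentially the same route as the paper's proof: the same case split on the $G$-relationship of $oper(ex_1)$ and $oper(ex_2)$, the same application of Lemma~\ref{lemma:forcedom} (including its strengthened conclusion $pt(ex_3)=pt(ex_1)=time(ex_3^{\ref{slupdate}})$, which the paper likewise extracts from that lemma's proof rather than its statement) followed by Lemma~\ref{herl:dombuffer} to manufacture a new bad pair. The only difference is the well-founded measure driving the descent --- you minimize the $L$-position of the $L$-later element, while the paper picks the bad pair whose first component is earliest in $f(T)$ --- and since both measures strictly decrease for the newly constructed pair, either choice closes the argument.
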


\begin{proof}
	Suppose there are a pair of operations $ex_1, ex_2 \in f(T)$ satisfying (\ref{linfirst})-(\ref{hdom}), and let $ex_1$ be the first operation in $f(T)$ for which these properties are satisfied.
	That is,
	\ifspringer
	\begin{equation}
	\begin{split}
		&\text{there is no pair of operations $ex_1', ex_2' \in f(T)$} \\ &\text{that satisfy (\ref{linfirst})-(\ref{hdom}), with $ex_1' \xrightarrow{f(T)} ex_1$.}\label{ex1isfirst}
	\end{split}
	\end{equation}
	\else
	\begin{equation}
	\begin{split}
		\text{there is no pair of operations $ex_1', ex_2' \in f(T)$ that satisfy (\ref{linfirst})-(\ref{hdom}), with $ex_1' \xrightarrow{f(T)} ex_1$.}\label{ex1isfirst}
	\end{split}
	\end{equation}
	\fi
	If $oper(ex_2)$ precedes $oper(ex_1)$ in $G$, then $time(ex_2^{\ref{slupdate}}) < time(ex_1^{\ref{scan}})$ by Observation~\ref{obs:opprecede}.
	But by Observation~\ref{obs:genptin} this would imply that $pt(ex_2) < pt(ex_1)$, which would mean $ex_2 \xrightarrow{f(T)} ex_1$ by \ref{gen:rulelin}.
	This contradicts (\ref{linfirst}).
	Hence,
	\begin{equation}
		\text{$oper(ex_2)$ does not precede $oper(ex_1)$ in $G$.}\label{lemma:nodominversion:100}
	\end{equation}
	If $oper(ex_1)$ precedes $oper(ex_2)$ in $G$, then this contradicts (\ref{firstintopsort}), since $L$ is a topological sort of $lingraph(G)$.
	Therefore,
	\begin{equation}
		\text{$oper(ex_1)$ does not precede $oper(ex_2)$ in $G$.}\label{lemma:nodominversion:200}
	\end{equation}
	By (\ref{lemma:nodominversion:100}) and (\ref{lemma:nodominversion:200}), $oper(ex_1)$ and $oper(ex_2)$ are concurrent in $G$.
	By this, (\ref{linfirst}), (\ref{hdom}), and Lemma~\ref{lemma:forcedom} there exists an operation $ex_j$ such that
	\begin{gather}
		\text{$invoc(ex_j)$ dominates $invoc(ex_1)$,}\label{lemma:nodominversion:300} \\
		\text{$pt(ex_j) = pt(ex_1) = time(ex_j^{\ref{slupdate}})$, and}\label{lemma:nodominversion:400} \\
		\text{$oper(ex_j)$ precedes $oper(ex_2)$ in $G$.}\label{lemma:nodominversion:500}
	\end{gather}
	By (\ref{lemma:nodominversion:300}), (\ref{lemma:nodominversion:400}), and \ref{gen:ruledom},
	\begin{equation}
		ex_1 \xrightarrow{f(T)} ex_j\label{lemma:nodominversion:600}
	\end{equation}
	By (\ref{lemma:nodominversion:500}) and the fact that $L$ is a topological sort of $lingraph(G)$,
	\begin{equation}
		oper(ex_j) \xrightarrow{L} oper(ex_2).\label{lemma:nodominversion:700}
	\end{equation}
	By Observation~\ref{obs:genptin}, $time(ex_j^{\ref{scan}}) < pt(ex_j) \leq time(ex_j^{\ref{slupdate}})$ and $time(ex_1^{\ref{scan}}) < pt(ex_1) \leq time(ex_1^{\ref{slupdate}})$.
	This along with (\ref{lemma:nodominversion:400}) implies that $time(ex_j^{\ref{scan}}) < time(ex_1^{\ref{slupdate}})$ and $time(ex_1^{\ref{scan}}) < time(ex_j^{\ref{slupdate}})$.
	By this and Observation~\ref{obs:opconcurrent},
	\begin{equation}
		\text{$oper(ex_j)$ and $oper(ex_1)$ are concurrent in $G$.}\label{lemma:nodominversion:800}
	\end{equation}
	By (\ref{firstintopsort}), (\ref{lemma:nodominversion:700}), and the transitivity of happens-before order,
	\begin{equation}
		oper(ex_j) \xrightarrow{L} oper(ex_1).\label{lemma:nodominversion:900}
	\end{equation}
	By (\ref{lemma:nodominversion:300}), (\ref{lemma:nodominversion:800}), (\ref{lemma:nodominversion:900}), and Lemma~\ref{herl:dombuffer}, there exists an operation $oper(ex_i)$ such that
	\begin{gather}
		oper(ex_j) \xrightarrow{L} oper(ex_i) \xrightarrow{L} oper(ex_1), \label{ellprimedombuffer} \\
		\text{$invoc(ex_i)$ dominates $invoc(ex_j)$, and}\label{invocellprimedominates} \\
		\text{$oper(ex_i)$ precedes $oper(ex_1)$ in $G$.}\label{ellprimeprec}
	\end{gather}
	By (\ref{ellprimeprec}) and Observation~\ref{obs:opprecede}, $time(ex_i^{\ref{slupdate}}) < time(ex_1^{\ref{scan}})$.
	This along with Observation~\ref{obs:genptin} implies that $pt(ex_i) < pt(ex_1)$.
	Hence, by \ref{gen:rulelin},
	\begin{equation}
		ex_i \xrightarrow{f(T)} ex_1.\label{ellprimelinfirst}
	\end{equation}
	By (\ref{lemma:nodominversion:600}), (\ref{ellprimelinfirst}), and the transitivity of happens-before order,
	\begin{equation}
		ex_i \xrightarrow{f(T)} ex_j.\label{ellprimelinbeforeell}
	\end{equation}
	Together, (\ref{ellprimedombuffer}), (\ref{invocellprimedominates}), and (\ref{ellprimelinbeforeell}) imply that $ex_i$ and $ex_j$ satisfy (\ref{linfirst})-(\ref{hdom}). By (\ref{ellprimelinfirst}), this contradicts (\ref{ex1isfirst}). Hence, there is no pair of operations that satisfies (\ref{linfirst})-(\ref{hdom}).
\end{proof}

\begin{lemma}\label{lemma:genprefpres}
	The function $f$ is prefix-preserving.
\end{lemma}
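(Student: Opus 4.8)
The plan is to follow the same template used for the earlier prefix-preservation results (Lemma~\ref{prefABA} and Lemma~\ref{slss:prefpres}): I will show that for every finite step $t$ of $T$ it is already determined by the prefix of $T$ ending at step $t$ both \emph{which} operations satisfy $pt_T(ex)=t$ and in \emph{what relative order} these operations appear in $f(T)$. Once this ``determined at step $t$'' property is established, prefix-preservation follows immediately: given transcripts $S, T \in \mathcal{T}$ with $S$ a prefix of $T$, the operations of $f(T)$ whose linearization point is at most the last step of $S$ turn out to be exactly the operations of $f(S)$, listed in the same order, so $f(S)$ is a prefix of $f(T)$.

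The heart of the argument is the determination claim, which I would prove by strong induction on $t$. First I would observe, using Observation~\ref{obs:genlinimpliesupdate}, that every finite linearization point equals $time(ex'^{\ref{slupdate}})$ for some operation $ex'$; hence linearization points occur only at $root.update$ steps, and it suffices to analyze such steps. Fix an operation $ex$ and suppose $pt_T(ex)=t\neq\infty$. By Observation~\ref{obs:genptin}, $t\in\bigl(time(ex^{\ref{scan}}),time(ex^{\ref{slupdate}})\bigr]$, so in particular $time(ex^{\ref{scan}})<t$. There are two cases according to the defining rules. If rule \ref{gen:ptrself} applies then $t=time(ex^{\ref{slupdate}})$, and to verify that \ref{gen:ptrself} (rather than \ref{gen:ptrdom}) is the operative rule one only needs to know that there is no operation $ex'$ with $invoc(ex')$ dominating $invoc(ex)$ and $time(ex^{\ref{scan}})<pt_T(ex')<time(ex^{\ref{slupdate}})=t$. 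Every such $ex'$ has $pt_T(ex')<t$, so by the induction hypothesis its linearization point is determined by the prefix ending at $pt_T(ex')<t$, and the (static) dominance relation between $invoc(ex')$ and $invoc(ex)$ is fixed at invocation time. Thus whether \ref{gen:ptrself} applies, and hence the value $t$, is decided by step $t$. If instead rule \ref{gen:ptrdom} applies, then $t=pt_T(ex_0)$ for the dominating operation $ex_0$ of minimal linearization point with $time(ex^{\ref{scan}})<pt_T(ex_0)<time(ex^{\ref{slupdate}})$; here $pt_T(ex_0)=t$, and the candidate operations together with their minimality again involve only dominating operations with linearization point strictly below $time(ex^{\ref{slupdate}})$, all determined before step $t$ by induction. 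In both cases the recursion refers only to operations whose linearization point is strictly smaller than $t$, so there is no circularity and the claim follows.

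Next I would argue that the tie-break order is static. If $pt_T(ex_1)=pt_T(ex_2)=t$, rules \ref{gen:ruledom} and \ref{gen:rulecomm} order $ex_1$ and $ex_2$ purely by the dominance relation between $invoc(ex_1)$ and $invoc(ex_2)$ and by the identities of the performing processes, all of which are fixed once the two operations are invoked (hence before step $t$). Since dominance is a strict partial order, these rules induce a total order on the operations linearizing at $t$, and this order is the same whether computed within $S$ or within $T$.

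Finally I would assemble the pieces. Let $S$ be a prefix of $T$ and write $m$ for the last step of $S$. By the determination claim applied to prefixes, an operation $ex$ satisfies $pt_T(ex)=t\leq m$ if and only if $pt_S(ex)=t$: indeed $pt_T(ex)\le m$ forces $time(ex^{\ref{scan}})<t\le m$ by Observation~\ref{obs:genptin}, so $ex\in\Gamma(S)$, and the prefix ending at step $t$ is common to $S$ and $T$, whence $pt$ is computed identically in both. Consequently the operations of $f(T)$ with linearization point at most $m$ are exactly the operations of $f(S)$, and by the tie-break argument they occur in the same relative order; since every remaining operation of $f(T)$ has linearization point exceeding $m$ and therefore follows all of these, $f(S)$ is a prefix of $f(T)$, as required. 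I expect the main obstacle to be the determination claim, and within it the delicate point of confirming that the minimization defining $pt_T(ex)$ never consults an operation whose own linearization point is at least $t$, so that the induction is well founded; Observation~\ref{obs:genptin} together with the strict inequalities in \ref{gen:ptrdom} and \ref{gen:ptrself} are exactly what make this go through.
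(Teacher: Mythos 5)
Your proposal is correct and follows essentially the same route as the paper's proof: both reduce to showing, via Observations~\ref{obs:genptin} and~\ref{obs:genlinimpliesupdate}, that which operations satisfy $pt(ex)=t$ is determined by the prefix of $T$ ending at step $t$, and that \ref{gen:ruledom}/\ref{gen:rulecomm} give a static total order on operations sharing a linearization point. Your version is in fact more careful than the paper's (which asserts the determination claim without the explicit induction); the only small imprecision is that in the \ref{gen:ptrdom} case the recursion consults $ex_0$ with $pt(ex_0)=t$ rather than strictly less than $t$, where well-foundedness comes from the dominance partial order rather than the time order.
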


\begin{proof}
	Let $T \in \mathcal{T}$ be a transcript.
	Consider each step $t$ of $T$, and some operation $ex \in \Gamma (T)$ by $p$. By Observation~\ref{obs:genptin} are two cases:
	\begin{enumerate}[labelindent=0pt,labelwidth=\widthof{\ref{def:interp2}},itemindent=1em,leftmargin=!,label=(\roman*)]
		\item Suppose $pt(ex) < time(ex^{\ref{slupdate}})$. Then by Observation~\ref{obs:genlinimpliesupdate}, there exists an operation $ex_\ell$ such that $pt(ex) = pt(ex_\ell) = time(ex_\ell^{\ref{slupdate}})$. Hence, $pt(ex) = t$ if and only if $time(ex_\ell^{\ref{slupdate}}) = t$.
		\item Suppose $pt(ex) = time(ex^{\ref{slupdate}})$. Then $pt(ex) = t$ if and only if $time(ex^{\ref{slupdate}}) = t$.
	\end{enumerate}
	Thus, at step $t$ it is entirely determined which operations linearize at $t$.
	That is, whether $t$ satisfies $t = pt(op)$ can be deduced solely by examining the step at time $t$ of $T$.
	By this and the fact that \ref{gen:ruledom} and \ref{gen:rulecomm} impose a total order on the set of operations that linearize at each point in time in $T$, $f$ is prefix-preserving.
\end{proof}

We now address the fact that $f(T)$ may contain operations that are not in the precedence graph $G$ induced by $\mathcal{G}(T)$.
We aim to show that if $f(T) = ex_1 \circ \ldots \circ ex_k$, then $oper(ex_1) \circ \ldots \circ oper(ex_k)$ is a topological ordering of $G$; this does not hold if $oper(ex_1) \circ \ldots \circ oper(ex_k)$ contains operations that are not present in $G$.
To resolve this issue, we define a notion of a ``completion'' of a transcript (recall that completion is only defined for histories), which we call $fill$.
For any transcript $T \in \mathcal{T}$, $fill(T)$ is constructed by allowing every incomplete operation present in $f(T)$ to finish (see below for a precise definition).
If $ex \in f(T)$ is an operation by $p$, and $ex$ is incomplete in $T$, then let the $p$-solo \emph{completion} of $ex$ in $T$ be the transcript $T_p$ such that $(T \circ T_p) \in \mathcal{T}$, $T_p$ contains only steps by $p$, and the final step of $T_p$ is $rsp(ex)$.
Note that the existence of $T_p$ is guaranteed by the fact that Algorithm~\ref{gen} is wait-free.
Additionally, since $ex \in f(T)$, $pt(ex) \neq \infty$, and therefore $ex^{\ref{scan}} \in T$ by Observation~\ref{obs:genptin}.
Since the response of $ex$ is entirely determined by the vector returned by $ex^{\ref{scan}}$ (by examination of the $execute$ method of Algorithm~\ref{gen}), the response of $ex$ is entirely determined in $T$ (that is, no future steps by any process can change the response of $ex$).
Therefore, if $(T \circ T') \in \mathcal{T}$ and $T'$ contains no steps by $p$, then $(T \circ T' \circ T_p) \in \mathcal{T}$.

For any transcript $T \in \mathcal{T}$, let $fill(T)$ be a transcript constructed from $T$ as follows:
\begin{enumerate}[labelindent=0pt,labelwidth=\widthof{\ref{def:interp2}},itemindent=1em,leftmargin=!,label=\textbf{F-\arabic*},ref={F-\arabic*}]
		\item Initially, let $T' = T$.
		\item For every operation $ex \in f(T)$ by process $p$ that is incomplete in $T$, append the $p$-solo completion of $ex$ in $T$ to $T'$. That is, if $T_p$ is the $p$-solo completion of $ex$ in $T$, then let $T' = T' \circ T_p$.\label{lemma:genlinstep2}
		\item Let $fill(T) = T'$.
\end{enumerate}
	
\begin{observation}\label{obs:filltranscript}
	For any $T \in \mathcal{T}$,
	\begin{enumerate}[label=(\alph*)]
		\item $f(T) = f\bigl(fill(T)\bigr)$, and\label{obs:filla}
		\item if $f(T) = ex_1 \circ \ldots \circ ex_k$, then the precedence graph $G'$ induced by $\mathcal{G}\bigl( fill(T)\bigr)$ contains all and only those operations in $\bigl\{oper(ex_1), \ldots, oper(ex_k)\bigr\}$.\label{obs:fillb}
	\end{enumerate}
\end{observation}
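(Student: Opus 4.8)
The plan is to prove the stronger fact that $pt_{fill(T)}(ex)=pt_T(ex)$ for every $ex\in\Gamma(T)$, from which both parts follow easily. Two structural remarks drive the argument. First, $T$ is a prefix of $fill(T)$ by construction, and the solo completions appended in \ref{lemma:genlinstep2} add no new invocation events, so $\Gamma(fill(T))$ consists of exactly the same operations as $\Gamma(T)$. Second, since $f$ is prefix-preserving (Lemma~\ref{lemma:genprefpres}), $f(T)$ is a prefix of $f(fill(T))$; moreover, the proof of that lemma establishes that, for any time $t$ and operation $op$, whether $pt(op)=t$ is decided solely by the step at time $t$. Because $T$ and $fill(T)$ agree on all steps up to $|T|$, the set of operations linearizing at each time $t\le|T|$ is identical in $T$ and in $fill(T)$; hence $pt_{fill(T)}(ex)=pt_T(ex)$ whenever this common value is at most $|T|$. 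In particular every $ex\in f(T)$ keeps its finite linearization point, giving $f(T)\subseteq f(fill(T))$.

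The main obstacle is the reverse inclusion: I must rule out that some operation with $pt_T(ex)=\infty$ acquires a finite point in $fill(T)$. Such an $ex$ is not completed in \ref{lemma:genlinstep2} (only operations of $f(T)$ are), and it is already incomplete in $T$, since $ex^{\ref{slupdate}}\in T$ would force $pt_T(ex)\le time_T(ex^{\ref{slupdate}})<\infty$ by Observation~\ref{obs:genptin}; therefore $ex^{\ref{slupdate}}\notin fill(T)$. Now suppose $pt_{fill(T)}(ex)=t<\infty$. Since $time_{fill(T)}(ex^{\ref{slupdate}})=\infty>t$, Observation~\ref{obs:genlinimpliesupdate} yields an operation $ex_*$ dominating $ex$ with $pt_{fill(T)}(ex_*)=t=time_{fill(T)}(ex_*^{\ref{slupdate}})$, and the minimum in the definition of $pt$ forces $time(ex^{\ref{scan}})<t$. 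If $t\le|T|$, then $ex_*^{\ref{slupdate}}\in T$, so by the prefix remark $pt_T(ex_*)=t$; applying \ref{gen:ptrdom} in $T$ (with $time(ex^{\ref{scan}})<pt_T(ex_*)=t<\infty=time_T(ex^{\ref{slupdate}})$) gives $pt_T(ex)\le t<\infty$, a contradiction. If $t>|T|$, then $ex_*^{\ref{slupdate}}\in fill(T)\setminus T$, so $ex_*$ was completed in \ref{lemma:genlinstep2} and hence $ex_*\in f(T)$; but then $pt_T(ex_*)\le|T|$, and the prefix remark gives $pt_{fill(T)}(ex_*)=pt_T(ex_*)\le|T|<t$, contradicting $pt_{fill(T)}(ex_*)=t$. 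Either way we reach a contradiction, so no operation gains a finite point, and $f(T)=f(fill(T))$, proving part~\ref{obs:filla}.

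For part~\ref{obs:fillb}, recall that the vertices of the precedence graph $G'$ induced by $\mathcal{G}(fill(T))$ are exactly $\{oper(ex):ex^{\ref{slupdate}}\in fill(T)\}$ (Observation~\ref{obs:operations}), while part~\ref{obs:filla} gives $\{ex_1,\dots,ex_k\}=\{ex:pt_{fill(T)}(ex)\neq\infty\}$. It therefore suffices to show, for the transcript $fill(T)$, that $ex^{\ref{slupdate}}\in fill(T)$ if and only if $pt_{fill(T)}(ex)\neq\infty$. The forward direction is immediate: if $ex^{\ref{slupdate}}\in fill(T)$ then $time_{fill(T)}(ex^{\ref{slupdate}})<\infty$, so the minimum in \ref{gen:ptrself} and \ref{gen:ptrdom} gives $pt_{fill(T)}(ex)<\infty$. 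For the converse, suppose $pt_{fill(T)}(ex)\neq\infty$, so $ex\in f(fill(T))=f(T)$. If $ex$ is complete in $T$ then $ex^{\ref{slupdate}}\in T\subseteq fill(T)$; otherwise $ex$ is an incomplete operation of $f(T)$, and step~\ref{lemma:genlinstep2} appends its solo completion, whose terminal step $rsp(ex)$ is preceded by the last shared-memory step $ex^{\ref{slupdate}}$ of the $execute$ method. In both cases $ex^{\ref{slupdate}}\in fill(T)$, completing the proof. This is precisely where the completion step earns its keep: it guarantees that every finitely-linearizing operation has actually written its node to $root$, so that the induced precedence graph contains exactly the operations of $f(T)$.
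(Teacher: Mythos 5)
Your proof is correct and follows essentially the same route as the paper's: both rest on prefix-preservation (Lemma~\ref{lemma:genprefpres}), the fact that $fill$ only appends steps of operations already in $f(T)$, and Observation~\ref{obs:genlinimpliesupdate} to rule out any operation acquiring a finite linearization point in $fill(T)$, with part~\ref{obs:fillb} then reduced to ``$ex^{\ref{slupdate}}\in fill(T)$ iff $ex\in f(T)$'' exactly as in the paper. The only difference is bookkeeping: the paper argues by contradiction on the first operation of $f\bigl(fill(T)\bigr)$ beyond $f(T)$, whereas you case-split on whether the alleged new linearization point falls at or before $|T|$ versus after it.
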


\begin{proof}
	Let $T \in \mathcal{T}$, and for ease of notation let $T' = fill(T)$.
	Since \ref{lemma:genlinstep2} only appends steps in the construction of $T'$, $T$ is a prefix of $T'$.
	By this and Lemma~\ref{lemma:genprefpres},
	\begin{equation}
		\text{$f(T)$ is a prefix of $f(T')$.}\label{genlin:prefix}
	\end{equation}
	Suppose $f(T) \neq f(T')$.
	By this and (\ref{genlin:prefix}), $f(T') = f(T) \circ ex_{\ell_1} \circ \ldots \circ ex_{\ell_k}$, for some nonempty sequence of operations $ex_{\ell_1}, \ldots, ex_{\ell_k}$ with 
	\begin{equation}
		\text{$ex_{\ell_i} \not\in f(T)$ for all $i \in \{1, \ldots, k\}$.}\label{genlin:exnotin}
	\end{equation}
	Due to (\ref{genlin:exnotin}) and the fact that \ref{lemma:genlinstep2} only adds steps of operations in $f(T)$,
	\ifspringer
	\begin{equation}
	\begin{split}
	&\text{no steps of $ex_{\ell_i}$ are added in the construction} \\ &\text{of $T'$, for any $i \in \{1, \ldots, k\}$.}\label{genlin:nosteps}
	\end{split}
	\end{equation}
	\else
	\begin{equation}
	\text{no steps of $ex_{\ell_i}$ are added in the construction of $T'$, for any $i \in \{1, \ldots, k\}$.}\label{genlin:nosteps}
	\end{equation}
	\fi
	Suppose $pt_{T'}(ex_{\ell_1}) = time_{T'}(ex_{\ell_1}^{\ref{slupdate}})$.
	Since $ex_{\ell_1} \in f(T')$, it must be the case that $pt_{T'}(ex_{\ell_1}) \neq \infty$, and therefore $time_{T'}(ex_{\ell_1}^{\ref{slupdate}}) \neq \infty$.
	By this and (\ref{genlin:nosteps}), $ex_{\ell_1}^{\ref{slupdate}} \in T$.
	Then $ex_{\ell_1} \in f(T)$, which is a contradiction.
	Therefore, $pt_{T'}(ex_{\ell_1}) < time_{T'}(ex_{\ell_1}^{\ref{slupdate}})$.
	Then by Observation~\ref{obs:genlinimpliesupdate} there exists an operation $ex_\alpha \in \Gamma(T')$ such that
	\begin{gather} 
		\small\text{$pt_{T'}(ex_{\ell_1}) = pt_{T'}(ex_\alpha) = time_{T'}(ex_\alpha^{\ref{slupdate}})$ in $T'$, and}\label{genlin:ptex} \\
		\text{$invoc(ex_\alpha)$ dominates $invoc(ex_{\ell_1})$.}\label{genlin:invocex}
	\end{gather}
	By (\ref{genlin:ptex}), (\ref{genlin:invocex}), and \ref{gen:ruledom}, 
	\begin{equation}
		\text{$ex_{\ell_1} \xrightarrow{f(T')} ex_\alpha$.}\label{genlin:onebeforeell}
	\end{equation}
	Since $pt_{T'}(ex_{\ell_1}) \neq \infty$, $time_{T'}(ex_\alpha^{\ref{slupdate}}) \neq \infty$ by (\ref{genlin:ptex}).
	This, along with (\ref{genlin:nosteps}), implies that $ex_\alpha^{\ref{slupdate}} \in T$.
	Hence, $ex_\alpha \in f(T)$.
	This, along with (\ref{genlin:prefix}) and (\ref{genlin:onebeforeell}) imply that $ex_{\ell_1} \in f(T)$, which is a contradiction.
	Therefore, we have arrived at a contradiction in all cases, and our initial supposition is false.
	That is, $f(T) = f(T')$, which concludes the proof of part~\ref{obs:filla}.
	
	Let $G'$ be the precedence graph induced by $\mathcal{G}(T')$.
	During the construction of $T'$, \ref{lemma:genlinstep2} ensures that, for every operation $ex \in f(T)$, $ex$ is complete.
	In particular,
	\begin{equation}
		\text{for every operation $ex \in f(T)$, $ex^{\ref{slupdate}} \in T'$.}\label{genlin:everyopin}
	\end{equation}
	By definition, $\mathcal{G}(T')$ contains every $node(ex)$ such that $ex^{\ref{slupdate}} \in T'$, and therefore $G'$ contains every $oper(ex)$ such that $ex^{\ref{slupdate}} \in T'$.
	This, together with (\ref{genlin:everyopin}), implies part~\ref{obs:fillb}.
\end{proof}

\begin{lemma}\label{lemma:generallinearizable}
	For any transcript $T \in \mathcal{T}$, $f(T)$ is a linearization of $\Gamma(T)$.
\end{lemma}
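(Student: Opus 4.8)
The plan is to show that $f(T)$ is a valid sequential history that preserves the happens-before order of $\Gamma(T)$, by reducing everything to Aspnes and Herlihy's result (Lemma~\ref{herl:linresult}). The central claim I would establish is: if $f(T) = ex_1 \circ \ldots \circ ex_k$, then $oper(ex_1) \circ \ldots \circ oper(ex_k)$ is a topological ordering of $lingraph(G')$, where $G'$ is the precedence graph induced by $\mathcal{G}\bigl(fill(T)\bigr)$. Once this claim holds, Lemma~\ref{herl:linresult} immediately gives that $oper(ex_1) \circ \ldots \circ oper(ex_k)$ is a linearization of $\Gamma\bigl(fill(T)\bigr)$, and since $f(T) = f\bigl(fill(T)\bigr)$ by Observation~\ref{obs:filltranscript}\ref{obs:filla} and the operations $oper(ex_i)$ carry exactly the invocations and responses computed by the $ex_i$, validity and happens-before preservation transfer back to $\Gamma(T)$.

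First I would pass from $T$ to $fill(T)$ so that every operation appearing in $f(T)$ is complete and hence has its node present in the induced precedence graph; Observation~\ref{obs:filltranscript} is precisely designed for this, giving both $f(T) = f\bigl(fill(T)\bigr)$ and that $G'$ contains exactly $\{oper(ex_1), \ldots, oper(ex_k)\}$. Working in $fill(T)$, I then need two things: (i) the order $oper(ex_1) \circ \ldots \circ oper(ex_k)$ respects the precedence (path) order of $G'$, and (ii) it respects the dominance edges added in forming $lingraph(G')$. For (i): if $oper(ex_i)$ precedes $oper(ex_j)$ in $G'$ with $i>j$, then $time(ex_i^{\ref{slupdate}}) < time(ex_j^{\ref{scan}})$ by Observation~\ref{obs:opprecede}, which via Observation~\ref{obs:genptin} forces $pt(ex_i) < pt(ex_j)$, contradicting $ex_j \xrightarrow{f(T)} ex_i$ through rule~\ref{gen:rulelin}. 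So precedence edges are never reversed.

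The main obstacle will be (ii), the dominance edges, and this is exactly what Lemma~\ref{lemma:nodominversion} was built to overcome. I would argue by contradiction: suppose $oper(ex_1) \circ \ldots \circ oper(ex_k)$ is not a valid topological sort of $lingraph(G')$. Since it already respects precedence edges, the only way it can fail is that for some topological ordering $L$ of $lingraph(G')$ the relative order of a dominance-related pair disagrees — that is, there exist $ex_a, ex_b \in f(T)$ with $ex_a \xrightarrow{f(T)} ex_b$, yet $oper(ex_b) \xrightarrow{L} oper(ex_a)$ with $invoc(ex_a)$ dominating $invoc(ex_b)$. But Lemma~\ref{lemma:nodominversion} asserts that no such pair exists. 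The care here lies in arguing that any linearization graph $lingraph(G')$ can be topologically sorted consistently with the $f(T)$-order: I would appeal to the fact (from Aspnes and Herlihy) that all topological sorts of a linearization graph are equivalent, so it suffices to produce one sort agreeing with $f(T)$, and Lemma~\ref{lemma:nodominversion} guarantees no dominance edge obstructs building such a sort greedily in $f(T)$-order.

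Finally I would assemble the pieces:
\begin{equation*}
f(T) = f\bigl(fill(T)\bigr) \text{ and } oper(ex_1)\circ\cdots\circ oper(ex_k) \text{ is a topological sort of } lingraph(G'),
\end{equation*}
so by Lemma~\ref{herl:linresult} it is a linearization of $\Gamma\bigl(fill(T)\bigr)$, hence valid. Validity is a property of the sequence of invocation/response pairs alone, and these are identical whether we view the operations inside $T$ or $fill(T)$; moreover the $fill$ construction only appends solo completions and therefore does not disturb the happens-before relation among operations already present in $\Gamma(T)$, while Observation~\ref{obs:genptin} guarantees each $pt(ex)$ lies in $\bigl(time(ex^{\ref{scan}}), time(ex^{\ref{slupdate}})\bigr]$ so that $f(T)$ extends the happens-before order of $\Gamma(T)$. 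Combining these, $f(T)$ is a valid sequential history extending the real-time order of $\Gamma(T)$, i.e.\ a linearization, which is what we set out to prove.
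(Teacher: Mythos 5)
Your proposal is correct and follows essentially the same route as the paper's proof: pass to $fill(T)$ via Observation~\ref{obs:filltranscript}, show $oper(ex_1)\circ\cdots\circ oper(ex_k)$ has no back-edges in $lingraph(G')$ by handling precedence edges with Observations~\ref{obs:opprecede} and~\ref{obs:genptin} and dominance edges with Lemma~\ref{lemma:nodominversion}, then invoke Lemma~\ref{herl:linresult} and transfer validity and real-time order back to $\Gamma(T)$. The only difference is cosmetic: the paper makes the final step explicit by constructing the completion $H_C$ of $\Gamma(T')$ obtained by deleting incomplete operations and observing it is also a completion of $\Gamma(T)$, whereas you argue this transfer more informally.
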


\begin{proof}
	Let $T' = fill(T)$, and let $G'$ be the precedence graph induced by $\mathcal{G}(T')$.
	Suppose $f(T') = ex_1 \circ \ldots \circ ex_k$.
	The only difference between the sequences $ex_1, \ldots, ex_k$ and $oper(ex_1), \ldots, oper(ex_k)$ are operation identifiers, so $f(T')$ is equivalent to the history $oper(ex_1) \circ \ldots \circ oper(ex_k)$.
	Hence, if we prove that $oper(ex_1) \circ \ldots \circ oper(ex_k)$ is a linearization of $\Gamma(T')$, this implies that $f(T')$ is also a linearization of $\Gamma(T')$.
	To accomplish this, it suffices to demonstrate that $oper(ex_1) \circ \ldots \circ oper(ex_k)$ is a topological ordering of $lingraph(G')$ by Lemma~\ref{herl:linresult}.
	More specifically, we show that the sequential history $oper(ex_1) \circ \ldots \circ oper(ex_k)$ has no ``back-edges''; that is, there are no edges $\bigl(oper(ex_j), oper(ex_i)\bigr)$ in $lingraph(G')$ such that $j > i$.

	First note that by Observation~\ref{obs:filltranscript}~\ref{obs:fillb}, $G'$ contains all and only those operations $oper(ex_i)$ such that $ex_i \in f(T)$.
	Since $f(T) = f(T')$ by Observation~\ref{obs:filltranscript}~\ref{obs:filla}, this implies that 
	\begin{equation}
		\text{$ex \in f(T')$ if and only if $oper(ex)$ is in $G'$.}\label{genlin:sameops}
	\end{equation}
	Let $L$ be a topological ordering of $lingraph(G')$.
	Let $i, j \in \{1, \ldots, k\}$ with $i < j$.
	Hence,
	\begin{equation}
		ex_i \xrightarrow{f(T')} ex_j.\label{genlin:funcorder}
	\end{equation}
	By (\ref{genlin:sameops}), $oper(ex_i), oper(ex_j)$ are in $G'$, and hence $oper(ex_i), oper(ex_j)$ are in $lingraph(G')$.
	Suppose that
	\ifspringer
	\begin{equation}
	\begin{split}
		&\text{there exists an edge from $oper(ex_j)$ to} \\ &\text{$oper(ex_i)$ in $lingraph(G')$.}\label{genlin:existedge}
	\end{split}
	\end{equation}
	\else
	\begin{equation}
		\text{there exists an edge from $oper(ex_j)$ to $oper(ex_i)$ in $lingraph(G')$.}\label{genlin:existedge}
	\end{equation}
	\fi
	Suppose $oper(ex_j)$ precedes $oper(ex_i)$ in $G'$.
	By Observation~\ref{obs:opprecede}, $time(ex_j^{\ref{slupdate}}) < time(ex_i^{\ref{scan}})$.
	This implies that $pt(ex_j) < pt(ex_i)$ by Observation~\ref{obs:genptin}.
	By \ref{gen:rulelin}, $ex_j \xrightarrow{f(T')} ex_i$, which contradicts (\ref{genlin:funcorder}).
	Therefore,
	\begin{equation}
		\text{$oper(ex_j)$ does not precede $oper(ex_i)$ in $G'$.}\label{genlin:notprecede}
	\end{equation}
	By (\ref{genlin:existedge}) and the fact that $L$ is a topological ordering of $lingraph(G')$,
	\begin{equation}
		oper(ex_j) \xrightarrow{L} oper(ex_i).\label{genlin:precinL}
	\end{equation}
	Due to (\ref{genlin:existedge}) and (\ref{genlin:notprecede}), there must be a dominance edge from $oper(ex_j)$ to $oper(ex_i)$ in $lingraph(G')$.
	That is,
	\begin{equation}
		\text{$invoc(ex_i)$ dominates $invoc(ex_j)$.}\label{genlin:dom}
	\end{equation}
	But (\ref{genlin:funcorder}), (\ref{genlin:precinL}), and (\ref{genlin:dom}) contradict Lemma~\ref{lemma:nodominversion}.
	Hence, no ``back-edges'' exist in $lingraph(G')$ among the sequential history $oper(ex_1) \circ \ldots \circ oper(ex_k)$.
	Along with (\ref{genlin:sameops}), this implies that $oper(ex_1) \circ \ldots \circ oper(ex_k)$ is a topological ordering of $lingraph(G')$.
	By Lemma~\ref{herl:linresult}, $oper(ex_1) \circ \ldots \circ oper(ex_k)$ is a linearization of $\Gamma(T')$.
	Since $oper(ex_1) \circ \ldots \circ oper(ex_k)$ and $f(T')$ are equivalent, $f(T')$ is also a linearization of $\Gamma(T')$.
	By Observation~\ref{obs:filltranscript}~\ref{obs:filla}, 
	\begin{equation}
		\text{$f(T)$ is a linearization of $\Gamma(T')$.}\label{genlin:fTislin}
	\end{equation}
	
	By construction of $T'$, every operation $ex \in f(T)$ is present and complete in $T'$ (and in $\Gamma(T')$), and every complete operation $ex \in \Gamma(T')$ is in $f(T)$ by Observation~\ref{obs:genptin}.
	That is, 
	\begin{equation}
		\text{$ex \in \Gamma(T')$ is incomplete if and only if $ex \not\in f(T)$.}\label{genlin:incifonlyif}
	\end{equation}
	Let $H_C$ be the completion of $\Gamma(T')$ obtained by removing all incomplete operations from $\Gamma(T')$.
	By (\ref{genlin:incifonlyif}), 
	\begin{equation}
		\text{$H_C$ and $f(T)$ contain precisely the same operations.}\label{genlin:hcfTsame}
	\end{equation}
	Since $H_C$ is a completion of $\Gamma(T')$, $ex_1 \xrightarrow{H_C} ex_2$ implies that $ex_1 \xrightarrow{\Gamma(T')} ex_2$.
	By this, (\ref{genlin:fTislin}), and (\ref{genlin:hcfTsame}), $f(T)$ is a linearization of $H_C$.
	Since the construction of $T'$ adds no new ``high-level'' invocations (i.e. invocations that are not present in $\Gamma(T)$) to $\Gamma(T')$, $H_C$ is also a completion of $\Gamma(T)$.
	Hence, since $f(T)$ is a linearization of $H_C$, $f(T)$ is a linearization of $\Gamma(T)$.
\end{proof}

\begin{theorem}\label{theorem:genstronglin}
	Algorithm~\ref{gen} is strongly linearizable.
\end{theorem}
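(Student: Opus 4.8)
The plan is to verify directly that the function $f$ defined earlier in this section satisfies the two-part definition of a strong linearization function for $\mathcal{T}$, the set of all transcripts on $O$. Recall that $f$ sends each $T\in\mathcal{T}$ to the sequential history obtained by ordering the operations of $\Gamma(T)$ according to the linearization points $pt_T$, with ties broken by \ref{gen:ruledom} and \ref{gen:rulecomm}. By the definition given in the preliminaries, I must establish (i) that $f$ is a linearization function for $\Gamma(close(\mathcal{T}))$, and (ii) that $f$ is prefix-preserving.

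First I would observe that $\mathcal{T}$ is already closed under taking prefixes: any prefix of a well-formed transcript on $O$ is itself a well-formed transcript on $O$, so $close(\mathcal{T})=\mathcal{T}$ and the domain of interest is just $\mathcal{T}$ itself. With this reduction in hand, requirement (i) is exactly Lemma~\ref{lemma:generallinearizable}, which states that for every $T\in\mathcal{T}$ the history $f(T)$ is a linearization of $\Gamma(T)$. Requirement (ii) is exactly Lemma~\ref{lemma:genprefpres}, which states that $f$ is prefix-preserving. Combining the two yields that $f$ is a strong linearization function, and hence that Algorithm~\ref{gen} is strongly linearizable.

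Thus the theorem itself is a short consequence of the two preceding lemmas; the real content — and the genuine obstacle — lives inside those lemmas rather than in the theorem. The crucial difficulty, already dispatched by Lemma~\ref{lemma:generallinearizable}, is that Aspnes and Herlihy's own linearization (topological orderings of $lingraph(G)$) is not prefix-preserving, so one cannot simply reuse it. The remedy is to define the point-based function $pt_T$, which only ever places a linearization point at a step where some node is written to $root$, and then to argue that the resulting order $f(T)$ agrees, up to operation identifiers, with a topological ordering of $lingraph(G')$ for the precedence graph $G'$ induced by the completion $fill(T)$. That agreement is what licenses invoking Aspnes and Herlihy's Lemma~\ref{herl:linresult} to conclude validity. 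The technical heart of this is Lemma~\ref{lemma:nodominversion}, which rules out ``dominance back-edges'' and thereby certifies that $f(T)$ is a legitimate topological sort, while the $fill$ construction handles operations whose constructed nodes were overwritten before ever reaching the shared graph. Given these lemmas, assembling the final theorem requires no further work beyond the citation of Lemma~\ref{lemma:generallinearizable} and Lemma~\ref{lemma:genprefpres}.
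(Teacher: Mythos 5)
Your proposal matches the paper's proof exactly: both derive the theorem by citing Lemma~\ref{lemma:generallinearizable} for the linearization property and Lemma~\ref{lemma:genprefpres} for prefix-preservation, with all the substantive work residing in those lemmas. The additional remarks on $close(\mathcal{T})=\mathcal{T}$ and on why the point-based function $pt_T$ is needed are accurate context but not a different argument.
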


\begin{proof}
	Lemma~\ref{lemma:generallinearizable} shows that $f$ is a linearization function for $\Gamma(\mathcal{T})$, and Lemma~\ref{lemma:genprefpres} shows that $f$ is prefix preserving. Hence, $f$ is a strong linearization function for $\mathcal{T}$.
\end{proof}

Theorems \ref{thm:main-snapshot} and \ref{theorem:genstronglin}, along with the fact that strong linearizability is composable, yield Theorem~\ref{thm:main-general-construction}.

\fi

\ifthesis
\section{Remarks}
\else
\subsection{Remarks}
\fi

In this section, we proved that Aspnes and Herlihy's general construction for simple types \cite{generalwaitfree} is strongly linearizable.
Typically, proving strong linearizability is only marginally more difficult than proving linearizability; as long as one chooses linearization points for each operation carefully, it is easy to show that a linearization function satisfies the prefix-preservation property.
With this claim in mind, the length of this section might be perplexing, since Aspnes and Herlihy have already shown that Algorithm~\ref{gen} is linearizable \cite{generalwaitfree}.
However, their linearization function is inherently not prefix-preserving, since operations may be written to the ``middle'' of the linearization graph (as discussed at the beginning of Section~\ref{sec:genproofsl}).
Hence, in Section~\ref{sec:genproofsl} we were forced to start from scratch, defining our own linearization function that satisfies the prefix-preservation property.
A considerable amount of effort was also spent on proving that the shared snapshot object $root$ always contains a representation of a particular precedence graph.
This is an invariant that was taken for granted in \cite{generalwaitfree}; for the sake of completeness, we decided to prove it formally in Section~\ref{sec:storeprecgraph}.

We also note that Aspnes and Herlihy's construction requires unbounded memory, since operations are never removed from the shared precedence graph.
This is unfortunate, since the general construction does not benefit from the fact that our snapshot implementation requires only bounded space.
Aspnes and Herlihy claim that ``for any particular data type, it should be possible to apply type-specific optimizations to discard most of the precedence graph'' \cite{generalwaitfree}.
It would be interesting to study types for which such a simplification could be used to bound the space of Aspnes and Herlihy's construction.
Moreover, it may be possible to adjust Aspnes and Herlihy's algorithm so that it only requires bounded space for any simple type.
For instance, this might be achieved by storing states, rather than operations, inside the nodes of a precedence graph.

\ifthesis
	\input{andersonmoir}
\fi

\ifthesis
\chapter{Conclusion}
\else
\section{Discussion}
\fi


We have provided a lock-free strongly linearizable implementation of a snapshot object using atomic multi-reader multi-writer registers as base objects. We used this implementation to demonstrate that any simple object also has a lock-free strongly linearizable implementation from atomic multi-reader multi-writer registers. The class of simple types seems to be a large subset of the types that have wait-free linearizable implementations from atomic registers. We are not aware of any classifications of non-simple types that are also known to have wait-free linearizable implementations from registers.
Additionally, we have not yet explored the characteristics of objects that enable wait-free strongly linearizable implementations from registers.

As we briefly discussed in Section~\ref{sectionGEN}, the general construction defined by Aspnes and Herlihy \cite{generalwaitfree} uses unbounded space. This is a result of the fact that each operation constructs a new node, and nodes added to the shared precedence graph are never reclaimed. The storage of unbounded precedence graphs also affects the liveness properties satisfied by the algorithm; while Algorithm~\ref{gen} is wait-free, it is not \emph{bounded} wait-free. That is, there is no constant that bounds the number of steps required by any operation. This is because each operation must calculate a linearization by topologically ordering an ever-expanding precedence graph. It would be interesting to know if this construction could be bounded, for example, by ``pruning'' the precedence graph at certain stages of the algorithm. This might be accomplished by storing \emph{states}, rather than operations, in each entry of the shared snapshot object.

Regarding strong linearizability, little is known about the power of primitives with higher consensus numbers than registers and snapshots.
As mentioned in Section~\ref{chapter:intro}, Golab, Higham, and Woelfel showed that standard universal constructions using consensus objects are strongly linearizable \cite{stronglin}.
Therefore, it is possible to develop wait-free strongly linearizable implementations of any type in systems that have access to atomic compare-and-swap (CAS) or load-linked/store-conditional (LL/SC) objects.
We would like to know if there are efficient strongly linearizable implementations of useful types from such powerful base objects.
Perhaps many existing implementations of types from CAS or LL/SC objects are already strongly linearizable; in this case, it would be interesting to identify such implementations and prove that they are strongly linearizable.

Attiya, Casta\~{n}eda, and Hendler showed that a wait-free strongly linearizable implementation of an $n$-process queue or stack can be used to solve $n$-consensus \cite{ACH2018a}.
Their proof is quite simple; suppose we have access to a wait-free strongly linearizable queue.
To solve consensus, each process first writes its value to a single-writer register, then enqueues its identifier to the queue, and finally takes a snapshot of the shared memory locations used by the queue to obtain a local copy of the object.
Following this, a process simulates a dequeue on its local copy of the queue to obtain the process identifier $p$, and then decides on the value proposed by $p$.
Since the queue is strongly linearizable, at some point every process agrees on the ``head'' of the queue, which implies that each process obtains the same identifier from its dequeue operation.
This result immediately implies that there is no strongly linearizable implementation of an $n$-process queue or stack using base objects with consensus number less than $n$.
We would like to know if a similar result holds for other types with consensus number 2, such as read-modify-write types.

\ifthesis
In Section~\ref{sec:relwork} we mentioned that strong linearizability is a form of strong observational refinement, a result presented by Attiya and Enea~\cite{AE2019a}. In this same paper, it is also shown that strong observational refinement is equivalent to the well-studied concept of forward simulations, originally formalized by Lynch and Vaandrager~\cite{lynch1995forward}. A forward simulation is a relation between the states of two types, where every outgoing transition of one type may be simulated by some sequence of steps of the other. Since strong observational refinement and forward simulations are equivalent, an object $O$ of type $\mathscr{T}$ is strongly linearizable if and only if there exists a forward simulation from $O$ to an atomic object of type $\mathscr{T}$ \cite{AE2019a}.
It would be interesting to know whether this equivalence admits simpler proof methodologies. Additionally, in cases where strongly linearizable implementations are either impossible or inefficient, we would like to study the feasibility of developing implementations which are strong refinements of non-atomic (i.e. concurrent) specifications.
\fi

Our research was initially motivated by the following conjecture: every type that has a wait-free linearizable implementation from atomic registers has a wait-free strongly linearizable implementation from atomic snapshot objects. 
\ifthesis
For readable types, a solution seems to be near. Anderson and Moir's \cite{andersonMoir} assertion that dynamic resiliency is necessary for the existence of a wait-free linearizable implementation of a readable type from registers is accurate; therefore, dynamic resiliency is also necessary for the existence of a wait-free strongly linearizable implementation of a readable type from atomic snapshot objects. There is one final gap left to fill: the space between simple and dynamically resilient types.
As mentioned in Chapter~\ref{sec:dynamic}, all simple types are dynamically resilient, but not all dynamically resilient types are simple.
It is possible that dynamically resilient readable types that are not simple have no wait-free linearizable implementations from registers; at the very least, we suspect that implementations of such types would differ considerably from the scan-compute-write strategy used by Aspnes and Herlihy.
\else\iffull
	While we have not definitively proven or disproven this conjecture, it seems that the class of simple types is a large subset of the set of types that have wait-free linearizable implementations from registers.
	A natural extension of this work would aim to discover non-simple types that have wait-free linearizable implementations from registers; if no such types exist, then combined with Theorem~\ref{thm:main-general-construction} this would prove our conjecture.
	\fi
\fi
\ifthesis
A summary of relevant knowns/unknowns is provided in Table~\ref{tab:results}.

\clearpage
\renewcommand{\arraystretch}{2.0}
\begin{table}
\centering
\begin{tabular}{p{3.5cm}p{3.5cm}lp{4cm}}
	Base Objects 		& S.L. Object 	& Liveness Property & Implementation \\ \hline
	Atomic registers	& ABA-detecting register	& Lock-freedom	& Bounded implementation by Theorem~\ref{thm:main-ABA} \\
	Atomic registers	& Snapshot							& Wait-freedom			& Proven impossible \cite{WaitVsLock} \\
	Atomic registers	& Snapshot 							& Lock-freedom			& Unbounded implementation in \cite{WaitVsLock}, bounded implementation by Theorem~\ref{thm:main-snapshot} \\
	S.L. max-register, versioned object of type $\mathscr{T}$ & Type $\mathscr{T}$ & Lock-freedom & Unbounded implementation in \cite{WaitVsLock} \\
	Atomic snapshots	& Simple types						& Wait-freedom 			& Unbounded implementation in \cite{generalwaitfree}, proof of S.L. Chapter~\ref{sectionGEN} \\
	S.L. snapshots		& Simple types						& Lock-freedom			& Unbounded implementation by Theorem~\ref{thm:main-general-construction} \\
	Atomic snapshots		& Dynamically resilient readable types (as defined in Chapter~\ref{sec:dynamic} and \cite{andersonMoir}) & Lock/Wait-freedom & Unknown \\
	Atomic snapshots	& Any type that has a wait-free linearizable implementation from atomic registers & Lock/Wait-freedom & Unknown \\
	Consensus object	& Any type						& Wait-freedom			& Standard universal constructions \cite{stronglin}
\end{tabular}
\caption{A summary of relevant results and further directions. The table should be interpreted as follows: the first column contains a set of base objects $B$, the second column contains a type $T$, and the third column contains a liveness property $L$. The fourth column contains an answer to the following question: does type $T$ have a strongly linearizable implementation from $B$ that satisfies $L$? In all cases above ``atomic registers'' refers to atomic \emph{multi-writer} registers. S.L. stands for either strong linearizability or strongly linearizable.}\label{tab:results}
\end{table}
\fi

\ifea
Anderson and Moir \cite{andersonMoir} define a class of \emph{readable} objects\todo{``readable'' is a placeholder name here}, which support a single operation that returns the entire state of the object, along with a set of update operations that do not return anything and may modify the state of the object. Anderson and Moir claim that a ``resiliency condition'' is necessary and sufficient for the existence of a wait-free implementation of an object from registers. This condition is similar to Aspnes and Herlihy's definition of simple types, and Anderson and Moir claim that the sufficiency of the resiliency condition follows from the general construction in \cite{generalwaitfree}. However, the resiliency condition is dynamic in the sense that, after a particular history two invocations may commute, while after another history these same two invocations may no longer commute, while instead one overwrites the other. Hence, while all simple objects satisfy Anderson and Moir's resiliency condition, it is not true that all objects which satisfy the resiliency condition are simple. Therefore, Anderson and Moir's claim that a readable object has a wait-free implementation from registers if the object satisfies the resiliency condition remains unproven. Note that if Anderson and Moir's claim was true, combined with our results this would imply that a readable object has a lock-free strongly linearizable implementation from atomic multi-reader multi-writer registers if and only if it satisfies the resiliency condition.
\fi

\newpage
\bibliography{mainbib}

\begin{thebibliography}{10}

\bibitem{stronglin}
W.~Golab, L.~Higham, and P.~Woelfel, ``Linearizable implementations do not
  suffice for randomized distributed computation,'' in {\em Proceedings of the
  2011 ACM symposium on Theory of Computing}, pp.~373--382, 2011.

\bibitem{WaitVsLock}
O.~Denysyuk and P.~Woelfel, ``Wait-freedom is harder than lock-freedom under
  strong linearizability,'' in {\em International Symposium on Distributed
  Computing}, pp.~60--74, 2015.

\bibitem{ABAreg}
Z.~Aghazadeh and P.~Woelfel, ``On the time and space complexity of {ABA}
  prevention and detection,'' in {\em Proceedings of the 2015 ACM Symposium on
  Principles of Distributed Computing}, pp.~193--202, 2015.

\bibitem{generalwaitfree}
J.~Aspnes and M.~Herliny, ``Wait-free data structures in the asynchronous
  {PRAM} model,'' in {\em Proceedings of the 1990 ACM Symposium on Parallel
  algorithms and architectures}, pp.~340--349, 1990.

\bibitem{mclean:tracesets}
J.~McLean, ``A general theory of composition for trace sets closed under
  selective interleaving functions,'' in {\em Proceedings of 1994 IEEE Computer
  Society Symposium on Research in Security and Privacy}, pp.~79--93, 1994.

\bibitem{MultiProcProgramming}
M.~Herlihy and N.~Shavit, {\em The Art of Multiprocessor Programming}.
\newblock Morgan Kaufmann, 2012.

\bibitem{AE2019a}
H.~Attiya and C.~Enea, ``Putting strong linearizability in context: Preserving
  hyperproperties in programs that use concurrent objects,'' in {\em arXiv
  preprint arXiv:1905.12063}, 2019.

\bibitem{Lynch:Simulations:1994}
N.~Lynch and F.~Vaandrager, ``Forward and backward simulations part i: Untimed
  systems (replaces tm-486),'' tech. rep., Cambridge, MA, USA, 1994.

\bibitem{Bouajjani:Observation:2015}
A.~Bouajjani, M.~Emmi, C.~Enea, and J.~Hamza, ``Tractable refinement checking
  for concurrent objects,'' in {\em Proceedings of the 42Nd Annual ACM
  SIGPLAN-SIGACT Symposium on Principles of Programming Languages}, POPL '15,
  (New York, NY, USA), pp.~651--662, ACM, 2015.

\bibitem{uniConstSL}
M.~Herlihy, ``Wait-free synchronization,'' {\em ACM Transactions on Programming
  Languages and Systems (TOPLAS)}, vol.~13, no.~1, pp.~124--149, 1991.

\bibitem{ACH2018a}
H.~Attiya, A.~Casta{\~{n}}eda, and D.~Hendler, ``Nontrivial and universal
  helping for wait-free queues and stacks,'' {\em J. Parallel Distrib.
  Comput.}, vol.~121, pp.~1--14, 2018.

\bibitem{PossImposs}
M.~Helmi, L.~Higham, and P.~Woelfel, ``Strongly linearizable implementations:
  possibilities and impossibilities,'' in {\em Proceedings of the 2012 ACM
  symposium on Principles of Distributed Computing}, pp.~385--394, 2012.

\bibitem{maxRegisters}
J.~Aspnes, H.~Attiya, and K.~Censor, ``Max registers, counters, and monotone
  circuits,'' in {\em Proceedings of the 28th Annual {ACM} Symposium on
  Principles of Distributed Computing, {PODC} 2009, Calgary, Alberta, Canada,
  August 10-12, 2009}, pp.~36--45, 2009.

\bibitem{snapshot}
Y.~Afek, H.~Attiya, D.~Dolev, E.~Gafni, M.~Merritt, and N.~Shavit, ``Atomic
  snapshots of shared memory,'' in {\em Journal of the Association for
  Computing Machinery}, pp.~873--890, 1993.

\bibitem{Abrahamson}
K.~Abrahamson, ``On achieving consensus using a shared memory,'' in {\em
  Proceedings of the Seventh Annual ACM Symposium on Principles of Distributed
  Computing}, PODC '88, (New York, NY, USA), pp.~291--302, ACM, 1988.

\bibitem{afek1999instancy}
Y.~Afek and E.~Weisberger, ``The instancy of snapshots and commuting objects,''
  {\em Journal of Algorithms}, vol.~30, no.~1, pp.~68--105, 1999.

\bibitem{borowsky1993immediate}
E.~Borowsky and E.~Gafni, ``Immediate atomic snapshots and fast renaming,'' in
  {\em Proceedings of the twelfth annual ACM symposium on Principles of
  distributed computing}, pp.~41--51, ACM, 1993.

\bibitem{gawlick1992concurrent}
R.~Gawlick, N.~Lynch, and N.~Shavit, ``Concurrent timestamping made simple,''
  in {\em Theory of Computing and Systems}, pp.~171--183, Springer, 1992.

\bibitem{herlihy1993asynchronous}
M.~Herlihy and N.~Shavit, ``The asynchronous computability theorem for
  t-resilient tasks,'' in {\em STOC}, vol.~93, pp.~111--120, 1993.

\bibitem{aspnes2013snapshot}
J.~Aspnes and K.~Censor-Hillel, ``Atomic snapshots in o (log 3 n) steps using
  randomized helping,'' in {\em International Symposium on Distributed
  Computing}, pp.~254--268, Springer, 2013.

\bibitem{attiya1995snapshot}
H.~Attiya, M.~Herlihy, and O.~Rachman, ``Atomic snapshots using lattice
  agreement,'' {\em Distributed Computing}, vol.~8, no.~3, pp.~121--132, 1995.

\bibitem{attiya1998snapshot}
H.~Attiya and O.~Rachman, ``Atomic snapshots in o (n log n) operations,'' {\em
  SIAM Journal on Computing}, vol.~27, no.~2, pp.~319--340, 1998.

\bibitem{inoue1994snapshot}
M.~Inoue, T.~Masuzawa, W.~Chen, and N.~Tokura, ``Linear-time snapshot using
  multi-writer multi-reader registers,'' in {\em International Workshop on
  Distributed Algorithms}, pp.~130--140, Springer, 1994.

\bibitem{linearizability}
M.~Herlihy and J.~Wing, ``Linearizability: a correctness condition for
  concurrent objects,'' {\em ACM Transactions on Programming Languages and
  Systems (TOPLAS)}, vol.~12, no.~3, pp.~463--492, 1990.

\bibitem{zahrathesis}
Z.~Aghazadeh, {\em Efficient Shared Memory Algorithms for Bounding Space}.
\newblock PhD thesis, University of Calgary, 2018.

\end{thebibliography}

\end{document}